\documentclass[a4paper,11pt,fleqn]{article}

\usepackage{amsmath, amssymb, amsthm}
\usepackage{mathtools, thm-restate}
\usepackage[margin=1in]{geometry}
\usepackage{xcolor}
\usepackage{url}
\usepackage{float}
\usepackage{array}
\usepackage{longtable}
\usepackage{comment}
\usepackage{tikz}
\usepackage{pifont}
\usepackage[shortlabels]{enumitem}
\usepackage[justification=centering]{caption}
\usepackage[bookmarksnumbered=true,hypertexnames=false]{hyperref}
\usepackage{algorithm, algpseudocode}
\usepackage[capitalize,sort]{cleveref}
\usetikzlibrary{patterns}
\usetikzlibrary{decorations.pathreplacing,arrows.meta,calc}

\hypersetup{
    colorlinks,
    linkcolor={red!50!black},
    citecolor={red!50!black},
    urlcolor={blue!50!black}
}

\newtheorem{theorem}{Theorem}
\newtheorem{definition}{Definition}

\newtheorem{observation}[theorem]{Observation}

\newtheorem{lemma}[theorem]{Lemma}
\newtheorem{claim}[theorem]{Claim}

\Urlmuskip=0mu plus 0.1mu
\makeatletter
\g@addto@macro{\UrlBreaks}{%
\do\/%
\do\a\do\b\do\c\do\d\do\e\do\f\do\g\do\h\do\i\do\j\do\k\do\l\do\m%
\do\n\do\o\do\p\do\q\do\r\do\s\do\t\do\u\do\v\do\w\do\x\do\y\do\z%
\do\A\do\B\do\C\do\D\do\E\do\F\do\G\do\H\do\I\do\J\do\K\do\L\do\M%
\do\N\do\O\do\P\do\Q\do\R\do\S\do\T\do\U\do\V\do\W\do\X\do\Y\do\Z%
\do\0\do\1\do\2\do\3\do\4\do\5\do\6\do\7\do\8\do\9%
}
\makeatother

\usepackage{catchfile}
\CatchFileEdef{\RELEASE}{"|kpsewhich -var-value RELEASE"}{\endlinechar=-1}%

\ifx\RELEASE\empty
\def\rem#1{{\marginpar{\raggedright\scriptsize #1}}}
\newcommand*{\todo}[1]{\textcolor{violet}{TODO: #1}}
\newcommand*{\aricolor}{red}
\newcommand*{\eklcolor}{blue}
\newcommand*{\kvncolor}{green!50!black}

\newcommand*{\arir}[1]{\rem{\textcolor{\aricolor}{$\bullet$ #1}}}
\newcommand*{\eklr}[1]{\rem{\textcolor{\eklcolor}{$\bullet$ #1}}}
\newcommand*{\kvnr}[1]{\rem{\textcolor{\kvncolor}{$\bullet$ #1}}}
\else
\newcommand*{\todo}[1]{}

\newcommand{\arir}[1]{}
\newcommand{\eklr}[1]{}
\newcommand{\kvnr}[1]{}
\fi

\newenvironment*{tightemize}{\begin{itemize}[noitemsep]}{\end{itemize}}

\algnewcommand{\LineComment}[1]{\State \textcolor{gray}{\texttt{//} \textit{#1}}}

\algblockdefx{RepeatN}{EndRepeatN}[1]{\textbf{repeat} #1 \textbf{times}}{\textbf{end repeat}}
\newcolumntype{L}{>{$\displaystyle}l<{$}}

\newtheorem{property}[definition]{Property}
\newtheorem{transformation}[definition]{Transformation}
\crefname{claim}{Claim}{Claims}
\crefname{property}{Property}{Properties}
\crefname{transformation}{Transformation}{Transformations}

\newcommand*{\geomvec}[2]{(#1, #2)}
\newcommand*{\geomvecdim}[2]{(#1, #2)-dimensional}

\let\epsilon\varepsilon
\let\eps\epsilon
\newcommand*{\defeq}{\coloneqq}

\newcommand*{\floor}[1]{\left\lfloor #1 \right\rfloor}
\newcommand*{\ceil}[1]{\left\lceil #1 \right\rceil}
\newcommand*{\smallceil}[1]{\lceil #1 \rceil}
\newcommand*{\abs}[1]{\left\lvert #1 \right\rvert}
\newcommand*{\norm}[1]{\left\lVert #1 \right\rVert}
\newcommand*{\smallnorm}[1]{\lVert #1 \rVert}
\newcommand*{\Th}{^{\textrm{th}}}
\DeclareMathOperator*{\E}{\mathbb{E}}

\DeclareMathOperator*{\argmin}{argmin}

\newcommand*{\WLoG}{W.l.o.g.}
\newcommand*{\wLoG}{w.l.o.g.}

\DeclareMathOperator*{\support}{support}
\DeclareMathOperator*{\opt}{opt}

\DeclareMathOperator{\vol}{vol}

\DeclareMathOperator{\poly}{poly}
\DeclareMathOperator{\round}{\mathtt{round}}

\DeclareMathOperator{\Span}{span}
\DeclareMathOperator{\simplePack}{\mathtt{simple-pack}}
\DeclareMathOperator{\betterSimplePack}{\mathtt{better-simple-pack}}
\DeclareMathOperator{\cbPack}{\mathtt{cb-pack}}

\newcommand*{\hdhIV}{\mathrm{HDH}_4}
\newcommand*{\hdhIVunit}{\textrm{HDH-unit-pack}_4}
\newcommand*{\last}{\mathrm{last}}
\DeclareMathOperator{\type}{type}
\newcommand*{\config}{configuration}
\newcommand*{\Config}{Configuration}
\newcommand*{\LP}{\mathrm{LP}}
\newcommand*{\degree}{^{\circ}}

\newcommand*{\xhat}{\widehat{x}}
\newcommand*{\Acal}{\mathcal{A}}
\newcommand*{\Ihat}{\widehat{I}}
\newcommand*{\Jhat}{\widehat{J}}
\newcommand*{\Itild}{\widetilde{I}}
\newcommand*{\Ktild}{\widetilde{K}}
\newcommand*{\Jtild}{\widetilde{J}}
\newcommand*{\Stild}{\widetilde{S}}
\newcommand*{\Xtild}{\widetilde{X}}
\newcommand*{\Ctild}{\widetilde{C}}
\newcommand*{\best}{\mathrm{best}}
\newcommand*{\epsLP}{\eps}

\newcommand*{\Imed}{I_{\mathrm{med}}}
\DeclareMathOperator{\complexPack}{\mathtt{complex-pack}}
\DeclareMathOperator{\cLP}{configLP}
\DeclareMathOperator{\cLPsolve}{\mathtt{solve-config-lp}}
\DeclareMathOperator{\fscLP}{fsconfigLP}
\DeclareMathOperator{\fscIP}{fsconfigIP}
\DeclareMathOperator{\unround}{\mathtt{unround}}
\DeclareMathOperator{\fsopt}{fsopt}
\DeclareMathOperator{\uncovered}{uncovered}
\DeclareMathOperator{\rnaPack}{\mathtt{rna-pack}}

\newcommand*{\precImm}{\prec_{\mathrm{imm}}}
\newcommand*{\epsLarge}{\eps_1}
\newcommand*{\epsSmall}{\eps_2}
\newcommand*{\deltaDense}{\delta_d}
\newcommand*{\lwc}{\mathrm{lwc}}  %
\newcommand*{\hwc}{\mathrm{hwc}}  %
\newcommand*{\bwc}{\mathrm{bwc}}  %
\newcommand*{\wwc}{\mathrm{wwc}}  %
\newcommand*{\swc}{\mathrm{swc}}  %
\newcommand*{\itild}{\tilde{\imath}}
\newcommand*{\nlwcHyp}{\hyperref[lem:round-up-light-n]{n_{\lwc}}}
\newcommand*{\nhwcHyp}{\hyperref[lem:round-up-heavy-n]{n_{\hwc}}}
\newcommand*{\nbwcHyp}{\hyperref[lem:num-weight-classes]{n_{\bwc}}}
\newcommand*{\nwwcHyp}{\hyperref[lem:num-weight-classes]{n_{\wwc}}}
\newcommand*{\nswcHyp}{\hyperref[lem:num-weight-classes]{n_{\swc}}}
\newcommand*{\deltaLG}{\delta_{\mathrm{lg}}}
\newcommand*{\nWCont}{n_{\mathrm{wcont}}}
\newcommand*{\nTCont}{n_{\mathrm{tcont}}}
\newcommand*{\epsCont}{\eps_{\mathrm{cont}}}
\newcommand*{\nBig}{n_{\mathrm{big}}}
\newcommand*{\nNConfs}{n_{\mathrm{nconfs}}}
\newcommand*{\Pcal}{\mathcal{P}}
\newcommand*{\Pcalhat}{\widehat{\mathcal{P}}}
\DeclareMathOperator{\leader}{leader}
\DeclareMathOperator*{\cartProd}{\prod}
\DeclareMathOperator{\outputAdd}{\mathtt{outputs.add}}
\DeclareMathOperator{\remMed}{remove-medium}
\DeclareMathOperator{\iterFineParts}{\mathtt{iter-fine-parts}}
\DeclareMathOperator{\iterFinePartsHyp}{\hyperref[algo:ifp]{\iterFineParts}}
\DeclareMathOperator{\partBig}{\mathtt{part-big}}
\let\partBigHyp\partBig
\DeclareMathOperator{\partWide}{\mathtt{part-wide}}
\let\partWideHyp\partWide
\DeclareMathOperator{\roundHyp}{\hyperref[algo:round]{\round}}
\DeclareMathOperator{\FP}{FP}

\DeclareMathOperator{\fpack}{\mathtt{fpack}}
\DeclareMathOperator{\ipack}{\mathtt{ipack}}
\DeclareMathOperator{\ipackHyp}{\hyperref[algo:ipack]{\ipack}}
\DeclareMathOperator{\packSmall}{\mathtt{pack-small}}
\DeclareMathOperator{\packSmallHyp}{\hyperref[algo:pack-small]{\packSmall}}

\DeclareMathOperator{\remMedHyp}{\hyperref[defn:remmed]{\remMed}}
\newcommand*{\weightRoundingHyp}{\hyperref[trn:wround]{weight-rounding}}
\newcommand*{\compartmentalHyp}{\hyperref[defn:compartmental]{compartmental}}
\newcommand*{\finePartHyp}{\hyperref[defn:fine-part]{fine partitioning}}
\newcommand*{\semiStrucHyp}{\hyperref[defn:semi-struct]{semi-structured}}

\newenvironment{omittedproof}[1]
{\begin{proof}[Proof of \Cref{#1}]}
{\end{proof}}

\newcommand{\acknowledgements}[1]{\par\noindent\textbf{Acknowledgements:} #1}

\title{Geometry Meets Vectors: Approximation\texorpdfstring{\\}{ }Algorithms for Multidimensional Packing}
\author{Arindam Khan $^*$ \and Eklavya Sharma $^*$ \and K. V. N. Sreenivas
\thanks{Department of Computer Science and Automation, Indian Institute of Science, Bengaluru, India.
{\tt arindamkhan@iisc.ac.in}, {\tt eklavyas@iisc.ac.in}, {\tt venkatanaga@iisc.ac.in}}}
\date{\empty}

\begin{document}

\maketitle
\setlength{\parskip}{0.25em}

\begin{abstract}
We study the generalized multidimensional bin packing problem (GVBP) that generalizes
both geometric packing and vector packing.
Here, we are given $n$ rectangular items where the $i\Th$ item has
width $w(i)$, height $h(i)$, and $d$ nonnegative weights $v_1(i), v_2(i), \ldots, v_{d}(i)$.
Our goal is to get an axis-parallel non-overlapping packing of the items into square bins so that
for all $j \in [d]$, the sum of the $j\Th$ weight of items in each bin is at most 1.
This is a natural problem arising in logistics, resource allocation, and scheduling.
Despite being well studied in practice, surprisingly,
approximation algorithms for this problem have rarely been explored.

We first obtain two simple algorithms for GVBP having asymptotic approximation ratios
$6(d+1)$ and $3(1 + \ln(d+1) + \eps)$.
We then extend the Round-and-Approx (R\&A) framework
\texorpdfstring{\cite{bansal2009new,bansal2014binpacking}}{[Bansal-Khan, SODA'14]}
to wider classes of algorithms, and show how it can be adapted to GVBP.
Using more sophisticated techniques, we obtain better approximation algorithms for GVBP,
and we get further improvement by combining them with the R\&A framework.
This gives us an asymptotic approximation ratio of $2(1+\ln((d+4)/2))+\eps$ for GVBP,
which improves to $2.919+\eps$ for the special case of $d=1$.
We obtain further improvement when the items are allowed to be rotated.
We also present algorithms for a generalization of GVBP
where the items are high dimensional cuboids.

\end{abstract}

\section{Introduction}
\label{sec:intro}

Bin packing and knapsack problems are classical NP-hard optimization problems.
Two classical generalizations of these problems: geometric packing and vector packing
have been well-studied from the 1980s \cite{coffman1980performance,vl81}.
Geometric packing considers the packing of rectangular items, whereas,
in vector packing items are multidimensional vectors.
\cref{fig:geom-vec-diff} illustrates the difference between geometric packing and
vector packing.
However, often in practice, we encounter a mixture of geometric and vector constraints.
Consider the following airlines cargo problem \cite{paquay2016mixed}:
We have boxes to load in an airline cargo container.
In addition to the geometric constraint that all the boxes must fit within the container,
we also have a constraint that the total weight of the loaded boxes
is within a specified capacity. Thus, in this problem,
three dimensions are geometric and the weight is a vector constraint.

\begin{figure}[htb]
\centering

\begin{tikzpicture}[
    scale = 2.4,
    every node/.style = {scale=0.7}
]
    \newcommand*{\shiftline}{0.05}
    \draw[thick] (0,0) rectangle +(1,1);
    \draw[fill=red!50] (0,0) rectangle +(0.8,0.2);
    \draw[fill=blue!50] (0,0.2) rectangle +(0.4,0.4);
    \draw[fill=green!50] (0.4,0.2) rectangle +(0.4,0.4);
    \draw[<->] (-\shiftline,0) --node[midway,left]{\small{$0.2$}}
                            +(0,0.2);
    \draw[<->] (-\shiftline,0.2) --node[midway,left]{\small{$0.4$}}
                            +(0,0.4);
    \draw[<->] (0,-\shiftline) --node[midway,below]{\small{$0.8$}}
                            +(0.8,0);
    \draw[<->] (0,0.6+\shiftline)
                --node[midway,above]{\small{$0.4$}} +(0.4,0);
    \draw[<->] (0.4,0.6+\shiftline)
                --node[midway,above]{\small{$0.4$}} +(0.4,0);

    \tikzset{shift={(1.5,0)}}
    \draw[thick] (0,0) rectangle +(1,1);
    \draw[->,red,thick] (0,0) -- +(0.8,0.2);
    \draw[->,blue,thick] (0.8,0.2) -- +(0.4,0.4);
    \draw[->,green,thick] (1.2,0.6) -- +(0.4,0.4);
    \draw[dotted] (0,0) rectangle +(0.8,0.2);
    \draw[dotted] (0.8,0.2) rectangle +(0.4,0.4);
    \draw[dotted] (1.2,0.6) rectangle +(0.4,0.4);
    \draw[<->] (0,-\shiftline) --node[midway,below]{\small{$0.8$}}
                            +(0.8,0);
    \draw[<->] (0.8,0.2-\shiftline) --node[midway,below right]{\small{$0.4$}}
                            +(0.4,0);
    \draw[<->] (-\shiftline,0) --node[midway,left]{\small{$0.2$}}
                            +(0,0.2);
    \draw[<->] (0.8-\shiftline,0.2)
                --node[midway,left]{\small{$0.4$}} +(0,0.4);
    \draw[<->] (1.2,0.6-\shiftline)
                --node[midway,below]{\small{$0.4$}} +(0.4,0);
    \draw[<->] (1.6+\shiftline,0.6)
                --node[midway,right]{\small{$0.4$}} +(0,0.4);

\end{tikzpicture}

\caption{Geometric \& vector packing: We can pack rectangles
$(0.8,0.2)$, $(0.4,0.4)$, $(0.4,0.4)$ into unit geometric bin but can't
pack vectors $(0.8,0.2)$, $(0.4,0.4)$, $(0.4,0.4)$ into unit vector bin.}
\label{fig:geom-vec-diff}
\end{figure}
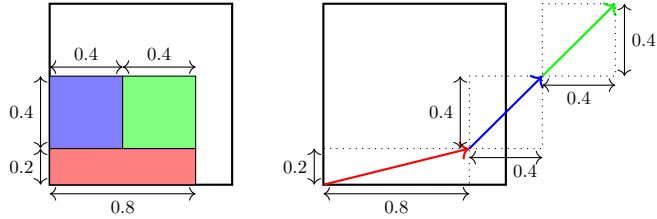

Weight has been an important constraint to consider for packing in logistics and supply chain
management, e.g., cranes and other equipment can be damaged by the bins being too heavy,
or by a skewed weight distribution \cite{alonso2017mathematical}.
While the container loading problems mostly consider packing items inside a container,
the container stowage planning problem considers the stacking of the containers
onto and off cargo ships \cite{monaco2014terminal}.
Even when different cargoes are packed into a fleet of aircraft for transport,
one needs the individual cargoes to be not too heavy to ensure stability
and less fuel consumption \cite{amiouny1992balanced}.
Similar problems find applications in vehicle routing with loading constraints \cite{bortfeldt2013constraints}.
Many practical heuristics \cite{sorset2019heuristic,taylor2017three}
have been proposed for these kinds of problems.
Several companies (such as Driw, Boxify, Freightcom) and practical packages \cite{yang2017gbp}
have considered the problem. In many cases, we also want to ensure a limit on other attributes,
such as the amount of magnetism, radioactivity, or toxicity.
Each of these properties can be considered as additional vector dimensions.

Such multidimensional packing problems are also getting attention due to their
connections with fair resource allocation \cite{fairknap}.
In recent years, a considerable amount of research has focused on
group fairness \cite{JosephKMR16, TsangWRTZ19} such that the algorithms are
not biased towards (or against) some groups or categories.
One such notion of fairness is {\em restricted dominance} \cite{BeraCFN19},
which upper bounds the number (or size) of items from a category.
These different categories can be considered as dimensions.
E.g., in a container packing problem for flood relief,
one needs to ensure that the money spent on a container is fairly distributed among
different types of items (such as medicine, food, garments).
Hence, for each category, there is an upper bound on the value that can go into a container.

Formally, we are given $n$ items $I\defeq\{1, 2, \dots, n\}$
that are \geomvecdim{$d_g$}{$d_v$}, i.e., item $i$ is a
$d_g$-dimensional cuboid of lengths $\ell_1(i), \ell_2(i), \ldots, \ell_{d_g}(i)$
and has $d_v$ non-negative weights $v_1(i), v_2(i), \ldots, v_{d_v}(i)$.
A \geomvecdim{$d_g$}{$d_v$} bin is a $d_g$-dimensional cuboid of length 1
in each geometric dimension and weight capacity 1 in each of the $d_v$ vector dimensions.
A feasible packing of items into a bin is a packing where items are packed parallel to the axes
without overlapping, and for all $j \in [d_v]$,
the sum of the $j\Th$ weight-dimension of the items in the bin is at most 1.
In the \geomvec{$d_g$}{$d_v$} bin packing problem (BP),
we have to feasibly pack all items into the minimum number of bins.
In the \geomvec{$d_g$}{$d_v$} knapsack problem (KS),
each item $i$ also has an associated nonnegative profit $p(i)$,
and we have to feasibly pack a maximum-profit subset of the items into a single bin
(also called `knapsack').
\geomvec{$d_g$}{$d_v$} packing problems generalize both $d_g$-dimensional geometric packing
(when $d_v=0$) and $d_v$-dimensional vector packing (when $d_g=0$).
Already for vector packing, if $d_v$ is part of the input, there is an approximation
hardness of ${d_v}^{1-\eps}$ unless NP=ZPP \cite{BansalE016}.
Thus, throughout the paper we assume both $d_g, d_v$ to be constants.

\subsection{Our Results}
\label{subsec:ourcont}

We study the first approximation algorithms for general \geomvec{$d_g$}{$d_v$} BP,
with a focus on $d_g = 2$.
We give two simple algorithms for \geomvec{2}{$d$} BP, called $\simplePack$
and $\betterSimplePack$, having asymptotic approximation ratios (AARs) of
$6(d+1)$ and $3(1+\ln(d+1))+\eps$, respectively, for any $\eps > 0$.
For $d = 1$, $\betterSimplePack$'s AAR improves to $\approx 4.216 + \eps$.

Next, we modify the Round-and-Approx (R\&A) framework \cite{bansal2014binpacking}
so that it works for \geomvec{$d_g$}{$d_v$} BP.
We combine R\&A with the $\simplePack$ algorithm to get an AAR of
$2(1 + \ln(3(d+1))) + \eps$ for \geomvec{2}{$d$} BP.
This improves upon the AAR of $\betterSimplePack$ for $d \ge 3$.

Next, we obtain a more sophisticated algorithm for \geomvec{2}{$d$} BP, called $\cbPack$,
that fits into the R\&A framework and has an even better AAR.
\begin{theorem}
\label{thm:2-d-gvbp}
For any $\eps > 0$, there is a polynomial-time algorithm for \geomvec{2}{$d$} BP,
called $\cbPack$, having AAR $2(1+\ln((d+4)/2))+\eps$
(improves to $2(1+\ln((d+3)/2))+\eps$ when items can be rotated by $90\degree$).
For $d=1$, the AAR improves to $2(1+\ln(19/12))+\eps$ $\approx 2.919+\eps$
(further improves to $2(1+\ln(3/2))+\eps$ $\approx 2.811 + \eps$ when items can be rotated).
\end{theorem}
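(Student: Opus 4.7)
The target AAR $2(1+\ln((d+4)/2))+\eps$ has the signature shape $\alpha(1+\ln(\tau/\alpha))+\eps$ produced by the R\&A framework (as generalized earlier in the paper) whenever one can supply (i) a weighting function $w\colon I\to\mathbb{R}_{\ge 0}$ with $\sum_{i\in C}w(i)\le\tau$ for every feasible bin-configuration $C$, and (ii) a polynomial-time subroutine that feasibly packs any item set $S$ with $\sum_{i\in S}w(i)\le\alpha$ into at most $\alpha$ bins. My plan is therefore to supply such a pair with $\alpha=2$ and $\tau=d+4$, using $\cbPack$ as the subroutine, and then invoke the framework mechanically.

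For the weighting, a natural choice is $w(i)=4\,a(i)+\sum_{j=1}^{d}v_j(i)$, where $a(i)=\ell_1(i)\ell_2(i)$ is the geometric area: since $\sum_{i\in C}a(i)\le 1$ and $\sum_{i\in C}v_j(i)\le 1$ for every feasible configuration, the per-configuration bound $\tau\le d+4$ is immediate. For $\cbPack$ itself I would layer the machinery already developed in the paper: apply \weightRoundingHyp{} and \remMedHyp{} to reduce to instances with only $O_\eps(1)$ item types, then enumerate all \compartmentalHyp{} layouts of the ``large'' items across two bins via \finePartHyp{} in polynomial time, and finally place the small items into the remaining compartments via an NFDH-style rule combined with a first-fit step that balances the $d$ vector loads between the two bins. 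The desired guarantee is that when $\sum_{i\in S}w(i)\le 2$ the total area is at most $1/2$ and the total vector load in each coordinate is at most $2$, so the geometric slack permits a two-bin compartmental packing and the vector slack permits a balanced distribution.

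Plugging $(w,\cbPack)$ into R\&A then yields the claimed ratio: the configuration-LP rounding step covers most items using $(1+\eps)\OPT^{*}$ bins, the uncovered residual has expected $w$-mass at most $\OPT^{*}$, and a greedy subset-sum-fit procedure partitions this residual into $\OPT^{*}\cdot\ln(\tau/\alpha)+O(1)$ groups of $w$-mass at most $\alpha=2$, each packed by $\cbPack$ into two bins. The rotation variant drops $\tau$ to $d+3$ by orienting each item so that $\ell_1(i)\le\ell_2(i)$, effectively halving one geometric slack term in $w$. The $d=1$ refinement requires a more surgical weighting that splits items into wide, tall, area-heavy, and vector-heavy classes, charging each class against its binding resource so that $\tau/\alpha$ drops to $19/12$; the proof skeleton is identical.

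The main obstacle, and where I expect the bulk of the technical work to lie, is the two-bin guarantee for $\cbPack$ under the \emph{joint} geometric-plus-vector budget. For pure 2D packing an area-at-most-$1/2$ NFDH argument suffices; here one must simultaneously keep each of the $d$ vector coordinates within capacity $1$ per bin while the compartmental enumeration fixes the geometric layout. I expect this to need a careful interleaving of the fine-partition enumeration with a vector-balanced assignment (e.g.\ a first-fit decreasing rule on vector load), together with the fact that after \remMedHyp{} each small item contributes only an $\eps$-fraction in every vector coordinate, so the greedy split incurs only an $\eps\OPT$ loss that can be absorbed into the final $\eps$ term in the statement.
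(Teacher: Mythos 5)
Your proposal takes the ``subset-oblivious weighting function'' route to R\&A, essentially the original formulation of Bansal--Caprara--Sviridenko, rather than the $\round/\complexPack/\unround$ decomposition the paper develops; that is a genuinely different route, and in principle a cleaner one, but it has a fatal gap. Your step (ii) asserts that any item set $S$ with $\sum_{i\in S}w(i)\le 2$, where $w(i)=4a(i)+\sum_{j=1}^d v_j(i)$, can be packed into $2$ bins. This is false. Take $d=1$ and three items each with $v_1(i)=0.6$ and arbitrarily small area: then $\sum_i w(i)\approx 1.8<2$, yet no two of these items fit together in a single bin (their combined weight is $1.2>1$), so three bins are forced. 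This is exactly the phenomenon the paper warns about in \cref{sec:intro:tech-contrib}: items of small area can have large vector dimensions, so area-plus-weight budgets do not bound bin count the way area alone does in pure geometric packing. Your informal claim that ``the vector slack permits a balanced distribution'' is precisely the assertion one would need to prove, and it does not hold with this weighting.

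Beyond that, the mechanics you describe for R\&A do not match how the framework produces the $1+\ln\rho$ gain: there is no ``greedy subset-sum-fit'' that partitions the residual into $\ln(\tau/\alpha)$ groups each handed to a $2$-bin subroutine. In the actual argument the residual $S$ is packed by a single invocation of the structured-packing algorithm, and the factor $\ln\beta$ enters through the number $T=\smallceil{(\ln\beta)\smallnorm{\xhat}_1}$ of randomized-rounding steps together with the concentration bound $\fsopt(\Stild)\lesssim\fsopt(\Itild)/\beta$ (\cref{lem:fopt-conc}). The paper's $d+4$ is not a per-configuration budget but the constant $a$ in the structural theorem (\cref{thm:slack-pack} via \cref{lem:ws-d-pack}): starting from an optimal packing into $m$ bins, each bin is split into $2$ bins plus $2$ boxes (\cref{lem:rnd1}), and then each of those four pieces is made $\eps$-slacked using \cref{lem:split-slack}, costing at most $|D_k|+|E_k|+|F_k|+|G_k|+4\le d+4$ bins because the four sets of tight weight-coordinates are disjoint subsets of $[d]$. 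The $d=1$ improvement to $19/6$ comes from the far more delicate repacking in \cref{lem:ws-1-pack}, not from a redesigned weight vector. The final AAR then reads $\mu\bigl(1+\ln(\alpha\gamma\rho/\mu)\bigr)$ with $\mu=2(1+\eps)$ from the configuration-LP solver of \cite{eku-pst,gvks}, $\rho\approx a$, $\alpha\approx 1$, $\gamma=1$, which is where the $2\bigl(1+\ln(a/2)\bigr)$ shape actually comes from. So the arithmetic in your proposal coincidentally lands on the right number, but the quantities $\alpha=2$ and $\tau=d+4$ do not play the roles you assign them, and the key lemma you would need (the $2$-bin guarantee) is false.
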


\begin{table}[!ht]
\centering
\caption{Comparison of asymptotic approximation ratios (AARs) of
algorithms for \geomvec{2}{$d$} BP.}
\begin{tabular}{|l|c|c|}
\hline & \geomvec{2}{$d$} BP & \geomvec{2}{1} BP
\\ \hline $\simplePack$
    & $6(d+1)$
    & $12$
\\ \hline $\betterSimplePack$
    & $3(1 + \ln(d+1)) + \eps$
    & $3(1 + \ln(3/2)) + \eps \approx 4.216+\eps$
\\ \hline $\simplePack$ with R\&A
    & $2(1 + \ln(3(d+1))) + \eps$
    & $2(1 + \ln 6) + \eps$ $\approx 5.5835+\eps$
\\ \hline $\cbPack$ (without rotation)
    & {$2(1+\ln(\frac{d+4}{2})) + \eps$}
    & $2(1+\ln(19/12))+\eps$ $\approx 2.919+\eps$
\\ \hline $\cbPack$ (with rotation)
    & ${2(1+\ln(\frac{d+3}{2})) + \eps}$
    & $2(1+\ln(3/2))+\eps$ $\approx 2.811+\eps$
\\ \hline
\end{tabular}
\label{table:gvbp-aar}
\end{table}

We also show how to extend $\simplePack$ and $\betterSimplePack$ to \geomvec{$d_g$}{$d_v$} BP
to obtain AARs $2b(d_v+1)$ and $b(1 + \ln(d_v + 1) + \eps)$, respectively,
where $b \defeq 4^{d_g} + 2^{d_g}$.
We also give a similar algorithm for \geomvec{$d_g$}{$d_v$} KS
having approximation ratio $b(1+\eps)$.

\subsection{Related Work}
\label{subsec:prior}

The bin packing problem (BP) has been the cornerstone of approximation algorithms \cite{Johnson}.
The standard performance measure for BP algorithms is the asymptotic approximation ratio (AAR).
An asymptotic polynomial time approximation scheme (APTAS) for BP was given by Fernandez de la Vega and Lueker \cite{vl81}, using linear grouping.
Note that 1-D BP can be considered as \geomvec{$1$}{$0$} BP as well as \geomvec{$0$}{$1$} BP.
The present best approximation algorithm for 1-D BP returns a packing
in at most $\opt+O(\log\opt)$ bins, where $\opt$ is the optimal number of bins \cite{HobergR17}.
Knapsack problem (KS) is one of Karp's 21 NP-complete problems.
Lawler gave an FPTAS \cite{Lawler77} for KS.
For surveys on BP and KS, we refer the readers to \cite{coffman2013bin,kellererBook}.

In \cite{vl81}, a $(d+\eps)$-asymptotic approximation algorithm was given
for the $d$-dimensional vector bin packing ($d$-D VBP).
Chekuri and Khanna gave a $O(\log d)$ approximation for $d$-D VBP \cite{ChekuriK04}.
The study of 2-D geometric bin packing (2-D GBP) was initiated by \cite{coffman1980performance}.
Caprara gave a $T_{\infty}^{d-1}$-asymptotic approximation Harmonic-Decreasing-Height (HDH)
algorithm for $d$-D GBP \cite{caprara2002packing}, where $T_{\infty} \approx 1.6901$.
This is still the best known approximation for $d$-D GBP for $d\ge3$.
Both 2-D GBP and 2-D VBP were shown to be APX-hard \cite{bansal2006bin,Woeginger97}.

Bansal, Caprara and Sviridenko \cite{bansal2009new} introduced the
{\em Round-and-Approx (R\&A)} framework to obtain improved approximations
for both 2-D GBP and $d$-D VBP. The R\&A framework is a two stage process.
First, a (possibly exponential-sized) set covering LP relaxation (called configuration LP)
is solved approximately. Then, a randomized rounding procedure is applied for a few steps
to pack a subset of items, after which only a `small' fraction of items
(called the \emph{residual instance}) remain unpacked.
In the second step, the residual instance is packed using a {\em subset-oblivious} algorithm.
Intuitively, given a {\em random} subset $S$ of $I$ where each element occurs
with probability about $1/k$, a $\rho$-approximate subset-oblivious algorithm
produces a packing of $S$ in approximately $\rho \opt(I)/k$ bins.
In the R\&A framework, one can obtain a $(1+\ln\rho)$-approximation algorithm using
a $\rho$-approximate subset oblivious algorithm.
Two algorithms, 1-D BP APTAS \cite{vl81} and HDH \cite{caprara2002packing}
were shown to be subset-oblivious based on various properties of dual-weighting functions.
This led to an AAR of $(1+\ln(1.69))$ and $(1+\ln d)$
for 2-D GBP and $d$-D VBP, respectively.
However, it was cumbersome to extend subset-obliviousness to wider classes of algorithms.

Bansal and Khan \cite{bansal2014binpacking} later extended the R\&A framework for 2-D GBP to
{\em rounding-based} algorithms, where the large dimensions are rounded up to $O(1)$ values
and the packing of items is container-based, i.e. each bin contains a constant number of
rectangular regions called containers and items are packed in a special way into containers.
For 2-D GBP, they used a 1.5-asymptotic-approximation algorithm \cite{jansen2016new}
to obtain the present best $(1+\ln 1.5) \approx 1.405$-asymptotic-approximation.
For $d$-D VBP, Bansal et al.~\cite{BansalE016} used the R\&A framework
combined with a multi-objective budgeted matching problem,
to obtain the present best AAR of $(0.81+o_d(1)+\ln d)$.

Multidimensional knapsack is also well-studied.
For $d$-D vector knapsack ($d$-D VKS), Frieze and Clarke gave a PTAS \cite{FriezeClarke84}.
For 2-D geometric knapsack (GKS), Jansen and Zhang \cite{jansen2004maximizing}
gave a 2-approximation algorithm, while the present best approximation ratio is 1.89 \cite{l-packing}.
It is not even known whether 2-D GKS is APX-hard or not.
There are many other related important geometric packing problems,
such as strip packing \cite{GGIK16,Galvez20} and
maximum independent set of rectangles \cite{KhanR20}.
For surveys on multidimensional packing, see \cite{CKPT17,Khan16}.

\subsection{Technical Contribution}
\label{sec:intro:tech-contrib}

One of our main contributions is the enhancement of R\&A framework \cite{bansal2014binpacking} to wider applications.

First, R\&A framework now also works with \geomvecdim{$d_g$}{$d_v$} items,
unifying the approach for geometric and vector packing.
To use R\&A, we need to solve the \config{} LP of the corresponding bin packing problem.
All previous applications (d-D VBP and 2-D GBP) of R\&A solved the \config{} LP within
$(1+\eps)$ factor using a $(1+O(\eps))$-approximate solution to KS.
Due to the unavailability of a PTAS for \geomvec{2}{$d$} KS, we had to adapt and use a different
linear programming algorithm \cite{eku-pst} that uses an
$\eta$-approximation algorithm for KS to
$(1+\eps) \eta$-approximately solve the \config{} LP of
the corresponding BP problem, for any constants $1 < \eta$, $0<\eps < 1$.

Second, we introduce more freedom in choosing the packing structure.
Unlike the R\&A framework in \cite{bansal2014binpacking}
that worked only for container-based packing,
we allow either relaxing the packing structure to non-container-based
(like in $\simplePack$)
or imposing packing constraints in addition to being container-based
(like in $\cbPack$).
This generalization can help in finding better algorithms for other variants of BP.

Finally, we allow rounding items in ways other than rounding up, if we can find
a suitable way of \emph{unrounding} a packing of rounded items.
In $\cbPack$, we round down the width and height of
some items to 0, and in $\simplePack$,
we round each \geomvecdim{2}{$d$} item $i$ to an item of width 1,
height $x$ and each vector dimension $x$, where $x$ is a value depending
on the original dimensions of $i$.
As it was shown in \cite{Khan16}, if the large coordinates of items are rounded to $O(1)$ types,
we cannot obtain better than $d$ and $4/3$-approximation for $d$-D VBP and 2-D GBP, respectively.
However, as we now allow rounding down, we may be able to use the R\&A framework
with algorithms having better approximation ratios.

We also fix a minor error in the R\&A framework of \cite{Khan16}
(see \cref{sec:rna-extra:bugfix} for details).

In \cite{bansal2009new}, it was mentioned:
``One obstacle against the use of R\&A for other problems is the difficulty in
deriving subset-oblivious algorithms (or proving that existing algorithms are subset oblivious).''
We expect that our progress will help in understanding the power of R\&A to
extend it to other set-cover type problems,
e.g. round-SAP and round-UFP \cite{ElbassioniGGKN12}.

Our another major contribution is handling of the \geomvec{2}{$d$} BP problem.
This problem presents additional challenges over pure geometric BP,
and our algorithm $\cbPack$ demonstrates how to circumvent those challenges.
For example, in geometric packing, items having low total area can be packed into a small number of bins
using the Next-Fit-Decreasing-Height (NFDH) algorithm
(see \cite{coffman1980performance} and \cref{lem:nfdh-strip,lem:nfdh-small} in \cref{sec:next-fit}).
This need not be true when items have weights, since the geometric dimensions
can be small but the vector dimensions may be large.
To handle this, we divide the items into different classes based on density (i.e., weight/area).
We use the facts that items of low density and low total area can be packed into a small number of bins,
and items of high density that fit into a bin have low total area.
Also, in geometric packing, we can sometimes move items with similar geometric dimensions
across different bins (like in \emph{linear grouping} \cite{vl81,pradel-thesis}).
Vector dimensions again create problems here, since the items can have very different weights.
To handle this, we only move items of similar geometric dimensions and density.
This leads to a more {\em structured} packing and we show how to
find such a near-optimal structured packing efficiently.

\section{Preliminaries and Notation}
A valid \emph{packing} of a given set of
\geomvecdim{$d_g$}{$d_v$} items into a \geomvecdim{$d_g$}{$d_v$} bin
is an arrangement of the items in the bin such that:
\emph{(i)} All items are packed in an axis parallel manner, i.e.,
    each item has its faces parallel to the faces of the bin.
    Formally, to pack an item $i$ into a bin, we need to decide its position
    $(x_1(i), x_2(i), \ldots, x_{d_g}(i))$ in the bin,
    and the item will be packed in the cuboidal region
    $\prod_{j=1}^{d_g} [x_j(i), x_j(i)+\ell_j(i)]$.
\emph{(ii)} The items are non-overlapping, i.e., the interiors of any two items
    do not intersect. Formally, for any two items $i_1$ and $i_2$ in the same bin, the sets
    $\prod_{j=1}^{d_g} (x_j(i_1), x_j(i_1)+\ell_j(i_1))$ and
    $\prod_{j=1}^{d_g} (x_j(i_2), x_j(i_2)+\ell_j(i_2))$ do not intersect.
\emph{(iii)} All items are packed completely inside the bin,
    i.e., for each item $i$ and each $j \in [d_g]$,
    $x_j(i) \ge 0$ and $x_j(i) + \ell_j(i) \le 1$.
\emph{(iv)} The total weight packed in each of the $d_v$ vector dimensions is at most one.

Let $\mathcal P$ be a minimization problem and let $\mathcal I$ denote the set of
all input instances of $\mathcal P$. Consider a polynomial-time approximation algorithm
$\mathcal A$ for $\mathcal{P}$. For any input $I \in \mathcal I$, let
$\opt(I)$ denote the cost of the optimal solution and let
$\mathcal A(I)$ denote the cost of $\Acal$'s output on $I$.
Then the quantity
\begin{align*}
\rho_{\mathcal A} \defeq \sup_{I\in\mathcal I}\frac{\mathcal A(I)}{\opt(I)}
\end{align*}
is called the \emph{approximation ratio} of $\mathcal A$ and the
\emph{asymptotic approximation ratio (AAR)} of $\mathcal A$ is defined as
\begin{align*}
\rho_{\mathcal A}^\infty \defeq
\lim_{z\to\infty}\sup_{I\in\mathcal I}
    \left\{\frac{\mathcal A(I)}{\opt(I)}\Big\vert\opt(I)=z\right\}.
\end{align*}
Intuitively, AAR denotes the algorithm's performance for inputs whose optimum value is large.

Let $[n] \defeq \{1, 2, \ldots, n\}$.
Let $\poly(n)$ be the set of polynomial and sub-polynomial functions of $n$.
Define $v_{\max}$, $\vol$, and $\Span$ as follows:
$v_{\max}(i) \defeq \max_{j=1}^{d_v} v_j(i)$,
$\vol(i) \defeq \prod_{j=1}^{d_g} \ell_j(i)$,
$\Span(i) \defeq \max(\vol(i), v_{\max}(i))$.
$\Span(i)$ is, intuitively, the measure of \emph{largeness} of item $i \in [n]$.
For convenience, let $v_0(i) \defeq \vol(i)$.
Assume \wLoG{}\ that $\vol(i) = 0$ implies $(\forall j \in [d_g], \ell_j(i) = 0)$.
For a set $I$ of items, given a function $f: I \mapsto \mathbb{R}$,
for $S \subseteq I$, define $f(S) \defeq \sum_{i \in S} f(i)$.
This means, e.g., $\vol(S) \defeq \sum_{i \in S} \vol(i)$.
For any bin packing algorithm $\Acal$, let $\Acal(I)$ be the resulting bin packing of items $I$,
and let $|\Acal(I)|$ be the number of bins in $\Acal(I)$.
Define $\opt(I)$ as the minimum number of bins needed to pack $I$.

\begin{lemma}
\label{lem:span-lb-opt}
For \geomvec{$d_g$}{$d_v$} items $I$, $\ceil{\Span(I)} \le (d_v+1)\opt(I)$.
\end{lemma}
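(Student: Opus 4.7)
The plan is to bound $\Span(I)$ by the total ``mass'' in each of the $d_v+1$ dimensions (the $d_g$-volume plus the $d_v$ weight-dimensions) and then use feasibility of the optimal packing to bound each of those sums by $\opt(I)$.

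First, I would observe that for each item $i$, since $\Span(i) = \max(\vol(i), v_{\max}(i)) = \max_{j \in \{0,1,\dots,d_v\}} v_j(i)$ (using the convention $v_0(i) \defeq \vol(i)$), and since the maximum of nonnegative numbers is at most their sum,
\[
\Span(i) \le \sum_{j=0}^{d_v} v_j(i).
\]
Summing over $i \in I$ and swapping the order of summation gives
\[
\Span(I) \le \sum_{j=0}^{d_v} v_j(I).
\]

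Next, I would use the validity of an optimal packing. In any packing into bins, each bin contributes volume at most $1$ in the geometric sense (since items are packed inside a unit bin without overlap, so $\vol$ summed over items in one bin is at most $1$) and weight at most $1$ in each vector dimension. Hence, for every $j \in \{0,1,\dots,d_v\}$, $v_j(I) \le \opt(I)$. Combining with the previous inequality,
\[
\Span(I) \le (d_v+1)\opt(I).
\]
Since $(d_v+1)\opt(I)$ is an integer, taking ceilings yields $\ceil{\Span(I)} \le (d_v+1)\opt(I)$, as required.

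There is no serious obstacle here; the only subtlety worth noting explicitly is the unified treatment of the volume dimension and the weight dimensions via the convention $v_0 \defeq \vol$, which makes the ``$d_v+1$'' factor appear naturally as the number of summands in the bound on $\Span(i)$.
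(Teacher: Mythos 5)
Your proof is correct and follows essentially the same route as the paper's: bound $\Span(i) = \max_{j=0}^{d_v} v_j(i)$ by $\sum_{j=0}^{d_v} v_j(i)$, then use the fact that each bin contributes at most $1$ in each of the $d_v+1$ ``dimensions'' (volume plus weights). The paper organizes the double sum by bin and then by dimension while you sum by dimension across all items, but the content is identical; your final ceiling step (using that $(d_v+1)\opt(I)$ is an integer) is a clean way to handle the rounding.
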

\begin{proof}
Let there be $m$ bins in an optimal packing of \geomvecdim{$d_g$}{$d_v$} items $I$.
In this optimal packing, let $J_j$ be the items in the $j\Th$ bin. Then
\begin{align*}
\ceil{\Span(I)} &= \ceil{\sum_{k=1}^m \sum_{i \in J_k} \max_{j=0}^{d_v} v_j(i)}
\le \ceil{\sum_{k=1}^m \sum_{i \in J_k} \sum_{j=0}^{d_v} v_j(i)}
\le \ceil{\sum_{k=1}^m \sum_{j=0}^{d_v} 1}
= (d_v+1)m  \qedhere
\end{align*}
\end{proof}

For $d_g = 2$, let $w(i) \defeq \ell_1(i)$, $h(i) \defeq \ell_2(i)$
be the width and height of item $i$, respectively.
The area of item $i$ is $a(i) \defeq w(i)h(i) = \vol(i)$.
The items in \geomvec{$2$}{$0$} BP are called `rectangles'.

\subsection{\Config{} LP}

For a bin packing instance $I$ containing $n$ items,
a \config{} of $I$ is a packing of a subset of items of $I$ into a bin
(see \cref{fig:22-config}).
We denote the set of all \config{}s of $I$ by $\mathcal{C}$.
The \config{} matrix of $I$ is a matrix $A \in \{0, 1\}^{n \times |\mathcal{C}|}$ where
$A[i, C]$ is 1 if \config{} $C$ contains item $i$ and 0 otherwise.
To solve the bin packing problem, it is sufficient to decide
the number of bins of each \config{}.
This gives us the following linear programming relaxation, called a \config{} LP:
\[ \min_{x \in \mathbb{R}^{|\mathcal{C}|}} \sum_{C \in \mathcal{C}} x_C
\quad\textrm{where}\quad Ax \ge \mathbf{1} \textrm{ and } x \ge 0 \]
A feasible solution $x$ to the \config{} LP is a vector in $\mathbb{R}^{|\mathcal{C}|}$,
and $|\mathcal{C}|$, the number of \config{}s, can be exponential in $n$.
But if $|\support(x)| \in \poly(n)$, then $x$ may have a polynomial-sized representation.

\begin{figure}[htb]
\centering
\begin{tikzpicture}[
    every node/.style={scale=0.64},
    scale=0.64,
]
\pgfdeclarepatternformonly{densedots}{
    \pgfqpoint{-1pt}{-1pt}}{\pgfqpoint{1pt}{1pt}}{\pgfqpoint{2pt}{2pt}
}
{
  \pgfpathcircle{\pgfqpoint{0pt}{0pt}}{.5pt}
  \pgfusepath{fill}
}
\pgfkeys{/pgf/number format/precision=1}
\newcommand*{\shifthist}{0.25cm}
\newcommand*{\shiftlabel}{0.15}
\newcommand*{\histwidth}{0.5cm}
\newcommand*{\scalehist}{3}
\newcommand*{\scaleitem}{3}
\newcommand*{\fillop}{0.5}
\newcommand*{\truew}[1]{
    \ifnum#1=1
        0.4
    \else
        \ifnum#1=2
            0.4
        \else
            \ifnum#1=3
                0.5
            \else
                0.5
            \fi
        \fi
    \fi
}
\newcommand*{\trueh}[1]{
    \ifnum#1=1
        1
    \else
        \ifnum#1=2
            1
        \else
            \ifnum#1=3
                0.3
            \else
                0.4
            \fi
        \fi
    \fi
}
\newcommand*{\truevone}[1]{
    \ifnum#1=1
        0.2
    \else
        \ifnum#1=2
            0.6
        \else
            \ifnum#1=3
                0.5
            \else
                0.3
            \fi
        \fi
    \fi
}
\newcommand*{\truevtwo}[1]{
    \ifnum#1=1
        0.4
    \else
        \ifnum#1=2
            0.2
        \else
            \ifnum#1=3
                0.2
            \else
                0.2
            \fi
        \fi
    \fi
}
\newcommand*{\w}[1]{\truew{#1}*\scaleitem}
\newcommand*{\h}[1]{\trueh{#1}*\scaleitem}
\newcommand*{\vone}[1]{\truevone{#1}*\scalehist}
\newcommand*{\vtwo}[1]{\truevtwo{#1}*\scalehist}
\newcommand*{\col}[1]{
    \ifnum#1=1 \colorlet{code}{red!50}
    \else
        \ifnum#1=2
            \colorlet{code}{blue!50}
        \else
            \ifnum#1=3
                \colorlet{code}{green!50}
            \else
                \colorlet{code}{gray!50}
            \fi
        \fi
    \fi
}
\newcommand*{\histone}[2]{
    \col{#2};
    \draw[thick,black] #1 -- +(2*\histwidth+2*\shifthist,0);
    \filldraw[pattern=densedots, pattern color=code]
        ($#1+(\shifthist,0)$) rectangle +(\histwidth,\vone{#2});
    \filldraw[pattern=north east lines, pattern color=code]
        ($#1+(\shifthist+\histwidth,0)$) rectangle +(\histwidth,\vtwo{#2});

    \foreach[evaluate={\height=\truevone{#2}}] \x in {0}{
    \node at ($#1+(\shifthist+0.5*\histwidth,\vone{#2})$) [above]
                                            {\pgfmathroundtozerofill{\height}\pgfmathresult};
                                        }
    \foreach[evaluate={\height=\truevtwo{#2}}] \x in {0}{
    \node at ($#1+(\shifthist+1.5*\histwidth,\vtwo{#2})$) [above]
                                            {\pgfmathroundtozerofill{\height}\pgfmathresult};
                                        }
}
\newcommand*{\histtwo}[3]{
    \col{#2};
    \draw[thick,black] #1 -- +(2*\histwidth+2*\shifthist,0);
    \filldraw[pattern=densedots, pattern color=code]
        ($#1+(\shifthist,0)$) rectangle +(\histwidth,\vone{#2});
    \filldraw[pattern=north east lines, pattern color=code]
        ($#1+(\shifthist+\histwidth,0)$) rectangle +(\histwidth,\vtwo{#2});

    \col{#3};
    \filldraw[pattern=densedots, pattern color=code]
        ($#1+(\shifthist,\vone{#2})$) rectangle +(\histwidth,\vone{#3});
    \filldraw[pattern=north east lines, pattern color=code]
        ($#1+(\shifthist+\histwidth,\vtwo{#2})$) rectangle +(\histwidth,\vtwo{#3});

    \foreach[evaluate={\height=(\truevone{#2}+\truevone{#3})}] \x in {0}{
    \node at ($#1+(\shifthist+0.5*\histwidth,\height*\scalehist)$) [above]
                                            {\pgfmathroundtozerofill{\height}\pgfmathresult};
                                        }
    \foreach[evaluate={\height=(\truevtwo{#2}+\truevtwo{#3})}] \x in {0}{
    \node at ($#1+(\shifthist+1.5*\histwidth,\height*\scalehist)$) [above]
                                            {\pgfmathroundtozerofill{\height}\pgfmathresult};
                                        }
}
\newcommand*{\histthree}[4]{
    \draw[thick,black] #1 -- +(2*\histwidth+2*\shifthist,0); %

    \col{#2};
    \filldraw[pattern=densedots, pattern color=code]
        ($#1+(\shifthist,0)$) rectangle +(\histwidth,\vone{#2});
    \filldraw[pattern=north east lines, pattern color=code]
        ($#1+(\shifthist+\histwidth,0)$) rectangle +(\histwidth,\vtwo{#2});

    \col{#3};
    \filldraw[pattern=densedots, pattern color=code]
        ($#1+(\shifthist,\vone{#2})$) rectangle +(\histwidth,\vone{#3});
    \filldraw[pattern=north east lines, pattern color=code]
        ($#1+(\shifthist+\histwidth,\vtwo{#2})$) rectangle +(\histwidth,\vtwo{#3});

    \col{#4};
    \filldraw[pattern=densedots, pattern color=code]
        ($#1+(\shifthist,\vone{#2}+\vone{#3})$) rectangle +(\histwidth,\vone{#4});
    \filldraw[pattern=north east lines, pattern color=code]
        ($#1+(\shifthist+\histwidth,\vtwo{#2}+\vtwo{#3})$) rectangle +(\histwidth,\vtwo{#4});

    \foreach[evaluate={\height=(\truevone{#2}+\truevone{#3}+\truevone{#4})}] \x in {0}{
    \node at ($#1+(\shifthist+0.5*\histwidth,\height*\scalehist)$) [above]
                                            {\pgfmathroundtozerofill{\height}\pgfmathresult};
                                        }
    \foreach[evaluate={\height=(\truevtwo{#2}+\truevtwo{#3}+\truevtwo{#4})}] \x in {0}{
    \node at ($#1+(\shifthist+1.5*\histwidth,\height*\scalehist)$) [above]
                                            {\pgfmathroundtozerofill{\height}\pgfmathresult};
                                        }
}
\histone{(\w{1}+0.1,0)}{1};
\draw[fill=code] (0,0) rectangle +(\w{1},\h{1});
\draw[<->] (0,-\shiftlabel) --node[midway,below]{\truew{1}} +(\w{1},0);
\draw[<->] (-\shiftlabel,0) --node[midway,left]{\trueh{1}} +(0,\h{1});

\tikzset{shift={(4.5,0)}};
\histone{(\w{2}+0.1,0)}{2};
\draw[fill=code] (0,0) rectangle +(\w{2},\h{2});
\draw[<->] (0,-\shiftlabel) --node[midway,below]{\truew{2}} +(\w{2},0);
\draw[<->] (-\shiftlabel,0) --node[midway,left]{\trueh{2}} +(0,\h{2});

\tikzset{shift={(4.5,0)}};
\histone{(\w{3}+0.1,0)}{3};
\draw[fill=code] (0,0) rectangle +(\w{3},\h{3});
\draw[<->] (0,-\shiftlabel) --node[midway,below]{\truew{3}} +(\w{3},0);
\draw[<->] (-\shiftlabel,0) --node[midway,left]{\trueh{3}} +(0,\h{3});

\tikzset{shift={(4.5,0)}};
\histone{(\w{4}+0.1,0)}{4};
\draw[fill=code] (0,0) rectangle +(\w{4},\h{4});
\draw[<->] (0,-\shiftlabel) --node[midway,below]{\truew{4}} +(\w{4},0);
\draw[<->] (-\shiftlabel,0) --node[midway,left]{\trueh{4}} +(0,\h{4});
\tikzset{shift={(-13.5,-5)}};
\draw[thick] (0,0) rectangle +(\scaleitem,\scaleitem);
\col{1};
\draw[fill=code] (0,0) rectangle +(\w{1},\h{1});
\col{3};
\draw[fill=code] (\scaleitem,\scaleitem) rectangle +(-\w{3},-\h{3});
\col{4};
\draw[fill=code] (\scaleitem,\scaleitem-\h{3}) rectangle +(-\w{4},-\h{4});
\node at (\scaleitem-0.8,-0.25) [below] {\Large{Configuration 1}};
\histthree{(\scaleitem+0.1,0)}{1}{3}{4}

\tikzset{shift={(6,0)}};
\draw[thick] (0,0) rectangle +(\scaleitem,\scaleitem);
\col{1};
\draw[fill=code] (0,0) rectangle +(\w{1},\h{1});
\col{2};
\draw[fill=code] (\w{1},0) rectangle +(\w{2},\h{2});
\node at (\scaleitem-0.8,-0.25) [below] {\Large{Configuration 2}};
\histtwo{(\scaleitem+0.1,0)}{1}{2}

\tikzset{shift={(6,0)}};
\draw[thick] (0,0) rectangle +(\scaleitem,\scaleitem);
\col{2};
\draw[fill=code] (0,0) rectangle +(\w{2},\h{2});
\col{4};
\draw[fill=code] (\scaleitem,\scaleitem) rectangle +(-\w{4},-\h{4});
\node at (\scaleitem-0.8,-0.25) [below] {\Large{Configuration 3}};
\histtwo{(\scaleitem+0.1,0)}{2}{4}
\end{tikzpicture}

\caption{Top row contains four \geomvecdim{2}{2} items (histograms show vector dimensions).
Three configurations are shown below it.
Placing all items into one bin would violate both geometric and
vector constraints. Placing the green item in
configuration 3 violates vector constraints.}
\label{fig:22-config}
\end{figure}

\section{Simple Algorithms}
\label{sec:simple-algos}
\label{sec:span-pack}

In this section, we look at simple algorithms for \geomvec{2}{$d$} BP.
They are based on Steinberg's algorithm for rectangle packing \cite{steinberg1997strip},
which has the following corollary:

\begin{lemma}[Proof in \cref{sec:omitted}]
\label{corr:steinberg}
Let $I$ be a set of rectangles where $a(I) \le 1$.
Then there is a $O(n\log^2 n/\log\log n)$-time algorithm to pack $I$ into
3 square bins of side length 1.
\end{lemma}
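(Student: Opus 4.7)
The plan is to deduce this from Steinberg's theorem~\cite{steinberg1997strip}, which guarantees that a rectangle set $J$ can be packed into a single unit bin in $O(n \log^2 n / \log \log n)$ time whenever
\[
    2 a(J) + \max(0,\, 2 w_{\max}(J) - 1) \cdot \max(0,\, 2 h_{\max}(J) - 1) \le 1.
\]
A useful consequence I will use repeatedly: whenever $w_{\max}(J) \le 1/2$ or $h_{\max}(J) \le 1/2$, the correction term vanishes and Steinberg's hypothesis collapses to $a(J) \le 1/2$. The overall strategy is to produce a partition $I = J_1 \sqcup J_2 \sqcup J_3$ such that each $J_k$ satisfies Steinberg's hypothesis for a unit bin, then invoke Steinberg's algorithm three times.

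First I classify the items according to which side exceeds $1/2$: let $I_B = \{i : w(i) > 1/2,\, h(i) > 1/2\}$ (big), $I_W = \{i : w(i) > 1/2 \ge h(i)\}$ (wide), $I_H = \{i : h(i) > 1/2 \ge w(i)\}$ (tall), and $I_S = \{i : w(i), h(i) \le 1/2\}$ (small). Every $i \in I_B$ has $a(i) > 1/4$, so the hypothesis $a(I) \le 1$ forces $|I_B| \le 3$. Moreover $a(I_W) + a(I_H) \le a(I) \le 1$, so at most one of $a(I_W)$ and $a(I_H)$ exceeds $1/2$. The key compatibility facts are that $I_W \cup I_S$ has $h_{\max} \le 1/2$ and $I_H \cup I_S$ has $w_{\max} \le 1/2$, so for either of these unions Steinberg's hypothesis reduces to ``total area at most $1/2$''.

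The partition is built by case analysis on $|I_B|$. When $|I_B| \in \{1,2,3\}$ each big item is placed alone in its own bin, consuming $|I_B|$ of the three bins; the remaining $3 - |I_B|$ bins (together with the L-shaped slack of the big-item bins, where needed) must absorb $I_W \cup I_H \cup I_S$, whose total area is at most $1 - |I_B|/4$. When $|I_B| = 0$ I split $I_S$ into $I_S^W \sqcup I_S^H$ chosen so that $a(I_W \cup I_S^W) \le 1/2$ and $a(I_H \cup I_S^H) \le 1/2$; such a split exists because these two bounds sum to $1$ and at least one of $a(I_W), a(I_H)$ is $\le 1/2$. Any residual overflow of the class that exceeds $1/2$ is handled by splitting that class further by area and using the extra bin. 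The main obstacle is precisely this case analysis: verifying that in every configuration of $(|I_B|, a(I_W), a(I_H), a(I_S))$ consistent with $a(I) \le 1$, the three bins are enough, possibly by absorbing small and compatible long items into the L-shaped slack of big-item bins. Since Steinberg's algorithm is called a constant number of times, the stated run time follows.
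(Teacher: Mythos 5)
Your proposal takes a route entirely different from the paper's. The paper's proof is a one-shot application of Steinberg's theorem to a \emph{non-square} bin: pack $I$ into a $2 \times 1$ strip, which Steinberg's hypothesis permits because with $W=2$, $H=1$ the correction term $\max(2w_{\max}-W,0)\cdot\max(2h_{\max}-H,0)$ vanishes (as $w_{\max}\le 1 = W/2$), so the condition reduces to $a(I)\le 1$. Then cut the strip at $x=1$: items lying wholly in the left (resp.\ right) half go to the first (resp.\ second) unit bin, and the items crossing the cut all meet the vertical line $x=1$, hence are pairwise $y$-disjoint with total height $\le 1$ and individual width $\le 1$, so they stack vertically in the third bin. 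No classification or case analysis is needed, and Steinberg's algorithm is invoked only once.

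Your case analysis has a genuine gap, which you yourself flag as ``the main obstacle.'' Here is a concrete configuration where your plan, as written, does not close: take $|I_B|=0$, $I_S=\emptyset$, $I_H$ a single tall item of area $0.1$, and $I_W$ three wide items of areas $0.35$, $0.35$, $0.2$ (total area exactly $1$). Your recipe places $I_H$ alone in one bin and asks to split $I_W$ across the remaining two bins so that each part has area at most $1/2$ (the form Steinberg's hypothesis takes for wide/small items, as you correctly observe). But every $2$-partition of $\{0.35,0.35,0.2\}$ leaves one part with area at least $0.55 > 1/2$, so the area test fails for that part. The items do in fact fit --- two wide items can be stacked in one bin because each has height $\le 1/2$ --- but that is a height argument outside the Steinberg area condition on which your reduction rests. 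So the proposal needs an additional mechanism (or a different decomposition) to be made rigorous; the paper's strip-and-cut argument sidesteps this entirely by never partitioning the input by item type at all.
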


Let $I$ be a \geomvec{2}{$d$} BP instance.
Let $\Ihat \defeq \{\Span(i): i \in I\}$, i.e., $\Ihat$ is a 1-D BP instance.
The algorithm $\simplePack(I)$ first runs the Next-Fit algorithm on $\Ihat$.
Let $[\Jhat_1,$ $\Jhat_2,$ $\ldots,$ $\Jhat_m]$
be the resulting bin packing of $\Ihat$ into $m$ bins.
For each $\Jhat_j \subseteq \Ihat$, let $J_j$ be the corresponding items from $I$.
Then $\forall k \in [d_v],$ $v_k(J_j) \le 1$ and $\vol(J_j) \le 1$.
$\simplePack$ then uses Steinberg's algorithm to pack each $J_j$ into at most 3 bins,
giving a packing of $I$ into at most $3m$ bins.
By the property of Next-Fit (see \cref{claim:next-fit} in \cref{sec:next-fit}),
we get that $m \le \ceil{2\Span(I)}$.
By \cref{lem:span-lb-opt}, we get $3\ceil{2\Span(I)} \le 6(d+1)\opt(I)$.
This gives us the following theorem.
\begin{theorem}
\label{thm:span-pack}
For \geomvec{2}{$d$} BP, $\simplePack$ uses at most $3\ceil{2\Span(I)}$ bins,
so $\simplePack$ is a $6(d+1)$-approximation algorithm.
$\simplePack$ runs in $O(nd + n\log^2 n/\log\log n)$ time.
\end{theorem}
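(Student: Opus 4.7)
The plan is to assemble the pieces that have already been marshalled in the paragraph preceding the theorem and verify that they compose cleanly.

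First I would verify correctness of the packing produced by $\simplePack$. Because Next-Fit is applied to the 1-D instance $\Ihat$ with bin capacity $1$, for every bin $\Jhat_j$ we have $\sum_{i \in J_j} \Span(i) \le 1$. Since $\Span(i) \ge \vol(i)$ and $\Span(i) \ge v_k(i)$ for every $k \in [d_v]$, this immediately gives $\vol(J_j) \le 1$ and $v_k(J_j) \le 1$ for all $k$. The vector constraints for each bin $J_j$ are therefore satisfied outright, and the area bound $a(J_j) = \vol(J_j) \le 1$ is exactly the hypothesis needed to invoke \cref{corr:steinberg}, which packs each $J_j$ geometrically into at most $3$ unit bins. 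Thus $\simplePack$ produces a valid packing of $I$ into at most $3m$ bins.

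Next I would bound $m$. The standard Next-Fit analysis (\cref{claim:next-fit} in \cref{sec:next-fit}, cited in the preceding paragraph) states that on a 1-D instance with total size $s$, Next-Fit uses at most $\lceil 2s \rceil$ bins, since any two consecutive bins together have load strictly greater than $1$. Applied to $\Ihat$, whose total size is exactly $\Span(I)$, this gives $m \le \lceil 2\Span(I)\rceil$, and hence $|\simplePack(I)| \le 3\lceil 2\Span(I)\rceil$, which is the first claim. For the AAR, \cref{lem:span-lb-opt} gives $\lceil \Span(I)\rceil \le (d+1)\opt(I)$; combining with $\lceil 2\Span(I)\rceil \le 2\lceil \Span(I)\rceil$ yields
\[ |\simplePack(I)| \;\le\; 3\lceil 2\Span(I)\rceil \;\le\; 6\lceil \Span(I)\rceil \;\le\; 6(d+1)\opt(I), \]
so the approximation ratio (and hence the AAR) is at most $6(d+1)$.

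Finally, for the running time, computing $\Span(i)$ for each item takes $O(d)$ time since we must scan the $d_v = d$ vector dimensions to obtain $v_{\max}(i)$; this is $O(nd)$ overall. Next-Fit on the resulting 1-D instance is $O(n)$. Steinberg's algorithm on each $J_j$ runs in $O(|J_j|\log^2 |J_j|/\log\log |J_j|)$, and summing over the disjoint groups gives the claimed $O(n\log^2 n/\log\log n)$. Nothing here is an obstacle; the only mild subtlety is keeping the ceiling manipulation $\lceil 2x\rceil \le 2\lceil x\rceil$ explicit so that the factor of $6(d+1)$ (rather than a non-integer-hiding $6(d+1)\opt + O(1)$) comes out cleanly for the approximation-ratio statement.
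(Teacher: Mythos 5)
Your proof is correct and follows essentially the same route as the paper: Next-Fit on the 1-D span instance, $\Span$-domination of volume and each weight dimension to satisfy Steinberg's hypothesis, the $\ceil{2\Span(I)}$ bound from Next-Fit, and \cref{lem:span-lb-opt} for the $6(d+1)$ ratio. The only difference is that you make the elementary step $\ceil{2x} \le 2\ceil{x}$ explicit, which the paper leaves implicit when chaining $3\ceil{2\Span(I)} \le 6(d+1)\opt(I)$.
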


The algorithm $\betterSimplePack$ first computes $\Itild$,
which is a $(d+1)$-D VBP instance obtained by
replacing the geometric dimensions of each item $i \in I$ by a single vector dimension $a(i)$.
It computes a bin packing of $\Itild$ using any algorithm $\Acal$.
It then uses Steinberg's algorithm to obtain a bin packing
of $I$ into at most $3|\Acal(\Itild)|$ bins.

Note that $\opt(\Itild) \le \opt(I)$. If $\Acal$ has AAR $\alpha$,
then $|\Acal(\Itild)| \le \alpha\opt(\Itild) + O(1)$.
Therefore, $\betterSimplePack$ has AAR $3\alpha$.
The $(d+1)$-D VBP algorithm by \cite{bansal2009new}
(parametrized by a constant $\eps > 0$) gives $\alpha = 1+\ln(d+1)+\eps$
and the algorithm by \cite{BansalE016} gives $\alpha = 1.5+\ln((d+2)/2)+\eps$
(improves to $\alpha = 1+\ln(1.5)+\eps$ for $d=1$).

Using similar ideas, we can get an algorithm for \geomvec{2}{$d$} KS
(see \cref{sec:simple-ks}).

Although $\simplePack$ has a worse AAR than $\betterSimplePack$,
the number of bins used by $\simplePack$ is upper-bounded in terms of $\Span$,
which is a useful property. Because of this,
we will use it as a subroutine in other algorithms (like $\cbPack$).

The algorithms for \geomvec{2}{$d$} packing can be extended to \geomvec{$d_g$}{$d_v$} packing.
We just need an algorithm for the following problem:
\textsl{given a set $J$ of $d_g$-dimensional cuboids where $\vol(J) \le 1$,
pack $J$ into a small number of bins.}
We used Steinberg's algorithm when $d_g = 2$.
When $d_g = 3$, we can use the algorithm of \cite[Section 2]{diedrich2008approximation}
to pack $J$ into at most 9 bins.
For $d_g > 3$, we can use a variant of the $\hdhIV$ algorithm \cite{caprara2008}
to pack $J$ into at most $4^{d_g} + 2^{d_g}$ bins (details in \cref{sec:hdhIV}).
Therefore, $\simplePack$ will use
$b\ceil{2\Span(I)}$ bins, where $b \defeq 3$ when $d_g=2$,
$b \defeq 9$ when $d_g = 3$,
and $b \defeq 4^{d_g}+2^{d_g}$ when $d_g > 3$.
Therefore, $\simplePack$ is $2b(d_v+1)$-approximate.
Similarly, $\betterSimplePack$ has AAR $b(1 + \ln(d_v + 1) + \eps)$.

\section{Round-and-Approx Framework}
\label{sec:rna}

We enhance the R\&A framework as a general outline for designing approximation algorithms
for the generalized bin packing problem.
We denote the algorithm for the R\&A framework as $\rnaPack(I, \beta, \eps)$,
which takes as input a set $I$ of \geomvecdim{$d_g$}{$d_v$} items
and parameters $\beta \ge 1$ and $\eps \in (0, 1)$. The steps of the algorithm are as follows
(see \cref{algo:rna-pack} for a more formal description).

\begin{algorithm}[!ht]
\caption{$\rnaPack(I, \beta, \eps)$: Computes a bin packing of $I$.
$I$ is a set of \geomvecdim{$d_g$}{$d_v$} items and $\beta \ge 1$}
\label{algo:rna-pack}
\begin{algorithmic}[1]
\State $\xhat = \cLPsolve(I)$
\RepeatN{$T \defeq \ceil{(\ln\beta)\smallnorm{\xhat}_1}$}
    \State \label{alg-line:rna-pack:rround-choose}Select
        a \config{} $C$ with probability $\xhat_C/\smallnorm{\xhat}_1$.
    \State \label{alg-line:rna-pack:rround-pack}Pack a bin according to $C$.
\EndRepeatN
\State Let $S$ be the unpacked items from $I$.
    \Comment{$S$ is called the set of \emph{residual items}.}
\State Initialize $J_{\best}$ to \texttt{null}.
\For{$(\Itild, D) \in \round(I)$}
    \Comment{$\round(I)$ outputs a set of pairs.}
    \State $J_D = \simplePack(S \cap D)$
    \State Let $\pi$ be a bijection from $I-D$ to $\Itild$.
        Let $\Stild \defeq \{\pi(i): i \in S-D\}$.
    \State $\Jtild = \complexPack(\Stild)$
    \State $J = \unround(\Jtild)$
    \If{$J_{\best}$ is \texttt{null} or $|J_D \cup J| < |J_{\best}|$}
        \State $J_{\best} = J_D \cup J$
    \EndIf
\EndFor
\State Pack $S$ according to $J_{\best}$.
\end{algorithmic}
\end{algorithm}

\begin{enumerate}
\item \textbf{Solve the \Config{} LP} of $I$.
Let $\xhat$ be a $\mu$-asymptotic-approximate solution to the configuration LP.
Note that each index of $\xhat$ corresponds to a \config{}.
In all previous applications of the R\&A framework, $\mu = 1+\eps$.
However, in some of our applications we will have $\mu$ to be a large constant.

\item \textbf{Randomized rounding of \config{} LP}:
For $T \defeq \ceil{(\ln\beta)\smallnorm{\xhat}_1}$ steps do the following:
select a \config{} $C$ with probability $\xhat_C/\smallnorm{\xhat}_1$.
Pack $T$ bins according to each of these selected $T$ \config{}s.
Let $S$ be the remaining items that are not packed, called the \emph{residual instance}.

\item \textbf{Rounding of items}:
We define a subroutine $\round$ that takes items $I$ and parameter $\eps$ as input%
\footnote{The input to $\round$ is $I$ instead of $S$ because $S$ is random
and we want to round items deterministically, i.e.,
the rounding of each item $i \in S$ should
not depend on which other items from $I$ lie in $S$.
In fact, this is where the old R\&A framework \cite{Khan16} introduced an error.
See \cref{sec:rna-extra:bugfix} for details.}.
It \emph{discards} a set $D \subseteq I$ of items such that $\Span(D) \le \eps\Span(I)$
and then \emph{modifies} each item in $I-D$ to get a set $\Itild$ of items.
We say that the output of $\round(I, \eps)$ is $(\Itild, D)$, where
items in $\Itild$ are called \emph{rounded items}.
Intuitively, after rounding, the items in $\Itild$ are of $O(1)$ types,
which makes packing easier.
Also, since $\Span(D)$ is small, $D \cap S$ can be packed
into a small number of bins using $\simplePack$.

We impose some restrictions on $\round$, which we denote as conditions C1 and C2,
that we describe in \cref{sec:rna:round}.
Previous versions of R\&A only allowed modifications where
each item's dimensions were rounded up.
We do not have this restriction; we also allow rounding down some dimensions.
We also allow $\round$ to output a $O(\poly(n))$-sized list of guesses of $(\Itild, D)$.

\item \textbf{Pack rounded items}:
Let $\Stild$ be the rounded items corresponding to $S \setminus D$.
Pack $\Stild$ into bins using any bin packing algorithm that satisfies `condition C3',
which we describe in \cref{sec:rna:complex-pack}.
Let us name this algorithm $\complexPack$.

\item \textbf{Unrounding}:
Given a bin packing of $\Stild$, let $\unround$ be a subroutine that
computes a bin packing of $S \setminus D$.
$\unround$ is trivial in previous versions of R\&A, because they only increase dimensions
of items during rounding. In our applications, we may round down items,
so $\unround$ can be non-trivial.
$\unround$ can be any algorithm that satisfies `condition C4',
which we describe in \cref{sec:rna:unround}.

\end{enumerate}

We can think of the R\&A framework as a \emph{meta-algorithm}, i.e.,
we give it the algorithms $\round$, $\complexPack$ and $\unround$ as inputs
and it outputs the algorithm $\rnaPack$.
The R\&A framework requires that $\round$, $\complexPack$ and $\unround$
satisfy four conditions C1, C2, C3, C4, which we describe in
\cref{sec:rna:round,sec:rna:complex-pack,sec:rna:unround}.
Prospective users of the R\&A framework need to design these three subroutines
and prove that they satisfy these four conditions.

Intuitively, $\rnaPack$ first packs some items into $T$ bins by randomized rounding of $\xhat$.
We can prove that $\Pr[i \in S] \le 1/\beta$,
so $S$ contains a small fraction of the items in $I$
(see \cref{lem:pr-elem-in-residual}).
We will then try to prove that if the rest of the algorithm ($\round+\complexPack+\unround$)
packs $I$ into $m$ bins, then it will pack $S$ into roughly $m/\beta$ bins.
This notion was referred to in \cite{bansal2009new} as \emph{subset-obliviousness}.
We will use subset-obliviousness to bound the AAR of $\rnaPack$.

\begin{lemma}
\label{lem:pr-elem-in-residual}
$\forall i \in I$, $\Pr(i \in S)
\le \exp\left(-\frac{T}{\norm{\widehat{x}}_1}\right) \le \frac{1}{\beta}$.
\end{lemma}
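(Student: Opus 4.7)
The plan is a direct probabilistic computation. In each of the $T$ rounds at \cref{alg-line:rna-pack:rround-choose,alg-line:rna-pack:rround-pack}, a configuration $C$ is drawn independently from the distribution $\xhat_C/\smallnorm{\xhat}_1$ on $\mathcal{C}$, and a bin is packed according to $C$. Item $i$ ends up in the residual set $S$ precisely when none of these $T$ configurations contains $i$, so by independence
\[
\Pr(i \in S) = \left(1 - \sum_{C \ni i} \frac{\xhat_C}{\smallnorm{\xhat}_1}\right)^T.
\]

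Next I would use LP feasibility of $\xhat$. Even though $\xhat$ is only a $\mu$-approximate solution to the configuration LP, it is still a feasible solution, so it satisfies $A\xhat \ge \mathbf{1}$ and $\xhat \ge 0$. In particular, the $i\Th$ row of $A\xhat \ge \mathbf{1}$ gives $\sum_{C \ni i} \xhat_C \ge 1$, and therefore
\[
\sum_{C \ni i} \frac{\xhat_C}{\smallnorm{\xhat}_1} \;\ge\; \frac{1}{\smallnorm{\xhat}_1}.
\]

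Combining this with the elementary inequality $1 - y \le e^{-y}$ for $y \in [0,1]$ yields the first claimed bound:
\[
\Pr(i \in S) \;\le\; \left(1 - \frac{1}{\smallnorm{\xhat}_1}\right)^T \;\le\; \exp\!\left(-\frac{T}{\smallnorm{\xhat}_1}\right).
\]
Finally, plugging in $T = \ceil{(\ln\beta)\smallnorm{\xhat}_1} \ge (\ln\beta)\smallnorm{\xhat}_1$ gives $\exp(-T/\smallnorm{\xhat}_1) \le e^{-\ln\beta} = 1/\beta$, completing the second inequality. There is really no obstacle here: the only thing to be careful about is to observe that $\mu$-approximate LP solutions still satisfy the covering constraints $A\xhat \ge \mathbf{1}$ (approximation only affects the objective value $\smallnorm{\xhat}_1$, which is precisely the quantity being used in the exponent), so the argument goes through uniformly for every value of $\mu \ge 1$ allowed by the framework.
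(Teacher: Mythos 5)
Your proof is correct and follows the same route as the paper's: compute $\Pr(i \in S)$ as the $T$-fold product of the per-round miss probabilities using independence, bound the per-round hit probability from below via the $i\Th$ covering constraint of the configuration LP, and finish with $1-y \le e^{-y}$ and the choice of $T$. The only addition is your explicit remark that $\mu$-approximate solutions are still \emph{feasible} (so $A\xhat \ge \mathbf{1}$ holds), which the paper leaves implicit but is indeed the point that makes the argument work for all $\mu \ge 1$.
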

\begin{proof}
Let $C_1, C_2, \ldots, C_T$ be the \config{}s chosen during randomized rounding
(line \ref{alg-line:rna-pack:rround-choose} in \cref{algo:rna-pack}).
Let $\mathcal{C}_i$ be the \config{}s that contain the element $i$.
\begin{align*}
\Pr(i \in S) &= \Pr\left(\bigwedge_{t=1}^T (C_t \not\in \mathcal{C}_i) \right)
= \prod_{t=1}^T \Pr(C_t \not\in \mathcal{C}_i)  \tag{all $C_t$ are independent}
\\ &= \prod_{t=1}^T \left(1 - \sum_{C \in \mathcal{C}_i} \Pr(C_t = C)\right)
= \left(1 - \sum_{C \in \mathcal{C}_i} \frac{\widehat{x}_C}{\norm{\widehat{x}}_1} \right)^T
\\ &\le \left(1 - \frac{1}{\norm{\widehat{x}}_1} \right)^T
\tag{constraint in \config{} LP for item $i$}
\\ &\le \exp\left( - \frac{T}{\norm{\widehat{x}_1}} \right)
\le \frac{1}{\beta}
\qedhere
\end{align*}
\end{proof}

\Cref{sec:rna:simple-pack} shows how to break $\simplePack$
into $\round$, $\complexPack$ and $\unround$
and use it with R\&A.

\subsection{Fractional Structured Packing}
\label{sec:rna:frac-pack}

Let $(\Itild, D)$ be an output of $\round(I)$ and let $\Xtild$
be an arbitrary subset of $\Itild$.
Our analysis of $\rnaPack$ is based around a concept called
\emph{fractional structured packing} of $\Xtild$.
Note that the notion of fractional structured packing only appears in the
analysis of $\rnaPack$. It is not needed to describe any algorithm.

First, we discuss the notion of structured packing.
Several types of packings are used in bin packing algorithms, such as
\emph{container-based} \cite{jansen2016new},
\emph{shelf-based} (e.g. output of Caprara's HDH algorithm \cite{caprara2008}),
\emph{guillotine-based} \cite{gilmore1965multistage},
\emph{corridor-based} \cite{l-packing}, etc.
A type of packing is called to be a {\em structured packing} if it satisfies \emph{downward closure}, i.e.,
\textsl{a structured packing remains structured even after removing some items from the packed bins}.
For example, Jansen and Pr\"adel \cite{jansen2016new} showed that given any packing of a
2-D GBP instance into $m$ bins, we can slice some of the items and repack them
into $(1.5+\eps)m + O(1)$ bins such that the resulting packing is \emph{container-based}.
\emph{Container-based} roughly means that in each bin,
items are packed into rectangular regions called containers,
and containers' heights and widths belong to a fixed set of $O(1)$ values.
Hence, \emph{container-based} is an example of a structured packing as
a container-based packing of bins remains a \emph{container-based} packing,
even if we remove some items from the packed bins.
In fact, all the commonly used packing types are indeed structured packings.
Also note that the set of all possible packings is trivially a structured packing.
Our R\&A framework gives algorithm designers the freedom to use or define
the structured packing in any way they want, as long as they satisfy the downward closure property.
Typically, the choice of which definition of structured packing to use will depend
on the ease of proving Conditions C2 and C3 for that definition.
This helped us to go beyond the R\&A framework of Bansal and Khan \cite{bansal2014binpacking},
which only considered container-based packings of $\Itild$.

Intuitively, a fractional structured packing is one where we slice
each item of $\Xtild$ into pieces and then find a structured packing of the pieces.
Define $\fsopt(\Xtild)$ as the number of bins used in the optimal
fractional structured packing of $\Xtild$.
To analyze the AAR of $\rnaPack$, we will bound the number of bins used to pack
the residual instance $S$ in terms of $\fsopt(\Stild)$,
and then we will bound $\fsopt(\Stild)$ in terms of $\opt(I)$.

To define fractional structured packing,
we first define what it means to slice an item.
From a geometric perspective, slicing an item perpendicular to the $k\Th$ dimension
means cutting the item into 2 parts using a hyperplane perpendicular to the $k\Th$ axis.
The vector dimensions get split proportionately across the slices.
E.g., for $d_g=2$, if $k=1$ for item $i$, then we slice $i$ using a vertical cut,
and if $k=2$, we slice $i$ using a horizontal cut.

\begin{definition}[Slicing an item]
Let $i$ be a \geomvecdim{$d_g$}{$d_v$} item.
Slicing $i$ perpendicular to geometric dimension $k$
with proportionality $\alpha$ (where $0 < \alpha < 1$)
is the operation of replacing $i$ by two items $i_1$ and $i_2$ such that:
(i) $\forall j \neq k, \ell_j(i) = \ell_j(i_1) = \ell_j(i_2)$,
(ii) $\ell_k(i_1) = \alpha \ell_k(i)$ and $\ell_k(i_2) = (1-\alpha)\ell_k(i)$,
(iii) $\forall j \in [d_v], v_j(i_1) = \alpha v_j(i) \wedge v_j(i_2) = (1-\alpha) v_j(i)$.
\end{definition}

\begin{definition}[Fractional packing]
Let $\Itild$ be \geomvecdim{$d_g$}{$d_v$} items, where for each item $i \in \Itild$,
we are given a set $X(i)$ of axes perpendicular to which we can repeatedly slice $i$
($X(i)$ can be empty, which would mean that the item cannot be sliced).
If we slice items as per their given axes and then pack the slices into bins,
then the resulting packing is called a \emph{fractional} bin packing.
\end{definition}
See \cref{fig:frac-pack} for an example of a fractional packing.

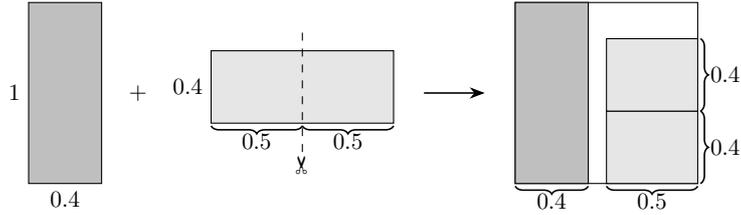
\begin{figure}[htb]
\centering
\begin{tikzpicture}[
myarrow/.style = {->,>={Stealth},semithick},
mybrace/.style = {decoration={amplitude=3pt,brace,mirror,raise=1pt},semithick,decorate},
every node/.style = {scale=0.8},
scale=0.8
]
\draw (0,0) rectangle +(3,3);
\draw[fill={black!25}] (0,0) rectangle +(1.2,3);
\draw[fill={black!10}] (3,0) rectangle +(-1.5,1.2);
\draw[fill={black!10}] (3,1.2) rectangle +(-1.5,1.2);
\draw[mybrace] (0,0) -- node[below=1pt] {0.4} +(1.2,0);
\draw[mybrace] (1.5,0) -- node[below=1pt] {0.5} +(1.5,0);
\draw[mybrace] (3,0) -- node[right=2pt] {0.4} +(0,1.2);
\draw[mybrace] (3,1.2) -- node[right=2pt] {0.4} +(0,1.2);

\draw[fill={black!25}] (-8,0) rectangle +(1.2,3);
\path (-8,0) -- node[below=0pt] {0.4} +(1.2,0);
\path (-8,0) -- node[left=0pt] {1} +(0,3);
\node at (-6.2,1.5) {+};
\draw[fill={black!10}] (-5,1) rectangle +(3,1.2);
\path (-5,1) -- node[left=0pt] {0.4} +(0,1.2);
\draw[dashed] (-3.5,2.5) -- (-3.5,0.5);
\node[rotate=90,transform shape] at (-3.5,0.3) {\large\ding{34}};
\draw[mybrace] (-5,1) -- node[below=1pt] {0.5} +(1.5,0);
\draw[mybrace] (-3.5,1) -- node[below=1pt] {0.5} +(1.5,0);
\draw[myarrow] (-1.5,1.5) -- (-0.5,1.5);
\end{tikzpicture}

\caption{Example of a fractional packing of two items into a bin.}
\label{fig:frac-pack}
\end{figure}

\subsection{Properties of \texorpdfstring{$\round$}{round}}
\label{sec:rna:round}

\begin{definition}[Density vector]
The density vector of a \geomvecdim{$d_g$}{$d_v$} item $i$ is the vector
$v_{\Span} \defeq \left[v_0(i)/{\Span(i)}, {v_1(i)}/{\Span(i)}, \ldots, {v_{d_v}(i)}/{\Span(i)}\right]$.
Recall that $v_0(i) \defeq \vol(i)$
and note that $\|v_{\Span}\|_{\infty} = 1$.
\end{definition}

The subroutine $\round(I)$ returns a set of pairs of the form $(\Itild, D)$.
\textbf{Condition C1} is defined as the following constraints over each pair $(\Itild, D)$:
\begin{itemize}
\item {\bf C1.1.} {\em Small discard:} $D \subseteq I$ and $\Span(D) \le \eps\Span(I)$.
\item {\bf C1.2.} {\em Bijection from $I-D$ to $\Itild$:} Each item in $\Itild$ is obtained by modifying an item in $I-D$.
    Let $\pi$ be the bijection from $I-D$ to $\Itild$ that captures this relation.
\item  {\bf C1.3.} {\em Homogeneity properties:} $\round$ partitions items in $\Itild$ into a constant number of classes:
    $\Ktild_1, \Ktild_2, \ldots, \Ktild_q$. These classes should satisfy the following properties,
    which we call \emph{homogeneity} properties:
    \begin{itemize}
    \item All items in a class have the same density vector.
    \item For each class $\Ktild_j$, we decide the set $X$ of axes perpendicular to which we can
        slice items in $\Ktild_j$. If items in a class $\Ktild_j$ are not allowed to be
        sliced perpendicular to dimension $k$, then all items in that class have
        the same length along dimension $k$.
        (For example, if $d_g=2$ and vertical cuts are forbidden, then all items
        have the same width.)
    \end{itemize}
\item  {\bf C1.4.} {\em Bounded expansion:} Let $C$ be any \config{} of $I$ and $\Ktild$ be any one of the classes of $\Itild$.
    Let $\Ctild \defeq \{\pi(i): i \in C - D\}$. Then we need to prove that
    $\Span(\Ktild \cap \Ctild) \le c_{\max}$ for some constant $c_{\max}$.
    Let us call this result `bounded expansion lemma'.
\end{itemize}

Intuitively, the homogeneity properties allow us to replace (a slice of) an item
in a fractional packing by slices of other items of the same class.
Thus, while trying to get a fractional packing, we can focus on the item classes,
which are constant in number, instead of focusing on the $n$ items.
Intuitively, the bounded expansion lemma ensures that we do not round up items too much.

\textbf{Condition C2} (also called \emph{structural theorem}):
For some constant $\rho > 0$ and for some $(\Itild, D) \in \round(I)$,
$\fsopt(\Itild) \le \rho\opt(I) + O(1)$.

Intuitively, the structural theorem says that allowing slicing as per $\round$
and imposing a structure on the packing
does not increase the minimum number of bins by too much.
The AAR of $\rnaPack$ increases with $\rho$, so we want $\rho$ to be small.

\subsection{\texorpdfstring{$\complexPack$}{complex-pack}}
\label{sec:rna:complex-pack}

\textbf{Condition C3}:
For some constant $\alpha > 0$ and for any $(\Itild, D) \in \round(I)$
and any $\Xtild \subseteq \Itild$, $\complexPack(\Xtild)$ packs $\Xtild$ into at most
$\alpha\fsopt(\Xtild) + O(1)$ bins.

Condition C3 says that we can find a packing that is
close to the optimal fractional structured packing.
The AAR of $\rnaPack$ increases with $\alpha$, so we want $\alpha$ to be small.

\subsection{\texorpdfstring{$\unround$}{unround}}
\label{sec:rna:unround}

\textbf{Condition C4}:
For some constant $\gamma > 0$, if $\complexPack(\Stild)$ outputs a packing
of $\Stild$ into $m$ bins, then $\unround$ modifies that packing
to get a packing of $S-D$ into $\gamma m + O(1)$ bins.

Intuitively, condition C4 says that unrounding does not increase the number of bins by too much.
The AAR of $\rnaPack$ increases with $\gamma$, so a small $\gamma$ is desirable.
If $\round$ only increases the dimensions of items,
then unrounding is trivial and $\gamma = 1$.

\subsection{AAR of R\&A}
\label{sec:rna:aar}

Recall that $\simplePack$ is a $2b(d_v+1)$-approximation algorithm for \geomvec{$d_g$}{$d_v$} BP
(see \cref{sec:span-pack}), where $b \defeq 3$ when $d_g=2$,
$b \defeq 9$ when $d_g = 3$,
$b \defeq 4^{d_g}+2^{d_g}$ when $d_g > 3$,
and $b \defeq 1$ when $d_g = 1$.
Our key ingredient in the analysis of R\&A is the following lemma.
We give a sketch of the proof here and defer the full proof to \cref{sec:rna-extra}.

\begin{restatable}{lemma}{rthmFoptConc}
\label{lem:fopt-conc}
Let $\Stild$ be as computed by $\rnaPack(I, \beta, \eps)$.
Then with high probability, we get
\[ \fsopt(\Stild) \le \fsopt(\Itild)/\beta + 2b\mu\epsLP\opt(I) + O(1/\epsLP^2). \]
\end{restatable}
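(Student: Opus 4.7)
The plan is to upper-bound $\fsopt(\Stild)$ by building a fractional structured packing of $\Stild$ in three stages: a concentration bound on the per-class span in $\Stild$, a transfer of the optimal packing of $\Itild$ to $\Stild$ through the homogeneity property, and a $\simplePack$ cleanup of the small residual.

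\textbf{Concentration.} For each class $\Ktild_j$ produced by $\round$, set $s_j \defeq \Span(\Ktild_j)$ and $Z_j \defeq \Span(\Ktild_j \cap \Stild)$. Applying \cref{lem:pr-elem-in-residual} to each item of $\pi^{-1}(\Ktild_j)$ and linearity of expectation give $\E[Z_j] \le s_j/\beta$. Viewing $Z_j$ as a function of the $T = \ceil{(\ln\beta)\norm{\xhat}_1}$ i.i.d.\ chosen configurations $C_1,\ldots,C_T$, the bounded-expansion property (C1.4) ensures that changing any single $C_t$ alters $Z_j$ by at most $2c_{\max}$. McDiarmid's inequality combined with $\norm{\xhat}_1 \le \mu\opt(I) + O(1)$ then yields $Z_j \le s_j/\beta + \epsLP\mu\opt(I)$ with probability $1 - \exp(-\Omega(\epsLP^2\opt(I)))$; a union bound over the constant number of classes makes this simultaneous with high probability.

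\textbf{Construction.} Fix an optimal fractional structured packing $P^*$ of $\Itild$ with $m = \fsopt(\Itild)$ bins. By homogeneity (C1.3), each bin of $P^*$ is described by its class-mass vector and its structural type drawn from a finite set, and items within a class are fully interchangeable via slicing. A $(1/\beta)$-scaling of the bin frequencies in $P^*$ constitutes a fractional LP cover of the instance with per-class mass $s_j/\beta$; standard LP-rounding over the constant number of structural types and classes promotes this to an integral structured packing using $\lceil m/\beta \rceil + O(1)$ bins, into which the first $s_j/\beta$ mass of each class of $\Stild$ is packed via slice interchange. The per-class excess $Z_j - s_j/\beta \le \epsLP\mu\opt(I)$ is cleaned up by $\simplePack$, which by \cref{thm:span-pack} uses at most $b\lceil 2\Span(\text{excess})\rceil \le 2b\mu\epsLP\opt(I)+O(1)$ bins (the constant class count $q$ absorbed into the additive slack).

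Adding the two counts gives $\fsopt(\Stild) \le \lceil m/\beta \rceil + 2b\mu\epsLP\opt(I) + O(1) \le \fsopt(\Itild)/\beta + 2b\mu\epsLP\opt(I) + O(1/\epsLP^2)$, where the final $O(1/\epsLP^2)$ absorbs ceiling overhead and the small-$\opt(I)$ regime in which the tail bound is not yet small (covered instead by the trivial bound $\fsopt(\Stild) \le \fsopt(\Itild)$). The main obstacle is the LP-rounding step in the construction: verifying that the rounded class-mass vector actually corresponds to a bona fide structured packing rather than merely to an LP-feasible vector. This rests essentially on C1.3, which collapses the packing space to finite dimension in the class-mass variables so that an LP-rounding gap of only $O(1)$ suffices; the closure of the structured class under the implicit redistribution is specific to the instantiation of $\round$/$\complexPack$, but is standard for container-based, shelf-based, and similar packings.
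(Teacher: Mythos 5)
Your three-stage outline (concentration, transfer of the optimal packing via homogeneity, $\simplePack$ cleanup) matches the paper's strategy, but two points need repair.

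First, there is an arithmetic slip with the factor $q$. You prove per-class concentration $Z_j \le s_j/\beta + \epsLP\mu\opt(I)$ and then claim the cleanup costs $2b\mu\epsLP\opt(I)+O(1)$ bins "with $q$ absorbed into the additive slack." That cannot work: the total excess span is $\sum_j (Z_j - s_j/\beta) \le q\,\epsLP\mu\opt(I)$, and $q$ multiplies a term that grows with $\opt(I)$, so it cannot be hidden in an $O(1)$ term. The fix is to prove the per-class bound with slack $\epsLP\mu\opt(I)/q$ — McDiarmid tolerates this because the $q^{-2}$ in the numerator of the exponent is a constant. This is exactly what the paper does: it carves a prefix $Q_j$ of span roughly $\epsLP\norm{\xhat}_1/q$ from each class (so $\Span(Q) \le \epsLP\norm{\xhat}_1 + q$), argues that $\Stild - Q$ satisfies the scaled constraints whp, and packs $Q$ separately with $\simplePack$.

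Second, the "LP-rounding" step — which you flag yourself as the main obstacle — is where your argument is not yet a proof. You scale the bin frequencies of $P^*$ by $1/\beta$ and appeal to "standard LP-rounding" to recover an integral structured packing with $+O(1)$ loss, then remark that this is "specific to the instantiation" and "standard for container-based, shelf-based" packings. In fact the paper's argument is uniform and not packing-type-specific: it writes down the fractional structured configuration LP $\fscLP$, whose constraints are indexed by the $q$ classes (this is where C1.3 enters, collapsing per-item constraints to per-class constraints), proves $\fsopt \le \fscIP^* \le \fsopt + q$ (\cref{lem:fcip-eq-fopt}) after discretizing to make the configuration set finite, and applies the rank lemma to an extreme-point optimum of $\fscLP$ to get $\fscIP^* \le \fscLP^* + q$ (\cref{lem:fclp-rank-lemma}). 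The $1/\beta$-scaled $x^*$ then certifies $\fscLP^*(\Stild - Q) \le \fscLP^*(\Itild)/\beta$, rather than being itself rounded. Your write-up conflates feasibility of the scaled vector with roundability of that same vector; you should instead use the scaled vector only as a feasibility witness, and invoke the rank lemma on a fresh extreme-point optimum. Neither gap is fatal — the high-level idea is the paper's — but the proof as written does not close.
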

\begin{proof}[Proof sketch]
Our proof of \cref{lem:fopt-conc} is inspired by the analysis in \cite{Khan16}.
We prove it by analyzing the \emph{fractional structured} \config{} LP of $\Itild$.

\begin{definition}
Let $(\Itild, D) \in \round(I)$. Suppose $\round$ partitioned $\Itild$ into classes
$\Ktild_1, \Ktild_2, \ldots \Ktild_q$.
Let $\mathcal{C}_f$ be the set of all structured \config{}s of items in $\Itild$
that allow items to be sliced as per $\round$.
For any $\Stild \subseteq \Itild$, the fractional structured \config{} LP of $\Stild$,
denoted as $\fscLP(\Stild)$, is
\[ \begin{array}{*3{>{\displaystyle}l}}
\min_{x \in \mathbb{R}^{|\mathcal{C}_f|}}
    & \multicolumn{2}{>{\displaystyle}l}{\sum_{C \in \mathcal{C}_f} x_C}
\\ \textrm{where} & \sum_{C \in \mathcal{C}_f} \Span(C \cap \Ktild_j)x_C
    \;\ge\; \Span(\Stild \cap \Ktild_j) & \forall j \in [q]
\\ & x_C \ge 0 & \forall C \in \mathcal{C}_f
\end{array} \]
The integer version of this program is denoted as $\fscIP(\Stild)$.
The optimal objective values of $\fscLP(\Stild)$ and $\fscIP(\Stild)$
are denoted as $\fscLP^*(\Stild)$ and $\fscIP^*(\Stild)$, respectively.
\end{definition}

Intuitively, $\fscIP$ is the same as the structured fractional bin packing problem
because of the downward closure property, so $\fscIP^*(\Stild) = \fsopt(\Stild)$%
\footnote{There is a caveat in the definition of $\fscIP^*(\Stild)$,
because of which this isn't completely true.
See \cref{lem:fcip-eq-fopt} in \cref{sec:rna-extra} for an explanation and a fix.}.
By the homogeneity property (C1.3), the number of constraints in this LP is a constant $q$.
So by rank lemma\footnote{Rank Lemma: the number of non-zero variables in an
extreme-point solution to an LP is at most the number of constraints.
See Lemma 2.1.4 in \cite{iterative-methods}.},
we can show that $|\fsopt(\Stild) - \fscLP^*(\Stild)| \in O(1)$
(see \cref{lem:fcip-eq-fopt,lem:fclp-rank-lemma} in \cref{sec:rna-extra}).
Now to prove \cref{lem:fopt-conc}, roughly, we need to show that
$\fscLP^*(\Stild) \lessapprox \fscLP^*(\Itild)/\beta$.

The RHS in the $j\Th$ constraint of $\fscLP(\Stild)$ is a random variable
$\Span(\Stild \cap \Ktild_j)$.
The RHS in the $j\Th$ constraint of $\fscLP(\Itild)$ is $\Span(\Ktild_j)$.
Note that $\E(\Span(\Stild \cap \Ktild_j)) \le \Span(\Ktild)/\beta$
by \cref{lem:pr-elem-in-residual}.
In fact, we can harness the randomness of $\Stild$,
the bounded expansion property (C1.4),
and a concentration inequality \cite{mcdiarmid1989method},
to show that $\Span(\Stild \cap \Ktild) \lessapprox \Span(\Ktild)/\beta$.
Therefore, if $x^*$ is an optimal solution to $\fscLP(\Itild)$,
then $x^*/\beta$ is \emph{roughly} a solution to $\fscLP(\Stild)$,
which implies $\fscLP^*(\Stild) \lessapprox \fscLP^*(\Itild)/\beta$.
\end{proof}

\begin{theorem}
\label{thm:rna-pack}
With high probability, the number of bins used by $\rnaPack(I, \beta, \eps)$ is at most
\[ \left( (\ln\beta)\mu + \frac{\gamma\alpha\rho}{\beta}
    + 2b(d_v+1+\gamma\alpha\mu)\eps \right) \opt(I) + O(1/\eps^2). \]
\end{theorem}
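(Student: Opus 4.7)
The plan is to bound the total number of bins output by $\rnaPack$ as the sum of three contributions: the $T$ bins from randomized rounding of the configuration LP, the bins $|J_D|$ used to pack $S \cap D$ via $\simplePack$, and the bins $|J|$ used to pack $S - D$ via $\complexPack$ followed by $\unround$. The algorithm loops over all pairs in $\round(I)$ and keeps the best output, so it suffices to bound the cost for the specific pair $(\Itild, D)$ guaranteed by the structural theorem (C2).

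First, I would bound $T$. Since $\xhat$ is a $\mu$-approximate solution to the configuration LP and the configuration LP's optimum is at most $\opt(I)$, we have $\|\xhat\|_1 \le \mu\opt(I) + O(1)$. Hence
\[ T \;=\; \lceil (\ln\beta)\|\xhat\|_1 \rceil \;\le\; (\ln\beta)\mu\opt(I) + O(1). \]
Next, $|J_D|$: by \Cref{thm:span-pack} (and its $d_g$-dimensional extension), $\simplePack(S \cap D)$ uses at most $2b\,\Span(S \cap D) + O(1) \le 2b\,\Span(D) + O(1)$ bins. Applying C1.1 and \Cref{lem:span-lb-opt} gives $\Span(D) \le \eps\Span(I) \le \eps(d_v+1)\opt(I) + O(1)$, so
\[ |J_D| \;\le\; 2b(d_v+1)\,\eps\,\opt(I) + O(1). \]

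For $|J|$, I would chain the four ingredients. By C4, $|J| \le \gamma |\complexPack(\Stild)| + O(1)$, and by C3 applied to $\Xtild = \Stild$, $|\complexPack(\Stild)| \le \alpha\,\fsopt(\Stild) + O(1)$. Now invoke \Cref{lem:fopt-conc}, which with high probability gives $\fsopt(\Stild) \le \fsopt(\Itild)/\beta + 2b\mu\eps\,\opt(I) + O(1/\eps^2)$. Finally, for the specific $(\Itild,D)$ provided by C2 we have $\fsopt(\Itild) \le \rho\,\opt(I) + O(1)$. Composing these yields
\[ |J| \;\le\; \frac{\gamma\alpha\rho}{\beta}\,\opt(I) + 2b\gamma\alpha\mu\,\eps\,\opt(I) + O(1/\eps^2). \]

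Summing the three bounds produces exactly the claimed expression
\[ \Bigl( (\ln\beta)\mu + \tfrac{\gamma\alpha\rho}{\beta} + 2b(d_v+1+\gamma\alpha\mu)\eps \Bigr)\opt(I) + O(1/\eps^2), \]
with the high-probability qualifier inherited solely from \Cref{lem:fopt-conc}. The main conceptual obstacle has already been dispatched in the statement of \Cref{lem:fopt-conc}; what remains for this theorem is the bookkeeping of how the approximation losses of the four subroutines compose multiplicatively (via $\gamma$, $\alpha$, $\rho$, $\mu$) and additively (via $\eps$), and being careful that since the algorithm picks the minimum over all pairs produced by $\round(I)$, we may safely choose the pair certified by the structural theorem for the analysis.
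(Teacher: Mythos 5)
Your proposal is correct and follows essentially the same three-part decomposition ($T$, $|J_D|$, $|J|$), the same chaining of Conditions C4, C3, \Cref{lem:fopt-conc}, and C2, and the same final summation as the paper's proof; the only addition is making explicit that the loop over $\round(I)$'s outputs lets you analyze the pair certified by the structural theorem, which the paper leaves implicit.
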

\begin{proof}
Let $J_{\LP}$ be the set of bins packed in the {\em randomized rounding of
configuration LP} step
(see line \ref{alg-line:rna-pack:rround-pack} in \cref{algo:rna-pack}
in \cref{sec:rna-extra}),
$J_D$ be the set of bins used to pack the discarded items $D \cap S$,
$J$ be the set of bins used to pack the rest of the items $S \setminus D$,
 and $\Jtild$ be the set of bins used by $\complexPack$ to pack items in $\Stild$.

Then $|J_{\LP}| \le T = \smallceil{(\ln\beta)\smallnorm{\xhat}_1} \le (\ln\beta)\mu\opt(I) + O(1)$.

Now, we have $|J_D| \le b\ceil{2\Span(D)} \le 2b\eps\Span(I) + b \le 2b(d_v+1)\eps\opt(I) + b$.
The first inequality follows from the property of $\simplePack$, the second follows from
C1.1 (Small Discard) and the last follows from \cref{lem:span-lb-opt}. Finally,
\begin{align*}
|J| &\le \gamma|\Jtild| + O(1)  \tag{property of $\unround$ (C4)}
\\ &\le \gamma\alpha\fsopt(\Stild) + O(1)  \tag{property of $\complexPack$ (C3)}
\\ &\le \gamma\alpha\left( {\fsopt(\Itild)}/{\beta} + 2b\mu\eps\opt(I) \right) + O(1/\eps^2)
    \tag{by \cref{lem:fopt-conc}}
\\ &\le \gamma\alpha\left(\rho/\beta + 2b\mu\eps\right)\opt(I) + O(1/\eps^2)
\end{align*}
Here, the last inequality follows from the structural theorem (C2),
which says that $\exists (\Itild, D) \in \round(I)$ such that
$\fsopt(\Itild) \le \rho\opt(I) + O(1)$.
Hence, the total number of bins is at most
\begin{align*}
& |J_{\LP}| + |J_D| + |J|
\le \left( (\ln\beta)\mu + \frac{\gamma\alpha\rho}{\beta}
    + 2b(d_v+1+\gamma\alpha\mu)\eps \right)\opt(I) + O(1/\eps^2).
\qedhere \end{align*}
\end{proof}

The AAR of $\rnaPack(I)$ is roughly $\mu\ln\beta + \gamma\alpha\rho/\beta$.
This is minimized for $\beta = \gamma\alpha\rho/\mu$ and the minimum value is
$\mu\left(1 + \ln\left(\alpha\gamma\rho/\mu\right)\right)$.
As we require $\beta \ge 1$, we get this AAR only when $\gamma\alpha\rho \ge \mu$.
If $\mu \ge \gamma\alpha\rho$, the optimal $\beta$ is 1 and the AAR is roughly $\gamma\alpha\rho$.

\subsection{Example: \texorpdfstring{$\simplePack$}{simple-pack}}
\label{sec:rna:simple-pack}

We will show how to use $\simplePack$ with the R\&A framework.
Recall that $\simplePack$ is a $2b(d_v+1)$-approximation algorithm for \geomvec{$d_g$}{$d_v$} BP
(see \cref{sec:span-pack}), where $b \defeq 3$ when $d_g=2$,
$b \defeq 9$ when $d_g = 3$,
$b \defeq 4^{d_g}+2^{d_g}$ when $d_g > 3$,
and $b \defeq 1$ when $d_g = 1$.
Using the R\&A framework on $\simplePack$ will
improve its AAR from $2b(d_v+1)$ to $b(1+\ln(2(d_v+1))) + O(\eps)$.
To do this, we need to show how to implement $\round$, $\complexPack$ and $\unround$.

\begin{enumerate}
\item $\cLPsolve(I)$:
Using the KS algorithm of \cref{sec:simple-ks} and the LP algorithm of \cite{eku-pst},
we get a $b(1+\eps)$-approximate solution to $\cLP(I)$. Therefore, $\mu = b(1+\eps)$.

\item $\round(I)$:
returns just one pair: $(\Itild, \{\})$, where $\Itild \defeq \{\pi(i): i \in I\}$
and $\pi(i)$ is an item having height ($d_g\Th$ geometric dimension) equal to $\Span(i)$,
all other geometric dimensions equal to 1, and all vector dimensions equal to $\Span(i)$.
There is just one class in $\Itild$ and all items are allowed to be sliced
perpendicular to the height, so the homogeneity properties are satisfied.
Also, $c_{\max} = d_v+1$ by \cref{lem:span-lb-opt}.

\item \textbf{Structural theorem}:
We take structured packing to be the set of all possible packings.
Then $\fsopt(\Itild) = \ceil{\Span(I)} \le (d_v+1)\opt(I)$. Therefore, $\rho = d_v+1$.

\item $\complexPack(\Stild)$:
We can treat $\Stild$ as the 1-D bin packing instance $\{\Span(i): i \in S\}$
and pack it using Next-Fit. Therefore,
$|\complexPack(\Stild)| \le 2\Span(S) + 1 \le 2\fsopt(\Stild) + 1$.
Therefore, $\alpha = 2$.

\item $\unround(\Jtild)$:
For each bin in $\Jtild$, we can pack the corresponding unrounded items into $b$ bins
(using Steinberg's algorithm or $\hdhIV$). Therefore, $\gamma = b$.
\end{enumerate}

Therefore, we get an AAR of
$\mu(1+\ln(\gamma\alpha\rho/\mu)) + O(\eps) \approx b(1+\ln(2(d_v+1))) + O(\eps)$.

For $d_g = 2$, we can slightly improve the AAR by using the
$(2+\eps)$-approximation algorithm of \cite{gvks} for \geomvec{2}{$d_v$} KS.
This gives us an AAR of $2(1 + \ln(3(d_v+1))) + O(\eps)$.
This is better than the AAR of $\betterSimplePack$ for $d_v \ge 3$.

The above example is presented only to illustrate an easy use of the R\&A framework.
It doesn't exploit the full power of the R\&A framework.
The algorithm $\cbPack$, which we outline in \cref{sec:improve},
uses more sophisticated subroutines $\round$, $\complexPack$ and $\unround$,
and uses a more intricate definition of fractional structured packing
to get an even better AAR of $2(1+\ln(\frac{d+4}{2})) + \eps$
(improves to $2(1+\ln(19/12))+\eps$ $\approx 2.919+\eps$ for $d=1$).

\section{Improved Approximation Algorithms}
\label{sec:improve}

In this section, we give an overview of the $\cbPack$ algorithm for \geomvec{2}{$d$} BP,
which is inspired from the
1.5 asymptotic approximation algorithm for 2-D GBP \cite{pradel-thesis}.
$\cbPack$ is based on the following two-step procedure, as common in many packing algorithms.

In the first step ({\em structural step}), we start with an optimal solution
and using some transformation show the existence of
a good {\em structured} solution. %
Our structural result roughly says that if items $I$
can be packed into $m$ bins, then we can round $I$ to get a new instance $\Itild$
such that $\fsopt(\Itild) \le \rho m + O(1)$ for some constant $\rho$,
where $\fsopt(\Itild)$ is the number of bins in the optimal \emph{structured fractional}
packing of $I$. Roughly, the notion of \emph{structured} packing that we use here,
which we call \emph{compartmental packing},
imposes the following additional constraints over the
\emph{container-based} packing of \cite{pradel-thesis}:
\begin{itemize}
\item An item $i$ is said to be dense iff $v_{\max}(i)/a(i)$ is above a certain threshold.
If a bin contains dense items, then we reserve a sufficiently-large rectangular region
exclusively for dense items, and dense items can only be packed into this region.
\item For a constant $\eps$, for every $j \in [d]$,
the sum of the $j\Th$ weights of items in each bin is at most $1-\eps$.
\end{itemize}
The proof of our structural result differs significantly from that of
\cite{pradel-thesis} because the presence of vector dimensions inhibit a
straightforward application of their techniques.

In the second step ({\em algorithmic step}), we show how to find a near-optimal structured solution.
$\cbPack$ first rounds $I$ to $\Itild$ and then
uses a mixture of brute-force and LP to pack the items into containers,
similar to \cite{pradel-thesis} or \cite{kenyon1996strip}.
This gives us the optimal structured fractional packing of $\Itild$.
Then we convert that packing to a non-fractional packing of $I$ with
only a tiny increase in the number of bins.
Intuitively, rounding is helpful here because it reduces the number of different types of items
by partitioning the items into a constant number of classes
such that items in each class are \emph{similar}.
This simplicity introduced by rounding enables us to
find the optimal structured fractional packing efficiently.

Then we show that $\cbPack$ fits into the R\&A framework and gives an AAR of roughly $2(1+\ln(\rho/2))$.
In particular, \cref{sec:rbbp-2d-extra:rna} shows how components from $\cbPack$
can be used as the R\&A subroutines $\round$, $\complexPack$ and $\unround$.
To (approximately) solve the \config{} LP, we use the linear programming algorithm
from \cite{eku-pst} and the $(2+\eps)$-approximation algorithm for
\geomvec{2}{$d$} KS from \cite{gvks}.

We now give a brief overview of some key ideas used in our structural result.
Due to space limitations, the details of the structural result and the algorithm
can be found in \cref{sec:rbbp-2d-extra}.
Like \cite{pradel-thesis},
we start with a packing of input $I$ into $m$ bins,
and transform it into a structured fractional packing of $\Itild$
into $\rho m + O(1)$ bins.
We do this in three steps:
\begin{enumerate}
\item %
We round up one geometric dimension of each item and pack the items into
roughly $\rho m + O(1)$ bins. We call these bins \emph{quarter-structured}
(see \cref{sec:rbbp-2d-extra:one-side,sec:rbbp-2d-extra:slack}).
\item %
We round the remaining dimensions of items and partition them into classes such that
they satisfy the \emph{homogeneity properties} (see \cref{sec:rna:round}).
We allow slicing and repack the items into almost the same number of bins.
We call the resulting bin packing \emph{semi-structured}
(see \cref{sec:rbbp-2d-extra:round-weights,sec:rbbp-2d-extra:other-side}).
\item %
Finally, we transform the packing into a \emph{\compartmentalHyp} packing
(see \cref{sec:rbbp-2d-extra:compart}).
Compartmental packings have nice properties which help us to find them efficiently.

\end{enumerate}

In steps 1, 2 and 3, \cite{pradel-thesis} uses the Next-Fit-Decreasing-Height (NFDH)
algorithm \cite{coffman1980performance} to pack items of $O(\eps m)$ area into $O(\eps m)$ bins.
This does not work when vector dimensions are present; an item of low area can have large weights.
In step 2, \cite{pradel-thesis} uses \emph{linear grouping}, i.e.,
each item is moved in place of a geometrically larger item so that it can be rounded up.
Vector dimensions make such cross-bin movement difficult,
since that can violate bins' weight capacities.
\cite{pradel-thesis} uses cross-bin movement in step 1 too.

Our first crucial observation is that most difficulties associated with vector dimensions
disappear if items' \emph{density} is upper-bounded by a constant. Here density of item $i$
is defined as $v_{\max}(i)/a(i)$.
Specifically, if items of bounded density (we call them non-dense items) have
small area, then we can use $\simplePack$
to pack them into a small number of bins.
Linear grouping can also be made to work for such items with some more effort.
Therefore, our strategy is to segregate items as \emph{dense} and \emph{non-dense}.
We reserve a thin rectangular region in bins for dense items,
and the rest is reserved for non-dense items.

Furthermore, dense items in a bin must have low total area, due to their high density.
If we reserve enough space for them in the bin, we can always pack them in their
reserved region using NFDH (see \cref{lem:nfdh-strip} in \cref{sec:next-fit}).
Such a guarantee means that we can essentially ignore their geometric dimensions
and simply treat them as vectors.

In step 2, we want to round up vector dimensions with only a marginal increase in
the number of bins. To do this, we require each quarter-structured bin to be $\eps$-slacked.
$\eps$-slackness roughly means that for a set $J$ of items in a bin,
$\forall j \in [d], v_j(J) \le 1-\eps$ (see \cref{sec:rbbp-2d-extra:slack}
for a formal description).
$\eps$-slackness also helps us in designing the packing algorithm,
because we can then use techniques from resource-augmented vector bin packing.
Also, during the rounding step, we round down the weight of some dense items,
and $\eps$-slackness allows us to \emph{unround} with no increase in the number of bins.

The observations above guide our definition of \emph{quarter-structured}.
Roughly, a packing is quarter-structured %
if items having large width have their width and $x$-coordinate
rounded to a multiple of %
$\epsLarge^2/4$ and %
each bin is $\eps$-slacked. We reserve a thin rectangular region
of width $\epsLarge/2$ for packing dense items (only if the bin contains dense items).

\begin{figure}[t]
\centering
\definecolor{better-red}{rgb}{0.97,0.73,0.7}
\definecolor{better-green}{rgb}{0.7,0.95,0.7}
\definecolor{better-blue}{rgb}{0.65,0.7,0.99}
\begin{tikzpicture}[
mybrace/.style = {decoration={amplitude=3pt,brace,mirror,raise=1pt},semithick,decorate},
greenfill/.style = {fill=better-green},
bluefill/.style = {fill=better-blue},
redfill/.style = {fill=better-red},
scale=0.8
]
\draw[greenfill] (0,0) rectangle +(2, 1);
\draw[greenfill] (0,1) rectangle +(0.8,0.8);
\draw[greenfill] (0,2) rectangle +(1,1);
\draw[greenfill] (0.8,1.5) rectangle +(1,0.5);
\draw[bluefill] (3,3) rectangle +(-0.15,-1.2);
\draw[bluefill] (2.85,3) rectangle +(-0.15,-1.5);
\draw[bluefill] (3,0) rectangle +(-0.5,0.6);
\draw[redfill] (0.8,1) rectangle (3,1.5);
\draw[redfill] (1.5,2) rectangle +(1.2,1);

\draw (0, 0) rectangle (3, 3);
\draw[mybrace] (2.4,0) -- node[below=2pt] {{\small$\eps$}} (3,0);
\draw[dashed,semithick] (2.4,-0.3) -- +(0,3.6);
\end{tikzpicture}

\caption{Example of classifying items as blue, red and green based on an $\eps$-strip.}
\label{fig:eps-strip}
\end{figure}
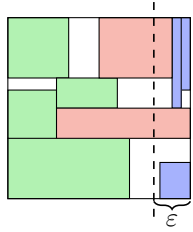

In step 1, \cite{pradel-thesis} uses a standard cutting-strip argument:
They create a strip of width $\epsLarge$ next to an edge of the bin
(see \cref{fig:eps-strip}).
Items lying completely inside the strip (called blue items),
have small area and are packed separately using NFDH.
Items intersecting the boundary of the strip (called red items), are removed.
This creates an empty space of width $\epsLarge$ in the bin.
Using this empty space, items lying outside the strip (called green items),
can then have their width and $x$-coordinate rounded to a multiple of $\epsLarge^2/2$.
Their key idea is how to pair up most bins so that red items from two bins
can be rounded and packed together into a new bin.
This is roughly why they get an AAR of $1.5+\eps$.

We use the cutting-strip argument too, but with some differences.
We cannot freely mix red items from different bins if they have large weight,
and we cannot simply pack blue items into a small number of bins.
We also need bins to be slacked. So, we get a larger AAR of $d+4+\eps$.
For $d=1$, however, we allow mixing items using more sophisticated techniques,
which improves the AAR to $19/6+\eps$.
Also, we round green items to a multiple of $\epsLarge^2/4$ instead of $\epsLarge^2/2$,
which leaves an empty strip of width $\epsLarge/2$ in the bin
even after rounding, and we reserve this space for dense items.
This gives us a quarter-structured packing.

Based on the broad ideas above, we make more changes to the quarter-structured packing
to get a compartmental packing.

Finally, in the algorithmic step, we use a combination of brute-force and LP
to pack different types of items into different type of compartments and find a
{\em good} compartmental packing efficiently.

\acknowledgements{We thank Nikhil Bansal and Thomas Rothvoss for their helpful comments.}

\bibliographystyle{plainurl}
\bibliography{bibdb}
\appendix
\section{Some Simple Packing Algorithms}
\label{sec:next-fit}

\begin{claim}[Next-Fit]
\label{claim:next-fit}
Let $I \defeq \{i_1, i_2, \ldots, i_n\}$ be a 1D bin packing problem instance.
The Next-Fit algorithm finds a packing of $I$ into at most $\ceil{2\sum_{j=1}^n i_j}$ bins.
\end{claim}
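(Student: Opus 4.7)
The proof plan is to exploit the essential structural property of Next-Fit: whenever a new bin is opened, the item that triggered the opening did not fit into the previous bin, so the contents of two consecutive bins must together exceed the bin capacity of $1$.

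Concretely, first I would recall the Next-Fit rule: maintain a single active bin, and for each item $i_j$ in sequence either place it in the active bin (if doing so keeps the load $\le 1$) or close the active bin, open a new one, and place $i_j$ there. Let $B_1, B_2, \ldots, B_m$ denote the bins produced, in the order they were opened, and let $|B_k|$ be the total size of items packed in $B_k$. Set $S \defeq \sum_{j=1}^n i_j$, so $\sum_{k=1}^m |B_k| = S$.

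Next I would prove the key inequality: for every $1 \le k \le m-1$,
\[ |B_k| + |B_{k+1}| > 1. \]
The reason is that the first item $i$ placed in $B_{k+1}$ was rejected from $B_k$, which means $|B_k| + i > 1$ at that moment (the load of $B_k$ never grows afterwards, since $B_k$ is closed); and $i \le |B_{k+1}|$ because $i$ is packed in $B_{k+1}$.

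Then I would sum the inequality over $k = 1, \ldots, m-1$ and telescope:
\[ m - 1 < \sum_{k=1}^{m-1}\bigl(|B_k| + |B_{k+1}|\bigr) = |B_1| + 2\sum_{k=2}^{m-1}|B_k| + |B_m| \le 2\sum_{k=1}^m |B_k| = 2S. \]
Thus $m - 1 < 2S$, and since $m - 1$ is a nonnegative integer strictly less than $2S$, I get $m - 1 \le \lceil 2S \rceil - 1$ (considering the two cases of whether $2S$ is an integer), hence $m \le \lceil 2S \rceil$, as required.

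There is essentially no obstacle here; the only subtlety is the final integer-rounding step, where one needs to be careful about whether $2S$ is an integer when deducing $m \le \lceil 2S \rceil$ from the strict inequality $m < 2S + 1$. The edge case $m = 1$ is consistent with the bound whenever at least one item has positive size; if all items have size $0$, Next-Fit uses a single bin and the statement holds trivially provided one interprets $\lceil 0 \rceil$ as $0$ and the packing as empty.
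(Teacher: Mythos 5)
The paper states \Cref{claim:next-fit} without proof, treating the $\ceil{2\Span(I)}$ bound for Next-Fit as a standard fact, so there is no paper argument to compare against. Your proof is correct: the key inequality $|B_k| + |B_{k+1}| > 1$ for consecutive bins is the right observation, the telescoping sum and the bound $|B_1| + 2\sum_{k=2}^{m-1}|B_k| + |B_m| \le 2S$ are sound, and the integer-rounding case analysis (distinguishing whether $2S$ is an integer) correctly yields $m \le \ceil{2S}$ from $m < 2S + 1$. You are also right to flag the degenerate all-zero-size case as the only place the inequality can technically fail; in the paper's usage (\cref{thm:span-pack}) the items have positive $\Span$, so this never arises. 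One implicit assumption worth stating: your telescoping sum requires $m \ge 2$; for $m = 1$ the claim reduces to $1 \le \ceil{2S}$, which again holds whenever some $i_j > 0$.
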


\begin{lemma}[NFDH for strip packing \cite{coffman1980performance}]
\label{lem:nfdh-strip}
A set $I$ of rectangles can be packed into a strip of height at most
$2a(I) + max_{i \in I} h(i)$ using the Next-Fit Decreasing Height (NFDH) algorithm.
\end{lemma}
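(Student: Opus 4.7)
The plan is to prove the bound by the standard shelf-based accounting for NFDH. First I would describe the algorithm precisely enough for the accounting: sort the rectangles in non-increasing order of height, and greedily pack them left-to-right onto a horizontal \emph{shelf}; whenever the next rectangle does not fit in the remaining horizontal space of the current shelf, open a new shelf directly above, whose height is determined by (and equal to the height of) its first (hence tallest) rectangle. This produces shelves $S_1, S_2, \ldots, S_k$ with heights $h_1 \geq h_2 \geq \cdots \geq h_k$, and the total height of the strip used is $H = \sum_{i=1}^{k} h_i$.

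The key local inequality is a ``shelf $i$ together with the first rectangle of shelf $i+1$'' bound. Let $w_i$ denote the total width occupied on shelf $S_i$, and let $r_{i+1}$ be the first (tallest) rectangle on shelf $S_{i+1}$, with width $w(r_{i+1})$ and height exactly $h_{i+1}$. Since NFDH opened a new shelf for $r_{i+1}$, it must be that $w_i + w(r_{i+1}) > 1$. Moreover, every rectangle on shelf $S_i$ has height at least $h_{i+1}$ (because heights are non-increasing across shelves), so the total area of rectangles on $S_i$ is at least $w_i \cdot h_{i+1}$, and $r_{i+1}$ contributes area $w(r_{i+1}) \cdot h_{i+1}$. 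Combining,
\[
a(S_i) + a(r_{i+1}) \geq (w_i + w(r_{i+1})) \cdot h_{i+1} > h_{i+1}.
\]

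Summing this over $i = 1, \ldots, k-1$ gives
\[
\sum_{i=1}^{k-1} a(S_i) + \sum_{i=2}^{k} a(r_i) > \sum_{i=2}^{k} h_i = H - h_1.
\]
Each of the two sums on the left is bounded by $a(I)$ individually: the first is the total area of a subset of the shelves, and the second is a sum over distinct rectangles. Hence $H - h_1 < 2 a(I)$, i.e.\ $H < 2 a(I) + h_1 \leq 2 a(I) + \max_{i \in I} h(i)$, which is the claimed bound.

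There is no real obstacle — the result is classical — but the one spot where care is needed is the double-counting step, since $r_{i+1}$ belongs to $S_{i+1}$ and would be double-counted if one tried to push both sums into a single $a(I)$ term; the clean way out is to keep the two sums separate and bound each by $a(I)$, which is what yields precisely the factor $2$ in front of $a(I)$. A minor boundary issue to sanity-check is the case $k=1$, where the claimed inequality reduces to $H = h_1 \leq 2 a(I) + \max_i h(i)$, which is trivially true.
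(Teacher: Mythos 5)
Your proof is correct and is the standard shelf-pairing argument for the NFDH strip-packing bound; the paper simply cites this result from Coffman et al.\ without reproducing a proof, so there is nothing to compare against. The only presentational nit is that $h_1$ is in fact \emph{equal} to $\max_{i\in I} h(i)$ (the first rectangle packed is the tallest), so the final step is an equality rather than a $\le$, but this does not affect the conclusion.
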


\begin{lemma}[NFDH for small items \cite{coffman1980performance}]
\label{lem:nfdh-small}
Let $I$ be a set of rectangles where each rectangle has width at most $\delta_W$
and height at most $\delta_H$. Let there be a rectangular bin of width $W$ and height $H$.
If $a(I) \le (W - \delta_W)(H - \delta_H)$, then NFDH can pack $I$ into the bin.
\end{lemma}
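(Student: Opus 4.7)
The plan is a standard shelf-based analysis of NFDH (the result is attributed to \cite{coffman1980performance}, so a short sketch suffices). Recall that NFDH first sorts items in non-increasing order of height and then packs them greedily onto horizontal \emph{shelves}: items go left-to-right on the current shelf (whose height equals that of its first, tallest item), and whenever the next item does not fit in the remaining horizontal space, a new shelf is opened immediately above the current one. Let $S_1, \ldots, S_k$ denote the shelves from bottom to top, with heights $h_1 \ge h_2 \ge \cdots \ge h_k$. It suffices to show that $H_{\mathrm{used}} \defeq \sum_{j=1}^k h_j \le H$, which certifies that all shelves (and hence all items) fit vertically into a bin of height $H$.

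The key step is a width--area lower bound on each non-top shelf. Fix $j \in \{1, \ldots, k-1\}$. At the moment $S_{j+1}$ was opened, the first item placed on it had failed to fit horizontally next to the items already on $S_j$; since that item has width at most $\delta_W$, the items on $S_j$ have total width strictly greater than $W - \delta_W$. Moreover, by the decreasing-height sorting, every item on $S_j$ was chosen before the first item of $S_{j+1}$, so each item on $S_j$ has height at least $h_{j+1}$. Hence the total area of items sitting on $S_j$ is at least $(W - \delta_W)\, h_{j+1}$.

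Summing this lower bound over $j = 1, \ldots, k-1$ gives
\[
a(I) \;\ge\; (W - \delta_W)\sum_{j=2}^{k} h_j \;=\; (W - \delta_W)\bigl(H_{\mathrm{used}} - h_1\bigr).
\]
Since the tallest item has height at most $\delta_H$, we have $h_1 \le \delta_H$, and rearranging together with the hypothesis $a(I) \le (W - \delta_W)(H - \delta_H)$ yields $H_{\mathrm{used}} \le h_1 + (H - \delta_H) \le H$, as required. The edge case $k = 1$ is trivial: $H_{\mathrm{used}} = h_1 \le \delta_H \le H$, where $H \ge \delta_H$ is implicit in the hypothesis (otherwise $(W-\delta_W)(H-\delta_H) < 0$ forces $I = \emptyset$ and the claim is vacuous). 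There is no substantive obstacle; the only care point is correctly indexing the telescoping sum so that $h_1$ drops out and is absorbed into the $\delta_H$ slack in the hypothesis.
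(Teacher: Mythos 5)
Your proof is correct and is the standard shelf-based argument for NFDH. The paper itself does not prove this lemma; it simply cites it from Coffman et al., so there is no paper proof to compare against. Your argument — the width lower bound of $W - \delta_W$ on every non-top shelf, the height lower bound of $h_{j+1}$ on each item of shelf $j$, telescoping the sum so that only $h_1 \le \delta_H$ survives, and absorbing it into the $\delta_H$ slack in the hypothesis — is precisely the classical derivation. The one small point worth tightening is your dismissal of the corner cases: the hypothesis $a(I) \le (W-\delta_W)(H-\delta_H)$ does not by itself force $\delta_W \le W$ and $\delta_H \le H$ (both factors could be negative simultaneously), but those inequalities are implicit in the setup since each rectangle must fit inside the bin, so $\delta_W \le W$ and $\delta_H \le H$ can be assumed without loss of generality. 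With that noted, the proof is complete.
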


\section{Omitted Proofs}
\label{sec:omitted}

\begin{lemma}[Steinberg's algorithm \cite{steinberg1997strip}]
\label{thm:steinberg}
Let $I$ be a set of rectangles. Let $w_{\max} \defeq \max_{i \in I} w(i)$
and $h_{\max} \defeq \max_{i \in I} h(i)$. Consider a bin of width $W$ and height $H$,
where $w_{\max} \le W$ and $h_{\max} \le H$.
Then there is a $O(n\log^2 n/\log\log n)$-time algorithm to pack $I$ into the bin if
\[ 2a(I) \le WH - \Big(\big(\max(2w_{\max} - W, 0))\cdot (\max(2h_{\max} - H, 0)\big)\Big) \]
\end{lemma}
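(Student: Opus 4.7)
The plan is simply to invoke Steinberg's theorem from~\cite{steinberg1997strip}, which is exactly the statement at hand; the runtime bound is also established in that paper via an efficient implementation of the algorithm. No re-derivation is needed here, so the ``proof'' of this lemma is essentially a citation. Nonetheless, let me sketch the shape of Steinberg's argument to give some intuition for how such an inequality can suffice.

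Steinberg's algorithm is recursive and divide-and-conquer on the bin. Identify the rectangle of maximum width (or, symmetrically, maximum height) among those remaining, place it flush against a corner of the bin, then split the remaining L-shaped region into two axis-aligned sub-rectangles by a cut that extends through the placed rectangle. Partition the rest of the rectangles into two subsets, one assigned to each sub-bin, and recurse on each. The base cases are when a single rectangle fits trivially or when the set is empty.

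The heart of the argument is to show that such a partition can always be chosen so that the Steinberg inequality
\[ 2a(I') \le W'H' - \max(2w_{\max}' - W', 0)\cdot \max(2h_{\max}' - H', 0) \]
holds inductively in each subproblem $(I', W', H')$, guaranteeing that the recursion never gets stuck. The correction term $\max(2w_{\max}-W,0)\cdot \max(2h_{\max}-H,0)$ accounts precisely for the ``wasted corner'' area forced by very wide or very tall rectangles; a case analysis on whether $2w_{\max} > W$ and/or $2h_{\max} > H$ controls both the direction of the cut and which items go into which sub-rectangle. The balancing step that chooses the partition while maintaining the invariant is the main obstacle, and it is the bulk of Steinberg's paper.

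The $O(n \log^2 n / \log\log n)$ runtime follows from standard sorted-set and priority-queue data structures for repeatedly finding maximum-width (or maximum-height) rectangles and for performing balanced partitions of the remaining set by area; again, this is exactly as in~\cite{steinberg1997strip}.
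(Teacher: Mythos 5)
The paper states this lemma as a black-box citation of Steinberg's result and offers no proof of its own, and your proposal does the same, correctly identifying that the statement is exactly Steinberg's theorem and that the runtime bound also comes from that paper. Your added sketch of the recursive decomposition is a reasonable (if informal) summary of Steinberg's argument, but it is supplementary rather than a substitute for the citation, and nothing in it conflicts with the paper.
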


\begin{omittedproof}{corr:steinberg}
Pack $I$ into a bin of width 2 and height 1 using Steinberg's algorithm
(see \cref{thm:steinberg}).
Then cut the bin vertically into 2 unit-sized squares.
The items which lie completely inside the left half can be packed into a unit-sized bin.
The items which lie completely inside the right half can be packed into a unit-sized bin.
The items which lie on the cutting line are stacked one-over-the-other,
so we can pack them into a unit-sized bin.
\end{omittedproof}

\section{The \texorpdfstring{$\hdhIV$}{HDH4} Algorithm}
\label{sec:hdhIV}

\begin{lemma}
\label{lem:hdh4}
Let $I$ be a set of $d_g$-dimensional cuboids.
Then $I$ can be packed into at most $4^{d_g} + 2^{d_g}\vol(I)$ bins
using a variant of $\hdhIV$.
\end{lemma}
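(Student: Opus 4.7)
The plan is to apply a $d_g$-dimensional extension of Caprara's HDH$_4$ algorithm \cite{caprara2008} that classifies each item harmonically in every dimension and packs each class separately. The multiplicative term $2^{d_g}\vol(I)$ in the bound reflects a per-class volume argument, while the additive term $4^{d_g}$ reflects the number of harmonic classes.

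For each item $i$ and each dimension $k \in [d_g]$, I would assign a label $t_k(i) \in \{1,2,3,4\}$ defined by $\ell_k(i) \in (1/(t_k+1), 1/t_k]$ for $t_k \in \{1,2,3\}$, and $t_k = 4$ if $\ell_k(i) \le 1/4$. Items sharing the same label vector $(t_1(i), \ldots, t_{d_g}(i))$ form one of at most $4^{d_g}$ classes, and I handle each class independently.

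For a \emph{regular} class (no label equal to $4$), items pack into a $t_1 \times t_2 \times \cdots \times t_{d_g}$ grid with $\prod_k t_k$ items per bin. Since each item has volume at least $\prod_k 1/(t_k+1) \ge \prod_k 1/(2t_k) = 1/(2^{d_g}\prod_k t_k)$, the number of items in the class is at most $2^{d_g}(\prod_k t_k)\vol(\text{class})$, so the class uses at most $2^{d_g}\vol(\text{class}) + 1$ bins. For a \emph{tail} class (some $t_{k^*} = 4$), I would apply recursive Next-Fit-Decreasing-Height along dimension $k^*$: sort items in that class by $\ell_{k^*}$ decreasing, open shelves in the $k^*$-direction, and pack each shelf by recursively invoking the algorithm on the $(d_g-1)$-dimensional projection. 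Because $\ell_{k^*} \le 1/4$, every bin except possibly the last is filled to ratio at least $3/4$ in the $k^*$-direction, and a shelf-level volume accounting yields a bin count of $2^{d_g}\vol(\text{class}) + O(1)$.

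Summing over all $4^{d_g}$ classes, the multiplicative contributions add up to $2^{d_g}\vol(I)$ and the additive $O(1)$ contributions add up to at most $4^{d_g}$, giving the claimed bound. The main obstacle is the recursive analysis for tail classes: the ``last shelf'' waste within each recursive call and the ``last bin'' waste across the stacking of shelves must each be bounded by a small constant, and these constants must be tracked carefully through the $d_g$ levels of recursion so that the total additive error across all $4^{d_g}$ classes stays within $4^{d_g}$.
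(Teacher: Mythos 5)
Your harmonic classification into at most $4^{d_g}$ label classes and your grid-packing analysis for regular classes (all labels in $\{1,2,3\}$) are correct and essentially match the paper's approach. However, your analysis of tail classes (some label equal to $4$) has a genuine gap that you yourself flag: you invoke recursive NFDH and claim a $2^{d_g}\vol(\text{class}) + O(1)$ bound, but for the final bound $4^{d_g} + 2^{d_g}\vol(I)$ you need the additive term to be \emph{exactly} $1$ per class, not an $O(1)$ that could compound across the $d_g$ levels of recursion. You do not show this, and the recursive accounting is nontrivial --- a naive shelf-by-shelf argument wastes a constant fraction at each level.

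The paper sidesteps this bookkeeping with one uniform mechanism: the rounding map $f_4$ given by $f_4(x) = 1/q$ if $x \in (1/(q+1), 1/q]$ for $q \in \{1,2,3\}$ and $f_4(x) = 2x$ if $x \le 1/4$. It satisfies $x \le f_4(x) \le 2x$, hence $\vol(f_4(i)) \le 2^{d_g}\vol(i)$. The paper then cites a black-box subroutine $\hdhIVunit^{[t]}$ from \cite{caprara2008}: given a sequence $J$ of same-type cuboids with $\vol(f_4(J - \{\last(J)\})) < 1$, it packs all of $J$ into a single bin. The variant algorithm repeatedly takes the smallest prefix $J$ of each class with $\vol(f_4(J)) \ge 1$ (or the whole remaining class) and packs it with $\hdhIVunit$. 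Every bin except possibly the last in each class then has $f_4$-volume at least $1$, giving $m^{[q]} \le 1 + \vol(f_4(I^{[q]})) \le 1 + 2^{d_g}\vol(I^{[q]})$ per class, and hence $4^{d_g} + 2^{d_g}\vol(I)$ overall. The per-class additive error is exactly $1$ by construction, with no recursive compounding. To finish your proof you would need to either cite the Caprara subroutine as a black box or carry out the recursive shelf-level accounting in full; the $f_4$-volume potential is precisely what packages that accounting cleanly, and it is the piece your argument is missing.
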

\begin{proof}
Define $f_4: [0, 1] \mapsto [0, 1]$ and $\type_4: [0, 1] \mapsto [4]$ as
\begin{align*}
f_4(x) &= \begin{cases}
\frac{1}{q} & x \in \left(\left.\frac{1}{q+1}, \frac{1}{q}\right]\right. \textrm{ for } q \in [3]
\\ 2x & x \le \frac{1}{4} \end{cases}
& \type_4(x) &= \begin{cases}
q & x \in \left(\left.\frac{1}{q+1}, \frac{1}{q}\right]\right. \textrm{ for } q \in [3]
\\ 4 & x \le \frac{1}{4} \end{cases}
\end{align*}
Note that $x \le f_4(x) \le 2x$.
Let $i$ be a $d_g$-dimensional cuboid.
Define $f_4(i)$ as the cuboid of length $f_4(\ell_j(i))$ in the $j\Th$ dimension.
Therefore, $\vol(i) \le \vol(f_4(i)) \le 2^{d_g}\vol(i)$.
For a set $I$ of cuboids, $f_4(I) \defeq \{f_4(i): i \in I\}$.
Define $\type(i)$ to be a $d$-length vector whose $j\Th$ component is $\type(\ell_j(i))$.

For a set $I$ of cuboids, define $I^{(k)}$ to be the set of items
obtained by ignoring all dimensions other than the first $k$.
\cite{caprara2008} (implicitly) defines a recursive algorithm $\hdhIVunit^{[t]}(I, d)$.
For a sequence $I$ of $d$D cuboids all having the same $\type_4$ $t$, where
$\vol(f_4((I-\{\last(I)\})^{(d)})) < 1$,
$\hdhIVunit^{[t]}(I, d)$ returns a packing of $I^{(d)}$ into a $d$D bin.
Here $\last(I)$ is the last item in sequence $I$.

Now I'll define a slight variation of the $\hdhIV$ algorithm in \cite{caprara2008}.
This variant also works when rotation of items is allowed.
First, partition the items by $\type_4$. The number of partitions is at most $Q = 4^{d_g}$.
Let $I^{[q]}$ be the partition containing items of $\type_4$ $q$.
Order the items in $I^{[q]}$ arbitrarily.
Then repeatedly pick the smallest prefix $J$ of $I^{[q]}$ such that
either $J = I^{[q]}$ or $\vol(f_4(J)) \ge 1$, and pack $J$ into a bin using
$\hdhIVunit^{[q]}(J, d_g)$.

Suppose $\hdhIV(I^{[q]})$ produces $m^{[q]}$ bins. Let $B_j^{[q]}$ be the $j\Th$ of these bins.
Given the way we choose prefixes, $\vol(f_4(B_j^{[q]})) \ge 1$ for $j \in [m^{[q]}-1]$,
i.e. at most 1 bin is partially-filled.
\[ \vol(f_4(I^{[q]})) = \sum_{j=1}^{m^{[q]}} \vol(f_4(B_j^{[q]})) \ge (m^{[q]}-1) \]
Total number of bins used is
\[ \sum_{q=1}^Q m^{[q]} \le \sum_{q=1}^Q (1 + \vol(f_4(I^{[q]})))
\le Q + \vol(f_4(I)) = 4^{d_g} + 2^{d_g} \vol(I) \qedhere\]
\end{proof}

\section{Algorithm for Knapsack}
\label{sec:simple-ks}

Let $I$ be a set of \geomvecdim{2}{$d$} items. Let $p(i)$ be the profit of item $i$.
We want to pack a max-profit subset of $I$ into a bin.
Let $\Ihat$ be a set of $(d+1)$-dimensional vectors obtained by replacing the
geometric dimensions of each item $i$ by a single vector dimension $a(i)$.
Let $\Acal$ be a $(d+1)$-D VKS algorithm having approximation ratio $\alpha \ge 1$.
$\Acal$ gives us a packing of items $\Jhat \subseteq \Ihat$ into a bin.
Let $J$ be the corresponding items in $I$.
Then $\vol(J) \le 1$ and $\forall k \in [d], v_k(J) \le 1$.
Use Steinberg's algorithm to compute a packing of $J$ into 3 bins: $J_1, J_2, J_3$.
\WLoG, assume $p(J_1) \ge p(J_2) \ge p(J_3)$. Then output the packing $J_1$
as the answer to the \geomvec{2}{$d$} KS problem.
Since any feasible solution to the \geomvec{2}{$d$} KS instance $(I, p)$ also gives a
feasible solution to the VKS instance $(\Ihat, p)$, we get $\opt(\Ihat, p) \ge \opt(I, p)$.
Since $\Acal$ is $\alpha$-approximate, we get $p(J) \ge \opt(\Ihat, p)/\alpha$. Hence,
\[ p(J_1) \ge \frac{p(J)}{3}
\ge \frac{\opt(\Ihat, p)}{3\alpha}
\ge \frac{\opt(I, p)}{3\alpha} \]
Therefore, we get a $3\alpha$-approximation algorithm for \geomvec{2}{$d$} KS.
Using the PTAS for $(d+1)$-D VKS by Frieze and Clarke \cite{FriezeClarke84},
we get $\alpha = 1+\eps$.

\section{Details of the R\&A Framework}
\label{sec:rna-extra}

\begin{lemma}
\label{lem:fcip-eq-fopt}
$\fsopt(\Stild) \le \fscIP^*(\Stild) \le \fsopt(\Stild) + q$.
\end{lemma}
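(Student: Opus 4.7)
The plan is to use the homogeneity properties (C1.3) together with downward closure of structured packings to translate between integral solutions of $\fscIP(\Stild)$ and fractional structured packings of $\Stild$. The key observation is that, within any class $\Ktild_j$, all items share the same density vector and the same lengths along the non-sliceable axes, so the scalar $\Span$ is a faithful summary of an item's (or slice's) entire footprint along the sliceable axes. In effect, Span is a ``freely redistributable currency'' inside a class.

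For the lower bound $\fsopt(\Stild) \le \fscIP^*(\Stild)$, I would take an optimal integral solution $x^*$ of $\fscIP(\Stild)$ and materialize it as $m \defeq \fscIP^*(\Stild)$ physical bins (one per copy of each chosen configuration). Fix a class $\Ktild_j$. By the IP constraint, the total Span allocated to $\Ktild_j$ across these bins is at least $\Span(\Stild \cap \Ktild_j)$. Walk through the $\Ktild_j$-slots bin by bin and greedily refill them with slices of items from $\Stild \cap \Ktild_j$ along the sliceable axes; homogeneity guarantees that every unit of Span consumed from $\Stild \cap \Ktild_j$ matches exactly one unit of slot capacity in both geometric and vector dimensions. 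Any residual slot capacity is vacated; downward closure preserves structuredness. Doing this for all $q$ classes converts the IP solution into a fractional structured packing of $\Stild$ using exactly $m$ bins.

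For the upper bound $\fscIP^*(\Stild) \le \fsopt(\Stild) + q$, take an optimal fractional structured packing $P$ of $\Stild$ into $m \defeq \fsopt(\Stild)$ bins and read off the configuration of each bin; this gives an integer vector $x$ with $\sum_C x_C = m$ whose class-$j$ total Span equals $\Span(\Stild \cap \Ktild_j)$, so every covering constraint holds. The caveat flagged just before the lemma is that $\mathcal{C}_f$ is defined over items of $\Itild$ (not slices of $\Stild$), so for each class $\Ktild_j$ one may need to absorb a single trailing ``partial'' slice by paying for a fresh whole item from $\Ktild_j$ placed on an extra new bin. This costs at most one extra bin per class, for a total additive slack of $q$.

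The main obstacle is justifying the refill/replacement steps rigorously, in particular that swapping slots for slices of $\Stild$-items in the forward direction (and slices for whole $\Itild$-items in the reverse direction) produces a genuinely structured packing rather than merely one with the correct Span bookkeeping. The homogeneity condition C1.3 is exactly the hypothesis that makes Span an interchangeable currency within a class, so both directions reduce to invoking C1.3 class by class and cleaning up leftovers using downward closure.
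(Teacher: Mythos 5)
Your lower-bound argument (convert an integral solution of $\fscIP(\Stild)$ into a genuine fractional structured packing by refilling the per-class Span budgets with slices of $\Stild$, appealing to homogeneity and downward closure) matches the paper's intent and is sound. However, you have misidentified the caveat responsible for the additive $+q$. You attribute it to $\mathcal{C}_f$ being ``defined over items of $\Itild$ (not slices of $\Stild$),'' but that creates no gap: $\mathcal{C}_f$ explicitly allows items to be sliced as per $\round$, and every slice of an $\Stild$-item is also a slice of an $\Itild$-item since $\Stild \subseteq \Itild$. So the configurations arising in a fractional structured packing of $\Stild$ already belong to $\mathcal{C}_f$ as literally defined, and simply reading them off would give $\fscIP^*(\Stild) \le \fsopt(\Stild)$ with no slack at all.

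The actual issue, and the one the footnote is flagging, is that $\mathcal{C}_f$ as written can be \emph{infinite} once slicing is allowed (there is a continuum of possible slice amounts per class per bin), so $\fscIP$ is not a well-posed finite integer program. The paper's fix is to discretize: letting $\delta$ be the smallest geometric or vector coordinate of any item, it restricts $\mathcal{C}_f$ to configurations in which every per-class Span is a multiple of $\delta^{d_g}/n$, plus the single-item configurations. Given an optimal fractional structured packing of $\Stild$ into $m \le n$ bins, one slices out less than $\delta^{d_g}/n$ of Span from each class in each bin to land on the discretized grid; the removed pieces of each class then have total Span at most $\delta^{d_g}$, which is a lower bound on the Span of any single item, so by homogeneity they can be repacked into one item of that class placed alone in a fresh bin. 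That costs one extra bin per class, hence $q$ in total. Your ``absorb a trailing partial slice by paying for a fresh whole item'' is not a well-defined step of any such rounding, so the upper-bound direction of your argument does not yet close.
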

\begin{proof}
Due to the downward closure property, changing inequality constraints to equality constraints
doesn't affect the optimum values of the above LP and IP.
Therefore, $\fscIP(\Stild)$ is equivalent to the fractional structured bin packing problem.

A problem with the above definition of $\fscLP(\Itild)$ is that
the number of variables can be infinite if certain classes allow slicing.
We circumvent this problem by \emph{discretizing} the \config{}s:
Let $\delta$ be the smallest dimension of any item, i.e.
$\delta \defeq \min\left(\min_{j=1}^{d_g} \ell_j(i), \min_{j=1}^{d_v} v_j(i)\right)$.

In any optimal integral solution to $\fscLP(\Itild)$ that uses $m$ bins,
we can slice out some items from each class in each bin so that
the $\Span$ of each class in each bin is a multiple of $\delta^{d_g}/n$.
In each class, the total size of sliced out items across all bins is at most $\delta^{d_g}$.
Therefore, for each class, slices of that class can fit into a single item of that class.
If each such single item is packed in a separate bin,
the total number of bins used is at most $m + q$.

Therefore, we only need to consider \config{}s where either the $\Span$ of each class
is a multiple of $\delta^{d_g}/n$ or there is a single item in the \config.
This gives us a finite number of \config{}s and completes the proof.
\end{proof}

\begin{lemma}
\label{lem:fclp-rank-lemma}
$\fscLP^*(\Stild) \le \fscIP^*(\Stild) \le \fscLP^*(\Stild) + q$.
\end{lemma}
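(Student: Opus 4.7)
The plan is to handle the two inequalities separately, with only the right one requiring any real work. The left inequality $\fscLP^*(\Stild) \le \fscIP^*(\Stild)$ is immediate: $\fscIP$ is $\fscLP$ with the added restriction $x \in \mathbb{Z}^{|\mathcal{C}_f|}$, so every IP-feasible point is LP-feasible and the LP optimum can only be smaller.

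For the right inequality, I would invoke the rank lemma, exactly as foreshadowed in the sketch of \cref{lem:fopt-conc}. By the discretization argument in \cref{lem:fcip-eq-fopt}, we may assume $\mathcal{C}_f$ is a finite set, so $\fscLP(\Stild)$ is a genuine LP over $\mathbb{R}^{|\mathcal{C}_f|}_{\ge 0}$ with exactly $q$ covering constraints (one per class $\Ktild_j$) on top of the non-negativity constraints. The feasible region is a non-empty polyhedron (any IP solution works, e.g.\ the one produced by packing each class separately) and the objective is bounded below by $0$, so an optimal extreme-point solution $x^*$ exists. By the rank lemma, $|\support(x^*)| \le q$.

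Next I would round up. Define $\bar{x}_C \defeq \lceil x^*_C \rceil$ for each $C \in \mathcal{C}_f$. Since every entry of the constraint matrix is non-negative and $\bar{x} \ge x^* \ge 0$ coordinate-wise, the covering inequalities are preserved, so $\bar{x}$ is a feasible solution to $\fscIP(\Stild)$. Its cost satisfies
\[ \sum_{C \in \mathcal{C}_f} \bar{x}_C \;=\; \sum_{C \in \support(x^*)} \lceil x^*_C \rceil \;\le\; \sum_{C \in \support(x^*)} (x^*_C + 1) \;\le\; \fscLP^*(\Stild) + q, \]
where the last step uses $|\support(x^*)| \le q$. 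This gives $\fscIP^*(\Stild) \le \fscLP^*(\Stild) + q$.

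The only subtlety is ensuring that an extreme-point optimum exists in what is a potentially enormous LP; this is precisely why \cref{lem:fcip-eq-fopt} first discretized the \config{} space to be finite, after which the standard rank lemma (Lemma 2.1.4 of \cite{iterative-methods}, already cited) applies without modification. Everything else is just rounding up coordinates of a sparse fractional solution.
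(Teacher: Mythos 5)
Your proof is correct and follows essentially the same path as the paper's: reduce to a finite LP (via the discretization from \cref{lem:fcip-eq-fopt}), use the rank lemma to bound the support of an extreme-point optimum by $q$, and round up coordinate-wise. You are a bit more explicit than the paper about the trivial LP-relaxation direction and about why an extreme-point optimum exists, but the core argument is identical.
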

\begin{proof}
By rank lemma, the number of positive-valued variables in an extreme-point solution
to a linear program is at most the number of constraints (other than the
variable non-negativity constraints).

Thus, an optimal extreme-point solution to $\fscLP(\Stild)$ has at most $q$
positive-valued variables. Rounding up those variables to the nearest integer
will give us an integral solution and increase the objective value by at most $q$.
Hence, $\fscIP^*(\Stild) \le \fscLP^*(\Stild) + q$.
\end{proof}

Let $\cLP(I)$ denote the \config{} LP of items $I$ and let $\cLP^*(I)$ denote the
optimal objective value of $\cLP(I)$.
Recall that $\simplePack$ is a $2b(d_v+1)$-approximation algorithm for \geomvec{$d_g$}{$d_v$} BP
(see \cref{sec:span-pack}), where $b \defeq 3$ when $d_g=2$,
$b \defeq 9$ when $d_g = 3$,
$b \defeq 4^{d_g}+2^{d_g}$ when $d_g > 3$,
and $b \defeq 1$ when $d_g = 1$.

\begin{lemma}
\label{lem:clp-vs-span}
For a set $I$ of \geomvecdim{$d_g$}{$d_v$} items,
$\cLP^*(I) \in \Theta(\Span(I)) + O(1)$.
\end{lemma}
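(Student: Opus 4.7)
The plan is to establish two matching bounds: an upper bound $\cLP^*(I) \le O(\Span(I)) + O(1)$ and a lower bound $\cLP^*(I) \ge \Omega(\Span(I))$.

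For the upper bound, I would use the fact that any integral feasible solution to $\cLP(I)$ gives an upper bound on $\cLP^*(I)$; in particular, $\cLP^*(I) \le \opt(I)$. Then I would invoke the guarantee for $\simplePack$ (\cref{thm:span-pack} and its generalization to $d_g$-dimensional items), which packs $I$ into at most $b\ceil{2\Span(I)}$ bins where $b$ is the constant from the appropriate geometric subroutine. This yields $\cLP^*(I) \le 2b\Span(I) + b$.

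For the lower bound, I would exploit the covering constraints of the \config{} LP directly. Let $x^*$ be an optimal solution, so $\sum_{C \ni i} x^*_C \ge 1$ for every item $i \in I$. Multiplying by $\Span(i)$ and summing over $i$ gives
\begin{align*}
\Span(I) \;\le\; \sum_{i \in I} \Span(i) \sum_{C \ni i} x^*_C \;=\; \sum_{C \in \mathcal{C}} x^*_C \,\Span(C).
\end{align*}
The key step is then to bound $\Span(C)$ by a constant for every valid \config{} $C$. Since $C$ fits in one bin, we have $\vol(C) \le 1$ and $v_j(C) \le 1$ for all $j \in [d_v]$. Therefore
\begin{align*}
\Span(C) \;=\; \sum_{i \in C} \max_{j=0}^{d_v} v_j(i) \;\le\; \sum_{i \in C} \sum_{j=0}^{d_v} v_j(i) \;=\; v_0(C) + \sum_{j=1}^{d_v} v_j(C) \;\le\; d_v + 1.
\end{align*}
Combining gives $\Span(I) \le (d_v+1)\sum_C x^*_C = (d_v+1)\cLP^*(I)$, that is, $\cLP^*(I) \ge \Span(I)/(d_v+1)$.

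There is essentially no obstacle here: both bounds follow from one-line computations once the right inequalities are invoked. The only mild subtlety is the asymmetry between the two bounds -- the upper bound carries an additive $O(1)$ term (coming from the ceiling in $\simplePack$'s guarantee), while the lower bound is purely multiplicative -- which is exactly what the statement $\cLP^*(I) \in \Theta(\Span(I)) + O(1)$ allows.
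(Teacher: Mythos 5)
Your proof is correct and follows essentially the same approach as the paper: both establish the lower bound by multiplying each covering constraint by $\Span(i)$, summing, and using $\Span(C) \le d_v+1$ for any valid configuration $C$, and both obtain the upper bound from $\cLP^*(I) \le \opt(I) \le |\simplePack(I)| \le 2b\Span(I)+b$. The only cosmetic difference is that you explicitly spell out the $\Span(C) \le d_v+1$ calculation (which the paper leaves implicit, having essentially shown it in \cref{lem:span-lb-opt}).
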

\begin{proof}
Let $A$ be the \config{} matrix of $I$.
Let $x^*$ be the optimal solution to $\cLP(I)$.
In $\cLP(I)$, the constraint for item $i$ gives us
$\sum_{C \in \mathcal{C}} A[i, C] x^*_C \ge 1$.
Multiplying each constraint by $\Span(i)$ and adding these constraints together, we get
\begin{align*}
\Span(I) &\le \sum_{C \in \mathcal{C}} \sum_{i \in I} \Span(i) A[i, C] x^*_C
= \sum_{C \in \mathcal{C}} \Span(C) x^*_C
\\ &\le (d_v+1) \sum_{C \in \mathcal{C}} x^*_C = (d_v+1)\cLP^*(I).
\end{align*}
Therefore, $\cLP^*(I) \ge \Span(I)/(d_v+1)$.
We also have
\[ \cLP^*(I) \le \opt(I) \le |\simplePack(I)| \le 2b\Span(I)+b. \]
Therefore, $\cLP^*(I) \in \Theta(\Span(I)) + O(1)$.
\end{proof}

\begin{lemma}[Independent Bounded Difference Inequality \cite{mcdiarmid1989method}]
\label{lem:ind-bdi}
Let $X \defeq [X_1, X_2, \ldots, X_n]$ be random variables with $X_j \in A_j$.
Let $\phi: \prod_{i=1}^n A_j \mapsto \mathbb{R}$ be a function such that
$\abs{\phi(x) - \phi(y)} \le c_j$
whenever vectors $x$ and $y$ differ only in the $j\Th$ coordinate.
Then for any $t \ge 0$,
\[ \Pr[ \phi(X) - \E(\phi(X)) \ge t ] \le \exp\left(-\frac{2t^2}{\sum_{j=1}^n c_j^2}\right). \]
\end{lemma}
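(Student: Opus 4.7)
The plan is to prove this via the standard Doob-martingale approach plus Azuma--Hoeffding. Define the martingale $Y_k \defeq \E[\phi(X) \mid X_1, \ldots, X_k]$ for $k = 0, 1, \ldots, n$ with respect to the natural filtration, so that $Y_0 = \E[\phi(X)]$ and $Y_n = \phi(X)$ almost surely. The tower property makes $(Y_k)$ a martingale, and the problem reduces to bounding the tail of $Y_n - Y_0$.

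The key step is to show that the martingale differences satisfy $|Y_k - Y_{k-1}| \le c_k$ almost surely. Here is where the independence of the $X_j$ matters: fix a realization $x_1, \ldots, x_{k-1}$ and let $g(y) \defeq \E[\phi(x_1, \ldots, x_{k-1}, y, X_{k+1}, \ldots, X_n)]$, where the expectation is over $X_{k+1}, \ldots, X_n$ (which are independent of $X_k$). Then $Y_k = g(X_k)$ and $Y_{k-1} = \E_{X_k'}[g(X_k')]$, both conditional on $X_1, \ldots, X_{k-1}$. The bounded-difference hypothesis forces $|g(y) - g(y')| \le c_k$ for every $y, y' \in A_k$ (pulling the absolute value inside the expectation and using the pointwise bound), so $|Y_k - Y_{k-1}| \le \sup_{y, y'} |g(y) - g(y')| \le c_k$.

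With the differences bounded, I would invoke the Azuma--Hoeffding inequality for martingales with bounded differences, which yields
\[\Pr[Y_n - Y_0 \ge t] \;\le\; \exp\!\left(-\frac{2t^2}{\sum_{k=1}^n c_k^2}\right).\]
This follows by the usual Chernoff-style argument: bound $\E[e^{\lambda(Y_k - Y_{k-1})} \mid X_1, \ldots, X_{k-1}]$ via Hoeffding's lemma applied to the conditionally bounded increment (which lives in an interval of length $2c_k$ and has conditional mean zero), telescope through the tower property to get $\E[e^{\lambda(Y_n - Y_0)}] \le \exp(\lambda^2 \sum_k c_k^2 / 8)$, apply Markov, and optimize over $\lambda > 0$.

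The main obstacle is the martingale-difference bound, specifically the reduction from the pointwise hypothesis on $\phi$ to a bound on $|g(y) - g(y')|$. This requires the independence of $X_k$ from $X_{k+1}, \ldots, X_n$ so that the inner expectation defining $g$ does not depend on $X_k$; without independence, one would only get an essential-supremum bound through the conditional distribution, which typically loses a constant factor and would change the exponent in the tail estimate.
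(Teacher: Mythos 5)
The paper does not give a proof of this lemma; it cites it directly from McDiarmid's survey \cite{mcdiarmid1989method}, so there is no internal argument to compare against. The route you take — the Doob martingale $Y_k = \E[\phi(X)\mid X_1,\ldots,X_k]$, the reduction to a conditional one-variable function $g$ via independence, the bound $|Y_k - Y_{k-1}| \le c_k$, and a Hoeffding/Chernoff finish — is exactly the classical proof from that reference, and the martingale-difference derivation is correct.

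There is, however, an inconsistency in your last step that, read literally, only proves the weaker vanilla Azuma bound. You write that the conditional increment ``lives in an interval of length $2c_k$'' yet assert the per-step moment-generating-function estimate $\exp(\lambda^2 c_k^2/8)$. Hoeffding's lemma gives $\E[e^{\lambda Z}] \le \exp(\lambda^2 (b-a)^2/8)$ for a mean-zero $Z$ supported in $[a,b]$; an interval of length $2c_k$ would therefore yield $\exp(\lambda^2 c_k^2/2)$ per step, and after telescoping and optimizing over $\lambda$ you would only get $\exp(-t^2/(2\sum_j c_j^2))$, which is off by a factor of $4$ in the exponent from the stated $\exp(-2t^2/\sum_j c_j^2)$. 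The repair is already implicit in what you proved: conditionally on $X_1,\ldots,X_{k-1}$ the increment is $Y_k - Y_{k-1} = g(X_k) - \E[g(X_k')]$, where $g(X_k)$ ranges over $[\inf g, \sup g]$ with $\sup g - \inf g \le c_k$, and $\E[g(X_k')]$ is a constant; subtracting a constant shifts but does not lengthen the support, so the increment lives in an interval of length $c_k$, not $2c_k$. That refinement is precisely what separates McDiarmid's inequality from a naive application of Azuma to a martingale with $|Y_k - Y_{k-1}| \le c_k$, and it is needed to obtain $\exp(\lambda^2 c_k^2/8)$ per step and hence, optimizing at $\lambda = 4t/\sum_j c_j^2$, the stated tail bound.
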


\begin{lemma}
\label{lem:fclp-conc}
Let $\Stild$ be as computed by $\rnaPack(I, \beta, \eps)$. Let $\epsLP \in (0, 1)$ be a constant.
When $\Span(I)$ is large compared to $1/\epsLP^2$, we get that with high probability
\[ \fscLP^*(\Stild) \le \frac{\fscLP^*(\Itild)}{\beta} + 2b\mu\epsLP\opt(I) + O(1). \]
\end{lemma}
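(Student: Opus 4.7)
The plan is to exhibit a feasible solution $\hat y$ to $\fscLP(\Stild)$ whose objective is at most $\fscLP^*(\Itild)/\beta + 2b\mu\epsLP\opt(I) + O(1)$, constructed by a scale-and-patch strategy. Let $x^*$ be an optimal solution to $\fscLP(\Itild)$, so $\smallnorm{x^*}_1 = \fscLP^*(\Itild)$, and set $y \defeq x^*/\beta$. Feasibility of $x^*$ for $\fscLP(\Itild)$ immediately gives $\sum_C \Span(C \cap \Ktild_j) y_C \ge \Span(\Ktild_j)/\beta$ for every class $j \in [q]$; hence $y$ alone is feasible for $\fscLP(\Stild)$ precisely when $\Span(\Stild \cap \Ktild_j) \le \Span(\Ktild_j)/\beta$ holds for every $j$. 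This is true in expectation by \cref{lem:pr-elem-in-residual}, and the bulk of the work is to show it holds with only small per-class slack with high probability.

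For the concentration step, I would apply McDiarmid's bounded-difference inequality (\cref{lem:ind-bdi}) to $\phi_j \defeq \Span(\Stild \cap \Ktild_j)$ viewed as a function of the independent random configurations $Z_1, \ldots, Z_T$ sampled at \cref{alg-line:rna-pack:rround-choose}. Replacing one $Z_t$ by any other configuration $Z_t'$ of $I$ can change $\Stild$ only through items in $Z_t \cup Z_t'$; applying the bounded-expansion property (C1.4) separately to $Z_t$ and $Z_t'$ bounds the induced change in $\phi_j$ by $2c_{\max}$, so the per-coordinate constant in McDiarmid is $2c_{\max}$. Since $T \le \mu\cLP^*(I)\ln\beta + 1$ and \cref{lem:clp-vs-span} gives $\cLP^*(I) \in O(\opt(I))$, choosing the per-class deviation threshold $\tau \defeq \epsLP\mu\opt(I)/q$ (with $q = O(1)$ the number of item classes from C1.3) yields a per-class failure probability of the form $\exp(-\Omega(\epsLP^2\mu\opt(I)))$. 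This is $1/\poly(n)$ exactly under the stated hypothesis that $\Span(I)$ is large compared to $1/\epsLP^2$; a union bound over the $q$ classes then gives, with high probability, $\Span(\Stild \cap \Ktild_j) \le \Span(\Ktild_j)/\beta + \tau$ simultaneously for all $j$.

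On this event, $y$ misses each constraint by at most $\tau$. To repair, I would add a correction $z \ge 0$ supported on structured configurations that each pack only items of a single class $\Ktild_j$; such configurations (e.g.\ singleton or sliced-singleton ones) are valid under any downward-closed definition of structured packing. A $\simplePack$-style shelf argument then shows that covering the total residual deficit $q\tau = \epsLP\mu\opt(I)$ costs $\smallnorm{z}_1 \le 2b\epsLP\mu\opt(I) + O(q)$. Consequently $\hat y \defeq y + z$ is feasible for $\fscLP(\Stild)$, its objective is at most $\fscLP^*(\Itild)/\beta + 2b\mu\epsLP\opt(I) + O(1)$, and the bound on $\fscLP^*(\Stild)$ follows.

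The main obstacle is the McDiarmid step: one must verify that $c_{\max}$ from C1.4 truly controls the per-coordinate change of $\phi_j$ over every admissible replacement $Z_t'$, and calibrate $\tau$ so that it both fits inside the target error $2b\mu\epsLP\opt(I)$ and lies in the high-probability regime assumed by the hypothesis on $\Span(I)$. A secondary subtlety is certifying that the patch configurations supporting $z$ genuinely lie in $\mathcal{C}_f$; this I would handle by restricting to one-item (or sliced-one-item) class-$j$ configurations, which are trivially structured under any sensible downward-closed notion and give coverage rates compatible with the $2b$ coefficient in the target.
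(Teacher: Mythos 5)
Your proof follows essentially the same route as the paper: scale the optimal solution to $\fscLP(\Itild)$ by $1/\beta$, use McDiarmid's inequality plus a union bound over the $q$ classes to show each constraint is violated by only a small $O(\epsLP)$-scaled slack with high probability, then absorb that slack at cost $\approx 2b\epsLP\mu\opt(I)$ via a $\simplePack$-style packing. The paper phrases the patch by removing a deterministic per-class prefix $Q \subseteq \Stild$ (with $\Span(Q_j) \ge \epsLP\smallnorm{\xhat}_1/q$) and using subadditivity $\fscLP^*(\Stild) \le \fscLP^*(\Stild - Q) + \fscLP^*(Q)$, rather than adding a primal correction $z$ as you do --- these are interchangeable --- and it gets the slightly tighter Lipschitz constant $c_{\max}$ instead of your $2c_{\max}$ by noting that for the additive function $g_j$, $\abs{g_j(S_1) - g_j(S_2)} = \abs{g_j(S_1 - S_2) - g_j(S_2 - S_1)} \le \max(g_j(C_1), g_j(C_2)) \le c_{\max}$, though $2c_{\max}$ also suffices for the conclusion.
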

\begin{proof}
Let $y \in \mathcal{C}^T$ be the \config{}s chosen during randomized rounding.
When viewed as a vector of length $T$, all coordinates of $y$ are independent.
Define $\uncovered(y) \defeq I - \bigcup_{t=1}^T y_t$.

Let $\Ktild_1, \Ktild_2, \ldots, \Ktild_q$ be the classes of $\Itild$.
Let $\pi$ be the bijection from $I-D$ to $\Itild$.
For a set $X \subseteq I$, define $\Itild[X] \defeq \{\pi(i): i \in X-D\}$.
For $j \in [q]$, define $\phi_j \in \mathcal{C}^T \mapsto \mathbb{R}_{\ge 0}$ as
\[ \phi_j(y) \defeq \Span\left(\Ktild_j \cap \Itild[\uncovered(y)]\right). \]
For any set $X \subseteq I$, define $g_j(X) \defeq \Span(\Ktild_j \cap \Itild[X])$.
Then $\phi_j(y) = g_j(\uncovered(y))$ and $g_j$ is a non-negative additive function.

Let $y^{(1)}, y^{(2)} \in \mathcal{C}^T$ such that
$y^{(1)}$ and $y^{(2)}$ differ only in coordinate $t$.
Let $y^{(1)}_t = C_1$ and $y^{(2)}_t = C_2$.
Let $S_1 = \uncovered(y^{(1)})$ and $S_2 = \uncovered(y^{(2)})$.

It is easy to see (using Venn diagrams) that
$S_1 - S_2 \subseteq C_2 - C_1$ and $S_2 - S_1 \subseteq C_1 - C_2$.
\begin{align*}
\abs{\phi_j(y^{(1)}) - \phi_j(y^{(2)})}
&= \abs{g_j(S_1) - g_j(S_2)}
\\ &= \abs{g_j(S_1 - S_2) - g_j(S_2 - S_1)}  \tag{additivity of $g_j$}
\\ &\le \max(g_j(S_1 - S_2), g_j(S_2 - S_1))
\\ &\le \max(g_j(C_2), g_j(C_1))
\\ &\le \max_{C \in \mathcal{C}} \Span(\Ktild_j \cap \Itild[C])
\le c_{\max}.  \tag{by bounded expansion lemma}
\end{align*}
\begin{align*}
\E(\phi_j(y)) &= \E(g_j(S))
\\ &= \sum_{i \in \Itild} g_j(\{i\}) \Pr(i \in S)
    \tag{linearity of expectation and additivity of $g_j$}
\\ &\le \sum_{i \in \Itild} g_j(\{i\}) (1/\beta)
    \tag{by \cref{lem:pr-elem-in-residual}}
\\ &= \frac{g_j(\Itild)}{\beta} = \frac{\Span(\Ktild_j)}{\beta}.
\end{align*}
$\forall j \in [q]$, define $Q_j$ as the smallest prefix of $\Stild \cap \Ktild_j$
such that either $Q_j = \Stild \cap \Ktild_j$ or $\Span(Q_j) \ge \epsLP\norm{\widehat{x}}_1/q$.
Define $Q \defeq \bigcup_{j=1}^q Q_j$. Therefore,
\[ \Span(Q) \le \epsLP \norm{\widehat{x}}_1 + q \le \epsLP \mu \opt(I) + O(1). \]
\begin{align*}
\fscLP^*(\Stild) &\le \fscLP^*(\Stild - Q) + \fscLP^*(Q)
\\ &\le \fscLP^*(\Stild - Q) + b(2\Span(Q) + 1)  \tag{by \cref{sec:span-pack}}
\\ &\le \fscLP^*(\Stild - Q) + 2b\mu\epsLP\opt(I) + O(1).
\end{align*}
Now we will try to prove that with high probability,
$\fscLP^*(\Stild - Q) \le \fscLP^*(\Itild)/\beta$.

If $Q_j = \Stild \cap \Ktild_j$, then $\Span(\Ktild_j \cap (\Stild - Q)) = 0$.
Otherwise,
\begin{align*}
& \Pr\left[ \Span(\Ktild_j \cap (\Stild - Q)) \ge \frac{\Span(\Ktild_j)}{\beta} \right]
= \Pr\left[ \Span(\Ktild_j \cap \Stild) - \frac{\Span(\Ktild_j)}{\beta} \ge \Span(Q_j) \right]
\\ &\le \Pr\left[ \phi_j(y) - \E(\phi_j(y)) \ge \frac{\epsLP}{q} \norm{\widehat{x}}_1 \right]
\le \exp\left( -\frac{2}{Tc_{\max}^2} \left(\frac{\epsLP}{q}\norm{\widehat{x}}_1\right)^2\right)
\tag{\cref{lem:ind-bdi}}
\\ &\le \exp\left( -\frac{2\epsLP^2}{\ln(\beta)c_{\max}^2q^2} \norm{\widehat{x}}_1\right).
\end{align*}
Therefore, by union bound, we get
\[ \Pr\left[ \bigvee_{j=1}^q \left(\Span(\Ktild_j \cap (\Stild - Q))
    \ge \frac{\Span(\Ktild_j)}{\beta}\right) \right]
\le q\exp\left( -\frac{2\epsLP^2}{\ln(\beta)c_{\max}^2q^2} \norm{\widehat{x}}_1\right). \]
Since $\cLP^*(I) \le \norm{\widehat{x}}_1 \le \mu \cLP^*(I) + O(1)$,
and $\cLP^*(I) \in \Theta(\Span(I)) + O(1)$ (by \cref{lem:clp-vs-span}),
we get $\norm{\widehat{x}}_1 \in \Theta(\Span(I)) + O(1)$.
When $\Span(I)$ is very large compared to $1/\epsLP^2$, we get
that with high probability, $\forall j \in [q]$,
\[ \Span(\Ktild_j \cap (\Stild - Q)) \le \frac{\Span(\Ktild_j)}{\beta}. \]
Let $x^*$ be the optimal solution to $\fscLP(\Itild)$.
Then with high probability, $x^*/\beta$ is a feasible solution to $\fscLP(\Stild - Q)$.
Therefore,
\begin{align*}
\fscLP^*(\Stild)
&\le \fscLP^*(\Stild - Q) + 2b\mu\epsLP \opt(I) + O(1)
\\ &\le \fscLP^*(\Itild)/\beta + 2b\mu\epsLP \opt(I) + O(1).
\qedhere \end{align*}
\end{proof}

\rthmFoptConc*
\begin{proof}
When $\Span(I)$ is very large compared to $1/\epsLP^2$, we get
\begin{align*}
\fsopt(\Stild) &\le \fscIP^*(\Stild) + O(1)  \tag{by \cref{lem:fcip-eq-fopt}}
\\ &\le \fscLP^*(\Stild) + O(1)  \tag{by \cref{lem:fclp-rank-lemma}}
\\ &\le \fscLP^*(\Itild)/\beta + 2b\mu\epsLP \opt(I) + O(1)  \tag{by \cref{lem:fclp-conc}}
\\ &\le \fsopt(\Itild)/\beta + 2b\mu\epsLP \opt(I) + O(1).  \tag{by \cref{lem:fcip-eq-fopt}}
\end{align*}
Otherwise, if $\Span(I) \in O(1/\epsLP^2)$, we get
\begin{align*}
\fsopt(\Stild) &\le \rho\opt(I) + O(1)  \tag{by structural theorem}
\\ &\le \rho|\simplePack(I)| + O(1)
\\ &\le \Theta(\Span(I)) + O(1) \le O(1/\epsLP^2).  \tag{by \cref{sec:span-pack}}
\end{align*}
\end{proof}

\subsection{Error in Previous R\&A Framework}
\label{sec:rna-extra:bugfix}

Here we describe a minor error in the R\&A framework of \cite{Khan16},
and how it can be fixed.

We define $(\Itild, D)$ as an output of $\round(I)$ and for the residual instance $S$,
we define $\Stild$ as the corresponding rounded items of $S - D$.
Our proof of \cref{lem:fopt-conc} relies on the fact that for any subset
of rounded items, the $\Span$ reduces by a factor of at least $\beta$
if we restrict our attention to the residual instance.
Formally, this means that for any $\Xtild \subseteq \Itild$, we have
\[ \E(\Span(\Xtild \cap \Stild))
= \sum_{i \in \Xtild} \Span(i)\Pr(i \in \Stild)
\le \Span(\Xtild)/\beta. \]
The equality follows from linearity of expectation and the fact that
$\Span(i)$ is deterministic, i.e., it doesn't depend on the randomness
used in the randomized rounding of the \config{} LP.
This is because $\round$ is not given any information about what $S$ is.
The inequality follows from \cref{lem:pr-elem-in-residual},
which says that $\Pr(i \in S) \le 1/\beta$.

The R\&A framework of \cite{Khan16} used similar techniques in their analysis.
In their algorithm, however, they round items differently.
Specifically, they define a subroutine $\round$ and define $\Itild \defeq \round(I)$
and $\Stild \defeq \round(S)$.
They, too, claim that for any subset of rounded items,
the $\Span$ reduces by a factor of at least $\beta$
if we restrict our attention to the residual instance.
While their claim is correct for input-agnostic rounding
(where items are rounded up to some constant size collection
values chosen independent of the problem instance),
the claim is incorrect for input-sensitive rounding
(where the values are chosen based on the specific problem instance).
So the claim is incorrect if $\round$ is not deterministic, as
then an item can be rounded differently depending on
 different residual instances.

In fact, they use their R\&A framework with the algorithm of
Jansen and Pr\"adel \cite{jansen2016new}, which uses
\emph{linear grouping} (along with some other techniques) for rounding.
Linear grouping rounds items in an \emph{input-sensitive} way, i.e.,
the rounding of each item depends on the sizes of items in $S$
which is a random subset.

\section{Algorithm for \geomvec{2}{\texorpdfstring{$d$}{d}} Bin Packing}
\label{sec:rbbp-2d-extra}

Here we will see a rounding-based algorithm for \geomvec{2}{$d$} bin packing,
called $\cbPack$ (named after `compartment-based packing').
$\cbPack$ will be parametrized by a parameter $\eps$,
where $\eps^{-1} \in 2\mathbb{Z}$ and $\eps \le 1/8$.

\begin{claim}
\label{claim:wmax-s-lim}
If a set $J$ of items fits into a bin, then $v_{\max}(J) \le d$ and $\Span(J) \le d+1$.
\end{claim}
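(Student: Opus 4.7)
The plan is to unfold the definitions and then upper-bound each sum by swapping the order of summation. Recall from the preliminaries that for any function $f$ on items, $f(J) \defeq \sum_{i \in J} f(i)$, so $v_{\max}(J) = \sum_{i \in J} \max_{j \in [d]} v_j(i)$ and $\Span(J) = \sum_{i \in J} \max(\vol(i), v_{\max}(i))$. Also, $J$ fitting into a bin means, by the validity conditions of a packing, that $\vol(J) \le 1$ (items are non-overlapping inside a bin of area $1$) and, for every $j \in [d]$, $\sum_{i \in J} v_j(i) \le 1$.

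First, I would handle $v_{\max}(J)$. The key step is the trivial bound $\max_{j \in [d]} v_j(i) \le \sum_{j=1}^{d} v_j(i)$, which lets me rewrite
\[
v_{\max}(J) = \sum_{i \in J} \max_{j \in [d]} v_j(i) \le \sum_{i \in J} \sum_{j=1}^{d} v_j(i) = \sum_{j=1}^{d} \sum_{i \in J} v_j(i) \le \sum_{j=1}^{d} 1 = d.
\]
Then for $\Span(J)$, I would use $\max(\vol(i), v_{\max}(i)) \le \vol(i) + v_{\max}(i)$ to obtain $\Span(J) \le \vol(J) + v_{\max}(J) \le 1 + d$, using the packing constraint on areas and the bound just derived.

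There is no real obstacle here; the statement is essentially a bookkeeping consequence of the definitions of $\Span$, $v_{\max}$ extended additively over a set, and the feasibility conditions of a valid packing in a \geomvec{2}{$d$} bin.
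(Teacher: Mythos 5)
Your proof is correct; the paper states this claim without a written proof, but your argument is the expected one and mirrors the paper's proof of \Cref{lem:span-lb-opt}, which uses the same bound $\max_{j=0}^{d_v} v_j(i) \le \sum_{j=0}^{d_v} v_j(i)$ (with the convention $v_0 \defeq \vol$) followed by a swap of the summation order. Your two-step split into $v_{\max}(J) \le d$ and then $\Span(J) \le \vol(J) + v_{\max}(J) \le 1 + d$ is equivalent to that one-shot estimate.
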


We will first remove some items from the bin, so that the remaining items
can be classified into useful categories.

\subsection{Classifying Items}
\label{sec:rbbp-2d-extra:classify}

\begin{definition}
For constants $\epsSmall < \epsLarge$, a bin packing instance $I$
is called $(\epsSmall, \epsLarge)$-non-medium iff $\forall i \in I$,
$(w(i) \not\in (\epsSmall, \epsLarge])
\wedge (h(i) \not\in (\epsSmall, \epsLarge])
\wedge (\forall j \in [d], v_j(i) \not\in (\epsSmall, \epsLarge])$.
\end{definition}

An $(\epsSmall, \epsLarge)$-non-medium instance has useful properties.
Therefore, we want to remove some items from the input instance $I$
so that it becomes $(\epsSmall, \epsLarge)$-non-medium and the removed items
can be packed into a small number of bins.

\begin{definition}
\label{defn:remmed}
Let $\delta_0, \eps \in (0, 1]$ be constants and let
$f: (0, 1] \mapsto (0, 1]$ be a function such that $\forall x \in (0, 1], f(x) < x$.
Let $T \defeq \ceil{(d+2)/\eps}$.
For $t \in [T]$, define $\delta_t \defeq f(\delta_{t-1})$ and define
\[ J_t \defeq \left\{i \in I: w(i) \in (\delta_t,\delta_{t-1}] \vee h(i) \in (\delta_t,\delta_{t-1}]
    \vee \left(\bigvee_{j=1}^d v_j(i) \in (\delta_t,\delta_{t-1}]\right)\right\}. \]
Define $\remMed(I, \eps, f, \delta_0)$ as the tuple $(J_r, \delta_r, \delta_{r-1})$,
where $r \defeq \argmin_{t=1}^T \Span(J_t)$.
\end{definition}

\begin{lemma}
\label{thm:med-span}
Let $(\Imed, \epsSmall, \epsLarge) \defeq \remMed(I, \eps, f, \delta_0)$.
Then $\Span(\Imed) \le \eps\Span(I)$.
\end{lemma}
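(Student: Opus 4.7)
The plan is a straightforward averaging (pigeonhole) argument. An item $i \in I$ has exactly $d+2$ relevant scalar attributes: its width $w(i)$, its height $h(i)$, and its $d$ vector weights $v_1(i), \ldots, v_d(i)$. By construction, $f(x) < x$ forces $\delta_0 > \delta_1 > \cdots > \delta_T$, so the half-open intervals $(\delta_t, \delta_{t-1}]$ for $t \in [T]$ are pairwise disjoint. Consequently, each of the $d+2$ attributes of $i$ falls into at most one such interval, and so membership $i \in J_t$ can be witnessed by only finitely many indices $t$.

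The key step I would carry out is to bound the total multiplicity of membership: for every $i \in I$,
\[ \sum_{t=1}^{T} \mathbf{1}[i \in J_t] \;\le\; d+2, \]
since each of the $d+2$ attributes of $i$ can place $i$ into at most one $J_t$. Summing this over $i$ weighted by $\Span(i)$ gives
\[ \sum_{t=1}^{T} \Span(J_t) \;=\; \sum_{i \in I} \Span(i) \sum_{t=1}^{T} \mathbf{1}[i \in J_t] \;\le\; (d+2)\,\Span(I). \]

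Now I would invoke the definition of $r$ as $\argmin_{t \in [T]} \Span(J_t)$, so $\Span(J_r)$ is at most the average:
\[ \Span(J_r) \;\le\; \frac{1}{T} \sum_{t=1}^{T} \Span(J_t) \;\le\; \frac{d+2}{T}\,\Span(I). \]
Plugging in $T = \ceil{(d+2)/\eps} \ge (d+2)/\eps$ yields $\Span(J_r) \le \eps \Span(I)$, which is exactly $\Span(\Imed) \le \eps \Span(I)$.

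There is no genuine obstacle here; the only thing worth flagging is verifying that the intervals $(\delta_t, \delta_{t-1}]$ are indeed disjoint (which is immediate from $f(x) < x$ and a trivial induction) and being careful that $d+2$ is the right count of attributes — two geometric dimensions plus $d$ vector dimensions — since a miscount would propagate into the choice of $T$. Once those bookkeeping points are handled, the proof is a two-line averaging estimate.
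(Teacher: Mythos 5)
Your proof is correct and follows essentially the same averaging argument as the paper: each item belongs to at most $d+2$ of the $J_t$ (one for each of its $d+2$ scalar attributes, since the intervals $(\delta_t, \delta_{t-1}]$ are disjoint), so $\sum_t \Span(J_t) \le (d+2)\Span(I)$, and the minimum is at most the average, which is at most $\eps\Span(I)$ by the choice of $T = \lceil (d+2)/\eps\rceil$.
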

\begin{proof}
Each item belongs to at most $d+2$ sets $J_t$. Therefore,
\[ \Span(\Imed) = \min_{t=1}^T \Span(J_t)
\le \frac{1}{T} \sum_{t=1}^T \Span(J_t)
\le \frac{d+2}{\ceil{(d+2)/\eps}} \Span(I)
\le \eps \Span(I). \qedhere \]
\end{proof}
By \cref{defn:remmed}, $I - \Imed$ is $(\epsSmall, \epsLarge)$-non-medium.
By \cref{thm:med-span,thm:span-pack}, $\Span(\Imed)$ can be packed into
at most $6(d+1)\eps\opt(I) + 3$ bins using the $\simplePack$ algorithm.

We will choose $f$ to be independent of $I$, so $\epsLarge$ and $\epsSmall$ are constants.
Also note that $\epsSmall \defeq f(\epsLarge)$ and $\epsLarge \le \delta_0$.
We choose ${\delta_0 \defeq \min\left(1/(4d+1), 2\eps/3\right) }$,
so $\delta_0^{-1} \in \mathbb{Z}$.
We will choose $f$ \hyperref[eqn:remmed-f]{later}. For now, we will impose these conditions:
$f(x) \le \eps x^2/2$, and $(x^{-1} \in \mathbb{Z} \implies f(x)^{-1} \in \mathbb{Z})$.
The second condition implies that $\epsLarge^{-1}, \epsSmall^{-1} \in \mathbb{Z}$.

Henceforth, assume all \geomvec{2}{$d$} bin packing instances
to be $(\epsSmall, \epsLarge)$-non-medium.

\begin{definition}
We can classify item $i$ by its geometric dimensions as follows:
\begin{tightemize}
\item Big item: $w(i) > \epsLarge$ and $h(i) > \epsLarge$.
\item Wide item: $w(i) > \epsLarge$ and $h(i) \le \epsSmall$.
\item Tall item: $w(i) \le \epsSmall$ and $h(i) > \epsLarge$.
\item Small item: $w(i) \le \epsSmall$ and $h(i) \le \epsSmall$.
\end{tightemize}
\end{definition}
When rotation of items is allowed, assume \wLoG{} that there are no tall items in $I$.

\begin{definition}[Dense items]
Item $i$ is dense iff either $a(i) = 0$ or $v_{\max}(i)/a(i) > 1/\epsLarge^2$.
\end{definition}
Note that big items cannot be dense.
\begin{definition}[Heavy and light items]
\label{defn:heavy-light}
A dense item $i$ is said to be heavy in vector dimension $j$ iff $v_j(i) \ge \epsLarge$.
Otherwise $i$ is said to be light in dimension $j$.
If $i$ is heavy in some dimension, then $i$ is said to be heavy, otherwise $i$ is light.
\end{definition}

\subsection{Rounding One Side}
\label{sec:rbbp-2d-extra:one-side}

In this subsection, we will show how a packing of $I$ into bins can be modified
to get a more structured packing where one of the geometric dimensions is rounded up.

\begin{definition}
In a bin, assume a coordinate system where $(0, 0)$ is at the bottom left
and $(1, 1)$ is on the top right. We define the following regions, called \emph{strips}:
\begin{tightemize}
\item $S^{(T)} \defeq [0, 1] \times [1-\epsLarge, 1]$
    and $S^{(T')} \defeq [0, 1] \times [1-\epsLarge/2, 1]$.
\item $S^{(B)} \defeq [0, 1] \times [0, \epsLarge]$.
\item $S^{(L)} \defeq [0, \epsLarge] \times [0, 1]$.
\item $S^{(R)} \defeq [1-\epsLarge, 1] \times [0, 1]$
    and $S^{(R')} \defeq [1-\epsLarge/2, 1] \times [0, 1]$.
\end{tightemize}
We say that an item intersects a strip iff a non-zero volume of that item lies inside the strip.
\end{definition}

\begin{property}
\label{prop:hrnd}
A bin is said to satisfy \cref{prop:hrnd} iff both of these conditions hold:
\begin{enumerate}[a]
\item \label{item:hrnd:geom} The $x$-coordinate and width of all non-dense wide and big items
    is a multiple of $\epsLarge^2/4$.
\item \label{item:hrnd:dense} If the bin contains dense items, then dense items
are packed inside $S^{(R')}$ and no non-dense item intersects $S^{(R')}$.
\end{enumerate}
\end{property}

\begin{property}
\label{prop:vrnd}
A bin is said to satisfy \cref{prop:vrnd} iff both of these conditions hold:
\begin{enumerate}[a]
\item \label{item:vrnd:geom} The $y$-coordinate and height of all non-dense tall and big items
    is a multiple of $\epsLarge^2/4$.
\item \label{item:vrnd:dense} If the bin contains dense items, then dense items
are packed inside $S^{(T')}$ and no non-dense item intersects $S^{(T')}$.
\end{enumerate}
\end{property}

Equivalently, we can say that a bin satisfies \cref{prop:vrnd} iff
its mirror image about the line $y = x$ satisfies \cref{prop:hrnd}.

The main result of this subsection is the following:

\begin{lemma}
\label{lem:rnd1}
Given a packing of items into a bin, we can
round up the width of some wide and big non-dense items to a multiple of $\epsLarge^2/4$
or round up the height of some tall and big non-dense items to a multiple of $\epsLarge^2/4$
and get a packing into 2 bins and 2 boxes where:
\begin{itemize}
\item Each bin satisfies either \cref{prop:hrnd} or \cref{prop:vrnd}.
\item $v_1$ of each box is at most $1/2$.
\item One of the boxes has only dense items.
Its larger dimension is 1 and its smaller dimension is
$\deltaDense \defeq 2d\epsLarge^2 + \epsSmall$.
\item One of the boxes has only non-dense items.
Its larger dimension is 1 and its smaller dimension is $\epsLarge + \epsSmall$.
\item One of the boxes is horizontal, i.e., has width 1 and only contains wide and small items.
The other box is vertical, i.e., has height 1 and only contains tall and small items.
\end{itemize}
\end{lemma}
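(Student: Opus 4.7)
The plan is to apply a cutting-strip argument in the spirit of \cite{pradel-thesis}, extended so that dense items are segregated into a reserved sub-strip of the bin. First I would decide between targeting \cref{prop:hrnd} and \cref{prop:vrnd}: for each direction, sweep the candidate cut line parallel to that edge through the $\epsLarge$-neighborhood of the boundary. By averaging the $v_1$-weight of items crossing each candidate line over both directions and over valid offsets, one can pick a cut whose crossing items have total $v_1$-weight at most $1/2$. WLOG this is a vertical cut at $x=1-\epsLarge$, so we work with the strip $S^{(R)}$ and aim for \cref{prop:hrnd}.

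Classify items relative to $S^{(R)}$ as \emph{green} (entirely outside), \emph{red} (crossing the cut line), or \emph{blue} (entirely inside), and temporarily remove all red, blue, and dense items from the bin. For each non-dense green item that is wide or big, round its $x$-coordinate and width up to multiples of $\epsLarge^2/4$; each such item has rightward footprint growth of at most $\epsLarge^2/2$, and since any horizontal slice of the bin intersects at most $1/\epsLarge$ wide or big items, the cumulative rightward shift per slice is at most $\epsLarge/2$. Hence the inner strip $S^{(R')}$ of width $\epsLarge/2$ remains empty. Now re-insert the dense items originally in the bin: since $a(i)\le\epsLarge^2 v_{\max}(i)$ for dense $i$ and $v_{\max}(J)\le d$ for the items $J$ in any bin by \cref{claim:wmax-s-lim}, the total dense area is at most $d\epsLarge^2$. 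By \cref{lem:nfdh-small} and the choice $\delta_0\le 1/(4d+1)$, the small and tall dense items fit into $S^{(R')}$ via NFDH, so the modified bin satisfies \cref{prop:hrnd}.

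The items that cannot be re-inserted into this bin --- the wide dense items (too wide for $S^{(R')}$), the wide and small non-dense red and blue items, and the tall items incompatible with the horizontal cut --- are routed into the two output boxes and the second output bin. The horizontally-oriented dense box of height $\deltaDense=2d\epsLarge^2+\epsSmall$ is sized to absorb the $d\epsLarge^2$ of wide-dense area plus the NFDH slack $\epsSmall$; the horizontally-oriented non-dense box of height $\epsLarge+\epsSmall$ absorbs the wide and small non-dense red/blue items with NFDH. Tall red items and anything else incompatible with the horizontal boxes go into the second output bin, on which we apply the dual vertical cut so that it satisfies \cref{prop:vrnd}. The $v_1\le 1/2$ bound per box follows from the choice of cut line in the first paragraph. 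The main obstacle will be the bookkeeping required to simultaneously keep the $v_1$ of each box below $1/2$, keep the footprints of removed items within the box capacities, and route the items incompatible with the chosen direction into the second bin without overflowing it after the dual cut; the calibrations of $\deltaDense$, of $\epsLarge+\epsSmall$, and of the chosen cut offset are precisely what make this accounting go through.
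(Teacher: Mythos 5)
Your overall strategy — cutting strip, classify as green/red/blue, shift-and-round the green items via the empty strip, then repack dense small and tall items into $S^{(R')}$ — matches the paper. But the routing of items into the two output boxes and the second bin is incorrect in a way that breaks the construction, and you have the box orientations backwards relative to what the lemma itself requires.

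For a vertical strip $S^{(R)}$ on the right, the items lying entirely inside are small and tall (anything wide or big cannot fit inside a strip of width $\epsLarge$), and the crossing items can be of any geometric type. The paper routes the small and tall items touching $S^{(R)}$ into a \emph{vertical} box of height $1$ and width $\epsLarge+\epsSmall$ (they can have height up to $1$), and routes the big and wide items touching $S^{(R)}$ into the \emph{second bin} (they are vertically stacked in the strip, so after moving them out their widths and $x$-coordinates can simply be rounded and the bin trivially satisfies \cref{prop:hrnd}). You do the opposite: you send ``wide and small non-dense red and blue items'' into a \emph{horizontal} non-dense box of height $\epsLarge+\epsSmall$, and send ``tall red items'' into the second bin. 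A horizontal box of height $\epsLarge+\epsSmall$ cannot hold tall items (height $>\epsLarge$), yet the tall blue and red items are precisely what must land in the non-dense box; and big/wide red items never appear in your accounting at all, even though they are the ones that must be stacked into a separate bin. Moreover, once both your boxes are horizontal, the lemma's final bullet (one horizontal box and one vertical box) is violated outright. The ``dual vertical cut'' you propose for the second bin also has no analogue in the paper and is not needed if you route big/wide strip-crossers there instead of tall items.

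A second, smaller gap: you need \emph{two} independent half-weight bounds, one for the dense box and one for the non-dense box. The paper gets these from two separate WLOG flips: first it compares $v_1(D_W)$ and $v_1(D_H)$ (the dense wide items versus the dense tall+small items, which together have $v_1\le 1$) to decide whether to target \cref{prop:hrnd} or \cref{prop:vrnd}; then, having fixed the direction, it compares $v_1(C^{(L)})$ and $v_1(C^{(R)})$ (which are disjoint since $2\epsLarge+\epsSmall\le 1$) to pick the strip side. Your single ``averaging over offsets'' step does not give both bounds and is not actually needed — disjointness of the two fixed boundary strips, plus a mirror flip, suffices. Finally, you explicitly leave the ``bookkeeping'' unresolved, but that bookkeeping is exactly where your routing errors live, so the gap is real rather than cosmetic.
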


Before proving \cref{lem:rnd1}, we first prove a few ancillary results.

For $X \in \{T, B\}$, the items lying completely inside $S^{(X)}$ are either small or wide.
Let $C^{(X)}$ be the set of small and wide items that intersect $S^{(X)}$.
For $X \in \{L, R\}$, the items lying completely inside $S^{(X)}$ are either small or tall.
Let $C^{(X)}$ be the set of small and tall items that intersect $S^{(X)}$.
Since $2\epsLarge + \epsSmall \le 1$, we get
$C^{(T)} \cap C^{(B)} = C^{(L)} \cap C^{(R)} = \{\}$.

\WLoG, assume that $v_1(C^{(T)}) \le 1/2$ because $v_1(C^{(B)} \cup C^{(T)}) \le 1$
and if $v_1(C^{(T)}) > v_1(C^{(B)})$, then we can mirror-invert the bin along a horizontal axis.
Similarly assume that $v_1(C^{(R)}) \le 1/2$.

\begin{observation}
If a bin only contains tall and small items,
it trivially satisfies \cref{prop:hrnd}\ref{item:hrnd:geom}.
If a bin only contains wide and small items,
it trivially satisfies \cref{prop:vrnd}\ref{item:vrnd:geom}.
\end{observation}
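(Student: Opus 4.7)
The plan is to observe that both halves of this observation are vacuous truths, and simply record the reasoning without any real computation. The key point is that \cref{prop:hrnd}\ref{item:hrnd:geom} only constrains \emph{non-dense wide and big} items, while \cref{prop:vrnd}\ref{item:vrnd:geom} only constrains \emph{non-dense tall and big} items; so one just has to check that the items in the hypothesized bins do not meet the classification used in the conclusions.

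First I would recall that under the $(\epsSmall,\epsLarge)$-non-medium assumption, the four item categories (big, wide, tall, small) are pairwise disjoint, since they are defined by whether each of $w(i)$ and $h(i)$ is above $\epsLarge$ or at most $\epsSmall$, and these thresholds do not overlap. Hence "a bin contains only tall and small items" is equivalent to saying that the bin contains no wide item and no big item.

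For the first half, suppose the bin contains only tall and small items. Then the set of non-dense wide and big items in the bin is empty, so the universally quantified statement of \cref{prop:hrnd}\ref{item:hrnd:geom} holds vacuously. For the second half, the symmetric argument works: if the bin contains only wide and small items, then there are no non-dense tall or big items, so \cref{prop:vrnd}\ref{item:vrnd:geom} is vacuously true. There is no main obstacle here; the observation is purely a bookkeeping remark that lets later arguments skip checking the geometric parts of \cref{prop:hrnd} and \cref{prop:vrnd} in these degenerate cases.
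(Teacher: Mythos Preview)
Your proposal is correct and matches the paper's intent: the paper states this as an observation with no accompanying proof, and the word ``trivially'' signals exactly the vacuous-truth argument you give. There is nothing more to add.
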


\begin{lemma}
\label{lem:right-shift}
Suppose we're given a packing of items into a bin such that no item intersects $S^{(R)}$.
Then we can increase the widths of all wide and big items to a multiple of $\epsLarge^2/4$
and repack the items so that they satisfy \cref{prop:hrnd}\ref{item:hrnd:geom}
and no item intersects $S^{(R')}$.
\end{lemma}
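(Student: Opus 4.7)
My plan is to shift items to the right by carefully bounded amounts, aligning wide and big items to a grid of spacing $g \defeq \epsLarge^2/4$, while preserving non-overlap and ensuring the rightmost edge of every item ends up at most $1-\epsLarge/2$.

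I would process items in nondecreasing order of their original $x$-coordinate. For each wide or big item $i$, set $w'(i) \defeq \lceil w(i)/g \rceil g$, so the width grows by at most $g$; small items keep $w'(i) \defeq w(i)$. For the new $x$-coordinate, let $L_i \defeq \max\{x(i),\, \max_{j \in P(i)} (x'(j)+w'(j))\}$, where $P(i)$ is the set of already-processed items whose vertical interval $[y(j),\, y(j)+h(j)]$ overlaps that of $i$. If $i$ is small, set $x'(i) \defeq L_i$; if $i$ is wide or big, set $x'(i) \defeq \lceil L_i/g \rceil g$. By construction, items remain non-overlapping after the shift, and wide/big items are aligned to the grid, so \cref{prop:hrnd}\ref{item:hrnd:geom} is satisfied.

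The crux is to bound $\Delta_i \defeq (x'(i) + w'(i)) - (x(i) + w(i))$. Directly from the recurrence I would derive $\Delta_i \le 2g + \max_{j \in P(i)}\Delta_j$ when $i$ is wide or big, and $\Delta_i \le \max_{j \in P(i)} \Delta_j$ when $i$ is small. Unrolling along a predecessor chain $i_0 \prec i_1 \prec \cdots \prec i_r = i$ gives $\Delta_i \le 2g \cdot n_{wb}$, where $n_{wb}$ counts the wide or big items in the chain. Since consecutive items in the chain conflict (share a horizontal line) but do not overlap spatially, their widths are disjoint, so $\sum_{j=0}^{r} w(i_j) \le x(i_r)+w(i_r) \le 1-\epsLarge$. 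Each wide or big item contributes more than $\epsLarge$ to this sum, hence $n_{wb} < 1/\epsLarge - 1$, and therefore $\Delta_i < 2g/\epsLarge = \epsLarge/2$. Combined with the hypothesis $x(i)+w(i) \le 1-\epsLarge$, this yields $x'(i)+w'(i) \le 1-\epsLarge/2$, so no item intersects $S^{(R')}$.

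The main subtlety I expect is that consecutive pairs in a predecessor chain may share different horizontal lines, so one cannot simply restrict attention to items on a single line. This is resolved by the observation that any two conflicting items are horizontally disjoint, so the widths along any chain add up to at most the rightmost edge of its last item, which in turn is bounded because no item intersects $S^{(R)}$.
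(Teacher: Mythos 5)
Your proof is correct and takes essentially the same approach as the paper's. Both bound the length of any chain of horizontally conflicting items by $1/\epsLarge - 1$ using the empty strip $S^{(R)}$ and charge less than $\epsLarge^2/2$ of rightward drift per wide or big item along such a chain; your greedy left-to-right sweep plays the same role as the paper's precomputed shift function $n_w(\cdot)$ in \cref{trn:right-shift}.
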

\begin{proof}
Let $y_b(i)$ and $y_t(i)$ be the $y$-coordinates of the bottom and top edge
respectively of item $i$. If an item $j$ intersects the strip
$[0, 1] \times [y_b(i), y_t(i)]$ and lies to the right of $i$
(i.e. the left edge of $j$ is to the right of the right edge of $i$),
we say that $i \precImm j$
(see \cref{fig:prec-imm}).
Let $\preceq$ denote the reflexive and transitive closure of the relation $\precImm$.
It is easy to see that $\preceq$ is a partial ordering of $I$.
Define $i \prec j$ as $i \preceq j \wedge i \neq j$.

\begin{figure}[!ht]
\centering
\begin{tikzpicture}
\newcommand*{\yAlo}{0.5}
\newcommand*{\yAhi}{1.5}
\draw (0,0) rectangle (4,3.5);
\draw[fill={black!5}] (0.5,\yAlo) rectangle (1.3,\yAhi) node[pos=0.5] {$A$}
    (1.5,0.7) rectangle (2.3,2.8) node[pos=0.5] {$B$}
    (2.5,2.3) rectangle (3.5,3) node[pos=0.5] {$C$}
    (2.5,0.8) rectangle (3.5,1.7) node[pos=0.5] {$D$};
\draw[gray,opacity=0.3] (0,\yAlo) -- (4,\yAlo)
    (0,\yAhi) -- (4,\yAhi);
\end{tikzpicture}

\caption{Items $A$, $B$, $C$ and $D$ in a bin.
Here $A \precImm D$ but $A \not\precImm C$.
Also, $A \precImm B \precImm C$, so $A \preceq C$.}
\label{fig:prec-imm}
\end{figure}

Define $p_w(i)$ to be 1 if it is wide or big and to be 0 if it is neither wide
nor big. Also, define $n_w(i) \defeq p_w(i) + \max_{j \prec i} n_w(j)$
(if there is no $j \prec i$, define $\max_{j \prec i} n_w(i) \defeq 0$).
Intuitively, $n_w(i)$ denotes the length of the largest chain of wide items preceding $i$.
The $x$-coordinate of the right edge of item $i$ is more than $\epsLarge n_w(i)$.
Therefore, $n_w(i) < 1/\epsLarge - 1$.

\begin{transformation}
\label{trn:right-shift}
Move each item $i$ to the right by $(n_w(i) - p_w(i))\epsLarge^2/2$.
Additionally, if $i$ is wide or big, move it further to the right so that the $x$-coordinate
of its left edge is a multiple of $\epsLarge^2/4$, and increase its width so that it
is a multiple of $\epsLarge^2/4$.
\end{transformation}

On applying \cref{trn:right-shift} to item $i$, the $x$-coordinate of its right
edge increases by less than $n_w(i)\epsLarge^2/2$.
Since $n_w(i) < 1/\epsLarge - 1$, the increase is less than $\epsLarge/2$.
Therefore, $i$ will not intersect $S^{(R')}$ after this transformation.
Also, after applying this transformation to all items, the bin satisfies
\cref{prop:hrnd}\ref{item:hrnd:geom}.

We will now prove that after applying \cref{trn:right-shift} to all items, no items overlap.
If $i$ and $j$ are not relatively ordered by $\preceq$,
they cannot overlap because we only moved items rightwards.
Now assume \wLoG{} that $i \prec j$.
The $x$-coordinate of the right edge of $i$ increases by less than $n_w(i)\epsLarge^2/2$.
The $x$-coordinate of the left edge of $j$ increases by at least $(n_w(j) - p_w(j))\epsLarge^2/2$.
Since $n_w(i) \le \max_{i' \prec j} n_w(i') = n_w(j) - p_w(j)$,
$i$ and $j$ don't overlap.
\end{proof}

\begin{lemma}
\label{lem:dense-pack}
Let $R$ be a set of wide and small items that are dense and have total weight at most 1.
They can be packed in polynomial time into a box of width 1 and height
$\deltaDense \defeq 2d\epsLarge^2 + \epsSmall$.
\end{lemma}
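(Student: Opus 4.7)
The plan is to exploit the high density of the items to bound their total area, and then apply NFDH strip packing (\Cref{lem:nfdh-strip}) to pack them into a strip whose height matches $\deltaDense$.

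First, I would bound $a(R)$. Since every $i \in R$ is dense, either $a(i) = 0$ or $v_{\max}(i) / a(i) > 1/\epsLarge^2$, so in both cases $a(i) \le \epsLarge^2 v_{\max}(i)$. Summing and using $v_{\max}(i) \le \sum_{j=1}^d v_j(i)$ together with the hypothesis that the total weight in each of the $d$ vector dimensions is at most $1$, I get
\[ a(R) \;\le\; \epsLarge^2 \sum_{i \in R} v_{\max}(i) \;\le\; \epsLarge^2 \sum_{j=1}^d v_j(R) \;\le\; d\epsLarge^2. \]

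Next, I would observe that every item in $R$ is either wide or small, so $h(i) \le \epsSmall$ for all $i \in R$, and $w(i) \le 1$ trivially. Thus all items fit (widthwise) into a strip of width $1$, and we can invoke \Cref{lem:nfdh-strip} to pack $R$ by NFDH into a strip of width $1$ and height at most
\[ 2 a(R) + \max_{i \in R} h(i) \;\le\; 2 d \epsLarge^2 + \epsSmall \;=\; \deltaDense. \]
NFDH runs in polynomial time, so the required packing is produced efficiently.

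There is no real obstacle here: the whole argument is a two-line computation once the density condition is translated into an area bound. The only minor subtlety is handling the degenerate case $a(i) = 0$ in the definition of density, but the inequality $a(i) \le \epsLarge^2 v_{\max}(i)$ trivially holds in that case as well, so NFDH applies uniformly.
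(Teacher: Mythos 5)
Your proposal is correct and takes the same approach as the paper: bound $a(R)$ by $d\epsLarge^2$ using the density condition and $v_{\max}(R) \le \sum_{j=1}^d v_j(R) \le d$, then invoke NFDH strip packing (\Cref{lem:nfdh-strip}) with $\max_i h(i) \le \epsSmall$. The paper's proof is a two-line version of exactly this argument.
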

\begin{proof}
$a(R) \le \epsLarge^2 v_{\max}(R) \le d\epsLarge^2$.
\\ So by \cref{lem:nfdh-strip}, height used by NFDH is less than
$2a(R) + \epsSmall \le 2d\epsLarge^2 + \epsSmall$.
\end{proof}
We can get an analogous result for tall and small dense items.

\begin{proof}[Proof of \cref{lem:rnd1}]
Suppose the bin contains items $J$. Then we can use \cref{lem:dense-pack} to move
dense wide items to box $D_W$ and move dense tall and small items to box $D_H$.
We will later repack one of $D_W$ and $D_H$ into a bin.

$v_1(D_W \cup D_H) \le 1$. This gives us 2 cases:
$v_1(D_W) \le 1/2$ or $v_1(D_H) \le 1/2$.
The first case is the same as the second one with the coordinate axes swapped,
so assume \wLoG{} that $v_1(D_W) \le 1/2$.

Move $C^{(R)}$ to a box of height 1 and width $\epsLarge + \epsSmall \le 1/2$.
$C^{(R)}$ only has tall and small non-dense items. Also, $v_1(C^{(R)}) \le 1/2$.

Let $I^{(R)}$ be the set of big and wide items that intersect $S^{(R)}$.
Move $I^{(R)}$ to a separate bin.
The items in $I^{(R)}$ are stacked on top of each other.
Therefore, we can round their widths to a multiple of $\epsLarge^2/4$.
$I^{(R)}$ doesn't have dense items.
Therefore, this new bin satisfies the desired properties.

Since we removed $C^{(R)}$ and $I^{(R)}$ from the bin, $S^{(R)}$ is empty.
By \cref{lem:right-shift}, we can round the $x$-coordinate and width of big and wide items
in the bin to a multiple of $\epsLarge^2/4$ and then repack the items in the bin so that
the bin satisfies \cref{prop:hrnd}\ref{item:hrnd:geom} and $S^{(R')}$ is empty.
Observe that
\[ \deltaDense = 2d\epsLarge^2 + \epsSmall
\le 2d\epsLarge^2 + \frac{\eps\epsLarge^2}{2}
\le \frac{\epsLarge}{2} (4d+1)\epsLarge \le \frac{\epsLarge}{2}. \]
Since $\deltaDense \le \epsLarge/2$, pack $D_H$ into $S^{(R')}$.
Now this bin also satisfies \cref{prop:hrnd}\ref{item:hrnd:dense}.
In total, we used 2 bins and 2 boxes ($C^{(R)}$ and $D_W$).
The dense box is horizontal and the non-dense box is vertical.
Refer to \cref{fig:split-bin} for an example.

\begin{figure}[!ht]
\centering
\begin{tikzpicture}[scale=1.25, transform shape,
dense/.style = {draw=black,fill={black!65}},
ndense/.style = {draw=black,fill={black!8}},
bigbrace/.style = {decoration={brace,amplitude=10pt},line width=1pt,decorate},
smallbrace/.style = {decoration={brace,mirror,raise=1pt},semithick,decorate},
bin1-dense-items/.pic = {
    \path[ dense] (0,2.95) rectangle +(0.1,-0.1);
    \path[ dense] (2,4) rectangle +(0.15,-0.1);
    \path[ dense] (0,0) rectangle +(0.1,1);
    \path[ dense] (0.15,1.5) rectangle +(0.1,1);
    \path[ dense] (0.4,0) rectangle +(0.1,1);
    \path[ dense] (1.3,2) rectangle +(0.1,0.15);
    \path[ dense] (1.6,2) rectangle +(0.15,0.1);
    \path[ dense] (0.15,2.5) rectangle (3.1,2.55);
    \path[ dense] (1.4,2.85) rectangle (3.1,2.8);
    \path[ dense] (1.5,3.8) rectangle (3.9,3.75);
    \path[ dense] (1.65,3.65) rectangle (3.9,3.6);
    \path[ dense] (3.55,4) rectangle +(0.15,-0.15);
    \path[ dense] (3.8,2) rectangle +(0.1,1);
},
bin2-dense-items/.pic = {
    \path[ dense] (4,0) rectangle +(-0.15,0.1);
    \path[ dense] (4,0.1) rectangle +(-0.15,0.15);
    \path[ dense] (4,0.25) rectangle +(-0.15,0.1);
    \path[ dense] (4,0.35) rectangle +(-0.1,0.1);
    \path[ dense] (4,0.45) rectangle +(-0.1,0.15);
    \path[ dense] (4,0.6) rectangle +(-0.1,1);
    \path[ dense] (4,1.6) rectangle +(-0.1,1);
    \path[ dense] (4,2.6) rectangle +(-0.1,1);
    \path[ dense] (3.85,0) rectangle +(-0.1,1);
},
dense-items-box/.pic = {
    \path[ dense] (0,0) rectangle +(2.95,0.05);
    \path[ dense] (0,0.05) rectangle +(2.4,0.05);
    \path[ dense] (0,0.1) rectangle +(2.25,0.05);
    \path[ dense] (2.25,0.1) rectangle +(1.7,0.05);
},
bin1-items/.pic = {
    \path[ndense] (0.45,1) rectangle +(0.55,1);
    \path[ndense] (0,4) rectangle +(1.1,-1);
    \path[ndense] (1.1,4) rectangle +(0.15,-1.2);
    \path[ndense] (1.25,4) rectangle +(0.15,-1.3);
    \path[ndense] (1.4,4) rectangle +(0.1,-1);
    \path[ndense] (1.5,3.1) rectangle (3.4,3.2);
    \path[ndense] (1.5,3.2) rectangle (3.4,3.35);
    \path[ndense] (1.4,2.95) rectangle (2.7,2.85);
    \path[ndense] (0.2,2.55) rectangle (3.1,2.65);
    \path[ndense] (0.3,2.4) rectangle (3,2.5);
    \path[ndense] (0.55,2.25) rectangle (3,2.4);
    \path[ndense] (3.25,2) rectangle +(0.15,1);
    \path[ndense] (3.1,2) rectangle +(0.15,1);
    \path[ndense] (1,2) rectangle (4,0);
    \path[ndense] (1.5,3.75) rectangle (3.9,3.65);
    \path[ndense] (1.75,3.6) rectangle (3.9,3.5);
},
bin2-items/.pic = {
    \path[ndense] (0.5,1) rectangle +(0.75,1);
    \path[ndense] (0,4) rectangle +(1.25,-1);
    \path[ndense] (1.25,4) rectangle +(0.15,-1.2);
    \path[ndense] (1.4,4) rectangle +(0.15,-1.3);
    \path[ndense] (1.55,4) rectangle +(0.1,-1);
    \path[ndense] (1.75,3.1) rectangle (3.75,3.2);
    \path[ndense] (1.75,3.2) rectangle (3.75,3.35);
    \path[ndense] (1.75,2.95) rectangle (3.25,2.85);
    \path[ndense] (0.25,2.55) rectangle (3.25,2.65);
    \path[ndense] (0.5,2.4) rectangle (3.25,2.5);
    \path[ndense] (0.75,2.25) rectangle (3.25,2.4);
    \path[ndense] (3.25,2) rectangle +(0.15,1);
    \path[ndense] (3.4,2) rectangle +(0.15,1);
},
bin1-bin2-items/.pic = {
    \path[ndense] (0.2,2.9) rectangle +(0.1,-0.1);
    \path[ndense] (0.4,2.9) rectangle +(0.15,-0.15);
    \path[ndense] (0,1) rectangle +(0.15,1);
    \path[ndense] (0,2) rectangle +(0.1,0.7);
    \path[ndense] (0.15,0) rectangle +(0.15,1.5);
    \path[ndense] (0.3,0) rectangle +(0.1,1);
    \path[ndense] (0.3,1) rectangle +(0.15,1.2);
    \path[ndense] (0.5,0) rectangle +(0.15,0.8);
    \path[ndense] (0.65,0) rectangle +(0.15,0.7);
    \path[ndense] (1,2) rectangle +(0.1,0.15);
},
bin3-items/.pic = {
    \path[ndense] (1,2) rectangle (4,0);
    \path[ndense] (1.5,3.75) rectangle (4,3.65);
    \path[ndense] (1.75,3.6) rectangle (4,3.5);
},
strip-items/.pic = {
    \path[ndense] (-0.1,3.5) rectangle +(-0.1,-0.1);
    \path[ndense] (-0.6,2) rectangle +(0.15,1.3);
    \path[ndense] (-0.6,4) rectangle +(0.15,-0.15);
    \path[ndense] (-0.35,2) rectangle +(0.15,1.2);
    \path[ndense] (-0.45,2) rectangle +(0.1,1.1);
    \path[ndense] (-0.1,2) rectangle +(0.1,2);
},
bin/.pic = {
    \draw[{black!40},xstep=0.25,ystep=4,line width=0.3pt,densely dotted] (0,0) grid (4,4);
    \draw[{black!60}] (3.5,-0.1) -- (3.5,4.1);
    \draw (0,0) rectangle (4,4);
}]

\node at (0, 0) {=};
\coordinate (bin1) at (-4.5,-2);
\coordinate (bin1end) at (-0.5,-2);
\coordinate (bin2) at (1,0.7);
\coordinate (box-dense) at (1,-0.125);
\coordinate (bin3) at (1,-4.7);
\coordinate (box-ndense) at (5.5,-2);
\coordinate (box-ndense-end) at (6.15,-2);
\draw[bigbrace] (0.75,-5) -- (0.75,5);
\draw[bigbrace] (6.4,5) -- (6.4,-5);
\draw[smallbrace] (-1,-2) -- node[below=1pt] {\small $\epsLarge$} (-0.5,-2);
\draw[smallbrace] (box-ndense) -- node[below=1pt] {\small $\epsLarge + \epsSmall$} (box-ndense-end);

\pic at (bin1) {bin1-items};
\pic at (bin1) {bin1-bin2-items};
\pic at (bin1) {bin1-dense-items};
\pic at (bin2) {bin2-items};
\pic at (bin2) {bin1-bin2-items};
\pic at (bin2) {bin2-dense-items};
\pic at (bin3) {bin3-items};
\pic at (bin1end) {strip-items};
\pic at (box-ndense-end) {strip-items};
\pic at (box-dense) {dense-items-box};

\pic at (bin1) {bin};
\pic at (bin2) {bin};
\pic at (bin3) {bin};
\draw (box-dense) rectangle +(4,0.25);
\draw (box-ndense) rectangle +(0.65,4);
\end{tikzpicture}

\caption{A bin is split into 2 bins and 2 boxes.
Then the widths and $x$-coordinates of big and wide non-dense items
are rounded up to a multiple of $\epsLarge^2/4$.
Dense items are shaded dark and non-dense items are shaded light.}
\label{fig:split-bin}
\end{figure}
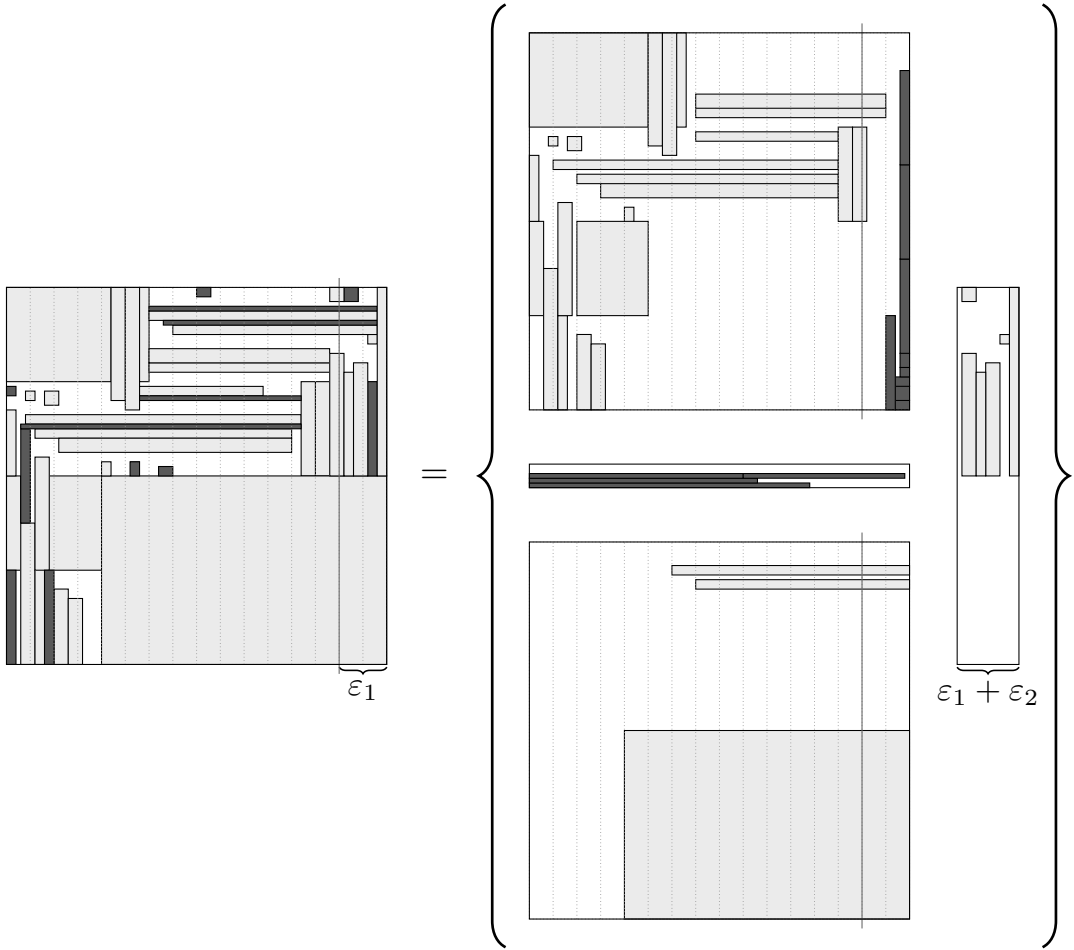
\end{proof}

\begin{lemma}
\label{lem:rnd1-rot}
When item rotation is allowed, given a packing of items into a bin, we can
round up the width of some wide and big items to a multiple of $\epsLarge^2/4$
and round up the height of some tall and big items to a multiple of $\epsLarge^2/4$
and get a packing into 2 bins and 1 box where:
\begin{tightemize}
\item Each bin satisfies \cref{prop:hrnd}.
\item $v_1$ of the box is at most $1/2$.
\item The box has height 1 and width $\epsLarge + \epsSmall$.
\item The box only contains tall and small non-dense items.
\end{tightemize}
\end{lemma}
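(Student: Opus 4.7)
The plan is to mimic the proof of Lemma \ref{lem:rnd1} while exploiting the two features of the rotated setting: $I$ contains no tall items (WLOG), and every item may be rotated individually by $90\degree$. The core idea is to use rotation to collapse both ``orientations'' of dense items into a single tall+small collection, so that all dense items from the bin fit inside the $S^{(R')}$ region of one bin and only one output box is needed, matching the asymmetry between Lemma \ref{lem:rnd1} (two boxes) and the present statement (one box).

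First, I would handle the dense items. Since $I$ has no tall items, every dense item in the bin is either wide or small. Rotating each dense wide item by $90\degree$ turns it into a dense tall item, producing a set of tall+small dense items of total weight $v_1 \le 1$. The tall+small analog of \cref{lem:dense-pack} (the same NFDH argument with axes swapped) packs them into a single strip $D$ of width $\deltaDense$ and height $1$; remove $D$ from the bin. Next I would replay the right-strip procedure of Lemma \ref{lem:rnd1}. Let $C^{(R)}$ be the non-dense small items meeting $S^{(R)}$ (with no tall contribution), and after a horizontal mirror if needed, assume WLOG $v_1(C^{(R)}) \le 1/2$. Move $C^{(R)}$ into the single output box of dimensions $(\epsLarge+\epsSmall) \times 1$; it contains only small non-dense items and therefore vacuously satisfies the ``tall and small non-dense'' restriction. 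Move the non-dense big and wide items $I^{(R)}$ meeting $S^{(R)}$ into a fresh bin $B_2$, stack them vertically, and round their widths up to multiples of $\epsLarge^2/4$. Since $B_2$ has no dense items, Prop \ref{prop:hrnd}(b) is vacuous there and Prop \ref{prop:hrnd}(a) holds by construction.

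With $S^{(R)}$ of the original bin now empty, I would apply \cref{lem:right-shift} to round the widths and $x$-coordinates of the remaining big and wide non-dense items to multiples of $\epsLarge^2/4$ and simultaneously empty $S^{(R')}$. Finally, insert the dense strip $D$ into the freed $S^{(R')}$: it fits because $\deltaDense \le \epsLarge/2$ by the parameter choices in \cref{sec:rbbp-2d-extra:classify}, and the original bin now satisfies Prop \ref{prop:hrnd} in full. The height-rounding clause in the statement is handled by rotation at the very start: any big item whose \emph{height} we wish to round up is rotated by $90\degree$ before the above procedure, so its original height becomes its packing width and is forced to a multiple of $\epsLarge^2/4$ by the width-rounding step; the word ``some'' in the lemma permits this choice per item.

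The main obstacle will be the single-box restriction: unlike Lemma \ref{lem:rnd1}, there is no separate dense box in the output, so all dense items from the bin must be squeezed into the $\epsLarge/2$-wide region $S^{(R')}$ of a single bin. Collapsing both orientations of dense items into tall+small form via rotation is exactly what enables the tall+small analog of \cref{lem:dense-pack} to apply and yield the required $\deltaDense \le \epsLarge/2$ bound, which follows from $\epsSmall \le \eps\epsLarge^2/2$ and $\epsLarge \le 1/(4d+1)$. A secondary subtlety is that full-width big items in $B_2$ may intersect $B_2$'s $S^{(R')}$, but this is harmless because $B_2$ holds no dense items and Prop \ref{prop:hrnd}(b) is vacuous there.
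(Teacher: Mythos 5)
Your proof is correct and follows essentially the same approach as the paper's (very brief) proof sketch: rotate the dense wide items so that all dense items become tall or small, pack them into a single vertical strip of width $\deltaDense$, move $C^{(R)}$ to the output box and $I^{(R)}$ to a new bin, apply \cref{lem:right-shift} to free $S^{(R')}$, and finally place the dense strip into $S^{(R')}$, yielding two \cref{prop:hrnd} bins and one box. The extra discussion of height-rounding-via-rotation is harmless though unnecessary, since with no tall items the height-rounding clause is satisfied by the empty set of items.
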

\begin{proof}[Proof sketch]
Suppose the bin contains items $J$. Move dense items to a vertical box $D_H$
using \cref{lem:dense-pack}.
Move $C^{(R)}$ to a box of height 1 and width $\epsLarge + \epsSmall$.
Move $I^{(R)}$ to a new bin and round item widths to a multiple of $\epsLarge^2/4$.
Now $S^{(R')}$ is empty, so pack $D_H$ into $S^{(R')}$.
The rest of the proof is similar to that of \cref{lem:rnd1}.
\end{proof}

\subsection{Getting Slack in Weight of Bins}
\label{sec:rbbp-2d-extra:slack}

For a bin $J$, if $\forall j \in [d], v_j(J) \le 1 - \eps$,
then we can round up weights of items.
Hence, we would like to have bins with (roughly) this property.

\begin{definition}
A bin $J$ is said to be $\eps$-slacked iff at least one of these conditions holds:
\begin{itemize}
\item $\forall j \in [d], v_j(J) \le 1-\eps$.
\item $|J| = 1$.
\item $|J| = 2$ and $J$ only contains dense items and
$\forall i \in J, v_{\max}(i) \le 1/2$.
\end{itemize}
A packing of items into multiple bins is said to be $\eps$-slacked
iff all bins in the packing are $\eps$-slacked.
\end{definition}

We say that packing of items in a bin is \emph{quarter-structured} iff
the bin is $\eps$-slacked and satisfies either \cref{prop:hrnd} or \cref{prop:vrnd}.
We would like to round up the width or height of each item in $I$ and repack the items
into bins such that each bin is quarter-structured.

\begin{lemma}
\label{lem:split-slack}
Let $D \subseteq [d]$ and we have a parameter $\delta \le 1/4$.
Let $I$ be a set of items where $\forall j \in D, v_j(I) \le V_j$.
Then we can partition $I$ into at most $\abs{D}+1$ disjoint subsets
such that each subset $I'$ satisfies one of these properties:
\begin{itemize}
\item $|I'| = 1$.
\item $\forall j \in D, v_j(I') \le (1-\delta)V_j$.
\end{itemize}
\end{lemma}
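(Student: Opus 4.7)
The plan is to use a first-fit-decreasing style algorithm with $|D|+1$ group slots. Sort the items in non-increasing order of $\widehat{v}(i) \defeq \max_{j \in D} v_j(i)/V_j$. Process items one by one, placing each item $i_t$ into the first group $G_k$ whose current contents satisfy $v_j(G_k) + v_j(i_t) \le (1-\delta)V_j$ for every $j \in D$. If no non-empty group can accept $i_t$ under this rule, place $i_t$ into an empty group; depending on whether $\widehat{v}(i_t) \le 1-\delta$ or not, the resulting group will either satisfy the weight bound on its own or serve as a singleton group for $i_t$. The output is whatever collection of non-empty groups the algorithm produces.

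The key claim is that this procedure never needs more than $|D|+1$ groups. Suppose for contradiction that at some step item $i_t$ fits into none of the $|D|+1$ groups and there is no empty group either. Then for each $k \in [|D|+1]$ there is a dimension $j_k \in D$ with $v_{j_k}(G_k) + v_{j_k}(i_t) > (1-\delta)V_{j_k}$. By pigeonhole, two groups $G_a, G_b$ share the same violated dimension $j$. Summing the two corresponding inequalities and using the global bound $v_j(G_a) + v_j(G_b) + v_j(i_t) \le v_j(I) \le V_j$, we obtain $v_j(i_t) > (1-2\delta)V_j$, which is at least $V_j/2$ because $\delta \le 1/4$.

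The finishing step is a counting argument on \emph{big} items, where an item $i$ is called big if $\widehat{v}(i) > 1/2$. For each $j \in D$, at most one item can have $v_j(i) > V_j/2$, since two such items would contradict $v_j(I) \le V_j$; therefore there are at most $|D|$ big items in total. On the other hand, the previous paragraph shows $\widehat{v}(i_t) > 1/2$, and by the sorted processing order every previously placed item $i_s$ satisfies $\widehat{v}(i_s) \ge \widehat{v}(i_t) > 1/2$. Hence every one of the $|D|+1$ non-empty groups contains at least one big item, giving $\ge |D|+1$ big items in total, contradicting the upper bound $|D|$.

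The main obstacle is pinning down the constants: the pigeonhole step only yields $v_j(i_t) > (1-2\delta)V_j$, and we need this to exceed $V_j/2$ to trigger the big-items count, which is exactly why the hypothesis $\delta \le 1/4$ is invoked. A small secondary issue is handling the case where $i_t$ must become a singleton because $v_j(i_t) > (1-\delta)V_j$ for some $j$; there $\widehat{v}(i_t) > 1-\delta \ge 3/4 > 1/2$, so the same big-items argument forces some group to be empty at the time $i_t$ is processed, and that empty group is used to host the singleton. Together these observations show that the partition produced has at most $|D|+1$ parts, each satisfying the required dichotomy.
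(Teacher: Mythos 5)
Your proposal is correct, and it takes a genuinely different route from the paper. The paper argues constructively: it first extracts every item $i$ with $v_j(i) > (1-2\delta)V_j$ in some coordinate $j$ into its own singleton part and bounds the number of such items by a set $D' \subseteq D$ via an injectivity argument (at most one such item per dimension when $\delta \le 1/4$); then, for the remaining light items, it fixes an arbitrary order, takes for each $j \in D \setminus D'$ the shortest prefix $P_j$ with $v_j(P_j) \ge \delta V_j$ (which automatically satisfies $v_j(P_j) \le (1-\delta)V_j$ since heavy items were removed), and uses the nested prefix differences plus the final tail $Q$ as the remaining parts. You instead run a first-fit-decreasing greedy over $|D|+1$ slots sorted by $\widehat{v}(i) = \max_j v_j(i)/V_j$ and prove by contradiction that a slot is always available: if $i_t$ were blocked by all $|D|+1$ non-empty slots, pigeonhole yields two slots blocked in the same coordinate $j$, forcing $v_j(i_t) > (1-2\delta)V_j \ge V_j/2$; then $i_t$ and all earlier (hence $\widehat{v}$-larger) items, in particular the $|D|+1$ group leaders, are all ``big'', contradicting the fact that at most $|D|$ items can have $\widehat{v} > 1/2$. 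Both proofs hinge on the same core fact — for $\delta \le 1/4$, at most one item per dimension can carry more than half the budget — but the paper uses it to bound the number of upfront singletons while you use it to bound the greedy group leaders. Your version avoids the explicit prefix construction and is arguably a cleaner single pass, at the (harmless) cost of requiring a sorted processing order where the paper's construction works with an arbitrary one.
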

\begin{proof}
Let $I_L \defeq \{i \in I: \exists j \in D, v_j(i) > (1-2\delta)V_j\}$.
Move each item in $I_L$ to a new box.
Let $D' \defeq \{ j \in D: v_j(I_L) > (1-2\delta)V_j\}$.
Then $|D'| \ge |I_L|$ and $\forall j \in D', v_j(I-I_L) < 2\delta V_j \le (1-\delta)V_j$.

Order the items in $I-I_L$ arbitrarily.
For each $j \in D-D'$, find the smallest prefix $P_j$ such that $v_j(P_j) \ge \delta V_j$.
Let $i_j$ be the last item in $P_j$. Then $v_j(P_j - i_j) < \delta V_j$.
Since we removed items from $I_L$, $v_j(i_j) \le (1-2\delta)V_j$.
Therefore, $v_j(P_j) \le (1-\delta)V_j$.

Now order these prefixes in non-decreasing order of cardinality.
Let them be $P_1'$, $P_2'$, $\ldots$, $P_{|D-D'|}'$.
The sets $P_1'$, $P_2'-P_1'$, $P_3'-P_2'$, $\ldots$ form a partition of $P_{|D-D'|}$.
Put each such set in a new box, if the set is not empty.
The items which remain in the original box are $Q \defeq I-I_L-P_{|D-D'|}'$.
$\forall j \in D-D', Q \subseteq I-I_L-P_j$.
Since $v_j(P_j) \ge \delta V_j$, we get that $\forall j \in D-D', v_j(Q) \le (1-\delta)V_j$.

Therefore, total number of boxes needed is at most
$1 + |I_L| + |D-D'| \le 1 + |D'| + |D-D'| \le |D| + 1$.
\end{proof}

\begin{lemma}
\label{lem:ws-d-pack}
Given a packing of items $I$ into $m$ bins, we can
round up the width of some non-dense wide and big items in $I$ to a multiple of $\epsLarge^2/4$
and round up the height of some non-dense tall and big items in $I$ to a multiple of $\epsLarge^2/4$
and repack the items into $(d+4)m$ $\eps$-slacked bins such that
each bin satisfies either \cref{prop:hrnd} or \cref{prop:vrnd}.
\end{lemma}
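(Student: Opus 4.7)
The plan combines \cref{lem:rnd1} for the structural property with \cref{lem:split-slack} for $\eps$-slackness. First I would apply \cref{lem:rnd1} to each of the $m$ input bins, obtaining per input bin four ``objects'': two rounded bins $R_1, R_2$ (each satisfying \cref{prop:hrnd} or \cref{prop:vrnd}, with widths or heights of the relevant non-dense items rounded up to multiples of $\epsLarge^2/4$) and two auxiliary boxes $H, V$ (one horizontal, one vertical), each box having $v_1 \le 1/2$ and being thin in one geometric dimension. I would then repackage each box as a standalone bin satisfying the appropriate structural property: a non-dense horizontal box placed flush against the bottom of an empty bin makes both parts of \cref{prop:vrnd} vacuous (no tall or big items, and no dense items), while a dense horizontal box is placed inside $S^{(T')}$ (which it fits since $\deltaDense < \epsLarge/2$), giving \cref{prop:vrnd} trivially. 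Vertical boxes are handled symmetrically against $S^{(R')}$ for \cref{prop:hrnd}.

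Next I would achieve $\eps$-slackness using \cref{lem:split-slack}. The key observation is that the four objects per input bin together contain exactly the items of a single input bin, so $v_j(R_1) + v_j(R_2) + v_j(H) + v_j(V) \le 1$ for every $j \in [d]$. Consequently, for each dimension $j$ at most one object can have $v_j > 1-\eps$, since two such objects would sum to strictly more than $2(1-\eps) \ge 1$ (using $\eps \le 1/8$). For each object $O \in \{R_1, R_2, H, V\}$, let $J_O \subseteq [d]$ be the set of dimensions in which $O$ fails to be $\eps$-slacked. I would apply \cref{lem:split-slack} to $O$ with $D = J_O$, $\delta = \eps$, and $V_j = 1$, obtaining at most $|J_O| + 1$ $\eps$-slacked subbins. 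Each subbin inherits $O$'s structural property by downward closure: removing items preserves the rounded $x$-coordinates/widths and $y$-coordinates/heights (so \ref{item:hrnd:geom} and \ref{item:vrnd:geom} survive), and any dense items remaining in a subbin were already packed inside the parent's dense strip while no non-dense item intersected it (so \ref{item:hrnd:dense} and \ref{item:vrnd:dense} survive too).

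Counting across the four objects per input bin gives $\sum_{O} (|J_O| + 1) = 4 + \sum_O |J_O| \le 4 + d$ $\eps$-slacked bins, using the disjointness of the $J_O$'s shown above. Summed over the $m$ input bins, this produces at most $(d+4)m$ $\eps$-slacked bins satisfying \cref{prop:hrnd} or \cref{prop:vrnd}, as required.

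The main technical obstacle will be the careful verification that the inheritance of \cref{prop:hrnd}/\cref{prop:vrnd} under \cref{lem:split-slack}'s splitting is clean---in particular, that the physical realization of each subbin (placing the subset of items at the parent's coordinates) remains a valid bin packing, that the rounded-width status of non-dense wide and big items is preserved, and that the dense-strip separation continues to hold in every subbin that contains some dense items. Without the ``one-violating-object-per-dimension'' argument that yields $\sum_O |J_O| \le d$, a naive application of \cref{lem:split-slack} separately to each of the four objects would produce $4(d+1)m$ bins instead of $(d+4)m$; exploiting this disjointness is what makes the count tight.
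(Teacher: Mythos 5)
Your proposal is correct and follows essentially the same route as the paper's proof: apply \cref{lem:rnd1} per input bin to obtain two bins and two boxes, observe that the dimension-sets where the four objects violate the $1-\eps$ bound are pairwise disjoint (so they sum to at most $d$), and apply \cref{lem:split-slack} with $\delta = \eps$ to each object to obtain $|J_O|+1$ $\eps$-slacked bins, which preserve Properties~\ref{prop:hrnd}/\ref{prop:vrnd} because items stay in place. The one piece you spell out that the paper states without justification is the disjointness of the $J_O$'s (via the ``two objects both exceeding $1-\eps$ would sum past 1'' observation), but this is a minor elaboration, not a different argument.
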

\begin{proof}
Let $B_1, B_2, \ldots, B_m$ be a packing of items $I$ into $m$ bins.
For each bin $B_k$, we can use \cref{lem:rnd1} to round up some items in $B_k$
and split $B_k$ into bins $J_k$ and $K_k$ and boxes $W_k$ and $H_k$.
\WLoG, assume $W_k$ is a horizontal box. Put each box in a new bin.
Then $W_k$ satisfies \cref{prop:vrnd} and $H_k$ satisfies \cref{prop:hrnd}.

Let $D_k \defeq \{j \in [d]: v_j(J_k) > (1-\eps)\}$,
$E_k \defeq \{j \in [d]: v_j(K_k) > (1-\eps)\}$,
$F_k \defeq \{j \in [d]: v_j(W_k) > (1-\eps)\}$
and $G_k \defeq \{j \in [d]: v_j(H_k) > (1-\eps)\}$.
$D_k$, $E_k$, $F_k$ and $G_k$ are pairwise disjoint and they are subsets of $[d]$.
Now use \cref{lem:split-slack} with parameter $\delta=\eps$ on bin $J_k$
with dimensions $D_k$. This splits $J_k$ into $|D_k| + 1$ $\eps$-slacked bins.
Similarly, by splitting $K_k$, $W_k$ and $H_k$, we get
$|E_k|+1$, $|F_k|+1$ and $|G_k|+1$ $\eps$-slacked bins respectively.

The total number of bins from $B_k$ is $|D_k| + |E_k| + |F_k| + |G_k| + 4 \le d+4$.
Therefore, we get a total of $(d+4)m$ bins.

$J_k$, $K_k$, $W_k$ and $H_k$ satisfy the desired properties except possibly
$\eps$-slackness. When we split a bin into multiple bins, the position of items
relative to the bin isn't changed. Therefore, the split bins continue to satisfy these properties.
\end{proof}

\begin{lemma}
\label{lem:ws-d-pack-rot}
Given a packing of items $I$, if item rotations are allowed, we can
round up the width of some non-dense wide and big items in $I$ to a multiple of $\epsLarge^2/4$
and round up the height of some non-dense tall and big items in $I$ to a multiple of $\epsLarge^2/4$
and repack $I$ into $(d+3)m$ $\eps$-slacked bins such that
each bin satisfies \cref{prop:hrnd}.
\end{lemma}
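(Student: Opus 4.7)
The plan is to mirror the proof of \cref{lem:ws-d-pack}, but using \cref{lem:rnd1-rot} in place of \cref{lem:rnd1}. Let $B_1, \ldots, B_m$ be the input packing. For each $B_k$, I would apply \cref{lem:rnd1-rot} to round up some item widths and heights to multiples of $\epsLarge^2/4$ and split $B_k$ into two bins $J_k, K_k$ and a single box $H_k$. By \cref{lem:rnd1-rot}, $J_k$ and $K_k$ already satisfy \cref{prop:hrnd}, and $H_k$ has height $1$, width $\epsLarge + \epsSmall$, and contains only tall and small non-dense items. I would place $H_k$ into a fresh bin; since this bin has no wide, big, or dense items, the observation in \cref{sec:rbbp-2d-extra:one-side} that a bin containing only tall and small items trivially satisfies \cref{prop:hrnd}\ref{item:hrnd:geom} applies, and \cref{prop:hrnd}\ref{item:hrnd:dense} holds vacuously.

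Next, to make each of the three pieces $\eps$-slacked, I would define $D_k, E_k, F_k \subseteq [d]$ as the sets of dimensions in which $v_j$ exceeds $1-\eps$ in $J_k$, $K_k$, $H_k$ respectively, and invoke \cref{lem:split-slack} with $\delta = \eps$ on each piece using the corresponding set. This yields $|D_k|+1$, $|E_k|+1$, and $|F_k|+1$ $\eps$-slacked sub-bins. Splitting redistributes items without changing their positions relative to the bin, so \cref{prop:hrnd} is preserved in every sub-bin.

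The counting step relies on the claim that $D_k$, $E_k$, $F_k$ are pairwise disjoint. Since $v_j(J_k) + v_j(K_k) + v_j(H_k) \le v_j(B_k) \le 1$ for every $j \in [d]$, at most one of these three quantities can exceed $1-\eps$ whenever $\eps \le 1/2$, which is guaranteed because $\eps \le 1/8$. Therefore $|D_k| + |E_k| + |F_k| \le d$, and the number of bins produced from $B_k$ is at most $(|D_k|+1) + (|E_k|+1) + (|F_k|+1) \le d+3$, giving $(d+3)m$ bins in total.

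The main ``obstacle'' is essentially cosmetic: confirming the pairwise disjointness that tightens the bound from $d+4$ (as in the non-rotation case) to $d+3$, which is exactly the saving afforded by \cref{lem:rnd1-rot} producing one fewer box than \cref{lem:rnd1}, and checking that placing the tall-and-small box $H_k$ into its own bin trivially satisfies \cref{prop:hrnd}. Both points follow immediately from the slackness threshold and the classification of items. The remainder is bookkeeping paralleling the proof of \cref{lem:ws-d-pack}.
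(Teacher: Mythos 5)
Your proof is correct and takes essentially the same approach as the paper, which states only a brief sketch ("use \cref{lem:rnd1-rot} on each bin to get 3 bins; the rest is similar to \cref{lem:ws-d-pack}"). You fill in exactly the details that sketch defers to: splitting each bin into $J_k$, $K_k$, and the single vertical box $H_k$, placing $H_k$ in a fresh bin, verifying \cref{prop:hrnd} for that bin, applying \cref{lem:split-slack} with $\delta = \eps$ to each of the three pieces, and using the pairwise disjointness of $D_k, E_k, F_k$ (which follows, as you note, from $v_j(J_k)+v_j(K_k)+v_j(H_k) = v_j(B_k) \le 1$ and $\eps \le 1/8$) to obtain the bound $|D_k|+|E_k|+|F_k|+3 \le d+3$.
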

\begin{proof}[Proof sketch]
Use \cref{lem:rnd1-rot} on each bin in the packing to get 3 bins.
The rest of the proof is similar to \cref{lem:ws-d-pack}.
\end{proof}

We will now try to improve upon \cref{lem:ws-d-pack,lem:ws-d-pack-rot}
for the special case $d=1$.

\begin{lemma}
\label{lem:light-cont-pack}
Let there be $m$ boxes of width 1 and height at most $\eps$.
Let the weight of each box be at most $k\eps$ in each dimension, where $k \in \{1, 2\}$.
Then we can pack these boxes into $1 + m\cdot k\eps/(1-k\eps)$ bins such that
the resulting bins are $k\eps$-slacked.
\end{lemma}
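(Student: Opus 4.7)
The plan is to use a direct stacking strategy. Define $K \defeq (1-k\eps)/(k\eps)$. The first step is to verify that $K$ is a positive integer: since $\eps^{-1} \in 2\mathbb{Z}$ (a standing assumption of \cref{sec:rbbp-2d-extra}) and $k \in \{1, 2\}$, we have $1/(k\eps) \in \mathbb{Z}$, so $K = 1/(k\eps) - 1 \in \mathbb{Z}$; moreover, $\eps \le 1/8$ ensures $K \ge 3$.

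Next, arrange the boxes in an arbitrary order and partition them into $\lceil m/K \rceil$ groups, each consisting of at most $K$ consecutive boxes. For each group, open a new bin and stack its boxes vertically inside (each box spans the full width of the bin). The stacked height is at most $K\eps = 1/k - \eps < 1$, so the stacking fits geometrically. Since each box has weight at most $k\eps$ in every vector dimension, the total weight of the (at most) $K$ boxes in any bin is at most $K \cdot k\eps = 1 - k\eps$ in every dimension; hence every bin is $k\eps$-slacked by the first clause of the slackness definition.

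The number of bins used is $\lceil m/K \rceil \le m/K + 1 = 1 + mk\eps/(1-k\eps)$, using $1/K = k\eps/(1-k\eps)$, which matches the claim. There is no real obstacle: the argument is essentially a calculation, and the only delicate point is that $K$ must be an integer for the ``$K$ boxes per bin'' strategy to exactly saturate the weight budget in each bin. This is precisely what the global choice $\eps^{-1} \in 2\mathbb{Z}$ ensures for both $k=1$ and $k=2$; without it, a floor-function loss would enter and yield only the weaker Next-Fit-style bound $1 + mk\eps/(1-2k\eps)$.
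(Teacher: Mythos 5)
Your proof is correct and takes essentially the same approach as the paper: pack $K = 1/(k\eps) - 1$ boxes per bin, check the height and weight budgets, and bound $\lceil m/K \rceil$ by $1 + m/K$. The only addition is your explicit verification that $K$ is an integer via $\eps^{-1} \in 2\mathbb{Z}$, which the paper leaves implicit.
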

\begin{proof}
We can pack $1/k\eps-1$ boxes in 1 bin with the resulting bin being $k\eps$-slacked.
This is because the sum of weights is at most $1-k\eps$ in each dimension
and the total height of the boxes is at most $1/k-\eps \le 1$.
The total number of bins used is at most
$\ceil{m/((1/k\eps)-1)}$ which in turn is at most $1 + m\cdot k\eps/(1-k\eps)$.
\end{proof}
The above lemma can also be used for vertical boxes, i.e. height 1 and width at most $\eps$.

\begin{lemma}
\label{lem:non-dense-cont-pack}
Let $d=1$. Let there be $m$ boxes of width 1 and height at most $\eps$
containing only non-big non-dense items. Let the weight of each box be at most $1/2$.
Then we can pack these boxes into
$2 + m\left(1/2 + \eps/(1-\eps)\right)$
bins such that the resulting bins are $\eps$-slacked.
\end{lemma}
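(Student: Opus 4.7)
The plan is to pair up the boxes and pack each pair into one bin, peeling off a small ``bundle'' of items from any pair whose combined weight would otherwise exceed $1-\eps$; the bundles are then packed separately using the idea behind \cref{lem:light-cont-pack}.

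First I would establish the key item-weight bound: \emph{any non-big non-dense item $i$ lying in a box of width $1$ and height at most $\eps$ satisfies $v_1(i) \le \eps/2$}. The proof is a short case analysis on whether $i$ is wide, tall, or small. For a wide item, $h(i) \le \epsSmall$ and $w(i) \le 1$ give $a(i) \le \epsSmall$; combined with non-density $v_1(i) \le a(i)/\epsLarge^2$ and $\epsSmall \le \eps\epsLarge^2/2$, this yields $v_1(i) \le \eps/2$. Tall items (which are allowed since $\epsLarge \le \eps$ in this regime) and small items give strictly smaller bounds of $\eps^2/2$ and $\eps^3/4$ respectively, so all cases satisfy $v_1(i) \le \eps/2$.

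Next, pair the $m$ boxes arbitrarily into $\smallceil{m/2}$ pairs. For a \emph{good} pair $(B,B')$ with $v_1(B)+v_1(B') \le 1-\eps$, stack the two boxes into one bin (total height $\le 2\eps \le 1$, $\eps$-slacked by weight). For a \emph{bad} pair, remove items one at a time from $B'$ into a ``bundle'' until the remaining combined weight drops to $\le 1-\eps$, and then place the two shrunken boxes into one bin. Since each removed item has weight at most $\eps/2$, the bundle's total weight is at most $\eps + \eps/2 = 3\eps/2$, and it still fits inside a container of width $\le 1$ and height $\le \eps$ because all its items came from the single original box $B'$.

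Finally I would pack the bundles. There are at most $\smallceil{m/2}$ bundles, each a width-$\le 1$, height-$\le \eps$ region of weight $\le 3\eps/2$. Mimicking the proof of \cref{lem:light-cont-pack} with ``$k\eps = 3\eps/2$'', I stack $\floor{2/(3\eps) - 1}$ bundles per bin: the total height is at most $2/3 - \eps \le 1$ and the total weight at most $1-3\eps/2 \le 1-\eps$, so the bin is $\eps$-slacked. This uses at most $1 + \smallceil{m/2} \cdot 3\eps/(2-3\eps)$ bundle bins, which is at most $1 + m\eps/(1-\eps) + O(\eps^2)$ whenever $\eps \le 1/3$ (as guaranteed here since $\eps \le 1/8$; one checks $3(1-\eps) \le 4-6\eps$). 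Summing the $\smallceil{m/2}$ pair bins and the bundle bins and absorbing additive constants gives the claimed bound $2 + m\bigl(1/2 + \eps/(1-\eps)\bigr)$.

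The main obstacle I expect is the item-weight bound $v_1(i) \le \eps/2$: without it, a single moved item could overshoot the target weight-reduction by far more than $\eps$, making the bundles too heavy for the \cref{lem:light-cont-pack}-style stacking to fit the required count. The bound crucially combines the non-big condition (so each item is wide, tall, or small), the non-density condition (so weight is bounded by area over $\epsLarge^2$), and the parameter relation $\epsSmall \le \eps\epsLarge^2/2$. Once this bound is in hand, the remainder of the proof is essentially a decorated version of \cref{lem:light-cont-pack}.
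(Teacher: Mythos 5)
Your proof is correct, but it reorganizes the argument differently from the paper, and the contrast is instructive. The paper removes a small prefix from \emph{every} box \emph{before} pairing: from each box it peels off a smallest prefix $S$ with $v_1(S) \ge \eps/2$ (so $v_1(S) \le \eps$ by the $\eps/2$ item bound), leaving $m$ boxes each of weight at most $(1-\eps)/2$, so that \emph{any} two remaining boxes form an $\eps$-slacked bin; the $m$ light prefix-boxes are then packed directly by \Cref{lem:light-cont-pack} with $k=1$, giving $1 + m\eps/(1-\eps)$ bins, and the arithmetic is exact. You instead pair first and only peel bundles from the offending box of a bad pair. This yields fewer but heavier residual pieces (at most $\smallceil{m/2}$ bundles of weight up to $3\eps/2$ instead of $m$ pieces of weight up to $\eps$), and consequently you have to push \Cref{lem:light-cont-pack} slightly beyond its stated hypotheses (its statement fixes $k\in\{1,2\}$; your $k\eps = 3\eps/2$ corresponds to $k = 3/2$, which the lemma's proof does cover but which you should flag, since you invoke the lemma by ``idea'' rather than re-deriving). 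You also implicitly use that for a bad pair $v_1(B') \ge 1/2 - \eps \ge \eps$, so the bundle can indeed be extracted from $B'$ alone; worth stating. Both routes land on $2 + m\bigl(1/2 + \eps/(1-\eps)\bigr)$: the paper's ``remove uniformly, then pair'' is tidier and uses the auxiliary lemma as stated, whereas your ``pair, then repair'' peels off less total weight and localizes work to bad pairs, at the cost of slightly looser per-bundle weight and a bit more case checking. One small slip that doesn't affect the outcome: your stated bound for small items, $\eps^3/4$, isn't what the parameter relations give ($\epsSmall^2/\epsLarge^2 \le \eps^2\epsLarge^2/4$), but all cases are comfortably below the only bound you actually use, $v_1(i)\le \eps/2$, which the paper derives uniformly from $a(i)\le\epsSmall$ for all non-big items without the case split.
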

\begin{proof}
Let $i$ be an item in the box.
Boxes only have non-big items, so $a(i) \le \epsSmall$.
Boxes only have non-dense items,
so $v_1(i) \le a(i)/\epsLarge^2 \le \eps/2$.

From each box, choose the smallest prefix $S$ for which $v_1(S) \ge \eps/2$.
Then $v_1(S) \le \eps$. Remove $S$ and put it in a new box of the same dimensions.

This gives us $m$ boxes of weight at most $(1-\eps)/2$.
We can pair them up and pack them into $\ceil{m/2} \le m/2 + 1$ bins.
Those bins will be $\eps$-slacked.

We also get $m$ new boxes of weight at most $\eps$.
We can pack $1/\eps-1$ such boxes into a bin.
This gives us at most $1 + m\cdot\eps/(1-\eps)$ bins.
These bins are $\eps$-slacked.

Total number of bins used is
 $\left(\frac{m}{2} + 1\right) + \left(1 + m\cdot\eps/(1-\eps)\right)
= 2 + m\left(1/2 + \eps/(1-\eps)\right)$.
\end{proof}

\begin{lemma}
\label{lem:dense-cont-pack}
Let $d=1$. Let there be $m$ boxes of width 1 and height $\deltaDense$.
Suppose the boxes only have dense items,
and each box has total weight at least $\eps$ and at most $1/2$.
Then we can pack the items of these boxes into at most $3 + 2m/3$ bins such that
the resulting bins are $\eps$-slacked.
\end{lemma}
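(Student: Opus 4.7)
My plan is to process the $m$ boxes in groups of three, packing each triple's items into at most $2$ $\eps$-slacked bins, and placing each of the (at most $2$) leftover boxes in its own bin, which is $\eps$-slacked because a box's weight is at most $1/2 \le 1-\eps$. Writing $m = 3k + r$ with $r \in \{0, 1, 2\}$, this produces $2k + r \le 2m/3 + 2/3$ bins, which is within the claimed bound of $3 + 2m/3$.

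The heart of the proof will be to pack one triple of boxes, whose combined items $I$ have total weight $W \le 3/2$ with each item weight at most $1/2$, into at most $2$ $\eps$-slacked bins. If $W \le 1-\eps$, one bin suffices. Otherwise, let $s_1 \ge s_2 \ge \cdots$ denote the item weights in $I$ in decreasing order. I would first try placing the two heaviest items in bin~$1$ (which is $\eps$-slacked by the $|J|=2$ rule, since both items are dense and have weight $\le 1/2$) and the rest in bin~$2$; this succeeds when $s_1 + s_2 \ge W - (1-\eps)$, as bin~$2$ then has weight $\le 1-\eps$. In the complementary regime $s_1 + s_2 < W - (1-\eps) \le 1/2 + \eps$ (which forces $s_2 \le 1/4 + \eps/2$), I would fall back to LPT: process items in decreasing weight, placing each in the less-loaded bin, and then argue that both resulting bins are $\eps$-slacked.

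The crucial analytical step will be to show that if LPT yields a heavier bin of weight $> 1-\eps$, then that bin contains at most $2$ items (and is hence $\eps$-slacked via the $|J|=2$ rule). My plan is to combine the standard LPT gap bound (the final difference between the two bin weights is at most the last item placed in the heavier bin) with the structural observation that each box has at most one item of weight $> 1/4$ (two such items would push the box's weight above $1/2$), so across the whole triple there are at most $3$ items of weight $> 1/4$. If the heavier bin contained $\ge 3$ items with total weight $> 1-\eps$, the lighter bin would have weight $< 1/2 + \eps$, so the last (smallest) item in the heavier bin would have weight $> 1/2 - 2\eps \ge 1/4$ (using $\eps \le 1/8$), forcing every item in the heavier bin to have weight $> 1/4$; but LPT places the second-heaviest item into bin~$2$ at step~$2$, so the heavier bin can contain at most $2$ items of weight $> 1/4$, a contradiction.

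The main obstacle I expect is making this LPT case analysis airtight -- in particular, handling tie-breaking in LPT's first few steps and arguing that items $s_4, s_5, \ldots$ are all at most $1/4$, so that no ``large'' item can migrate into the heavier bin after step~$3$. Geometric feasibility within each bin is easy: every dense item has area at most $\epsLarge^2 v_{\max}(i)$, so the items in any bin have total area at most $\epsLarge^2$, which NFDH packs using height at most $2\epsLarge^2 + \deltaDense = 4\epsLarge^2 + \epsSmall < 1$.
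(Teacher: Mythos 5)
Your proposal is correct, and it takes a genuinely different route from the paper. The paper's proof splits the $m$ boxes into two classes -- boxes containing a ``high-weight'' item (weight $\ge 1/2-2\eps$) and the rest -- and handles them asymmetrically: high-weight items are paired via the $|J|=2$ slackness clause, their light residuals are packed via \cref{lem:light-cont-pack}, and the remaining boxes are regrouped in triples after being split into two sub-prefixes of balanced weight. Your argument instead treats every box uniformly, working triple by triple, with a direct three-way case analysis (total $\le 1-\eps$ / two heaviest carry the excess / LPT fallback). This avoids the auxiliary light-box lemma entirely, is arguably cleaner, and yields a slightly tighter additive constant ($2m/3 + 2/3$ instead of $3 + 2m/3$). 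The only substantive lemma you import is the classical LPT gap bound (final gap at most the last item placed in the heavier bin), and the combinatorial observation that at most one item per box can exceed weight $1/4$ (since two such items would already exceed the box budget of $1/2$), so at most three items across a triple do.

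The concern you flag about tie-breaking and about items $s_4, s_5, \ldots$ is resolvable along exactly the lines you sketch: since there are at most three items of weight strictly above $1/4$, any item beyond $s_3$ automatically has weight $\le 1/4$; and after LPT's first two steps, $s_1$ and $s_2$ sit in distinct bins regardless of how ties are broken, so no bin can ever hold all three of $\{s_1, s_2, s_3\}$. Combined with the gap bound forcing every item of a $>(1-\eps)$-heavy bin to have weight $> 1/2 - 2\eps \ge 1/4$ (here $\eps \le 1/8$ is needed), a heavy bin with $\ge 3$ items would need three items of weight $> 1/4$ co-resident, which is impossible. The geometric side is as you say: all items are wide/small dense items (tall items cannot fit in a box of height $\deltaDense < \epsLarge$), so they have height $\le \epsSmall$ and total area $< \epsLarge^2$ per bin, and NFDH packs them in a strip of height $\le 2\epsLarge^2 + \epsSmall = \deltaDense \le \epsLarge/2 < 1$.
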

\begin{proof}
Let there be $t$ boxes that have an item of weight at least $1/2-2\eps$.
No item has weight more than half.
Therefore, we can pair up these high-weight items into at most $t/2 + 1$ $\eps$-slacked bins.
In each of these $t$ boxes, the remaining items have weight at most $2\eps$.
Since $\eps \ge \deltaDense$, by \cref{lem:light-cont-pack},
we can pack them into $1 + 2\eps t/(1-2\eps)$ number of $\eps$-slacked bins.

$m-t$ boxes have all items of weight less than $1/2-2\eps$.
Each of these boxes has total weight between $\eps$ and $1/2$.
Each box \emph{can be} split into 2 boxes as follows:
Order the items in a box and find the smallest prefix of weight at least $\eps$.
Since there are no items of weight more than $1/2-2\eps$,
such a prefix has weight between $\eps$ and $1/2-\eps$.

Arrange the $m-t$ boxes into groups of at most 3 boxes each.
Let $C_1$, $C_2$, $C_3$ be these boxes in one such group.
Split $C_1$ and $C_2$ into 2 boxes each by the above method.
Let the resulting boxes be $C_1', C_1'', C_2', C_2''$ respectively.
Assume \wLoG{} that $v_1(C_j') \le v_1(C_j'')$ for $j \in [2]$.
Pack $C_1', C_2', C_1''$ into 1 bin. It has weight at most $1-\eps$, so it is $\eps$-slacked.
Pack $C_2''$ and $C_3$ into 1 bin. It has weight at most $1-\eps$, so it is $\eps$-slacked.
Therefore, we can convert a group of 3 boxes into 2 $\eps$-slacked bins.

Total bins used is at most
$\left(1 + 2\eps t/(1-2\eps)\right) + 2\ceil{(m-t)/3}
\le 3 + 2m/3 - t\left( \frac{2}{3} - \frac{2\eps}{1-2\eps} \right)$
which is at most $3 + 2m/3$, assuming {$\eps \le 1/5$}.
\end{proof}
The above lemma can also be used for vertical boxes,
i.e. height 1 and width at most $\deltaDense$.

\begin{lemma}
\label{lem:ws-1-pack}
Given a packing of items $I$ into $m$ bins, when $d=1$, we can
round up the width of some non-dense wide and big items in $I$ to a multiple of $\epsLarge^2/4$
and round up the height of some non-dense tall and big items in $I$ to a multiple of $\epsLarge^2/4$
and repack $I$ into $\left(3 + \frac{1}{6} + \frac{\eps}{1-\eps}\right)m + 12$
$\eps$-slacked bins such that
each bin satisfies either \cref{prop:hrnd} or \cref{prop:vrnd}.
\end{lemma}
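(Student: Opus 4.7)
The plan is to follow the strategy of \cref{lem:ws-d-pack} but exploit the case $d=1$ by aggregating the two side-boxes produced by \cref{lem:rnd1} across all input bins, rather than placing each box into a fresh bin. This is what gives rise to the fractional coefficient $1/6$ in the statement, which decomposes as $(2/3)+(1/2)-1$, where $2/3$ and $1/2$ are the bins-per-box rates of \cref{lem:dense-cont-pack} and \cref{lem:non-dense-cont-pack} respectively, and the $-1$ reflects that the $\eps$-slackness overhead on $\{J_k, K_k\}$ gets absorbed into the aggregated packing.

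Concretely, for each $k \in [m]$ I would apply \cref{lem:rnd1} to $B_k$ to obtain two bins $J_k, K_k$ (each satisfying \cref{prop:hrnd} or \cref{prop:vrnd}) and two boxes---one dense, one non-dense---of mutually opposite orientations, each with $v_1 \le 1/2$. The required rounding of non-dense wide, tall, and big items to multiples of $\epsLarge^2/4$ is already performed inside \cref{lem:rnd1}. Enforce $\eps$-slackness on $\{J_k, K_k\}$ via \cref{lem:split-slack} with $|D|=1$; since $v_1(J_k)+v_1(K_k) \le v_1(B_k) \le 1$, at most one of the two is tight, so at most one extra bin per $B_k$ is created, and \cref{prop:hrnd,prop:vrnd} are preserved because items are not moved within a bin. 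Then aggregate the boxes: group the dense boxes by orientation and, within each orientation, separate the heavy ones (weight $\ge \eps$) from the super-light ones (weight $< \eps$); feed the heavy dense ones to \cref{lem:dense-cont-pack} (or its vertical mirror), and merge the super-light dense ones with the non-dense pool (via \cref{lem:light-cont-pack}) so that the $\eps/(1-\eps)$ overhead is charged only once; feed each orientation of the non-dense boxes to \cref{lem:non-dense-cont-pack}. Each output bin from these aggregation steps is $\eps$-slacked by the respective lemma and satisfies \cref{prop:hrnd} or \cref{prop:vrnd} trivially, because it contains only wide-and-small or tall-and-small items.

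The main obstacle is the absorption of the slackness overhead: when $J_k$ (say) is tight, \cref{lem:split-slack} extracts either a single-item piece or a low-weight piece, which must be folded into one of the aggregated slacked bins produced in the box-packing phase without breaking the structural properties. Showing that this can always be arranged---potentially by pairing tight pieces from different input bins with complementary slack in the aggregated bins---is the key calculation, and it converts the naive $3m + (7/6)m$ total into $2m + \bigl(7/6 + \eps/(1-\eps)\bigr)m = \bigl(19/6 + \eps/(1-\eps)\bigr)m$. The additive $+12$ then accommodates the $O(1)$ constants from the at most four independent invocations of \cref{lem:dense-cont-pack,lem:non-dense-cont-pack,lem:light-cont-pack} (one per orientation per weight class).
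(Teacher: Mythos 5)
Your high-level plan---aggregate the two side-boxes from \cref{lem:rnd1} across all bins and feed them to \cref{lem:dense-cont-pack}, \cref{lem:non-dense-cont-pack}, \cref{lem:light-cont-pack}---is the same as the paper's, and your box-packing phase is correct. But the mechanism you propose for the ``$-1$'' savings is a genuine gap, and I don't think it can be repaired.

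You attribute the savings to ``folding'' the output of \cref{lem:split-slack} into the aggregated box-bins. After \cref{lem:split-slack} runs on a tight $J_k$, its output is two fully-packed $\eps$-slacked \emph{bins}, each possibly containing big items arranged geometrically under \cref{prop:hrnd} or \cref{prop:vrnd}. The aggregated box-bins, on the other hand, are built by stacking the small rectangular side-boxes; they contain no big items and have no free rectangular region of matching shape. There is no geometric operation that merges a packed bin containing big items with a box-stacked bin, so the tight pieces cannot be absorbed, complementary slack or not. Your ``naive $3m + (7/6)m$'' has no route down to $2m + (7/6 + \eps/(1-\eps))m$.

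The actual source of the ``$-1$'' is a case split, not a merge. If $v_1(W_k) > \eps$ and $v_1(H_k) > \eps$, then $v_1(J_k) + v_1(K_k) \le 1 - v_1(W_k) - v_1(H_k) < 1 - 2\eps$, so \emph{both} $J_k$ and $K_k$ are already $\eps$-slacked and no call to \cref{lem:split-slack} is needed---only $2$ bins come from $\{J_k,K_k\}$. Those are exactly the bins whose two side-boxes are heavy and hence expensive ($2/3 + 1/2 + \eps/(1-\eps)$), and $2 + 7/6 = 19/6 = 3 + 1/6$. Conversely, if some side-box has weight $\le \eps$, you may indeed need the split (so $3$ bins from $\{J_k,K_k\}$), but then that light box packs at rate $O(\eps)$ via \cref{lem:light-cont-pack}, so the total per such bin is $3 + O(\eps) < 3 + 1/6$. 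The worst case over the four box-weight patterns gives the $3 + 1/6 + \eps/(1-\eps)$ coefficient; there is no transfer of items between the two phases.
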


\begin{proof}
Let $B_1, B_2, \ldots, B_m$ be a packing of items $I$ into $m$ bins.
For each bin $B_k$, we can use \cref{lem:rnd1} to round up some items in $B_k$
and split $B_k$ into bins $J_k$ and $K_k$ and boxes $W_k$ and $H_k$,
where $W_k$ is a horizontal box and $H_k$ is a vertical box,
i.e. the width of $W_k$ is 1 and the height of $H_k$ is 1.

\textbf{Classifying bins:}

We will now classify the bins $B_1, B_2, \ldots, B_m$.

\textbf{Type 1}: $v_1(W_k) \le \eps$ and $v_1(H_k) \le \eps$:

Among $J_k$ and $K_k$, at most 1 bin will have weight more than $1-\eps$.
Use \cref{lem:split-slack} to split it into 2 bins.
So for each original bin of type 1, we get at most 3 $\eps$-slacked bins and 2 boxes,
one horizontal and one vertical, each of total weight at most $\eps$.

Both $J_k$ and $K_k$ satisfy either \cref{prop:hrnd} or \cref{prop:vrnd}.
When we split a bin into multiple bins, the position of items
relative to the bin isn't changed. Therefore, the bins continue to satisfy these properties.

\textbf{Type 2}: $v_1(W_k) > \eps$ and $v_1(H_k) \le \eps$:

$v_1(W_k) > \eps$ implies that $v_1(J_k) \le 1-\eps$ and $v_1(K_k) \le 1-\eps$,
so $J_k$ and $K_k$ are already $\eps$-slacked. Pack $W_k$ in a bin.
Since $v_1(W_k) \le 1/2 \le 1-\eps$, $W_k$ is $\eps$-slacked.
$W_k$ satisfies \cref{prop:vrnd}.
So for each original bin of type 2, we get at most 3 $\eps$-slacked bins
and 1 vertical box of weight at most $\eps$.

\textbf{Type 3}: $v_1(W_k) \le \eps$ and $v_1(H_k) > \eps$:

The analysis is similar to type 2.
For each original bin of type 3, we get at most 3 $\eps$-slacked bins
and 1 horizontal box of weight at most $\eps$.

\textbf{Type 4}: $v_1(W_k) > \eps$ and $v_1(H_k) > \eps$:

$v_1(W_k) > \eps$ implies that $v_1(J_k) \le 1-\eps$ and $v_1(K_k) \le 1-\eps$,
so $J_k$ and $K_k$ are already $\eps$-slacked.
So we have at most 2 $\eps$-slacked bins and 2 boxes of weight at most $1/2$.

\textbf{Repacking boxes:}

We will now try to pack the boxes into bins.
Each of these bins packs some dense boxes and some non-dense boxes.
If multiple dense boxes were packed in a bin, we can use \cref{lem:dense-pack}
to repack them into a single dense box and move that box to an edge of the bin.
Bins that only pack horizontal boxes satisfy \cref{prop:vrnd}.
Bins that only pack vertical boxes satisfy \cref{prop:hrnd}.

Among $B_1, B_2, \ldots, B_m$, let there be $m_k$ bins of type $k$.

The number of $\eps$-slacked bins is at most
$3m_1 + 3m_2 + 3m_3 + 2m_4 \le 3m - m_4$.
We also have $m_1 + m_3$ horizontal boxes and $m_1 + m_2$ vertical boxes
of weight at most $\eps$ each. Since $\deltaDense \le \epsLarge/2 \le \eps$
and $\epsLarge + \epsSmall \le 2\eps/3 + 2\eps^3/9 \le \eps$, each box has the smaller
geometric dimension at most $\eps$.
By \cref{lem:light-cont-pack}, the number of bins we need to pack them is at most
$2 + \frac{\eps}{1-\eps}(2m_1 + m_2 + m_3)$.

We have $m_4$ horizontal boxes and $m_4$ vertical boxes
that each have weight between $\eps$ and $1/2$.
$m_4$ of these are dense boxes and $m_4$ are non-dense boxes.

The non-dense boxes don't have big items. Since $\eps \ge \epsLarge + \epsSmall$,
by \cref{lem:non-dense-cont-pack}, the number of bins needed to pack them is at most
$4 + \left(\frac{1}{2} + \frac{\eps}{1-\eps}\right)m_4$.

By \cref{lem:dense-cont-pack}, we can pack the dense boxes into
$6 + 2m_4/3$ bins, where each bin is $\eps$-slacked.

The total number of bins used is at most
\begin{align*}
& (3m - m_4) + \left(2 + \frac{\eps}{1-\eps}(2m_1 + m_2 + m_3)\right)
 + \left(4 + \left(\frac{1}{2} + \frac{\eps}{1-\eps}\right)m_4\right)
+ \left(6 + \frac{2}{3}m_4\right)
\\ &= 12 + \left(3 + \frac{1}{6} + \frac{\eps}{1-\eps}\right)m
    + \frac{\eps}{1-\eps}m_1 + \frac{m_4-m}{6}
\\ &\le 12 + \left(3 + \frac{1}{6} + \frac{\eps}{1-\eps}\right)m
    - \left(\frac{1}{6} - \frac{\eps}{1-\eps}\right)m_1
\tag{$m_4 \le m - m_1$}
\\ &\le 12 + \left(3 + \frac{1}{6} + \frac{\eps}{1-\eps} \right)m.
\tag{$\eps \le 1/8$}
\end{align*}
\end{proof}

\begin{lemma}
\label{lem:ws-1-pack-rot}
Given a packing of items $I$ into $m$ bins, when $d=1$ and item rotations are allowed, we can
round up the width of some non-dense wide and big items in $I$ to a multiple of $\epsLarge^2/4$
and round up the height of some non-dense tall and big items in $I$ to a multiple of $\epsLarge^2/4$
and repack $I$ into $\left(3 + \frac{\eps}{1-\eps}\right)m + 1$
$\eps$-slacked bins such that each bin satisfies \cref{prop:hrnd}.
\end{lemma}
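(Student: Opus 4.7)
The plan is to mirror the structure of the proof of \cref{lem:ws-1-pack}, but exploit the simplification afforded by rotation: \cref{lem:rnd1-rot} outputs only two bins and a single vertical box $H_k$ per input bin (no horizontal box $W_k$ to worry about). So first I would apply \cref{lem:rnd1-rot} to each $B_k$, obtaining bins $J_k,K_k$ already satisfying \cref{prop:hrnd}, and a vertical box $H_k$ of height $1$, width $\epsLarge+\epsSmall$, weight at most $1/2$, containing only tall and small non-dense items (no dense, wide, or big items, so $H_k$ also trivially satisfies \cref{prop:hrnd} when placed in a bin).

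Next I would classify each $B_k$ by $v_1(H_k)$. For \emph{Type 2} bins where $v_1(H_k)>\eps$, the identity $v_1(J_k)+v_1(K_k)+v_1(H_k)\le 1$ immediately gives $v_1(J_k),v_1(K_k)\le 1-\eps$, so $J_k$ and $K_k$ are already $\eps$-slacked; I place $H_k$ alone in a new bin, which is $\eps$-slacked since $v_1(H_k)\le 1/2\le 1-\eps$. That contributes $3$ $\eps$-slacked bins per Type 2 input. For \emph{Type 1} bins where $v_1(H_k)\le\eps$, at most one of $J_k,K_k$ can have weight exceeding $1-\eps$; I apply \cref{lem:split-slack} with $\delta=\eps$ and $D=\{1\}$ to that one bin to split it into at most $2$ $\eps$-slacked bins. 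That contributes $3$ $\eps$-slacked bins per Type 1 input, and leaves the vertical box $H_k$ to be repacked later.

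To repack the $m_1$ leftover Type 1 boxes, I would invoke \cref{lem:light-cont-pack} (its vertical-box variant) with parameter $k=1$. This requires the smaller dimension of each box, namely $\epsLarge+\epsSmall$, to be at most $\eps$, which follows from $\epsLarge\le\delta_0\le 2\eps/3$ and $\epsSmall\le f(\epsLarge)\le \eps\epsLarge^2/2$, a routine check for $\eps\le 1/8$. The lemma then produces $1 + m_1\cdot\eps/(1-\eps)$ $\eps$-slacked bins, each containing only tall and small non-dense items and hence satisfying \cref{prop:hrnd}. Summing, the total bin count is
\[
3m_1 + 3m_2 + 1 + \frac{\eps}{1-\eps}\, m_1 \;\le\; \Bigl(3 + \frac{\eps}{1-\eps}\Bigr)m + 1,
\]
and every produced bin is $\eps$-slacked and satisfies \cref{prop:hrnd}.

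The proof is routine once the right case split is chosen; there is no real obstacle because the absence of horizontal boxes $W_k$ (compared to \cref{lem:ws-1-pack}) eliminates the need for the elaborate three-box grouping argument based on \cref{lem:non-dense-cont-pack,lem:dense-cont-pack}. The only mild check is that $\epsLarge+\epsSmall\le\eps$ so that \cref{lem:light-cont-pack} applies, which is immediate from the chosen parameters.
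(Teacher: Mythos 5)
Your proof is correct and takes essentially the same approach as the paper, which only gives a two-line sketch for this lemma (``Using techniques from the proof of Lemma \ref{lem:ws-1-pack}, we get at most $3m$ $\eps$-slacked bins and at most $m$ vertical boxes of weight at most $\eps$, then apply Lemma \ref{lem:light-cont-pack}''). Your explicit Type~1/Type~2 case split, the use of Lemma \ref{lem:split-slack} with $D=\{1\}$ for the at-most-one heavy bin, the handling of heavy boxes by placing them alone in a bin, and the final invocation of Lemma \ref{lem:light-cont-pack} with $k=1$ are precisely the intended instantiation of that sketch.
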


\begin{proof}[Proof sketch]
Using techniques from the proof of \cref{lem:ws-1-pack},
we get at most $3m$ $\eps$-slacked bins and at most $m$ vertical boxes,
where each box has total weight at most $\eps$.
Using \cref{lem:light-cont-pack}, we get the desired results.
\end{proof}

We can summarize \cref{lem:ws-d-pack,lem:ws-d-pack-rot,lem:ws-1-pack,lem:ws-1-pack-rot} as follows:
\begin{theorem}
\label{thm:slack-pack}
Given a packing of $I$ into $m$ bins, we can
round up the width of some non-dense wide and big items in $I$ to a multiple of $\epsLarge^2/4$
and round up the height of some non-dense tall and big items in $I$ to a multiple of $\epsLarge^2/4$
and repack $I$ into at most $am + b$ $\eps$-slacked bins such that
each bin satisfies either \cref{prop:hrnd} or \cref{prop:vrnd}.
Here the values of $a$ and $b$ depend on the value of $d$ and whether item rotations are allowed.
See \cref{table:slack-pack-ab}.
\begin{table}[H]
\centering
\caption{Values of $a$ and $b$ for \cref{thm:slack-pack}.}
\begin{tabular}{|c|c|c|c|}
\hline & $a$ & $b$ & Lemma
\\ \hline rotations forbidden
    & $d+4$
    & $0$
    & \cref{lem:ws-d-pack}
\\ \hline rotations allowed
    & $d+3$
    & $0$
    & \cref{lem:ws-d-pack-rot}
\\ \hline rotations forbidden and $d=1$
    & ${\displaystyle 3 + \frac{1}{6} + \frac{\eps}{1-\eps}}$
    & $12$
    & \cref{lem:ws-1-pack}
\\ \hline rotations allowed and $d=1$
    & ${\displaystyle 3 + \frac{\eps}{1-\eps}}$
    & $1$
    & \cref{lem:ws-1-pack-rot}
\\ \hline
\end{tabular}
\label{table:slack-pack-ab}
\end{table}
\end{theorem}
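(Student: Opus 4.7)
The plan is straightforward: \cref{thm:slack-pack} is a summary/compilation result, so I would prove it by simply invoking the four lemmas already established in this subsection and reading off the constants $a$ and $b$ from each. The four cases in the table correspond exactly to the four distinct scenarios previously analyzed, and each lemma provides precisely the statement the theorem claims for its case.

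More concretely, I would structure the proof as a four-way case split on the pair (rotation allowed?, is $d=1$?). In the case where rotations are forbidden and $d$ is arbitrary, I would apply \cref{lem:ws-d-pack} directly, obtaining $am+b = (d+4)m$, so $a = d+4$ and $b = 0$. In the case where rotations are allowed and $d$ is arbitrary, I would apply \cref{lem:ws-d-pack-rot} to obtain $a = d+3$ and $b = 0$. For the special case $d=1$ with rotations forbidden, \cref{lem:ws-1-pack} gives the bound $\bigl(3 + \tfrac{1}{6} + \tfrac{\eps}{1-\eps}\bigr)m + 12$. Finally, for $d=1$ with rotations allowed, \cref{lem:ws-1-pack-rot} gives $\bigl(3 + \tfrac{\eps}{1-\eps}\bigr)m + 1$. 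In each case, the conclusion that the repacked bins are $\eps$-slacked and satisfy either \cref{prop:hrnd} or \cref{prop:vrnd} (or just \cref{prop:hrnd} when rotations are allowed, which in particular satisfies ``either \cref{prop:hrnd} or \cref{prop:vrnd}'') is part of the invoked lemma's conclusion.

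There is essentially no obstacle here because the genuinely nontrivial work, namely the cutting-strip argument, the dense-item repacking via \cref{lem:dense-pack}, the slackness-creating partitioning of \cref{lem:split-slack}, and the $d=1$-specific box-merging via \cref{lem:light-cont-pack,lem:non-dense-cont-pack,lem:dense-cont-pack}, has already been carried out inside the four referenced lemmas. The only minor bookkeeping point is to observe that a packing satisfying \cref{prop:hrnd} trivially satisfies ``either \cref{prop:hrnd} or \cref{prop:vrnd},'' so the rotation-allowed cases fit the unified statement of the theorem. Hence the proof reduces to a one-line reference per row of \cref{table:slack-pack-ab}.
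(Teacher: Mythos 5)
Your proposal is correct and matches the paper's approach exactly: the paper explicitly introduces \cref{thm:slack-pack} with the phrase ``We can summarize \cref{lem:ws-d-pack,lem:ws-d-pack-rot,lem:ws-1-pack,lem:ws-1-pack-rot} as follows,'' so the theorem is, by design, nothing more than a four-way case split reading off $a$ and $b$ from those lemmas. Your bookkeeping observation that satisfying \cref{prop:hrnd} alone suffices for the disjunctive conclusion in the rotation-allowed rows is the right (and only) detail worth noting.
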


\subsection{Rounding Weights}
\label{sec:rbbp-2d-extra:round-weights}

\begin{definition}[Weight classes]
\label{defn:weight-classes}
Two items $i_1$ and $i_2$ are said to belong to the same weight class iff
one of these conditions hold:
\begin{tightemize}
\item $i_1$ and $i_2$ are big and $\forall j \in [d], v_j(i_1) = v_j(i_2)$
\item $i_1$ and $i_2$ are non-dense and wide
    and $\forall j \in [d], v_j(i_1)/h(i_1) = v_j(i_2)/h(i_2)$.
\item $i_1$ and $i_2$ are non-dense and tall
    and $\forall j \in [d], v_j(i_1)/w(i_1) = v_j(i_2)/w(i_2)$.
\item $i_1$ and $i_2$ are non-dense and small
    and $\forall j \in [d], v_j(i_1)/a(i_1) = v_j(i_2)/a(i_2)$.
\item $i_1$ and $i_2$ are dense and \hyperref[defn:heavy-light]{light}
    and $\forall j \in [d], v_j(i_1)/v_{\max}(i_1) = v_j(i_2)/v_{\max}(i_2)$.
\item $i_1$ and $i_2$ are dense and \hyperref[defn:heavy-light]{heavy}
    and $\forall j \in [d], v_j(i_1) = v_j(i_2)$.
\end{tightemize}
\end{definition}

Big items that have the same geometric dimensions and the same weight class are identical.
We will round the geometric dimensions of dense items to 0, so heavy items
of the same weight class would be identical.

\begin{definition}[Slicing and fractional packing]
\label{defn:slicing}
Let $I$ be a set of items. $\widehat{I}$ is called a slicing of $I$ iff
$\widehat{I}$ can be obtained from $I$ by one or more of the following operations:
\begin{tightemize}
\item Horizontally slicing a non-dense wide item
(i.e. if a wide item $i$ is sliced into items $i_1$ and $i_2$, then
$w(i) = w(i_1) = w(i_2)$ and $h(i) = h(i_1) + h(i_2)$).
\item Vertically slicing a non-dense tall item
(i.e. if a tall item $i$ is sliced into items $i_1$ and $i_2$, then
$h(i) = h(i_1) = h(i_2)$ and $w(i) = w(i_1) + w(i_2)$).
\item Slicing a non-dense small item in one or both dimensions.
\item Slicing a light dense item of zero area.
\end{tightemize}

A packing of $\widehat{I}$ into bins is called a fractional packing of $I$.
\end{definition}

Wide non-dense items of the same width and of the same weight class
can be used interchangeably while trying to get a fractional packing.
Similarly, tall non-dense items of the same height and of the same weight class are
interchangeable, small non-dense items of the same weight class are interchangeable
and light dense items of zero area and the same weight class are interchangeable.

We will now see how to round the weights of items so that they belong to a constant
number of weight classes.

\subsubsection{Rounding Weights of Non-Dense Items}

\begin{transformation}
\label{trn:wround-non-dense}
Given an item $i$, $\forall j \in [d]$,
\begin{tightemize}
\item If $i$ is big, round up $v_j(i)$ to a positive multiple of $\epsLarge^2\eps/8$.
\item If $i$ is wide and non-dense, round up $v_j(i)$ to a positive multiple of
    $h(i) \epsLarge\eps/8$.
\item If $i$ is tall and non-dense, round up $v_j(i)$ to a positive multiple of
    $w(i) \epsLarge\eps/8$.
\item If $i$ is small and non-dense, round up $v_j(i)$ to a positive multiple of
    $a(i) \eps/8$.
\end{tightemize}
\end{transformation}

\begin{lemma}
\Cref{trn:wround-non-dense} is valid, i.e. for any item $i$,
$\forall j \in [d], v_j(i) \le 1$ after the transformation.
\end{lemma}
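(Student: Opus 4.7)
The plan is to prove the lemma by case analysis on the item type (big, wide non-dense, tall non-dense, small non-dense), bounding $v_j(i)$ after the round-up in each case.

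The big case is the easiest. Since big items have $a(i) > \epsLarge^2$ and $v_{\max}(i) \le 1$ (because $i$ fits in a unit bin), we get $v_{\max}(i)/a(i) < 1/\epsLarge^2$, so big items are automatically non-dense, consistent with the remark in the text. Because $\epsLarge^{-1} \in \mathbb{Z}$ and $\eps^{-1} \in 2\mathbb{Z}$, the quantity $8/(\epsLarge^2\eps)$ is a positive integer, so $1$ is itself a multiple of the rounding unit $\epsLarge^2\eps/8$. Hence rounding a value in $[0,1]$ up to the next multiple of $\epsLarge^2\eps/8$ keeps the value in $[0,1]$.

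For the non-dense wide, tall, and small cases, the trivial bound $v_j(i)\le 1$ is insufficient because the rounding unit now depends on the item's own dimensions and need not divide $1$. I would instead invoke non-density: for a wide non-dense item, $v_j(i) \le v_{\max}(i) \le a(i)/\epsLarge^2 = w(i)h(i)/\epsLarge^2 \le h(i)/\epsLarge^2$, and the round-up adds at most $h(i)\epsLarge\eps/8$, giving a rounded value at most $h(i)(\epsLarge^{-2}+\epsLarge\eps/8)$. Using $h(i)\le \epsSmall$ and the imposed condition $\epsSmall = f(\epsLarge) \le \eps\epsLarge^2/2$ shows this is at most $\eps/2 + \eps^2\epsLarge^3/16$, which is well below $1$. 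The tall case is symmetric under swapping $w$ and $h$, and the small case follows the same template using $a(i)\le \epsSmall^2$ to bound the rounded value by $\epsSmall^2(\epsLarge^{-2}+\eps/8) \le \eps^2\epsLarge^2/4 + \eps^3\epsLarge^4/32 \ll 1$.

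The main obstacle, which is conceptual rather than technical, is noticing that two distinct arguments are required: one based on ``$1$ is a multiple of the unit'' for big items (where the unit is uniform), and another based on the density cap $v_j(i)\le a(i)/\epsLarge^2$ for the non-dense wide/tall/small items (where the unit depends on the item and need not divide $1$). Once this split is made, the verification reduces to routine arithmetic using $\epsSmall \le \eps\epsLarge^2/2$ and the standing assumption $\eps \le 1/8$.
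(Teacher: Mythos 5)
Your proof is correct and follows essentially the same two-case structure as the paper's: for big items, observe that $8/(\epsLarge^2\eps)\in\mathbb{Z}$ so $1$ is a multiple of the rounding unit; for wide/tall/small non-dense items, use the density cap $v_j(i)\le a(i)/\epsLarge^2$ together with $\epsSmall\le\eps\epsLarge^2/2$ to bound both the pre-rounding value and the added increment well below $1$. The only cosmetic difference is that you bound the non-dense case directly by $\eps/2+\eps^2\epsLarge^3/16$, whereas the paper uses the looser $(1-\eps)+\eps/8$; both are valid and rely on the same facts.
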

\begin{proof}
Since $\epsLarge^{-1}, \eps^{-1} \in \mathbb{Z}$,
the transformed weight of a big item can be at most 1.

Since $\epsSmall \le (1-\eps)\epsLarge^2$,
the weight of a non-dense wide/tall/small item is at most $\epsSmall/\epsLarge^2 \le 1-\eps$
and rounding it up can only increase it by at most $\eps/8$.
\end{proof}

\begin{lemma}
\label{lem:wround-non-dense-slackness}
Let a bin $J$ be $\mu$-slacked, for $\eps/8 \le \mu \le \eps$.
Then after applying \cref{trn:wround-non-dense},
the bin will be $(\mu - \eps/8)$-slacked.
\end{lemma}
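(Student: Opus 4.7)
The plan is to analyze the three cases in the definition of $\mu$-slackness separately, and show that in each case the corresponding condition for $(\mu - \eps/8)$-slackness continues to hold after \cref{trn:wround-non-dense}.

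First, I would dispose of the easy cases. If $|J| = 1$, then $J$ is trivially $(\mu - \eps/8)$-slacked regardless of the transformation. If $|J| = 2$ and $J$ consists only of dense items with $v_{\max}(i) \le 1/2$, then \cref{trn:wround-non-dense} leaves every item in $J$ unchanged (the transformation only modifies non-dense items), so the same condition still witnesses $(\mu - \eps/8)$-slackness.

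The main case is when $\forall j \in [d],\ v_j(J) \le 1-\mu$, and here I need to show that the transformation increases each $v_j(J)$ by at most $\eps/8$. The key step is a per-item bound: for each non-dense item $i \in J$, the increase $\Delta v_j(i)$ caused by \cref{trn:wround-non-dense} is at most $a(i)\eps/8$. Specifically, for a big item $i$ the rounding step size is $\epsLarge^2 \eps/8 \le a(i)\eps/8$ since $a(i) \ge \epsLarge^2$; for a wide non-dense item the step size is $h(i)\epsLarge \eps/8 \le a(i)\eps/8$ since $w(i) > \epsLarge$; the tall non-dense case is symmetric; and for small non-dense items the step size is exactly $a(i)\eps/8$. (Dense items in the bin are not touched.) Summing over items in $J$, and using $a(J) \le 1$ since $J$ is packed in a unit bin, I get
\[ \Delta v_j(J) \;\le\; \sum_{i \in J,\ i \text{ non-dense}} a(i)\,\eps/8 \;\le\; \eps/8. \]
Thus after the transformation, $v_j(J) \le (1-\mu) + \eps/8 = 1 - (\mu - \eps/8)$ for every $j \in [d]$, which is exactly the first condition for $(\mu - \eps/8)$-slackness.

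There is no real obstacle here; the only subtlety is being careful that the per-item rounding step for each category of non-dense item is bounded by $a(i)\eps/8$, which relies on the case definitions of wide/tall/big (using $w(i) > \epsLarge$ or $h(i) > \epsLarge$ or $a(i) \ge \epsLarge^2$). One should also note that the hypothesis $\mu \le \eps$ is only used implicitly to ensure that $\mu - \eps/8 \ge 0$ is a meaningful slackness parameter (so the earlier validity lemma for \cref{trn:wround-non-dense} applies), while the hypothesis $\mu \ge \eps/8$ guarantees that the resulting slackness is nonnegative.
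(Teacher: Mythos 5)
Your proof is correct and follows essentially the same approach as the paper: bound the per-item increase in $v_j$ under \cref{trn:wround-non-dense} by $a(i)\eps/8$ (the paper phrases this by grouping items by type and using area lower bounds $\epsLarge^2 p + \epsLarge H + \epsLarge W + A \le a(J) \le 1$, which is the same estimate), and conclude $\Delta v_j(J) \le \eps/8$. One small improvement over the paper's write-up: you explicitly handle the third $\mu$-slackness case (two dense items), noting the transformation leaves dense items untouched, whereas the paper's proof asserts "multiple items $\Rightarrow v_j(J) \le 1-\mu$" without acknowledging that case.
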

\begin{proof}
If the bin contains a single item, it will remain $\mu$-slacked after the transformation.

Suppose the bin contains multiple items, then $\forall j \in [d], v_j(J) \le 1-\mu$.
Let there be $p$ big items.
Let the total height of wide non-dense items be $H$.
Let the total width of tall non-dense items be $W$.
Let the total area of small non-dense items be $A$.

The total area of big items is at least $p\epsLarge^2$,
of wide items is at least $\epsLarge H$
and of tall items is at least $\epsLarge W$.
Since the total area of items in $J$ is at most 1,
\[ \epsLarge^2p + \epsLarge H + \epsLarge W + A \le 1. \]
The total increase in $v_j(J)$ is at most
\[ \frac{\eps}{8}(\epsLarge^2p + \epsLarge H + \epsLarge W + A) \le \frac{\eps}{8}. \]
Therefore, the resulting bin is $(\mu - \eps/8)$-slacked.
\end{proof}

\begin{observation}
\label{obs:non-dense-vinc}
Since we're rounding up weights of non-dense items in \cref{trn:wround-non-dense},
some items may not continue to satisfy the non-denseness property,
i.e. $v_j(i)/a(i)$ may exceed $1/\epsLarge^2$ for some items $i$ and some $j \in [d]$.

However, this will not affect us much. Formally:
\begin{enumerate}[a]
\item Big items will remain non-dense, since $a(i) > \epsLarge^2$ and $v_{\max}(i) \le 1$.
\item Small items will remain non-dense, since $v_{\max}(i)/a(i)$ will be rounded up to
a multiple of $\eps/8$, and $\eps/8$ divides $1/\epsLarge^2$.
\item For wide items, $v_{\max}(i)/a(i)$ may rise to at most $1/\epsLarge^2 + \eps/8$.
Furthermore, $v_{\max}(i)/h(i)$ will be rounded to a multiple of $\epsLarge\eps/8$,
and $\epsLarge\eps/8$ divides $1/\epsLarge^2$, so $v_{\max}(i)/h(i)$ will continue to be
at most $1/\epsLarge^2$.
\item \label{item:non-dense-vinc:tall} For tall items,
$v_{\max}(i)/a(i)$ may rise to at most $1/\epsLarge^2 + \eps/8$.
Furthermore, $v_{\max}(i)/w(i)$ will be rounded to a multiple of $\epsLarge\eps/8$,
and $\epsLarge\eps/8$ divides $1/\epsLarge^2$, so $v_{\max}(i)/w(i)$ will continue to be
at most $1/\epsLarge^2$.
\end{enumerate}
\end{observation}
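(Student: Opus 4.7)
The plan is to verify each of the four bullets in turn. The common pattern in all four is a two-ingredient argument: (i) bound the relevant ratio before rounding using the non-denseness assumption, and (ii) argue that the upward rounding cannot push the ratio past $1/\epsLarge^2$ (or past it by more than $\eps/8$), using a divisibility property of the rounding granularity. So as a preliminary I would record that, under the standing conventions $\epsLarge^{-1}\in\mathbb{Z}$ and $\eps^{-1}\in 2\mathbb{Z}$, each of $\eps/8$, $\epsLarge\eps/8$, and $\epsLarge^2\eps/8$ divides $1/\epsLarge^2$, since $8/(\eps \epsLarge^k)$ is an integer for $k\in\{0,1,2\}$.

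Cases (a) and (b) are essentially immediate. For (a), I would simply note $a(i)>\epsLarge^2$ for big items and recall from the preceding lemma that $v_j(i)\le 1$ after the transformation, so $v_j(i)/a(i)<1/\epsLarge^2$. For (b), \cref{trn:wround-non-dense} replaces $v_j(i)$ by the smallest multiple of $a(i)\eps/8$ at least as large as it, so $v_j(i)/a(i)$ after rounding is the smallest multiple of $\eps/8$ at least as large as the original $v_j(i)/a(i)\le 1/\epsLarge^2$; since $\eps/8$ divides $1/\epsLarge^2$, rounding up cannot exceed $1/\epsLarge^2$.

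Cases (c) and (d) are mirror images, so I would only write (c) and invoke symmetry for (d). The subtlety is that the rounding granularity $h(i)\epsLarge\eps/8$ is natural for the per-height ratio $v_j(i)/h(i)$ rather than for the area density. I would first show that $v_j(i)/h(i)\le 1/\epsLarge^2$ survives the rounding: before rounding, the non-denseness of $i$ gives $v_j(i)\le a(i)/\epsLarge^2 = w(i)h(i)/\epsLarge^2 \le h(i)/\epsLarge^2$, and then, by the same divisibility-based argument as in (b), the new $v_j(i)/h(i)$ remains a multiple of $\epsLarge\eps/8$ not exceeding $1/\epsLarge^2$. For the weaker area-density bound, I would observe that $v_j(i)$ increases by at most $h(i)\epsLarge\eps/8$, which translates to an increase in $v_j(i)/a(i)$ of at most $\epsLarge\eps/(8w(i))\le \eps/8$ because $w(i)>\epsLarge$ for wide items.

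The only mild obstacle here is the bookkeeping for the divisibility conditions and keeping clear which ratio is the \emph{tight} one in each case versus which is merely controlled up to an additive $\eps/8$; the rest is routine arithmetic. Importantly, the two separate statements made in (c) and (d) --- a tight bound on the ``per-side'' ratio but a slightly relaxed bound on the area density --- match exactly what will be needed downstream when one wants to treat rounded wide or tall items uniformly in each weight class without the density blowing up by more than a tiny additive amount.
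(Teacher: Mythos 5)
Your proposal is correct and follows the same reasoning the paper encodes inline within the observation: for (a) a direct bound from $a(i)>\epsLarge^2$ and $v_{\max}(i)\le 1$; for (b), (c), (d) the divisibility of the rounding granularity into $1/\epsLarge^2$ (enabled by $\eps^{-1},\epsLarge^{-1}\in\mathbb{Z}$) ensures the per-side ratio does not cross $1/\epsLarge^2$, while the additive increase to the area density for wide/tall items is bounded by $\epsLarge\eps/(8w(i))\le\eps/8$ (resp.\ $\epsLarge\eps/(8h(i))\le\eps/8$) using $w(i)>\epsLarge$ (resp.\ $h(i)>\epsLarge$). Nothing is missing and no step would fail; the proof is a faithful expansion of the paper's argument.
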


Even if $v_{\max}(i)/a(i)$ of some non-dense items exceeds $1/\epsLarge^2$
after \cref{trn:wround-non-dense}, we will continue to consider them non-dense items.

\begin{lemma}
\label{lem:num-weight-classes}
Define
$n_{\bwc} \defeq \left(8/(\epsLarge^2\eps)\right)^d,
 n_{\wwc} \defeq \left(8/(\epsLarge^3\eps)\right)^d,
 n_{\swc} \defeq \left(8/(\epsLarge^2\eps)\right)^d$.
After \cref{trn:wround-non-dense}, the number of weight classes
of big items is at most $n_{\bwc}$,
of wide non-dense items is at most $n_{\wwc}$,
of tall non-dense items is at most $n_{\wwc}$
and of small non-dense items is at most $n_{\swc}$.
\end{lemma}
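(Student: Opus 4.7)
The plan is a straightforward counting argument: for each of the four item categories, the weight class is determined by a $d$-dimensional ``signature vector'' (one coordinate per vector dimension $j \in [d]$), and Transformation~\ref{trn:wround-non-dense} constrains each coordinate of the signature to lie in a finite arithmetic progression. Multiplying the coordinate counts across the $d$ dimensions yields the claimed bounds.

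First I would handle big items. Per Definition~\ref{defn:weight-classes}, the weight class is determined by the tuple $(v_1(i), \ldots, v_d(i))$. After Transformation~\ref{trn:wround-non-dense}, each $v_j(i)$ is a positive multiple of $\epsLarge^2 \eps/8$ that is at most $1$ (the transformation is valid). Since $\epsLarge^{-1}, \eps^{-1} \in \mathbb{Z}$, the number of admissible values per coordinate is at most $8/(\epsLarge^2 \eps)$, giving at most $(8/(\epsLarge^2 \eps))^d = n_{\bwc}$ weight classes. For small non-dense items, the signature coordinate is $v_j(i)/a(i)$; after rounding this is a positive multiple of $\eps/8$, and by Observation~\ref{obs:non-dense-vinc}(b) it remains at most $1/\epsLarge^2$. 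Since $\eps/8$ divides $1/\epsLarge^2$, each coordinate has at most $(1/\epsLarge^2)/(\eps/8) = 8/(\epsLarge^2 \eps)$ admissible values, giving at most $n_{\swc}$ classes.

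Next I would handle wide non-dense items (tall is symmetric). The signature coordinate is $v_j(i)/h(i)$, which after rounding is a positive multiple of $\epsLarge \eps/8$. For the upper bound: before rounding, non-denseness gives $v_j(i)/a(i) \le 1/\epsLarge^2$, so $v_j(i)/h(i) = w(i) \cdot v_j(i)/a(i) \le 1/\epsLarge^2$ since $w(i) \le 1$. By Observation~\ref{obs:non-dense-vinc}(c), $\epsLarge\eps/8$ divides $1/\epsLarge^2$, so rounding up preserves the bound $v_j(i)/h(i) \le 1/\epsLarge^2$. Thus the number of admissible values per coordinate is at most $(1/\epsLarge^2)/(\epsLarge \eps/8) = 8/(\epsLarge^3 \eps)$, giving at most $(8/(\epsLarge^3 \eps))^d = n_{\wwc}$ classes; the same calculation, with $w(i)$ swapping roles with $h(i)$, works for tall non-dense items via Observation~\ref{obs:non-dense-vinc}(d).

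The main (and essentially only non-routine) point is justifying the upper bound on the signature coordinates for the non-dense wide/tall/small items, since Transformation~\ref{trn:wround-non-dense} may inflate $v_j(i)/a(i)$ slightly above $1/\epsLarge^2$. The proof leans on Observation~\ref{obs:non-dense-vinc}, whose content is exactly that the divisibility of $1/\epsLarge^2$ by the rounding granularities $\eps/8$ and $\epsLarge\eps/8$ forces the quantities $v_j(i)/a(i)$, $v_j(i)/h(i)$, and $v_j(i)/w(i)$ (whichever is relevant for defining the weight class) to remain bounded by $1/\epsLarge^2$ after rounding. Once that is in hand, the rest is arithmetic.
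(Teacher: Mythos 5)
Your proof is correct. The paper states Lemma~\ref{lem:num-weight-classes} without proof, treating it as a routine consequence of \cref{trn:wround-non-dense} and \cref{obs:non-dense-vinc}, and your argument is exactly the intended one: for each category the weight class is a $d$-tuple whose coordinates are positive multiples of the rounding granularity ($\epsLarge^2\eps/8$, $\epsLarge\eps/8$, or $\eps/8$) bounded above by $1$ or $1/\epsLarge^2$, with the divisibility facts from $\epsLarge^{-1},\eps^{-1}\in\mathbb{Z}$ ensuring the bound survives the round-up; you also correctly identify that the only non-trivial step is the preserved upper bound on $v_j(i)/a(i)$, $v_j(i)/h(i)$, $v_j(i)/w(i)$, which is precisely what \cref{obs:non-dense-vinc} provides.
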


\subsubsection{Rounding Weights of Dense Items}

\begin{transformation}
\label{trn:round-to-0}
For item $i$, if $i$ is non-dense, do nothing. If $i$ is dense, set $w(i)$ and $h(i)$ to 0 and
for each $j \in [d]$, if $v_j(i) \le (\eps/8d) v_{\max}(i)$, then set $v_j(i)$ to 0.
\end{transformation}

Since \cref{trn:round-to-0} rounds down weights,
we need to prove that we can easily undo this transformation.
\begin{lemma}
\label{lem:round-to-0}
Let $J$ be a set of items.
Let $J'$ be the items obtained by applying \cref{trn:round-to-0} to $J$.
Suppose we're given a packing of $J'$ into a bin that satisfies
\cref{prop:hrnd}\ref{item:hrnd:dense} and is $\mu$-slacked, for some $\mu \ge \eps/8$.

Then there is a polynomial-time algorithm to convert the packing of $J'$
into a packing of $J$ that satisfies \cref{prop:hrnd}\ref{item:hrnd:dense},
is $(\mu - \eps/8)$-slacked, and the position of non-dense items in the packing of $J$
is the same as the position of non-dense items in the packing of $J'$.
\end{lemma}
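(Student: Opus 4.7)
The plan is to place the non-dense items exactly where they sit in the given packing of $J'$ and to repack the original dense items of $J$ inside the reserved strip $S^{(R')}$. Since \cref{trn:round-to-0} leaves non-dense items unchanged, this yields the required ``same-position'' guarantee for non-dense items; and by \cref{prop:hrnd}\ref{item:hrnd:dense}, the strip $S^{(R')}$ (of width $\epsLarge/2$ and height $1$) is free of non-dense items and so is fully available to hold the dense items of $J$.

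To pack the dense items of $J$ inside $S^{(R')}$, I would invoke the tall-and-small analog of \cref{lem:dense-pack}: NFDH packs them in polynomial time into a box of height $1$ and width $\deltaDense = 2d\epsLarge^2 + \epsSmall$, and since $\deltaDense \le \epsLarge/2$ this box fits inside $S^{(R')}$. To apply that lemma I need that the total weight of the dense items in $J$ is at most $1$ in every coordinate, which will follow from the slackness calculation below.

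The key quantitative step is bounding the per-coordinate weight growth when the zeroed-out coordinates of dense items are restored. For each dense item $i$ and each $j \in [d]$, the coordinate that gets zeroed satisfies $v_j(i) \le (\eps/8d) v_{\max}(i)$, so the per-item increase is at most $(\eps/8d) v_{\max}(i)$. Summing over the dense items in $J$ and using $v_{\max}(i) \le \sum_{k=1}^{d} v_k(i)$ together with $v_k(J') \le 1$ gives
\[ v_j(J) - v_j(J') \;\le\; \frac{\eps}{8d}\sum_{i \textrm{ dense in } J} v_{\max}(i) \;\le\; \frac{\eps}{8d}\sum_{k=1}^{d} v_k(J') \;\le\; \frac{\eps}{8}. \]
If the $\mu$-slackness of $J'$ holds by the first clause of the definition, then $v_j(J') \le 1-\mu$ and hence $v_j(J) \le 1-(\mu-\eps/8)$, giving $(\mu-\eps/8)$-slackness of $J$; the two degenerate clauses ($|J'|=1$, and $|J'|=2$ with only dense items whose $v_{\max}$ is at most $1/2$) transfer verbatim to $J$, since neither the cardinality nor the value $v_{\max}(i)$ for any dense item changes under \cref{trn:round-to-0}. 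The same $\eps/8$ bound also certifies that $v_j$ summed over the dense items of $J$ is at most $1$, which is exactly the hypothesis required by the dense-packing lemma.

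The main obstacle I anticipate is being precise about the \emph{shape} of the dense items of $J$: the hypothesis tells us only that in the packing of $J'$ they appear as points inside $S^{(R')}$, so we must argue from the setting in which this lemma is invoked (see \cref{sec:rbbp-2d-extra:classify}) that dense items in $J$ are indeed tall or small, and not wide or big, so that the tall-and-small analog of \cref{lem:dense-pack} is applicable. Once this is in hand, the rest is a bookkeeping argument: non-dense items stay put, dense items are placed by NFDH inside the reserved strip, \cref{prop:hrnd}\ref{item:hrnd:dense} holds by construction, and the $\eps/8$ bound on the weight increase delivers the promised slackness.
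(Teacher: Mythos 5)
Your proof is correct and follows essentially the same route as the paper's: fix the non-dense items, repack the dense items into $S^{(R')}$ via the tall-and-small analog of \cref{lem:dense-pack} (the box of width $\deltaDense \le \epsLarge/2$ and height $1$ fits), bound the per-coordinate weight increase by $\eps/8$ via $v_{\max}(J') \le d$ (which is \cref{claim:wmax-s-lim}, proved inline by you), and observe that the degenerate slackness clauses are preserved since \cref{trn:round-to-0} changes neither the cardinality nor $v_{\max}(i)$ of any dense item. The shape concern you flag at the end is exactly the implicit contextual assumption the paper relies on without comment: the lemma is invoked only on division-1 bins of a \semiStrucHyp{} packing, which by definition exclude the wide dense items $D^{*,2}$, so all dense items to be repacked are tall or small.
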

(Analogous lemma holds for \cref{prop:vrnd}\ref{item:vrnd:dense})
\begin{proof}
By \cref{lem:dense-pack}, we can always pack dense items in $J$ in polynomial time
into a box of height 1 and width $\deltaDense$.
Since $\deltaDense \le \epsLarge/2$, this box fits in $S^{(R')}$.
Therefore, \cref{prop:hrnd}\ref{item:hrnd:dense} is satisfied.

Now we will prove that the packing of $J$ is $(\mu - \eps/8)$-slacked.
There are 3 cases to consider:

\textbf{Case 1}: $\forall j \in [d], v_j(J') \le 1-\mu$:\\
For each $j \in [d]$ and each dense item $i$,
reverting the transformation increases $v_j(i)$ by at most $(\eps/8d) v_{\max}(i)$.
So by \cref{claim:wmax-s-lim}, we get
\[ v_j(J) \le v_j(J') + \frac{\eps}{8d} v_{\max}(J')
\le v_j(J') + \frac{\eps}{8} \le 1 - \left(\mu - \frac{\eps}{8}\right). \]
Therefore, $J$ is $(\mu - \eps/8)$-slacked.

\textbf{Case 2}: $|J'| = 1$:\\
Then $|J|=1$, so $J$ is $\mu$-slacked.

\textbf{Case 3}: $|J'| = 2$ and $J'$ only has dense items
and $\forall i \in J', 1/2 - \mu \le v_{\max}(i) \le 1/2$:\\
The 0-value dimensions of $i$ increase to $(\eps/8d) v_{\max}(i) \le v_{\max}(i)$,
so $v_{\max}(i)$ remains the same across this transformation. So $J$ is $\mu$-slacked.
\end{proof}

As the first step in rounding dense items, we apply \cref{trn:round-to-0} to $I$.

Since $\epsLarge \le 1/4d$, we get $\epsSmall \le (\eps/8d)\epsLarge$.
Hence, \cref{trn:round-to-0} forces heavy items to be heavy in all non-zero dimensions.

Now we will use different transformations on heavy and light items.

\begin{transformation}
\label{trn:round-up-heavy}
For a dense item $i$, if $v_j(i) > \epsLarge$, round up $v_j(i)$ to a multiple of $\epsLarge\eps/8$.
\end{transformation}
\begin{lemma}
\label{lem:round-up-heavy}
Let $J$ be a packing of items into a bin that is $\mu$-slacked, for some $\mu \ge \eps/8$.
Let $J'$ be the packing obtained by applying \cref{trn:round-up-heavy}
to dense items in $J$. Then $J'$ is a $(\mu-\eps/8)$-slacked packing.
\end{lemma}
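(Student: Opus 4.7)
The plan is to bound the increase in $v_j(J)$ per dimension $j$ caused by Transformation~\ref{trn:round-up-heavy}, and then to perform a case analysis on the three defining conditions of $\mu$-slackness.

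First I would argue that for any fixed dimension $j \in [d]$, only items $i \in J$ with $v_j(i) > \epsLarge$ undergo rounding in that dimension, and each such rounding increases $v_j(i)$ by strictly less than $\epsLarge\eps/8$. Since $J$ fits in a bin we have $v_j(J) \le 1$, so the number of items whose $v_j$ exceeds $\epsLarge$ is at most $\lfloor 1/\epsLarge\rfloor$. Consequently
\[
v_j(J') \;\le\; v_j(J) + \frac{1}{\epsLarge}\cdot\frac{\epsLarge\eps}{8} \;=\; v_j(J) + \frac{\eps}{8}.
\]

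Next I would examine the three cases of $\mu$-slackness for $J$. If $\forall j,\ v_j(J) \le 1-\mu$, then $v_j(J') \le 1-\mu + \eps/8 = 1-(\mu-\eps/8)$, so $J'$ satisfies the first condition at slackness $\mu - \eps/8$. If $|J| = 1$ then $|J'| = 1$, so $J'$ is trivially $\mu$-slacked via the second case. The third case, $|J|=2$ with both items dense and $v_{\max}(i) \le 1/2$ for each $i \in J$, requires an extra observation: since $\epsLarge^{-1},\eps^{-1} \in \mathbb{Z}$, the rounding granularity $\epsLarge\eps/8$ divides $1/2$ evenly (explicitly, $1/2 = (4\epsLarge^{-1}\eps^{-1})\cdot(\epsLarge\eps/8)$, and $4\epsLarge^{-1}\eps^{-1}$ is an integer). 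Therefore any $v_j(i) \le 1/2$ rounds up to a value still $\le 1/2$, so $v_{\max}(i) \le 1/2$ is preserved in $J'$ and $J'$ remains in the third case, hence $\mu$-slacked (which is stronger than $(\mu-\eps/8)$-slacked).

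I expect the only genuine subtlety to be the divisibility observation in the third case: without it, rounding could push some dense item's maximum weight strictly above $1/2$ even though the overall per-dimension weight increase across the bin stays within $\eps/8$. The rest of the argument is a routine counting bound followed by case checking.
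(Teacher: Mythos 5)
Your proposal is correct and takes essentially the same route as the paper: bound the number of items with $v_j(i) > \epsLarge$ by roughly $1/\epsLarge$, multiply by the rounding granularity $\epsLarge\eps/8$ to get a per-dimension increase of at most $\eps/8$, then do the three-way case analysis, with the divisibility of $1/2$ by $\epsLarge\eps/8$ (using $\epsLarge^{-1},\eps^{-1}\in\mathbb{Z}$) handling the dense-double case exactly as the paper does.
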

\begin{proof}
\textbf{Case 1}: $\forall j \in [d], v_j(J') \le 1-\mu$:\\
For each $j \in [d]$, there are less than $\epsLarge^{-1}$ items $i$ in $J$
such that $v_j(i) > \epsLarge$. For each such item, $v_j(i)$ increases by less than $\epsLarge\eps/8$.
Therefore, $v_j(J') < v_j(J) + \eps/8$. Therefore, $J$ is $(\mu - \eps/8)$-slacked.

\textbf{Case 2}: $|J| = 1$:\\
$|J'|=1$, so $J'$ is $\mu$-slacked.

\textbf{Case 3}: $|J| = 2$ and $J$ only has dense items
and $\forall i \in J, 1/2 - \mu \le v_{\max}(i) \le 1/2$:\\
$\epsLarge^{-1}\eps^{-1} \in \mathbb{Z}$, so $1/2$ is a multiple of $\epsLarge\eps/8$.
Therefore, for each item $i$, $v_{\max}(i)$ increases to at most $1/2$.
So $J$ is $\mu$-slacked.
\end{proof}

\begin{lemma}
\label{lem:round-up-heavy-n}
The number of distinct heavy items after \cref{trn:round-up-heavy} is at most
\[ n_{\hwc} \defeq \left(\frac{8}{\eps}\left(\frac{1}{\epsLarge}-1\right)\right)^d. \]
\end{lemma}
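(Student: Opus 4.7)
The plan is a direct counting argument. After \cref{trn:round-to-0}, every dense item (and hence every heavy item) has $w(i)=h(i)=0$, so a heavy item is determined entirely by its weight vector $(v_1(i),\ldots,v_d(i))$. Bounding the number of distinct heavy items therefore reduces to bounding the number of distinct such vectors that can arise after \cref{trn:round-up-heavy}.

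The key preparatory observation, already flagged in the paragraph preceding the lemma, is that for a heavy item $i$, every non-zero coordinate $v_j(i)$ satisfies $v_j(i)>\epsLarge$ after \cref{trn:round-to-0}. I would verify this by noting that if $v_j(i)>0$ survives \cref{trn:round-to-0}, then beforehand $v_j(i)>(\eps/8d)v_{\max}(i)\ge(\eps/8d)\epsLarge\ge\epsSmall$, using $v_{\max}(i)\ge\epsLarge$ (since $i$ is heavy) together with the bound $\epsSmall\le(\eps/8d)\epsLarge$ implied by $\epsLarge\le 1/(4d)$; the $(\epsSmall,\epsLarge)$-non-medium property of the instance then pushes $v_j(i)$ strictly above $\epsLarge$.

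Next comes the count. \Cref{trn:round-up-heavy} rounds each coordinate exceeding $\epsLarge$ up to a multiple of $\epsLarge\eps/8$, and no other coordinates are modified. Since $\eps^{-1}\in 2\mathbb{Z}$ and $\epsLarge^{-1}\in\mathbb{Z}$, both $\epsLarge$ and $1$ are integer multiples of $\epsLarge\eps/8$, so the multiples of $\epsLarge\eps/8$ lying in the interval $(\epsLarge,1]$ number exactly $\frac{8}{\epsLarge\eps}-\frac{8}{\eps}=\frac{8}{\eps}\bigl(\frac{1}{\epsLarge}-1\bigr)$. Hence each coordinate of a heavy item lies in a value set of this size, and the claimed bound $n_{\hwc}$ follows by taking the $d$-fold Cartesian product over the weight dimensions.

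There is essentially no obstacle: all the real work was already done by \cref{trn:round-to-0} together with the $(\epsSmall,\epsLarge)$-non-medium guarantee, which collapse the continuum of possible heavy-item weights into a finite grid of known spacing. The divisibility conditions $\eps^{-1}\in 2\mathbb{Z}$ and $\epsLarge^{-1}\in\mathbb{Z}$ are precisely what ensures the endpoints of the relevant interval align with that grid, so the enumeration in the final step is clean.
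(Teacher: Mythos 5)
Your proof is correct and takes essentially the same route as the paper's one-line argument; you usefully fill in the details the paper leaves implicit, namely why non-zero coordinates of heavy items exceed $\epsLarge$ after \cref{trn:round-to-0} (via $\epsSmall\le(\eps/8d)\epsLarge$ and the non-medium property) and why the endpoints of $(\epsLarge,1]$ align with the grid of multiples of $\epsLarge\eps/8$. One caveat, which you inherit from the paper: after \cref{trn:round-to-0} a heavy item may have some coordinates equal to $0$, so each coordinate really ranges over $\frac{8}{\eps}\left(\frac{1}{\epsLarge}-1\right)+1$ values rather than $\frac{8}{\eps}\left(\frac{1}{\epsLarge}-1\right)$, and for $d\ge 2$ the exact count of weight vectors exceeds the stated $n_{\hwc}$. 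Your closing sentence glosses over the zero possibility exactly as the paper does; since $n_{\hwc}$ is only ever used as a constant feeding into running-time and size bounds, the discrepancy is inconsequential, but it is worth being aware that the cartesian-product step silently assumes every coordinate lands in $(\epsLarge,1]$.
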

\begin{proof}
This is because large vector dimensions are rounded to at most
$8/\epsLarge\eps - 8/\eps$ distinct values.
\end{proof}

\begin{transformation}
\label{trn:round-up-light}
For a dense item $i$, if $v_{\max}(i) \le \epsSmall$, then for each $j \in [d]$,
round up $v_j(i)/v_{\max}(i)$ to a power of $1/(1+\eps/8)$ if $v_j(i) > 0$.
\end{transformation}
\begin{lemma}
\label{lem:round-up-light}
Let $J$ be a packing of items into a bin that is $\mu$-slacked, for some
$\eps/8 \le \mu \le \eps$.
Let $J'$ be the packing obtained by applying \cref{trn:round-up-light}
to dense items in $J$. Then $J'$ is a $(\mu-\eps/8)$-slacked packing.
\end{lemma}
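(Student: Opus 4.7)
The plan is to mirror the three-case structure already used in \cref{lem:round-up-heavy} and \cref{lem:round-to-0}, since the slackness definition has three clauses and we need to check each is preserved under the rounding.

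First I would establish a key invariant of \cref{trn:round-up-light}: for every light dense item $i$, the quantity $v_{\max}(i)$ is unchanged by the transformation. The reason is that the ratio $v_j(i)/v_{\max}(i)$ lies in $(0,1]$ for every $j$, so it rounds up to some power $(1+\eps/8)^{-k}$ with $k \ge 0$, which is still at most $1$. Hence the new $v_j(i) \le v_{\max}(i)$ for every $j$, and the ratio is exactly $1 = (1+\eps/8)^0$ for the $j^*$ attaining the maximum (so that coordinate is untouched). Therefore $v_{\max}(i)$ stays the same.

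Next I would bound the multiplicative increase in each $v_j(i)$: if the original ratio $r := v_j(i)/v_{\max}(i)$ rounds up to $r'$, then $r' < (1+\eps/8)r$, so the new $v_j(i)$ is at most $(1+\eps/8)v_j(i)$, i.e.\ increases by at most $(\eps/8)v_j(i)$. Heavy dense items and non-dense items are untouched.

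With these observations, the three cases of the slackness definition become routine:
\begin{itemize}
\item Case 1: $\forall j\in[d],\ v_j(J) \le 1-\mu$. Summing the per-item bound, the increase in $v_j(J)$ is at most $(\eps/8)\sum_{i\text{ light dense}} v_j(i) \le (\eps/8)\cdot v_j(J) \le \eps/8$, so $v_j(J') \le 1-(\mu-\eps/8)$.
\item Case 2: $|J|=1$. Then $|J'|=1$ and $J'$ is trivially $\mu$-slacked.
\item Case 3: $|J|=2$, only dense items, each with $v_{\max}(i)\le 1/2$. Heavy items are untouched, and by the invariant above $v_{\max}(i)$ is preserved for light dense items, so the condition continues to hold and $J'$ is still $\mu$-slacked.
\end{itemize}
There is no real obstacle here; the only subtle point that needs to be checked carefully is the invariant that $v_{\max}(i)$ is preserved for light dense items, since without it Case 3 would potentially fail when a light item's maximum coordinate would otherwise change identity after rounding.
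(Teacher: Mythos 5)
Your proof is correct, and it does differ from the paper's in one substantive way. The paper's Case~3 is stated with the additional hypothesis $v_{\max}(i) \ge 1/2 - \mu$ (justified implicitly because otherwise $v_j(J) = v_j(i_1)+v_j(i_2) \le 1 - 2\mu < 1-\mu$ for all $j$, so Case~1 would apply), and then observes that since $\epsSmall \le 1/2-\mu$, neither item satisfies the trigger $v_{\max}(i)\le\epsSmall$ of \cref{trn:round-up-light}; hence $J' = J$ and nothing needs checking. You instead prove the stronger invariant that $v_{\max}(i)$ is exactly preserved by the transformation for \emph{every} light dense item (because the maximizing coordinate has ratio $1$, which is already a power, and every other ratio rounds up to a power $\le 1$), and then apply this invariant uniformly. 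Your route buys two things: Case~3 goes through without appealing to the $1/2-\mu$ lower bound at all (so the case split can be taken verbatim from the slackness definition), and Case~2 is genuinely trivial rather than requiring the side remark "$v_{\max}(i)\le 1$ after the transformation." The paper's route avoids having to articulate and verify the invariant, at the cost of relying on the WLOG refinement of Case~3. Both are sound; yours is marginally more self-contained.
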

\begin{proof}
\textbf{Case 1}: $\forall j \in [d], v_j(J') \le 1-\mu$:\\
For each $j \in [d]$, $v_j(i)$ increases by a factor of at most $1+\eps/8$.
So,
\[ v_j(J') \le v_j(J)\left(1+\frac{\eps}{8}\right) \le v_j(J) + \frac{\eps}{8}. \]
Therefore, $J'$ is $(\mu - \eps/8)$-slacked.

\textbf{Case 2}: $|J| = 1$:\\
$v_{\max}(i) \le 1$ after the transformation, so the packing is valid and $|J'| = 1$.
Therefore, $J'$ is $\mu$-slacked.

\textbf{Case 3}: $|J| = 2$ and $J$ only has dense items
and $\forall i \in J, 1/2 - \mu \le v_{\max}(i) \le 1/2$:\\
Since $\epsSmall \le 1/2 - \eps \le 1/2-\mu$ and
the transformation only applies when $v_{\max}(i) \le \epsSmall$,
$J$ remains the same after the transformation.
\end{proof}

\begin{lemma}
\label{lem:round-up-light-n}
After \cref{trn:round-up-light},
the number of distinct values of the weight vector of light items is at most
\[ \ceil{\frac{\ln(8d/\eps)}{\ln(1+\eps/8)}}^{d-1}
\le \ceil{\frac{8+\eps}{\eps}\ln\left(\frac{8d}{\eps}\right)}^{d-1} \defeq n_{\lwc} \]
\end{lemma}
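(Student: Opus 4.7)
The plan is to count the number of distinct normalized weight vectors $(v_1(i)/v_{\max}(i), \ldots, v_d(i)/v_{\max}(i))$ that light dense items can take after \cref{trn:round-up-light}, since by \cref{defn:weight-classes} this count equals the number of weight classes of light dense items.

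First I would combine the effects of \cref{trn:round-to-0} (already applied to the input of \cref{trn:round-up-light}) and of \cref{trn:round-up-light} itself on each coordinate of the ratio vector. The former forces every $v_j(i)$ to be either $0$ or to satisfy $v_j(i) > (\eps/8d) v_{\max}(i)$, so each ratio $v_j(i)/v_{\max}(i)$ lies in $\{0\} \cup (\eps/8d, 1]$. The latter then rounds every nonzero such ratio up to a value of the form $(1+\eps/8)^{-k}$ with $k \in \mathbb{Z}_{\ge 0}$. The constraint $(1+\eps/8)^{-k} \ge \eps/8d$ forces $k \le \log_{1+\eps/8}(8d/\eps)$, giving at most $N \defeq \ceil{\ln(8d/\eps)/\ln(1+\eps/8)}$ nonzero values per coordinate.

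Next I would exploit that $v_{\max}(i) = \max_j v_j(i)$, so at least one coordinate of the ratio vector equals $1$. Fixing such a coordinate leaves only $d-1$ free coordinates, each of which takes at most $N$ distinct values (absorbing the $0$ option into these $N$); this yields the asserted bound $N^{d-1}$. The second inequality follows from the elementary bound $\ln(1+x) \ge x/(1+x)$ applied at $x = \eps/8$, which gives $\ln(1+\eps/8) \ge \eps/(8+\eps)$ and hence $1/\ln(1+\eps/8) \le (8+\eps)/\eps$; the outer ceiling propagates through.

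The main technical care lies in the bookkeeping step, where one must absorb both the ``which coordinate achieves the max'' degree of freedom and the possibility of $0$-valued coordinates into the single per-coordinate factor of $N$ to obtain $N^{d-1}$ rather than a larger expression; once that is set up, the rest is a direct enumeration together with a standard logarithmic inequality.
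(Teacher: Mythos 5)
Your proposal unpacks the paper's one-line proof exactly: \cref{trn:round-to-0} bounds each nonzero ratio $v_j(i)/v_{\max}(i)$ below by $\eps/8d$, \cref{trn:round-up-light} snaps each positive ratio to one of at most $N := \ceil{\ln(8d/\eps)/\ln(1+\eps/8)}$ powers of $1/(1+\eps/8)$, the coordinate attaining the maximum is pinned to $1$, and the second inequality in the statement follows from $\ln(1+x)\ge x/(1+x)$ at $x=\eps/8$. So the approach matches the paper's.

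The absorption step you flag at the end, however, is not actually justified, and the paper's proof silently skips it as well. Each coordinate of the ratio vector may also equal $0$, so the per-coordinate alphabet has $N+1$ symbols rather than $N$; and which coordinate attains the maximum is not fixed in advance, so one cannot simply drop a coordinate and raise to the power $d-1$. A careful count of the $d$-tuples over $\{0,\,1,\,(1+\eps/8)^{-1},\ldots,(1+\eps/8)^{-(N-1)}\}$ that contain the symbol $1$ gives $(N+1)^d - N^d$, which is roughly $d\cdot N^{d-1}$, not $N^{d-1}$. This $\Theta(d)$ slack is harmless downstream, since the paper only ever uses that $n_{\lwc}$ is a constant independent of the input, but the clean bound $N^{d-1}$ cannot be recovered by the absorption you (and the paper) gesture at — the true upper bound is larger by about a factor of $d$.
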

\begin{proof}
This is because $v_j(i)/v_{\max}(i)$ is lower-bounded by $\eps/8d$
because of \cref{trn:round-to-0}.
\end{proof}

\begin{transformation}[Weight-rounding]
\label{trn:wround}
For a set $I$ of items, weight-rounding is the process of applying
\cref{trn:wround-non-dense,trn:round-to-0,trn:round-up-heavy,trn:round-up-light} to all items.
A set $I$ of items is said to be \emph{weight-rounded} iff $I$ is invariant under
\cref{trn:wround-non-dense,trn:round-to-0,trn:round-up-heavy,trn:round-up-light}.
\end{transformation}

\subsection{Rounding the Other Side}
\label{sec:rbbp-2d-extra:other-side}

So far, for each item, we seen how to round their weights and how to round one geometric dimension.
In this subsection, we will see how to use linear grouping to round the other geometric dimension.
We will show that after this operation, the items will belong to a constant number of
homogeneous classes (see Condition C1.3 in \cref{sec:rna:round}).

\subsubsection{Rounding Geometric Dimensions of Non-Dense Items}

\begin{transformation}[Linear grouping]
\label{trn:lingroup}
Suppose we are given a packing of items $I$ into $m$ bins,
where $I$ is invariant under \cref{trn:wround-non-dense} and
each bin satisfies \cref{prop:hrnd}\ref{item:hrnd:geom}.

Partition the big items in $I$ by their width and weight class.
Partition the tall non-dense items in $I$ by their weight class.
The number of partitions is constant by \cref{prop:hrnd}\ref{item:hrnd:geom}
and \cref{lem:num-weight-classes}.
Let $\deltaLG \defeq \eps\epsLarge/(d+1)$ (so $\deltaLG^{-1} \in \mathbb{Z}$).

For each partition $S$ of big items, do the following:
\begin{enumerate}
\item Order the items in $S$ in non-increasing order of height.
\item Let $k \defeq \floor{\deltaLG|S|}+1$.
    Let $S_1$ be the first $k$ items, $S_2$ be the next $k$ items, and so on, till $S_T$,
    where $T = \ceil{|S|/k} \le 1/\deltaLG$.
    For $t \in [T]$, $S_t$ is called the $t\Th$ linear group of $S$.
    The first item in $S_t$ is called the \emph{leader} of $S_t$, denoted as $\leader(S_t)$.
\item Increase the height of each item in $S_1$ to $h(\leader(S_1))$.
    Unpack the items $S_1 - \{\leader(S_1)\}$.
\item For each $t \in [T] - \{1\}$ and each $j \in [|S_t|] - \{1\}$,
    let $i$ be the $j\Th$ item in $S_t$ and let $i'$ be the $j\Th$ item in $S_{t-1}$.
    Note that $h(i') \ge h(\leader(S_t)) \ge h(i)$.
    Increase $h(i)$ to $h(\leader(S_t))$ and pack $i$ where $i'$ was packed.
    Since $i$ has the same width and weights as $i'$,
    the geometric constraints are not violated
    and the total weights of the bin do not increase.
    The number of distinct heights in $S$ now becomes at most $T \le 1/\deltaLG$.
\end{enumerate}

For each partition $S$ of tall items, do the following:
\begin{enumerate}
\item Order the items in $S$ in non-increasing order of height and arrange them side-by-side
    on the real line, starting from the origin.
\item Let the total width of $S$ be $W$.
    Let $S_t$ be the items in the interval $[(t-1)\deltaLG W, t \deltaLG W]$.
    Slice the items if they lie on the boundaries of the interval.
    $S_t$ is called the $t\Th$ linear group of $S$.
    The first item in $S_t$ is called the leader of $S_t$, denoted as $\leader(S_t)$.
\item Increase the height of each item in $S_1$ to $h(\leader(S_1))$. Then unpack $S_1$.
\item For each $t \in [1/\deltaLG] - \{1\}$,
    move the items in $S_t$ to the space occupied by $S_{t-1}$
    (items in $S_t$ may need to be sliced for this)
    and increase the height of each item $i \in S_t$ to $h(\leader(S_t))$.
    This doesn't violate geometric constraints since $S_t$ and $S_{t-1}$
    have the same total width and this doesn't increase the total weights of the bin
    because all items in $S$ have the same weight class.
    The number of distinct heights in $S$ now becomes at most $1/\deltaLG$.
\end{enumerate}

We can extend the definition of this transformation to bins satisfying
\cref{prop:vrnd}\ref{item:vrnd:geom} by swapping vertical and horizontal directions.
(Partition big items by height and weight class and partition wide items by weight class.
Then round up the width of items in each partition using the above techniques.)
\end{transformation}

\begin{lemma}
\label{lem:bot-slack-pack}
Let $J$ be a set of big and tall items and let $\mu \in [0, 1)$ be a constant.
Define $\Span'(i) \defeq \max\left(w(i), \min\left(\frac{v_{\max}(i)}{1-\mu}, 1\right)\right)$.
Then $\sum_{i \in J} \Span'(i) \le 1$ implies $J$ can be packed into
a $\mu$-slacked bin where all items touch the bottom of the bin.
\end{lemma}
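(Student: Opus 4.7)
The plan is to pack items of $J$ left-to-right along the bottom edge of the bin, and then verify that the resulting packing is valid and $\mu$-slacked by using the two terms inside $\Span'$ as separate upper bounds.

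First I would dispose of the degenerate case $|J|=1$. A single big or tall item has $w(i)\le 1$ and $h(i)\le 1$, so it can be placed at the bottom-left corner of the bin; by the second clause of the definition of $\mu$-slackness, any single-item packing is automatically $\mu$-slacked, so there is nothing more to check.

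For the main case $|J|\ge 2$, I would first argue that every item $i\in J$ has $v_{\max}(i) < 1-\mu$. Indeed, every big or tall item satisfies $w(i)>0$ (a big item has $w(i)>\epsLarge$, and a tall item has $h(i)>\epsLarge$, which combined with the standing assumption $\vol(i)=0\Rightarrow \ell_j(i)=0$ for all $j$ forces $w(i)>0$). Hence $\Span'(i)\ge w(i)>0$ for every $i$. If some $i^\star$ had $v_{\max}(i^\star)\ge 1-\mu$, then $\min(v_{\max}(i^\star)/(1-\mu),1)=1$ and so $\Span'(i^\star)\ge 1$; combined with $\Span'(i)>0$ for at least one other item, this would violate $\sum_{i\in J}\Span'(i)\le 1$. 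So on this case $\Span'(i)=\max(w(i),\,v_{\max}(i)/(1-\mu))$ for every $i\in J$.

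Now pack the items: place them side-by-side along the bottom edge of the bin in any order. Since
\[ \sum_{i\in J} w(i) \;\le\; \sum_{i\in J} \Span'(i) \;\le\; 1, \]
the items fit horizontally, and since each item has $h(i)\le 1$ they fit vertically as well, giving a valid geometric packing in which every item touches the bottom. For the weight constraints, note that for every $j\in[d]$,
\[ v_j(J) \;\le\; \sum_{i\in J} v_{\max}(i) \;=\; (1-\mu)\sum_{i\in J}\frac{v_{\max}(i)}{1-\mu} \;\le\; (1-\mu)\sum_{i\in J}\Span'(i) \;\le\; 1-\mu, \]
so the packing is $\mu$-slacked by the first clause of the definition. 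There is no real obstacle here; the only subtlety is confirming that $w(i)>0$ so as to rule out the ``one big-weight item plus spurious zero-width items'' loophole, which is why I treat that positivity check explicitly before invoking the bound $\Span'(i)\ge w(i)$.
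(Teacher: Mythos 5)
Your proof is correct and follows essentially the same approach as the paper: pack side by side along the bottom (total width at most $\sum_i \Span'(i) \le 1$), handle the single-item case via the $|J|=1$ clause of $\mu$-slackness, and in the multi-item case deduce $v_{\max}(i) < 1-\mu$ for every $i$ so that $v_j(J) \le (1-\mu)\sum_i \Span'(i) \le 1-\mu$. The only cosmetic difference is that you split on $|J|=1$ versus $|J|\ge 2$ while the paper splits on whether some $\Span'(i)=1$, and you supply the (correct) justification for $\Span'(i)>0$ that the paper leaves implicit; these are logically equivalent.
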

\begin{proof}
$\sum_{i \in J} w(i) \le \sum_{i \in J} \Span'(i) \le 1$.

$\forall i \in J, \Span'(i) > 0$.
So if $\Span'(i) = 1$ for some $i \in J$, then $|J| = 1$.
So $J$ can be packed into a bin, and the bin is $\mu$-slacked since $|J| = 1$.

Now let $\Span'(i) < 1$ for all $i \in J$. So $v_{\max}(i) < 1-\mu$ and $\forall j \in [d]$,
\[ v_j(J) = \sum_{i \in J} v_j(i) \le (1-\mu) \sum_{i \in J} \frac{v_{\max}(i)}{1-\mu}
\le (1-\mu) \sum_{i \in J} \Span'(i) \le 1-\mu. \]
Therefore, $J$ can be packed into a $\mu$-slacked bin.
\end{proof}

\begin{lemma}
\label{thm:lg-repack}
Suppose we are given a packing of items $I$ into $m$ bins,
where $I$ is invariant under \cref{trn:wround-non-dense} and
each bin satisfies \cref{prop:hrnd}\ref{item:hrnd:geom}.
Let $U$ be the items unpacked by linear grouping (\cref{trn:lingroup}).
Then $U$ can be repacked into $\frac{2\eps}{1-\eps}m + 1$ number of
$\eps$-slacked bins that satisfy \cref{prop:hrnd}.
\end{lemma}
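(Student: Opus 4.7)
The plan is to invoke \cref{lem:bot-slack-pack} with $\mu = \eps$ and pack $U$ by a one-dimensional Next-Fit in the measure $\Span'(i) \defeq \max(w(i), \min(v_{\max}(i)/(1-\eps), 1))$. By \cref{lem:bot-slack-pack}, any set $J$ with $\sum_{i \in J} \Span'(i) \le 1$ fits into a single $\eps$-slacked bin along its bottom edge, so it suffices to show $\sum_{i \in U} \Span'(i) \le \eps m/(1-\eps)$: then Next-Fit with unit capacity uses at most $\frac{2\eps}{1-\eps} m + 1$ bins.

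First, I would describe $U$ concretely. For each partition $S$ of big items, \cref{trn:lingroup} unpacks $|S_1| - 1 \le \deltaLG|S|$ items, and since big items sharing a width-and-weight-class partition are identical, these unpacked items are copies of $\leader(S_1)$ with width $w(S)$ (already a multiple of $\epsLarge^2/4$) and weight $v_j(\leader(S_1))$ in dimension $j$. For each partition $S$ of tall items, the unpacked set $S_1$ is a fractional family of items of total width $\deltaLG W_S$, where $W_S = \sum_{i \in S} w(i)$, with common weight ratio $v_j(i)/w(i) = v_j(S)/W_S$. Crucially, $U$ contains no dense items, so \cref{prop:hrnd}\ref{item:hrnd:dense} will hold vacuously in every new bin.

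Next, I would bound total width and total vector weight. Using $h(i) > \epsLarge$ for big and tall items together with $\vol(I) \le m$ gives $\sum_{i \text{ big}} w(i), \sum_{i \text{ tall}} w(i) \le m/\epsLarge$, hence
\[
\sum_{i \in U_{\text{big}}} w(i) \;\le\; \deltaLG \sum_{i \text{ big}} w(i) \;\le\; \frac{\deltaLG\, m}{\epsLarge} \;=\; \frac{\eps m}{d+1},
\]
and analogously $\sum_{i \in U_{\text{tall}}} w(i) \le \eps m/(d+1)$, so in total $\sum_{i \in U} w(i) \le 2\eps m/(d+1)$. The identical-weight property of each partition gives $\sum_{i \in U} v_j(i) \le 2\deltaLG m$ for each $j \in [d]$ from $v_j(I) \le m$. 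The crude but adequate bound $\Span'(i) \le w(i) + \sum_j v_j(i)/(1-\eps)$ then yields
\[
\sum_{i \in U} \Span'(i) \;\le\; \frac{2\eps m}{d+1} + \frac{2 d \deltaLG m}{1-\eps} \;=\; \frac{2\eps m (1 - \eps + d\epsLarge)}{(d+1)(1-\eps)}.
\]
The setup constraints $\epsLarge \le 2\eps/3$ for $d = 1$ and $\epsLarge \le 1/(4d+1)$ for $d \ge 2$ reduce this to at most $\eps m/(1-\eps)$; this is the technical heart of the argument and is exactly why \cref{trn:lingroup} chose $\deltaLG = \eps\epsLarge/(d+1)$.

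Finally, to ensure each Next-Fit bin satisfies \cref{prop:hrnd}, I would place the big items of each new bin leftmost in the order they are added, so that their $x$-coordinates (cumulative sums of widths) are multiples of $\epsLarge^2/4$, giving \cref{prop:hrnd}\ref{item:hrnd:geom}; tall items then go to the right, and they fit horizontally because within each Next-Fit bin $\sum_i w(i) \le \sum_i \Span'(i) \le 1$. Each bin is $\eps$-slacked by \cref{lem:bot-slack-pack}. The main obstacle is the arithmetic verification of the $\Span'$-sum bound uniformly over $d \ge 1$; once that inequality is in hand, the Next-Fit bound and the property verification are routine.
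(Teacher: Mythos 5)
Your overall strategy matches the paper's: interpret each unpacked item $i$ as a 1D item of size $\Span'(i) = \max\bigl(w(i), \min(v_{\max}(i)/(1-\eps), 1)\bigr)$, pack them at the bottoms of new bins via Next-Fit, invoke \cref{lem:bot-slack-pack} for $\eps$-slackness and \cref{claim:next-fit} for the bin count, and get \cref{prop:hrnd} because big items (which come first) have rounded widths and $U$ is dense-free.

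Where you differ is in how $\Span'(U) \le \frac{\eps}{1-\eps}m$ is established. The paper shows $\Span'(S \cap U) \le \deltaLG \Span'(S)$ for each linear-grouping partition $S$ directly --- observing that $\Span'$ is constant on each big-item partition, and that the $\min(\cdot,1)$ cap never binds for tall items (because $\epsSmall \le \epsLarge^2(1-\eps)$ forces $v_j(i) \le 1-\eps$), so $\Span'(i)/w(i)$ is a constant on each tall-item partition. It then uses $\Span'(i) \le \Span(i)/(\epsLarge(1-\eps))$ and $\Span(K) \le (d+1)m$ from \cref{claim:wmax-s-lim}, giving exactly $\frac{\eps}{1-\eps}m$ with no arithmetic left to check: $\deltaLG$ was chosen so that $\frac{(d+1)\deltaLG}{\epsLarge(1-\eps)} = \frac{\eps}{1-\eps}$. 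You instead bound $w(U) \le \deltaLG\, w(K)$ and $v_j(U) \le \deltaLG\, v_j(K)$ coordinate-by-coordinate, bound $w(K) \le 2m/\epsLarge$ and $v_j(K) \le 2m$ from area and weight budgets, and combine via the superadditivity $\Span'(i) \le w(i) + \sum_j v_j(i)/(1-\eps)$. This sidesteps the $\min$-cap subtlety, but the resulting inequality $2(1-\eps) + 2d\epsLarge \le d+1$ must then be checked by hand; I verified it holds for $d\ge 1$ given $\epsLarge \le \min(1/(4d+1), 2\eps/3)$, with the $d=1$ case genuinely needing the $\epsLarge \le 2\eps/3$ side of $\delta_0$. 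So your argument is correct, just looser: it leans on the specific numeric values in the setup and would break, e.g., if the factor of $2$ in your $w(K), v_j(K)$ bounds were made tight only after the arithmetic (or if one carelessly tried $d=0$). The paper's route is tighter and case-free; yours is more elementary. One small inaccuracy to fix if you write this up: the unpacked big items of $S_1$ are not ``copies of $\leader(S_1)$'' --- they share width and weights but may have smaller heights --- though this is harmless since only $w$ and $v_j$ enter $\Span'$.
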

\begin{proof}
Define $\Span'(i) \defeq \max\left(w(i), \min\left(\frac{v_{\max}(i)}{1-\eps}, 1\right)\right)$.
Let $K$ be the set of big and tall non-dense items in $I$.
For any $J \subseteq K$, define $\Span'(J) \defeq \sum_{i \in J} \Span'(i)$.

Interpret each item $i \in U$ as a 1D item of size $\Span'(i)$.
Order the items such that big items in $U$ appear before tall non-dense items in $U$.
Pack them on the bottom of new bins using the Next-Fit algorithm.
By \cref{lem:bot-slack-pack}, they will require at most $2\Span'(U) + 1$ $\eps$-slacked bins.
These bins satisfy \cref{prop:hrnd}\ref{item:hrnd:geom}
since the width of all big items in $U$ is a multiple of $\epsLarge^2/4$,
and they satisfy \cref{prop:hrnd}\ref{item:hrnd:dense} since $U$
only contains non-dense items.

In \cref{trn:lingroup}, we partitioned all big items in $I$ by width and weight class.
Let $S \subseteq I$ be one such partition.
Given the way we created the groups $S_1, S_2, \ldots$, we get
$|S \cap U| \le \floor{\deltaLG|S|}$.
Since all items in $S$ have the same width and weights,
$\Span'(i)$ is the same for each $i \in S$. Therefore,
\[ \Span'(S \cap U) = \Span'(i)|S \cap U|
\le \Span'(i)\floor{\deltaLG |S|}
\le \deltaLG \Span'(S). \]
In \cref{trn:lingroup}, we partitioned all tall non-dense items in $I$ by weight class.
Let $S \subseteq I$ be one such partition.
Given the way we created the groups $S_1, S_2, \ldots$, we get $w(S \cap U) = \deltaLG w(S)$.
All items in $S$ have the same weights-to-width ratio,
which is at most $1/\epsLarge^2$ by \cref{obs:non-dense-vinc}\ref{item:non-dense-vinc:tall}.
Since $\epsSmall \le \epsLarge^2(1-\eps)$,
we get $v_j(i) \le 1-\eps$ for all $i \in S$,
so $\Span'(i)/w(i)$ is the same for each $i \in S$.
Let that common ratio be $\alpha$. Then,
\[ \Span'(S \cap U) = \alpha w(S \cap U)
\le \alpha \deltaLG w(S) = \deltaLG \Span'(S). \]
Summing over all partitions $S$, we get
\begin{equation}
\label{eqn:unpacked-vs-packed}
\Span'(U) = \sum_S \Span'(U \cap S) \le \sum_S \deltaLG \Span'(S)
    \le \deltaLG \Span'(K).
\end{equation}

For $i \in K$, we get
\begin{align*}
\frac{\Span'(i)}{\Span(i)}
\le \frac{\max\left(w(i), \frac{v_{\max}(i)}{1 - \mu}\right)}{
\max\left(w(i)h(i), v_{\max}(i)\right)}
\le \frac{\frac{1}{1-\mu}\max\left(w(i), v_{\max}(i)\right)}{
\max\left(w(i)\epsLarge, v_{\max}(i)\right)}
\le \frac{1}{(1-\mu)\epsLarge}.
\end{align*}
The last inequality follows because for big and tall items, $h(i) \ge \epsLarge$.

The number of bins used to pack $U$ is
\begin{align*}
2\Span'(U) + 1
&\le 2\deltaLG \Span'(K) + 1
\le \frac{2\deltaLG}{\epsLarge(1-\eps)} \Span(K) + 1
\\ &\le \frac{2(d+1)\deltaLG}{\epsLarge(1-\eps)}m + 1
= \frac{2\eps}{1-\eps}m + 1.
\end{align*}
The first inequality follows from \eqref{eqn:unpacked-vs-packed} and the third
inequality follows from \cref{claim:wmax-s-lim}.
\end{proof}

\begin{lemma}
\label{lem:rnd2}
Suppose we are given a packing of items $I$ into $m$ bins,
where $I$ is weight-rounded, each bin is $\mu$-slacked for some $\mu \le \eps$,
and each bin satisfies either \cref{prop:hrnd} or \cref{prop:vrnd}.
Then after applying linear grouping (\cref{trn:lingroup}) to this packing of $I$,
we get a packing of items $\Ihat$ into $m'$ bins, where all of the following hold:
\begin{itemize}
\item $\Ihat$ is a rounding-up of $I$ and contains a constant number of homogeneous classes
    (see Condition C1.3 in \cref{sec:rna:round}).
\item Each bin in the packing of $\Ihat$ is $\mu$-slacked
    and satisfies either \cref{prop:hrnd} or \cref{prop:vrnd}.
\item ${\displaystyle m' \le \left(1 + \frac{2\eps}{1-\eps}\right)m + 2}$.
\end{itemize}
\end{lemma}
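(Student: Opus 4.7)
The plan is to apply the \weightRoundingHyp{} transformation already performed and then run \cref{trn:lingroup} on each of the $m$ bins (using the \cref{prop:hrnd} version on bins satisfying \cref{prop:hrnd} and the mirror version on bins satisfying \cref{prop:vrnd}). This produces a rounded item set $\Ihat$ together with a set $U$ of items that got unpacked during linear grouping. The modified $m$ original bins, together with new bins for $U$ obtained via \cref{thm:lg-repack}, will form the output packing. So $\Ihat$ consists of (i) items that remained in the $m$ original bins after rounding, plus (ii) the items in $U$, rounded when they are repacked into the extra bins.

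Next I would verify that the $m$ original bins remain valid after \cref{trn:lingroup}. The transformation only increases heights (or widths, in the \cref{prop:vrnd} case) of big and tall non-dense items (respectively big and wide non-dense items); it does not move items horizontally (resp.\ vertically), so \cref{prop:hrnd}\ref{item:hrnd:geom} (resp.\ \cref{prop:vrnd}\ref{item:vrnd:geom}) is preserved. Dense items have been rounded to zero area by \cref{trn:round-to-0}, so they are neither big, tall, nor wide, and are untouched by linear grouping; thus \cref{prop:hrnd}\ref{item:hrnd:dense} (resp.\ \cref{prop:vrnd}\ref{item:vrnd:dense}) is preserved. Within each linear group the items in a partition share the same weight class, so the replacement items have the same weights as the items they displace and no $v_j$ of any bin increases; hence the $\mu$-slackness of each original bin is preserved. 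For the unpacked items $U$, \cref{thm:lg-repack} (applied separately to the items unpacked from \cref{prop:hrnd}-bins and, by symmetry, those unpacked from \cref{prop:vrnd}-bins) produces at most $\frac{2\eps}{1-\eps}m + 2$ additional $\eps$-slacked bins each satisfying \cref{prop:hrnd} or \cref{prop:vrnd}. Since $\mu \le \eps$, $\eps$-slackness implies $\mu$-slackness, so every bin in the final packing is $\mu$-slacked, giving the bin count $m' \le \bigl(1 + \tfrac{2\eps}{1-\eps}\bigr)m + 2$.

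Finally I would establish the homogeneity of $\Ihat$. By \cref{lem:num-weight-classes,lem:round-up-heavy-n,lem:round-up-light-n} the items come in $O(1)$ weight classes. Linear grouping additionally partitions big items by their width (a multiple of $\epsLarge^2/4$, hence $O(1)$ values) and then produces at most $1/\deltaLG$ distinct heights per partition; tall non-dense items get at most $1/\deltaLG$ distinct heights per weight class; wide non-dense items are handled symmetrically. Together with the single class of small non-dense items per weight class and the dense classes, this yields a constant number $q$ of classes. For each class I would verify the homogeneity requirements of C1.3: within a big class all four dimensions (width, height, all weights) are fixed and slicing is forbidden; within a tall non-dense class the height is fixed (slicing is allowed only perpendicular to width), and since items share a common weight/width ratio the quantities $a(i)$, $v_j(i)$ and $\Span(i)$ all scale linearly in $w(i)$, so the density vector is constant; wide non-dense and small non-dense classes are analogous; heavy dense classes have all dimensions fixed (zero area and fixed weight vector) and slicing is forbidden; light dense classes have $\Span(i)=v_{\max}(i)$ with $a(i)=0$ and all ratios $v_j(i)/v_{\max}(i)$ fixed, so the density vector is again constant and slicing in any direction is fine since the area is zero.

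The main obstacle is the homogeneity verification for the non-dense tall and wide classes: one needs to check carefully that after \weightRoundingHyp{} the weight/length ratio within a weight class is truly invariant, that the $\Span$ of such items scales linearly with the sliceable dimension (so the density vector is fixed), and that linear grouping does not break this invariance by mixing items across weight classes. Once these invariants are confirmed, the remaining verifications are direct bookkeeping.
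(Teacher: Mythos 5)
Your proposal is correct and follows essentially the same route as the paper's (very terse) proof sketch, which simply says "Follows from the definition of linear grouping and \cref{thm:lg-repack}.\ Note that we apply linear grouping separately to bins satisfying \cref{prop:hrnd} and bins satisfying \cref{prop:vrnd}." You expand that one-liner into the bookkeeping it implicitly relies on: applying \cref{thm:lg-repack} twice (once per division) gives the two additive constants and hence the bound $\left(1 + \tfrac{2\eps}{1-\eps}\right)m + 2$; the in-bin preservation of \cref{prop:hrnd}/\cref{prop:vrnd}, $\mu$-slackness, and the dense-strip invariant all follow because linear grouping only increases the free geometric dimension of non-dense big/wide/tall items, only exchanges items within a partition of fixed weight class, and never touches dense items; and the homogeneity count is exactly the counting that underlies \cref{lem:round-homo}. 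The one place where your write-up is slightly looser than the paper's is the justification that dense items are "untouched": the cleaner reason is that \cref{trn:lingroup} by definition only operates on big and non-dense wide/tall items (rather than appealing to the fact that their geometric dimensions were zeroed out), but this is a presentation detail, not a gap.
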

\begin{proof}[Proof sketch]
Follows from the definition of linear grouping and \cref{thm:lg-repack}.
Note that we apply linear grouping separately to bins satisfying \cref{prop:hrnd}
and bins satisfying \cref{prop:vrnd}.
\end{proof}

\subsubsection{Coarse and Fine Partitioning}

Our approach so far has been to start from an optimal packing of items and show how to
modify it to obtain an approximately-optimal structured packing of a rounded instance.
However, the rounding algorithm must round items without knowing the optimal packing.
To design such an algorithm, we first need to introduce additional concepts:
\emph{coarse partitioning} and \emph{fine partitioning}.

At a high level, our rounding algorithm first partitions the items by weight classes
to get a \emph{coarse partitioning}. It then further partitions the coarse partitions
to get a \emph{fine partitioning}. It then rounds up the geometric dimensions of items
in each fine partition to make that partition homogeneous.

We will first formally define coarse and fine partitioning.
We will then restate \cref{thm:slack-pack,lem:rnd2} using the language of fine partitioning.
Then in \cref{sec:round}, we will see an algorithm for computing a fine partitioning of $I$.

\begin{tightemize}
\item Let $B(I)$ be the set of big items in $I$.
\item Let $W(I)$ be the set of wide non-dense items in $I$.
\item Let $H(I)$ be the set of tall non-dense items in $I$.
\item Let $S(I)$ be the set of small non-dense items in $I$.
\item Let $D^{l,1}(I)$ be the set of light dense items in $I$ that are either tall or small.
\item Let $D^{l,2}(I)$ be the set of light dense wide items in $I$.
\item Let $D^{h,1}(I)$ be the set of heavy dense items in $I$ that are either tall or small.
\item Let $D^{h,2}(I)$ be the set of heavy dense wide items in $I$.
\end{tightemize}
When the set of items $I$ is clear from context, we will use
$B$, $W$, $H$, $S$, $D^{l,1}$, $D^{l,2}$, $D^{h,1}$, $D^{h,2}$ to refer to these sets.

\begin{definition}[Coarse partitioning]
Let $I$ be a weight-rounded instance. Partition items $I$ by their weight classes.
Then for each partition containing dense items, split that partition into 2 partitions:
one containing only tall and small items and the other containing only wide items.
The resulting partitioning is called a coarse partitioning of $I$.

We number the coarse partitions in $B$ arbitrarily from $1$ onwards.
There will be at most $n_{\bwc}$ such partitions by \cref{lem:num-weight-classes}.
Denote the $p\Th$ coarse partition by $B_p$.

Similarly, denote the $p\Th$ coarse partition
\begin{tightemize}
\item in $W$ by $W_p$, where $p \in [n_{\wwc}]$.
\item in $H$ by $H_p$, where $p \in [n_{\wwc}]$.
\item in $S$ by $S_p$, where $p \in [n_{\swc}]$.
\item in $D^{l,1}$ by $D^{l,1}_p$, where $p \in [n_{\lwc}]$.
\item in $D^{l,2}$ by $D^{l,2}_p$, where $p \in [n_{\lwc}]$.
\item in $D^{h,1}$ by $D^{h,1}_p$, where $p \in [n_{\hwc}]$.
\item in $D^{h,2}$ by $D^{h,2}_p$, where $p \in [n_{\hwc}]$.
\end{tightemize}
\end{definition}

\begin{observation}
\label{obs:coarse-part}
There is a unique coarse partitioning of $I$.
Furthermore, the unique coarse partitioning can be found in $O(|I|)$ time.
\end{observation}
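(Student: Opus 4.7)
The plan is to establish uniqueness from the fact that the coarse partitioning is defined purely by equivalence relations, and then to describe a linear-scan algorithm that computes each item's partition membership in constant time.

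For uniqueness, I would first observe that the weight-class relation of \cref{defn:weight-classes} is an equivalence relation on $I$ (reflexive, symmetric, and transitive, since each condition is an equality of tuples of ratios/weights within a fixed category). The first step of the coarse partitioning is therefore the unique partition of $I$ into weight-class equivalence classes. The second step then splits each weight class $K$ containing dense items into at most two sub-classes determined solely by whether each item is in $D^{l,2} \cup D^{h,2}$ (wide) or in $D^{l,1} \cup D^{h,1}$ (tall or small). Since whether an item is wide vs.\ tall-or-small is an intrinsic property of the item, this split is also forced. Composing two deterministic refinements yields a unique coarse partitioning.

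For the time bound, I would argue that each item's coarse partition is determined by an $O(1)$-sized \emph{signature}. Concretely, for an item $i$ I compute: (a) its geometric type (big/wide/tall/small) from $w(i), h(i)$ compared to $\epsLarge, \epsSmall$, (b) whether $i$ is dense by comparing $v_{\max}(i)/a(i)$ with $1/\epsLarge^2$ (or checking $a(i) = 0$), (c) whether $i$ is heavy or light when dense, and (d) the appropriate normalized weight vector from \cref{defn:weight-classes}. Since $d$ is a constant, each of these is computable in $O(1)$ time, so the signature has constant size. Using a hash map keyed by this signature, or, since by \cref{lem:num-weight-classes,lem:round-up-heavy-n,lem:round-up-light-n} the total number of possible signatures is a constant, direct addressing into a fixed-size table, I can insert each item into its bucket in $O(1)$ expected (or worst-case with direct addressing) time, yielding $O(|I|)$ total.

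This proof is essentially bookkeeping, so I do not anticipate a real obstacle; the only subtlety worth flagging is that $I$ is assumed to be \weightRoundingHyp{}, which guarantees that the signatures actually take only constantly many distinct values, so we can even avoid hashing. The uniqueness half is immediate once one notices the equivalence-relation structure, and I would state it as a one-line consequence before turning to the constructive algorithm.
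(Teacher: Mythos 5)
Your proof is correct, and the paper gives no proof at all for this observation (it is stated as self-evident), so there is nothing to contrast it against; your writeup is simply a careful expansion of what the authors leave implicit. The essential observations — that membership in a weight class (\cref{defn:weight-classes}) is an equivalence relation since it consists of a category check plus a tuple-equality within each category, that the subsequent wide vs.\ tall-or-small split of dense weight classes is intrinsic to each item, and that each item's coarse partition can therefore be read off from an $O(1)$-size signature computable in $O(1)$ time (using that $d$ and the thresholds $\epsLarge, \epsSmall$ are constants) — are exactly right, and the appeal to \cref{lem:num-weight-classes,lem:round-up-heavy-n,lem:round-up-light-n} to bound the number of distinct signatures and enable direct addressing is a clean way to avoid relying on hashing. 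One small point worth a sentence if you wanted to be fully rigorous: after \cref{trn:round-to-0} dense items have $w(i)=h(i)=0$, so the wide vs.\ tall-or-small label used to separate $D^{*,1}$ from $D^{*,2}$ must refer to the item's geometric type as recorded before that transformation; this is an attribute carried along with the item and so remains an $O(1)$ lookup, but it is worth flagging that the classification is not re-derived from the rounded geometry.
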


In \cref{thm:slack-pack,lem:rnd2}, widths of wide and big items are rounded.
The rounding is different for \cref{thm:slack-pack,lem:rnd2}:
In \cref{thm:slack-pack}, we round the widths of some items to multiples of $\epsLarge^2/4$
so that the bin satisfies \cref{prop:hrnd}\ref{item:hrnd:geom},
and in \cref{lem:rnd2}, we round the widths of items in bins satisfying
\cref{prop:vrnd}\ref{item:vrnd:geom} using linear grouping.
To get a rounding algorithm, we have to guess whether the bin of a wide or big item
will satisfy \cref{prop:hrnd} or \cref{prop:vrnd}.
We will capture these guesses in the fine partitioning.

\begin{definition}[Fine partitioning]
\label{defn:fine-part}
Let $Q \defeq \mathbb{Z} \cap \left[\frac{4}{\epsLarge}+1, \frac{4}{\epsLarge^2}\right]$,
$R \defeq \left\{1, 2, \ldots, \frac{1}{\deltaLG}\right\}$,
$Q_q \defeq \left\{x \in \mathbb{R}: (q-1)\frac{\epsLarge^2}{4} < x
    \le q\frac{\epsLarge^2}{4} \right\}$.

Given a coarse partitioning of a set $I$ of items,
let $(B_p^w, B_p^h)$ be a partitioning of $B_p$,
$(W_p^w, W_p^h)$ be a partitioning of $W_p$
and $(H_p^w, H_q^h)$ be a partitioning of $H_p$.

\begin{itemize}
\item $B_p^w$ is partitioned into sets $\{ B_{p,q,r}^w: q \in Q, r \in R \}$
where $i \in B^w_{p,q,r} \implies w(i) \in Q_q$.
\item $B_p^h$ is partitioned into sets $\{ B_{p,q,r}^h: q \in Q, r \in R \}$
where $i \in B^h_{p,q,r} \implies h(i) \in Q_q$.
\item $W_p^w$ is partitioned into sets $\{ W_{p,q}^w: q \in Q \}$
where $i \in W^w_{p,q} \implies w(i) \le q\epsLarge^2/4$.
\item $W_p^h$ is partitioned into sets $\{ W_{p,r}^h: r \in R \}$.
\item $H_p^w$ is partitioned into sets $\{ H_{p,r}^w: r \in R \}$.
\item $H_p^h$ is partitioned into sets $\{ H_{p,q}^h: q \in Q \}$
where $i \in H^h_{p,q} \implies h(i) \le q\epsLarge^2/4$.
\end{itemize}

A fine partitioning of $I$ is any partitioning of $I$ into sets of the form
$B^w_{p,q,r}$, $B^h_{p,q,r}$, $W^w_{p,q}$, $W^h_{p,r}$, $H^w_{p,r}$, $H^h_{p,q}$,
$S_p$, $D^{l,1}_p$, $D^{l,2}_p$, $D^{h,1}_p$, $D^{h,2}_p$.
\end{definition}

Note that for a given set $I$ of items, there can be multiple fine partitionings.

Given a fine partitioning, we use the `$*$' character in superscript or subscript to denote
the union of some partitions. For example, $B^w_{p,*,r} \defeq \bigcup_q B^w_{p,q,r}$
and $W^w_{*,*} \defeq \bigcup_{p,q} W^w_{p,q}$, and $D^{*,1}_p \defeq D^{l,1}_p \cup D^{h,1}_p$.

When item rotations are allowed, the fine partitioning includes information
on which items to rotate, and we can assume \wLoG{} that
$H(I) = D^{*,2} = B^h_{*,*,*} = W^h_{*,*} = H^h_{*,*} = \{\}$.

\begin{transformation}
\label{trn:round-geom}
Given a fine partitioning of $I$, execute the following operations:
\begin{itemize}
\item $\forall i \in B^w_{*,q,*} \cup W^w_{*,q}$, increase $w(i)$ to $q\epsLarge^2/4$.
\item $\forall i \in B^h_{*,q,*} \cup H^h_{*,q}$, increase $h(i)$ to $q\epsLarge^2/4$.
\item $\forall i \in B^w_{p,q,r}$, increase $h(i)$ to $\max_{i \in B^w_{p,q,r}} h(i)$.
\item $\forall i \in B^h_{p,q,r}$, increase $w(i)$ to $\max_{i \in B^h_{p,q,r}} w(i)$.
\item $\forall i \in W^h_{p,r}$, increase $w(i)$ to $\max_{i \in W^h_{p,r}} w(i)$.
\item $\forall i \in H^w_{p,r}$, increase $h(i)$ to $\max_{i \in H^w_{p,r}} h(i)$.
\end{itemize}
\end{transformation}

The number of fine partitions is constant and after applying \cref{trn:round-geom},
each partition is homogeneous.

\begin{definition}[Semi-structured packing of fine partitioning]
\label{defn:semi-struct}
Suppose we are given a fine partitioning of items $I$.
A packing of items $J \subseteq I$ into a bin is said to be `division-1 semi-structured'
with respect to the fine partitioning iff
$J$ doesn't contain items from $B^h_{*,*,*}$, $W^h_{*,*}$, $H^h_{*,*}$ and $D^{*,2}$
and $J$ satisfies \cref{prop:hrnd}.

A packing of items $J \subseteq I$ into a bin is said to be `division-2 semi-structured'
with respect to the fine partitioning iff
$J$ doesn't contain items from $B^w_{*,*,*}$, $W^w_{*,*}$, $H^w_{*,*}$ and $D^{*,1}$
and $J$ satisfies \cref{prop:vrnd}.

Packing of items into bins is called semi-structured iff
each bin is either division-1 semi-structured or division-2 semi-structured.
\end{definition}

\begin{definition}[Balanced fine partitioning]
\label{defn:bal-fine-part}
A fine partitioning is said to be \emph{balanced} iff it satisfies all of the following conditions:
\begin{itemize}
\item $\forall p, \forall r, h(W^h_{p,r}) = \deltaLG h(W^h_{p,*})$
\item $\forall p, \forall r, w(H^w_{p,r}) = \deltaLG w(H^w_{p,*})$
\item $\forall p, \forall q$, the sets $\{B^w_{p,q,r}: \forall r\}$ can be obtained from
    $B^w_{p,q,*}$ by ordering the items in $B^w_{p,q,*}$ in non-increasing order of
    height (breaking ties arbitrarily) and putting the first $k$ items in $B^w_{p,q,1}$,
    the next $k$ items in $B^w_{p,q,2}$, and so on,
    where $k \defeq \floor{\deltaLG|B^w_{p,q,*}|} + 1$.
\item $\forall p, \forall q$, the sets $\{B^h_{p,q,r}: \forall r\}$ can be obtained from
    $B^h_{p,q,*}$ by ordering the items in $B^h_{p,q,*}$ in non-increasing order of
    width (breaking ties arbitrarily) and putting the first $k$ items in $B^h_{p,q,1}$,
    the next $k$ items in $B^h_{p,q,2}$, and so on,
    where $k \defeq \floor{\deltaLG|B^h_{p,q,*}|} + 1$.
\end{itemize}
\end{definition}

We now restate \cref{thm:slack-pack,lem:rnd2} in terms of fine partitioning.

\begin{lemma}
\label{lem:rnd2.5}
Let $I$ be a set of items and $\widehat{I}$ be the items obtained by
\weightRoundingHyp{} $I$.
Then there exists a balanced fine partitioning of a slicing of $\widehat{I}$
such that after applying \cref{trn:round-geom} to $\widehat{I}$,
there is a semi-structured $(5\eps/8)$-slacked fractional packing of $\widehat{I}$ into
$\left(1 + \frac{2\eps}{1-\eps}\right)(a \opt(I) + b) + 2$ bins.
Here $a$ and $b$ are as defined in \cref{table:slack-pack-ab} in \cref{thm:slack-pack}.
\end{lemma}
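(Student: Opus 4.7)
The plan is to chain together the structural results already established in this section. I would start from an optimal packing of $I$ into $\opt(I)$ bins, apply \cref{thm:slack-pack} to obtain an $\eps$-slacked packing into $a\opt(I) + b$ bins in which every bin satisfies either \cref{prop:hrnd} or \cref{prop:vrnd} (and in which some widths/heights of non-dense big/wide/tall items have been rounded to multiples of $\epsLarge^2/4$). Next, I would weight-round this packing (\weightRoundingHyp{}) to produce a packing of $\widehat{I}$ into the same number of bins.

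The slackness accounting here is the first place to be careful: of the four transformations composing \weightRoundingHyp{}, three round weights \emph{up} and by \cref{lem:wround-non-dense-slackness,lem:round-up-heavy,lem:round-up-light} each of them degrades slackness by at most $\eps/8$; the fourth, \cref{trn:round-to-0}, only rounds weights and geometric dimensions \emph{down} and therefore cannot decrease slackness at all (only \cref{lem:round-to-0} goes in the reverse, unrounding direction, where the $\eps/8$ loss does occur). Starting from $\eps$-slackness we thus end with a $(5\eps/8)$-slacked packing of $\widehat{I}$. Since $5\eps/8 \le \eps$, the hypotheses of \cref{lem:rnd2} are met.

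I would then invoke \cref{lem:rnd2} on this packing. It applies \weightRoundingHyp{}-compatible linear grouping (\cref{trn:lingroup}) to the bins satisfying \cref{prop:hrnd} and, symmetrically, to those satisfying \cref{prop:vrnd}. The output is a packing of a rounded-up, sliced version of $\widehat{I}$ into $(1 + 2\eps/(1-\eps))(a\opt(I) + b) + 2$ bins that is $(5\eps/8)$-slacked and in which each bin still satisfies \cref{prop:hrnd} or \cref{prop:vrnd}. Declaring each bin of the first type division-1 and each bin of the second type division-2 (and assigning each item to the corresponding partition class according to the width/height rounding it received in step 1 and the linear-grouping block it landed in) yields a semi-structured packing in the sense of \cref{defn:semi-struct}; the classes produced are exactly the sets $B^w_{p,q,r}, B^h_{p,q,r}, W^w_{p,q}, W^h_{p,r}, H^w_{p,r}, H^h_{p,q}, S_p, D^{l,1}_p, D^{l,2}_p, D^{h,1}_p, D^{h,2}_p$ needed by \cref{defn:fine-part}.

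It remains to verify that the induced fine partitioning is \emph{balanced} in the sense of \cref{defn:bal-fine-part}. This is essentially a bookkeeping check from the definition of \cref{trn:lingroup}: for the wide-grouping of tall items, items are arranged side-by-side on the real line and cut at multiples of $\deltaLG w(H^w_{p,*})$, which directly matches the width-equipartition condition (and the analogous statement for $W^h_{p,r}$); for big items, the groups of $\floor{\deltaLG |S|}+1$ items obtained after ordering by height (or width) exactly realise the balanced condition for $B^w_{p,q,r}$ and $B^h_{p,q,r}$. The number of bins and the slackness bound are already recorded by \cref{lem:rnd2} and \cref{thm:slack-pack}, so combining these gives the claimed count. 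The main subtlety is the slackness accounting described above; once that is correct, the remaining steps are a direct application of the previously proven lemmas.
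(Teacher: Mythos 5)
Your proof is correct and follows essentially the same route as the paper's: start from an optimal packing, apply \cref{thm:slack-pack}, weight-round the packing (losing slackness $\eps/8$ three times, from \cref{lem:wround-non-dense-slackness,lem:round-up-heavy,lem:round-up-light}, with \cref{trn:round-to-0} causing no loss in this direction), then apply linear grouping via \cref{lem:rnd2}, and read off a balanced fine partitioning from the division-1/division-2 designation of the resulting bins. The only difference is that you make explicit the balancedness bookkeeping and the $3 \times \eps/8$ slackness accounting, both of which the paper treats more tersely; these are correct observations and match the paper's implicit reasoning.
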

\begin{proof}
By \cref{thm:slack-pack}, we can round up the width of some big and wide
non-dense items in $I$ to the nearest multiple of $\epsLarge^2/4$ and round up the height
of some big and tall non-dense items in $I$ to the nearest multiple of $\epsLarge^2/4$
and then pack $I$ into $a\opt(I) + b$ $\eps$-slacked bins
such that each bin satisfies either \cref{prop:hrnd} or \cref{prop:vrnd}.
Let $\mathcal{B}$ be such a bin packing of $I$.
By \cref{lem:wround-non-dense-slackness,lem:round-up-heavy,lem:round-up-light},
$\mathcal{B}$ gives us a $(5\eps/8)$-slacked packing of $\widehat{I}$.

Call the bins in $\mathcal{B}$ that satisfy \cref{prop:hrnd} division-1 bins.
Call the rest of the bins division-2 bins. The items whose width needs to be rounded up
to a multiple of $\epsLarge^2/4$ are the big and wide items in division-1 bins
and the items whose height needs to be rounded up to a multiple of $\epsLarge^2/4$
are the big and tall items in division-2 bins. No other items need to have their width or height
rounded up in the packing $\mathcal{B}$ produced by \cref{thm:slack-pack}.

Let $\widehat{I}^w$ and $\widehat{I}^h$ be the items of $\widehat{I}$
in division-1 bins and division-2 bins respectively.

We can compute the coarse partitioning of $\widehat{I}$. Define
\begin{align*}
B_p^w &\defeq B_p \cap \widehat{I}^w
& W_p^w &\defeq W_p \cap \widehat{I}^w
& H_p^w &\defeq H_p \cap \widehat{I}^w
\\ B_p^h &\defeq B_p \cap \widehat{I}^h
& W_p^h &\defeq W_p \cap \widehat{I}^h
& H_p^h &\defeq H_p \cap \widehat{I}^h
\end{align*}
Define
\begin{tightemize}
\item $B^w_{p,q} \defeq \{i \in B^w_p: (q-1)\epsLarge^2/4 < w(i) \le q\epsLarge^2/4 \}$.
\item $B^h_{p,q} \defeq \{i \in B^h_p: (q-1)\epsLarge^2/4 < h(i) \le q\epsLarge^2/4 \}$.
\item $W^w_{p,q} \defeq \{i \in W^w_p: (q-1)\epsLarge^2/4 < w(i) \le q\epsLarge^2/4 \}$.
\item $H^h_{p,q} \defeq \{i \in H^h_p: (q-1)\epsLarge^2/4 < h(i) \le q\epsLarge^2/4 \}$.
\end{tightemize}
Define
\begin{tightemize}
\item $B^w_{p,q,r}$ as the $r\Th$ linear group of $B^w_{p,q}$ (see \cref{trn:lingroup}).
\item $B^h_{p,q,r}$ as the $r\Th$ linear group of $B^h_{p,q}$.
\item $W^h_{p,r}$ as the $r\Th$ linear group of $W^h_p$.
\item $H^w_{p,r}$ as the $r\Th$ linear group of $H^w_p$.
\end{tightemize}
This is how we get a fine partitioning of a slicing of $\widehat{I}$.

As per \cref{lem:rnd2}, on applying \cref{trn:round-geom} to $\widehat{I}$,
the resulting instance can be sliced and packed into
$\left(1 + \frac{2\eps}{1-\eps}\right)(a \opt(I) + b) + 2$ number of
$(5\eps/8)$-slacked bins.
\end{proof}

\subsection{Rounding Algorithm}
\label{sec:round}

Let $I$ be a set of weight-rounded items.
To use \cref{lem:rnd2.5} to get an approximately-optimal packing of items $I$,
we would like to iterate over all balanced fine partitionings of slicings of $I$.
However, even if we don't consider slicings, doing that will take exponential time,
since for each big, wide and tall item, we need to decide whether to designate it
as a division-1 item or a division-2 item.

We can get around this problem by iterating over a polynomial-sized set
$\mathcal{S}_{\Pi}$ of fine partitionings such that
each balanced fine partitioning $\Pcal$ of a slicing of $I$ is `close to'
a fine partitioning $\Pcalhat$ in $\mathcal{S}_{\Pi}$.
We will now define what we mean by `close to'.

\begin{definition}[Predecessor of a set of items]
Let $I_1$ and $I_2$ be sets of items. Interpret each item $i \in I_1$
as a bin whose geometric dimensions are the same as that of $i$
and whose weight capacities are the same as the weights of $i$.
$I_2$ is said to be a predecessor of $I_1$ (denoted as $I_2 \preceq I_1$)
iff $I_2$ can be sliced and packed into $I_1$.
\end{definition}

We will design a polynomial-time algorithm $\iterFineParts$ that takes as input
a weight-rounded set $I$ of items and outputs a set $\mathcal{S}_{\Pi}$ of pairs such that
for each balanced fine partitioning $\Pcal$ of a slicing of $I$,
there exists a pair
$(D, \Pcalhat) \in \mathcal{S}_{\Pi}$ such that all of these conditions hold:
\begin{itemize}
\item $\Pcalhat$ is a fine partitioning of $I-D$.
\item After applying \cref{trn:round-geom}, each partition in $\Pcalhat$
    is a predecessor of the corresponding partition in $\Pcal$.
\item $D$ is a set of non-dense items (called discarded items) such that $\Span(D)$ is small
    compared to $\Span(I)$.
\end{itemize}

\subsubsection{Big Items}

Let $\partBig$ be an algorithm that takes a coarse partition $B_p$ of big items as input,
and outputs multiple fine partitionings of $B_p$.
We can use $\partBig$ as a subroutine in $\iterFineParts$.

To design $\partBig$, we will guess the cardinality of sets $B^w_{p,q,r}$ and $B^h_{p,q,r}$.
We will then guess the maximum height in $B^w_{p,q,r}$ and the maximum width in $B^h_{p,q,r}$.
Then for each guess, we will solve a max-flow problem to check if items in $B_p$
can be assigned to these sets. We skip the details here since this approach
is similar to that of Pr\"adel (see Section 3.3.1 in \cite{pradel-thesis}).

Formally $\partBig(B_p)$ outputs a set of pairs of the form $(\{\}, \Pcalhat)$,
where $\Pcalhat$ is supposed to be a fine partitioning of $B_p$.

\begin{claim}
\label{claim:part-big-time}
$\partBig(B_p)$ generates
$O\left(n^{2|Q|(1/\deltaLG+1) - 1}\right) = O\left(n^{8(d+1)/\eps\epsLarge^3}\right)$
values, where $n \defeq |B_p|$, and the running time per value is $O(n^2/\eps\epsLarge)$.
\end{claim}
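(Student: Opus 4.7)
The plan is to decompose the claim into a guess-count bound and a per-guess verification cost. A single guess produced by $\partBig$ will consist of (i) for each $(q, r) \in Q \times R$, nonnegative integers $|B^w_{p,q,r}|$ and $|B^h_{p,q,r}|$ with $\sum_{q,r}(|B^w_{p,q,r}| + |B^h_{p,q,r}|) = n \defeq |B_p|$, together with (ii) for each $q \in Q$, a guessed maximum height for $B^w_{p,q,*}$ and a guessed maximum width for $B^h_{p,q,*}$, each drawn from the finite collection of item heights and widths occurring in $B_p$.

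Counting the guesses is then routine. The cardinality tuples in (i) are compositions of $n$ into $2|Q||R|$ nonnegative parts, giving $\binom{n + 2|Q||R| - 1}{2|Q||R| - 1} = O(n^{2|Q||R| - 1})$ choices, and the $2|Q|$ maxima in (ii) contribute a further factor of $n^{2|Q|}$. Multiplying yields the claimed $O(n^{2|Q|(1/\deltaLG + 1) - 1})$ bound; substituting $|Q| \le 4/\epsLarge^2$ and $1/\deltaLG = (d+1)/(\eps\epsLarge)$ (absorbing lower-order terms into the constant, as throughout the paper) collapses this to $O(n^{8(d+1)/(\eps\epsLarge^3)})$.

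To verify a guess I plan to solve a bipartite max-flow instance with a source, a sink, one vertex per item of $B_p$, and one vertex per target group $B^w_{p,q,r}$ or $B^h_{p,q,r}$; source-to-item edges carry unit capacity and group-to-sink edges carry the guessed cardinality. An item $i$ connects to $B^w_{p,q,r}$ iff $w(i) \in Q_q$ and $h(i)$ is at most the per-$q$ guessed maximum for $B^w_{p,q,*}$, and analogously on the $B^h$ side. Since the index $q$ is forced by $i$'s width (resp.\ height) class, each item has only $O(|R|) = O(1/\eps\epsLarge)$ compatible groups, so the network has $O(n/\eps\epsLarge)$ edges and maximum flow value $n$. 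A standard augmenting-path algorithm therefore runs in $O(n^2/\eps\epsLarge)$ time; on success one reads off the $B^w$-versus-$B^h$ assignment, sorts each $B^w_{p,q,*}$ by non-increasing height and each $B^h_{p,q,*}$ by non-increasing width, and chops into chunks of the prescribed size $\floor{\deltaLG |B^w_{p,q,*}|} + 1$ to obtain an explicit balanced fine partitioning.

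The main obstacle will be completeness: showing that every balanced fine partitioning of $B_p$ is produced by some accepted guess. For this I would invoke \cref{defn:bal-fine-part}: once the per-$q$ cardinalities and the split between $B^w$ and $B^h$ are fixed, balancedness forces the sort-and-chunk structure inside each $B^w_{p,q,*}$ and $B^h_{p,q,*}$, so the per-$(q,r)$ cardinalities and the per-$(q,r)$ maxima in any balanced partitioning become deterministic functions of the per-$q$ data. In particular, the profile of each balanced fine partitioning appears among the enumerated guesses, and the max-flow step correctly recovers a compatible item-to-group assignment, which after sorting and chunking reproduces that partitioning.
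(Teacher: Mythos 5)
The paper does not actually supply a proof of this claim---it is asserted and the reader is referred to Pr\"adel's thesis for the construction of $\partBig$---so your proposal is being judged on its own terms.

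Your guess-enumeration arithmetic does land on the stated exponent, but the particular split of the profile you chose---per-$(q,r)$ cardinalities together with only per-$q$ maxima---is the wrong one, and the error surfaces exactly where you flagged it, in completeness. The per-$q$ maximum is too weak a constraint in the flow network. Take three big items $a$, $b$, $c$ with the same width class $q$, the same height class $q'$, and $h(a)>h(c)>h(b)$ (all within the $\epsLarge^2/4$-wide interval $Q_{q'}$). Let the target balanced partitioning $\Pcal$ place $a$ and $b$ on the $B^w$ side, so with chunk size $1$ we have $B^w_{p,q,1}=\{a\}$, $B^w_{p,q,2}=\{b\}$, and $c$ goes to $B^h_{p,q',1}$. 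The guess derived from $\Pcal$ sets the per-$q$ threshold to $h(a)$, under which all three items are compatible with every $B^w_{p,q,r}$ in your network, and (since they share a width) all three are also compatible on the $B^h$ side. The flow may well return $\widehat{B}^w_{p,q,*}=\{a,c\}$ with $b$ routed to $B^h$; sorting and chunking then gives $\widehat{B}^w_{p,q,2}=\{c\}$, whose rounded height $h(c)$ exceeds $\max B^w_{p,q,2}=h(b)$. So $\widehat{B}^w_{p,q,2}\not\preceq B^w_{p,q,2}$, and nothing in the construction forbids the solver from choosing this assignment. The hand-wave at the end (``the max-flow step correctly recovers a compatible item-to-group assignment, which after sorting and chunking reproduces that partitioning'') is precisely the step that fails.

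The exponent $2|Q|(1/\deltaLG+1)-1$ is obtained by the opposite split. Guess only the per-$q$ cardinalities $|B^w_{p,q,*}|,|B^h_{p,q,*}|$ (a composition of $n$ into $2|Q|$ parts, $O(n^{2|Q|-1})$ choices), and guess the per-$(q,r)$ maxima (a further $n^{2|Q||R|}$ factor). By \cref{defn:bal-fine-part}, once $|B^w_{p,q,*}|$ is fixed the chunk size $\floor{\deltaLG|B^w_{p,q,*}|}+1$ and hence every $|B^w_{p,q,r}|$ are determined, so the per-$(q,r)$ flow capacities come for free; meanwhile the edge from item $i$ to $B^w_{p,q,r}$ should be present only if $h(i)$ is at most the guessed per-$(q,r)$ max, which is exactly the constraint needed so that any feasible flow gives $\widehat{B}^w_{p,q,r}\preceq B^w_{p,q,r}$ directly. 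This yields the same $O(n^{2|Q||R|+2|Q|-1})$ total you computed, but with a genuine completeness argument; your flow-network size and $O(n^2/\eps\epsLarge)$ per-guess bound carry over unchanged, and since \cref{lem:part-big} only asks for $\Pcalhat$ to be a fine partitioning (not a balanced one), the final sort-and-chunk step can be dropped.
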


\begin{claim}
\label{lem:part-big}
Let $\Pcal \defeq \{B^w_{p,q,r}: \forall q, \forall r\} \cup \{B^h_{p,q,r}: \forall q, \forall r\}$
be a balanced fine partitioning of $B_p$.
Then there is an output $(\{\}, \Pcalhat)$ of $\partBig(B_p)$ where
$\Pcalhat \defeq \{\widehat{B}^w_{p,q,r}: \forall q, \forall r\}
\cup \{\widehat{B}^h_{p,q,r}: \forall q, \forall r\}$ such that
$\Pcalhat$ is a fine partitioning of $B_p$ and
after applying \cref{trn:round-geom},
\[ \forall q, \forall r, \widehat{B}^w_{p,q,r} \preceq B^w_{p,q,r}
\textrm{ and } \widehat{B}^h_{p,q,r} \preceq B^h_{p,q,r}. \]
\end{claim}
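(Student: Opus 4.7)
The plan is to realize $\partBig(B_p)$ as an exhaustive enumeration over a small number of succinct ``templates'' for a fine partitioning, and then to solve a bipartite feasibility problem for each template; correctness will follow by exhibiting that the template corresponding to the given $\Pcal$ is among those enumerated.

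First I would enumerate, for each of the $2|Q||R|$ slots $(d,q,r)$ with $d \in \{w, h\}$, both a cardinality $c^d_{q,r} \in \{0, 1, \ldots, |B_p|\}$ and an anchor item $\alpha^d_{q,r} \in B_p$ intended to play the role of the height-maximal (if $d = w$) or width-maximal (if $d = h$) item in the slot. With one cardinality determined by $\sum_{d,q,r} c^d_{q,r} = |B_p|$, the total number of templates is $O(n^{2|Q|(1/\deltaLG + 1) - 1})$, matching \cref{claim:part-big-time}. For each template I would build a bipartite capacitated graph in which item $i \in B_p$ is connected to slot $(w,q,r)$ iff $w(i) \in Q_q$ and $h(i) \le h(\alpha^w_{q,r})$, and to slot $(h,q,r)$ iff $h(i) \in Q_q$ and $w(i) \le w(\alpha^h_{q,r})$, with slot $(d,q,r)$ carrying capacity $c^d_{q,r}$. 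A standard max-flow computation then tests whether every item can be routed; if so, the assignment defines a fine partitioning $\Pcalhat$ of $B_p$, which is emitted as the pair $(\{\}, \Pcalhat)$, and otherwise the template is discarded.

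For correctness, given the balanced $\Pcal$ I would consider the specific template with $c^d_{q,r} \defeq |B^d_{p,q,r}|$, $\alpha^w_{q,r} \defeq \leader(B^w_{p,q,r})$, and $\alpha^h_{q,r} \defeq \leader(B^h_{p,q,r})$ (choosing an arbitrary item when a group is empty). The definition of a balanced partition (\cref{defn:bal-fine-part}) forces every $i \in B^w_{p,q,r}$ to satisfy $w(i) \in Q_q$ and $h(i) \le h(\leader(B^w_{p,q,r}))$, and symmetrically for $B^h_{p,q,r}$; hence $\Pcal$ itself is a feasible routing on this template, and max-flow returns some saturating assignment $\Pcalhat$. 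To verify the predecessor claim, I observe that after \cref{trn:round-geom} every item in $\widehat{B}^w_{p,q,r}$ and in $B^w_{p,q,r}$ shares the common width $q\epsLarge^2/4$, and the rounded height of both is the group's maximum original height; the edge constraint gives $\max_{i \in \widehat{B}^w_{p,q,r}} h(i) \le h(\alpha^w_{q,r}) = h(\leader(B^w_{p,q,r}))$, while the cardinalities agree by construction. Pairing items one-to-one therefore yields the desired slicing/packing $\widehat{B}^w_{p,q,r} \preceq B^w_{p,q,r}$, and the argument for $B^h_{p,q,r}$ is identical with widths and heights swapped.

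The main subtlety I anticipate is disentangling the two directions of rounding: a single item could \emph{a priori} be routed to either a $(w, \cdot, \cdot)$-slot or an $(h, \cdot, \cdot)$-slot, and the template must commit to one. The balanced $\Pcal$ already makes that per-item commitment, so the feasibility witness respects the division automatically; the residual concern is that the max-flow's choice of division may differ from $\Pcal$'s. This is harmless because \cref{trn:round-geom} assigns uniform dimensions to every slot regardless of which original items populate it, so the predecessor relation depends only on the slot's cardinality and anchor, both of which are pinned down by the template.
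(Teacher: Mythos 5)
The paper itself supplies no proof for this claim (the text explicitly skips the details, pointing to Pr\"adel's thesis), and your proposal correctly fleshes out the sketched approach: guess cardinalities and anchors, route via max-flow, and argue that $\Pcal$ witnesses feasibility of its own template while, after \cref{trn:round-geom}, the matching widths, anchor-bounded heights, and shared weight vector (big items within one weight class have identical weights) give $\widehat{B}^w_{p,q,r} \preceq B^w_{p,q,r}$ by a one-to-one pairing, and symmetrically for $\widehat{B}^h_{p,q,r} \preceq B^h_{p,q,r}$.

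One accounting detail is off. Guessing both an independent cardinality and an independent anchor for each of the $2|Q||R|$ slots yields $O(n^{4|Q|/\deltaLG - 1})$ templates, not the $O(n^{2|Q|(1/\deltaLG + 1) - 1})$ of \cref{claim:part-big-time}. Since you only need to enumerate a template that matches the balanced $\Pcal$, it suffices to guess the $2|Q|$ per-class totals $|B^w_{p,q,*}|$ and $|B^h_{p,q,*}|$ (with one sum constraint) and derive each group's cardinality from the balanced-partitioning rule --- $\lfloor \deltaLG |B^w_{p,q,*}| \rfloor + 1$ items per group, the last possibly smaller. This replaces $2|Q||R| - 1$ free cardinality parameters by $2|Q| - 1$, recovering the stated bound; it does not affect the correctness of the present claim.
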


$\partBig$ for the rotational case is similar to the non-rotational case:
When rotations are allowed, assume \wLoG{} that $B^h_{*,*,*} = \{\}$.
We will guess the cardinality and maximum height in sets $B^w_{p,q,r}$.
Then for each guess, we will solve a max-flow problem to check if
items in $B_p$ can be assigned to these sets, possibly after rotating some items.

\subsubsection{Wide and Tall Items}

Let $\partWide$ be an algorithm that takes a coarse partition $W_p$
of wide items as input, and outputs multiple fine partitionings of $W_p$.
We can use $\partWide$ as a subroutine in $\iterFineParts$.

Let $\Pcal \defeq \{W_{p,q}^w: \forall q\} \cup \{W_{p,r}^h: \forall r\}$ be
a balanced fine partitioning of a slicing of $W_p$.
We want $\partWide$ to find a fine partitioning
$\Pcalhat \defeq \{\widehat{W}_{p,q}^w: \forall q\} \cup \{\widehat{W}_{p,r}^h: \forall r\}$
of a large subset of $W_p$ such that
after applying \cref{trn:round-geom} to $\Pcal$ and $\Pcalhat$,
every fine partition in $\Pcalhat$ is a predecessor of
the corresponding fine partition in $\Pcal$.

For any $J \subseteq W_p$, define
$h(J) \defeq \sum_{i \in J} h(i)$ and $w(J) \defeq \max_{i \in J} w(i)$.
We will create a rectangular box for each fine partition
and then try to pack a large subset of $W_p$ into these boxes.
Let $Q \defeq \mathbb{Z} \cap \left[\frac{4}{\epsLarge}+1, \frac{4}{\epsLarge^2}\right]$.
For each $q \in Q$, let $s^w_q$ be a box
of width $q\epsLarge^2/4$ and height $h(W^w_{p,q})$.
For each $r \in [1/\deltaLG]$, let $s^h_r$ be a box
of width $w(W^h_{p,r})$ and height $h(W^h_{p,r})$.
Since we don't know $W^h_{p,r}$ and $W^w_{p,q}$,
we will guess the value of $w(W^h_{p,r})$ and
we will guess very close lower bounds on $h(W^w_{p,q})$ and $h(W^h_{p,r})$.
We will then try to pack most of the items from $W_p$ into these boxes.

Let $\widehat{W}^w_{p,q}$ be the items packed into $s^w_q$ and
let $\widehat{W}^h_{p,r}$ be the items packed into $s^h_r$.
Then $\Pcalhat \defeq \{\widehat{W}_{p,q}^w: \forall q\} \cup \{\widehat{W}_{p,r}^h: \forall r\}$
is a fine partitioning of a large subset of $W_p$.
After applying \cref{trn:round-geom} to $\Pcal$ and $\Pcalhat$,
each item in $W^w_{p,q}$ and $\widehat{W}^w_{p,q}$ has width $q\epsLarge^2/4$.
Since $h(\widehat{W}^w_{p,q}) \le h(s^w_q) \le h(W^w_{p,q})$,
we get $\widehat{W}^w_{p,q} \preceq W^w_{p,q}$ after \cref{trn:round-geom}.
We can similarly prove that $\widehat{W}^h_{p,r} \preceq W^h_{p,r}$.
Therefore, $\Pcalhat$ is a suitable fine partitioning.

The details on how to approximately guess the size of boxes
and how to pack a large subset of items into boxes can be deduced from
section 3.3.1 of \cite{pradel-thesis}.

Formally, $\partWide(W_p)$ outputs a set of pairs of the form $(D, \Pcalhat)$,
where items in $D$ are called \emph{discarded} items and $\Pcalhat$ is supposed to be
a fine partitioning of $W_p - D$.

\begin{claim}
\label{lem:part-wide-discard}
For every output $(D, \Pcalhat)$ of $\partWide(W_p)$,
\[ h(D) \le (3\epsSmall)\left( \frac{d+1}{\eps\epsLarge} + \frac{4}{\epsLarge^2}
    - \frac{4}{\epsLarge} \right). \]
\end{claim}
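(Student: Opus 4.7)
The plan is to charge at most $3\epsSmall$ of the discarded height to each box that $\partWide(W_p)$ constructs, and then observe that the total number of boxes equals the parenthesized factor on the right-hand side. Explicitly, the algorithm builds one box $s^w_q$ for every $q \in Q$ and one box $s^h_r$ for every $r \in R = [1/\deltaLG]$; since $\epsLarge^{-1} \in \mathbb{Z}$ we have $|Q| = \frac{4}{\epsLarge^2} - \frac{4}{\epsLarge}$, and $|R| = \frac{1}{\deltaLG} = \frac{d+1}{\eps\epsLarge}$. Hence the number of boxes is exactly $\frac{d+1}{\eps\epsLarge} + \frac{4}{\epsLarge^2} - \frac{4}{\epsLarge}$.

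First I would record the key structural fact that $W_p$ lies in a single weight class of wide items, so every item has the same weight-to-height ratio and height at most $\epsSmall$. For each box $s^w_q$, $\partWide$ guesses a value $\widehat{h}_q$ within additive error $\epsSmall$ of $h(W^w_{p,q})$, sets the height of $s^w_q$ to $\widehat{h}_q$, and greedily packs items of width in $Q_q$ into it. The unpacked remainder is routed to $D$; three sources each contribute at most $\epsSmall$ to its total height: (i) the guessing underestimate, (ii) the boundary item that does not quite fit (since $h(i) \le \epsSmall$ for every $i \in W_p$), and (iii) the handoff between the width class $Q_q$ and the adjacent one. An analogous argument works for each $s^h_r$: the width $w(W^h_{p,r})$ is one of at most $n$ candidate item-widths and can be enumerated exactly, while the height $h(W^h_{p,r})$ is estimated within additive error $\epsSmall$, and the same three $\epsSmall$-sized losses apply.

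Summing these per-box losses over the $|Q| + |R|$ boxes yields the stated bound. The main bookkeeping hurdle is to define the guessing grid of $\partWide$ precisely enough that the per-box error genuinely sits at $3\epsSmall$ and to verify that no discarded item is double-counted across two boxes. Both checks parallel the corresponding analysis in Section~3.3.1 of \cite{pradel-thesis}; the new ingredient here is that, because all items of $W_p$ share a single weight class, the vector dimensions scale linearly with $h$ and produce no additional sources of discard beyond those already charged.
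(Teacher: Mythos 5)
The paper itself does not prove this claim; it defers to Section~3.3.1 of Pr\"adel's thesis for the guessing-and-packing details of $\partWide$, and the claim is stated without argument. So there is no reference proof to compare against literally, and the judgement has to be about the internal soundness of your plan.

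Your box-counting decomposition is exactly right and is clearly the intended structure: $|Q| = \frac{4}{\epsLarge^2} - \frac{4}{\epsLarge}$, $|R| = \frac{1}{\deltaLG} = \frac{d+1}{\eps\epsLarge}$, so there are precisely $\frac{d+1}{\eps\epsLarge} + \frac{4}{\epsLarge^2} - \frac{4}{\epsLarge}$ boxes, and a $3\epsSmall$ charge per box yields the stated bound verbatim. Sources~(i) (an additive $\epsSmall$ underestimate of the box height) and~(ii) (the boundary item of height $\le \epsSmall$) are also genuine.

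The gap is source~(iii). ``Handoff between the width class $Q_q$ and the adjacent one'' is not a well-defined event in this algorithm: the intervals $Q_q = \big((q-1)\frac{\epsLarge^2}{4}, q\frac{\epsLarge^2}{4}\big]$ are disjoint, so an item's width determines its $q$ uniquely and there is no crossing to account for. You need to name the third $\epsSmall$ more carefully. The likelier candidate is the split of $W_p$ into $W^w_p$ and $W^h_p$: $\partWide$ must guess how to allocate items between the two divisions (this is what the extra ``$+1$'' exponent in the running-time bound of \cref{claim:part-wide-time} reflects), and an item of height $\le \epsSmall$ can be lost at that boundary for each box it affects. You flag this as a ``bookkeeping hurdle,'' but it is exactly the step that distinguishes a valid charging scheme from a heuristic one, so it cannot be left implicit. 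Also be careful that your per-box charges are over \emph{disjoint} sets of discarded items; this is what makes the sum legal, and it needs to be argued from the algorithm's control flow rather than asserted.
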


\begin{claim}
\label{lem:part-wide}
Let $\Pcal \defeq \{W_{p,q}^w: \forall q\} \cup \{W_{p,r}^h: \forall r\}$ be
a balanced fine partitioning of a slicing of $W_p$.
Then for some output $(D, \Pcalhat)$ of $\partWide(W_p)$,
$\Pcalhat$ is a fine partitioning of $W_p - D$
and after applying \cref{trn:round-geom} to $\Pcal$ and $\Pcalhat$,
\[ (\forall q, \widehat{W}^w_{p,q} \preceq W^w_{p,q})
\textrm{ and } (\forall r, \widehat{W}^h_{p,r} \preceq W^h_{p,r}). \]
\end{claim}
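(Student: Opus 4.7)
The plan is to implement $\partWide$ as an enumeration over approximately-sized guesses for the boxes $\{s^w_q\}_{q \in Q}$ and $\{s^h_r\}_{r \in R}$, following the approach of \cite{pradel-thesis} (Section 3.3.1). For each $r \in R$, $\partWide$ guesses $w(s^h_r)$ by iterating over the at most $|W_p|$ distinct item widths appearing in $W_p$. For each $q \in Q$ and $r \in R$, it guesses $h(s^w_q)$ and $h(s^h_r)$ by iterating over the $O(1/\epsSmall)$ multiples of $\epsSmall$ in $[0,1]$. The total number of guesses is polynomial. For each combined guess, $\partWide$ runs a packing subroutine (e.g., min-cost flow or greedy) that attempts to assign items of $W_p$ to the strips, subject to: $s^w_q$ accepts any item with $w(i) \le q\epsLarge^2/4$; $s^h_r$ accepts any item with $w(i) \le w(s^h_r)$; and each strip's total loaded height is at most its guessed capacity. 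The unassigned items form $D$, and the assignment determines $\Pcalhat$.

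Next, I would fix the balanced fine partitioning $\Pcal$ and select a specific guess mirroring it: set $w(s^h_r) \defeq w(W^h_{p,r})$, which equals $w(\leader(W^h_{p,r}))$ and is therefore an item width actually present in $W_p$; and set $h(s^w_q) \defeq \epsSmall \floor{h(W^w_{p,q})/\epsSmall}$ and $h(s^h_r) \defeq \epsSmall \floor{h(W^h_{p,r})/\epsSmall}$. This guess is enumerated by $\partWide$. For this guess, items of $W^w_{p,q}$ (total height $h(W^w_{p,q})$, widths $\le q\epsLarge^2/4$) can be greedily packed into $s^w_q$ in non-increasing order of height, stopping once the next item would exceed $h(s^w_q)$. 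Since each item has height at most $\epsSmall$ and $h(s^w_q)$ undershoots $h(W^w_{p,q})$ by at most $\epsSmall$, the height of discarded items in this strip is at most $2\epsSmall$. The same argument works for $s^h_r$, using that every item of $W^h_{p,r}$ has width at most $w(W^h_{p,r}) = w(s^h_r)$. Summing over the $|Q| + |R|$ strips and padding with a crude unit-item slack gives $h(D) \le 3\epsSmall(|Q| + |R|) = 3\epsSmall\bigl(\tfrac{4}{\epsLarge^2} - \tfrac{4}{\epsLarge} + \tfrac{d+1}{\eps\epsLarge}\bigr)$, matching \cref{lem:part-wide-discard}.

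Finally, setting $\widehat{W}^w_{p,q}$ and $\widehat{W}^h_{p,r}$ to be the items assigned to $s^w_q$ and $s^h_r$ respectively, $\Pcalhat$ is a fine partitioning of $W_p - D$. After \cref{trn:round-geom}, items in $\widehat{W}^w_{p,q}$ all have width $q\epsLarge^2/4$ and total height at most $h(s^w_q) \le h(W^w_{p,q})$, and items in $W^w_{p,q}$ also get width $q\epsLarge^2/4$; since both partitions share the same weight class (both are subsets of $W_p$), $\widehat{W}^w_{p,q}$ can be sliced along the height direction and packed into the items of $W^w_{p,q}$, i.e., $\widehat{W}^w_{p,q} \preceq W^w_{p,q}$. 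An analogous argument using $\max_{i \in \widehat{W}^h_{p,r}} w(i) \le w(s^h_r) = w(W^h_{p,r})$ and $h(\widehat{W}^h_{p,r}) \le h(s^h_r) \le h(W^h_{p,r})$ yields $\widehat{W}^h_{p,r} \preceq W^h_{p,r}$.

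The main obstacle is setting up the packing subroutine so that it certifiably succeeds whenever the guess closely approximates $\Pcal$. The width flexibility of $s^h_r$ (which may receive items of varying widths up to $w(s^h_r)$, while in the true $\Pcal$ the group $W^h_{p,r}$ may contain a specific mix) requires care: one must pack items greedily or via flow in a way that respects both the width and total-height constraints, and then argue that the rounding step of \cref{trn:round-geom} does not invalidate the $\preceq$ relation. The slicing-based definition of $\preceq$ and the shared weight class within each coarse partition are exactly what make this step go through without a separate weight-feasibility check.
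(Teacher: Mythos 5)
The paper itself does not prove this claim --- it explicitly defers the implementation details to Section 3.3.1 of Pr\"adel's thesis, and the paragraphs preceding the claim serve only as a proof sketch. Your proposal fills in essentially the same outline: one box per fine partition, enumeration of approximate box dimensions, a packing subroutine that assigns whole items of $W_p$ to boxes, and the $\preceq$ argument via matching rounded widths, bounded total height, and a shared weight class within the coarse partition. That last step is argued correctly.

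There is, however, one concrete error in your guessing scheme. You enumerate $h(s^w_q)$ and $h(s^h_r)$ over multiples of $\epsSmall$ in $[0,1]$, implying a constant-size guess space. But $W^w_{p,q}$ and $W^h_{p,r}$ are pieces of the coarse partition $W_p$, which aggregates items over \emph{all} bins of the packing, not a single bin; the total height $h(W^w_{p,q})$ can therefore be as large as $|W_p|\cdot\epsSmall \gg 1$. You need to range over multiples of $\epsSmall$ up to $h(W_p)$, giving $O(|W_p|)$ candidates per guess. This is consistent with \cref{claim:part-wide-time}, whose output count $\deltaLG n^{n_q+1+1/\deltaLG}$ is polynomial in $n$, not $O(1)$. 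Relatedly, you do not exploit the balance condition $h(W^h_{p,r}) = \deltaLG h(W^h_{p,*})$, which allows a single shared height guess across all $r$; this is why the paper's exponent has $+1$ rather than $+|R|$. Finally, the crux --- designing the packing subroutine so that it provably assigns most of $W_p$ to the boxes even though $\Pcal$ is a partitioning of a \emph{slicing} and items must be assigned whole --- is acknowledged in your last paragraph but not resolved; the paper does the same (it defers to Pr\"adel), so this is not a mark against you, but it is the part a self-contained proof would actually have to supply.
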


\begin{claim}
\label{claim:part-wide-time}
Let there be $n$ items in $W_p$. Let
$n_q \defeq 4/\epsLarge^2 - 4/\epsLarge$.
Then $\partWide(W_p)$ outputs at most $\deltaLG n^{n_q + 1 + 1/\deltaLG}$
distinct values. The running time per value is $O(n\log n)$.
\end{claim}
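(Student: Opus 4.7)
The plan is to bound separately (a) the number of guesses $\partWide(W_p)$ enumerates and (b) the cost of processing each guess. Recall that $\partWide$ must pin down the shapes of the $|Q|=n_q$ boxes $s^w_q$ and the $1/\deltaLG$ boxes $s^h_r$, and then greedily assign items of $W_p$ to those boxes to produce the pair $(D, \Pcalhat)$.

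The count splits into three essentially independent contributions. First, each $s^w_q$ already has its width fixed at $q\epsLarge^2/4$, so only its height must be guessed; since this height is a close lower bound on $h(W^w_{p,q}) \in [0, h(W_p)]$, it can be discretized to $O(n)$ candidates per $q$, giving $n^{n_q}$ combinations. Second, the width of each $s^h_r$ equals $w(W^h_{p,r}) = \max_{i \in W^h_{p,r}} w(i)$, which is one of the $n$ item widths, giving $n^{1/\deltaLG}$ combinations across all $r$. Third, the balanced property (\cref{defn:bal-fine-part}) forces $h(W^h_{p,r}) = \deltaLG \cdot h(W^h_{p,*})$ to be the same for every $r$, so a single guess determines the height of the entire $s^h_r$ family; this guess is taken from a discretization of $[0, h(W_p)]$ with granularity $\deltaLG$ times a suitable unit, contributing a factor of $\deltaLG \cdot n$. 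Multiplying the three contributions yields the claimed $\deltaLG \cdot n^{n_q + 1 + 1/\deltaLG}$.

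For each fixed tuple of guesses I would (i) sort $W_p$ once by height in non-increasing order in $O(n \log n)$ time (this sort can be shared across guesses, but even per-guess it fits the stated bound), and then (ii) make a single linear sweep that assigns each item to the first compatible box whose remaining height capacity can accommodate it, following the greedy procedure adapted from Section~3.3.1 of \cite{pradel-thesis}. Items not successfully placed become the discarded set $D$, and those placed form $\widehat{W}^w_{p,q}$ or $\widehat{W}^h_{p,r}$. This sweep uses $O(1)$ bookkeeping per box and per item, so the per-guess cost is $O(n \log n)$ as claimed.

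The main obstacle is calibrating the height discretization: it must be coarse enough that the candidate count per box is truly $O(n)$ (rather than, say, $n/\deltaLG$), yet fine enough that for every balanced fine partitioning $\Pcal$ of a slicing of $W_p$, at least one enumerated guess produces boxes whose dimensions under-approximate those induced by $\Pcal$ tightly enough for \cref{lem:part-wide} to give the required $\widehat{W}^w_{p,q} \preceq W^w_{p,q}$ and $\widehat{W}^h_{p,r} \preceq W^h_{p,r}$ after \cref{trn:round-geom}. Reconciling these two directions is the delicate part of the argument, and the factor of $\deltaLG$ outside the polynomial in $n$ reflects precisely this tradeoff.
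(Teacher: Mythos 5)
Your decomposition of the guess count into three factors---heights of the $n_q$ boxes $s^w_q$, widths of the $1/\deltaLG$ boxes $s^h_r$, and a single common height for the entire $s^h_r$ family forced by the constraint $h(W^h_{p,r}) = \deltaLG h(W^h_{p,*})$ in \cref{defn:bal-fine-part}---is the right structure, and invoking the balanced property to collapse $1/\deltaLG$ height guesses into one is exactly the observation needed to arrive at the exponent $n_q + 1 + 1/\deltaLG$. The sort-once-then-greedy-sweep argument for the $O(n\log n)$ per-value running time is also sound.

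However, your accounting for the leading $\deltaLG$ factor is arithmetically reversed. You propose discretizing $[0, h(W_p)]$ with granularity ``$\deltaLG$ times a suitable unit $u$''; if $h(W_p)/u \approx n$, a granularity of $\deltaLG u$ over that interval yields roughly $n/\deltaLG$ candidates, not $\deltaLG n$. Since $\deltaLG < 1$, this makes the count \emph{larger} than claimed, so you would establish a bound of order $n^{n_q + 1 + 1/\deltaLG}/\deltaLG$ rather than $\deltaLG\, n^{n_q + 1 + 1/\deltaLG}$. The $\deltaLG$ factor should instead come from a compressed \emph{range}, not from a finer granularity: because $h(W^h_{p,r}) = \deltaLG h(W^h_{p,*}) \le \deltaLG h(W_p)$, the common height lies in $[0, \deltaLG h(W_p)]$, and discretizing this shorter interval at the same unit granularity $u$ gives only about $\deltaLG h(W_p)/u \approx \deltaLG n$ candidate values. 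With that correction the claimed bound follows.
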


$\partWide$ can analogously be used for sub-partitioning coarse partitions of tall items.

When item rotations are allowed, $\partWide(W_p)$ gives us $H^w_{p,r}$ instead of $W^h_{p,r}$.

\subsubsection{Rounding Algorithm}

Based on $\partBig$ and $\partWide$, we define an algorithm $\iterFineParts$ (\cref{algo:ifp})
that returns a set of fine partitionings.
We then use $\iterFineParts$ to design the algorithm $\round$ (\cref{algo:round}).

\begin{algorithm}[!ht]
\caption{$\iterFineParts(I)$: $I$ is a set of weight-rounded items.
Returns a set of pairs of the form $(D, \Pcalhat)$,
where $D$ is a subset of items to discard
and $\Pcalhat$ is a \finePartHyp{} of $I-D$.}
\label{algo:ifp}
\begin{algorithmic}[1]
\State $\texttt{outputs} = \{\}$
\State $\{B_p: \forall p\} \cup \{W_p: \forall p\} \cup \{H_p: \forall p\}
    \cup \textrm{(small and dense partitions)}
    = \hyperref[obs:coarse-part]{\operatorname{coarse-partition}}(I)$
\State $\texttt{iters} = {\displaystyle \cartProd_{p=1}^{\nbwcHyp} \partBigHyp(B_p)
        \times \cartProd_{p=1}^{\nwwcHyp} \partWideHyp(W_p)
        \times \cartProd_{p=1}^{\nwwcHyp} \partWideHyp(H_p)}$
\For{$\mathcal{L} \in \texttt{iters}$}
    \Comment{$\mathcal{L}$ is a list of pairs}
    \State $\Pcalhat = \textrm{(small and dense partitions)}$
    \State $D = \{\}$
    \For{$(D_j, \Pcalhat_j) \in \mathcal{L}$}
        \State $D = D \cup D_j$
        \State Include partitions of $\Pcalhat_j$ into $\Pcalhat$.
    \EndFor
    \State $\outputAdd((D, \Pcalhat))$
\EndFor
\State \Return \texttt{outputs}
\end{algorithmic}
\end{algorithm}

\begin{algorithm}[!ht]
\caption{$\round(I, \eps)$: Returns a set of pairs of the form $(\widetilde{I}, D')$,
where $D'$ is a subset of items to discard and $\widetilde{I}$ is a rounding of $I-D'$.}
\label{algo:round}
\begin{algorithmic}[1]
\State $\texttt{outputs} = \{\}$
\State $\delta_0 \defeq {\displaystyle \min\left(\frac{1}{4d+1}, \frac{2\eps}{3}\right)}$
\State $(\Imed, \epsSmall, \epsLarge) = \remMedHyp(I, \eps, f, \delta_0)$
\Comment{$f$ will be described \hyperref[eqn:remmed-f]{later}}
\LineComment{Assume $\eps$ and $\epsLarge$ are passed as parameters to
    all subroutines and transformations.}
\State Let $\widehat{I}$ be the weight-rounding (\cref{trn:wround}) of $I - \Imed$.
\For{$(D, \Pcalhat) \in \iterFinePartsHyp(\widehat{I})$}
    \State Let $\widetilde{I}$ be the instance obtained by applying \cref{trn:round-geom}
        to $\widehat{I} - D$ based on the fine partitioning $\Pcalhat$.
    \State $\outputAdd((\widetilde{I}, D \cup \Imed))$.
\EndFor
\State \Return \texttt{outputs}
\end{algorithmic}
\end{algorithm}

Assume that in an output $(\widetilde{I}, D)$ of $\round(I, \eps)$,
the knowledge of the associated fine partitioning is implicitly present in $\widetilde{I}$.

\begin{lemma}[Polynomial-time]
\label{lem:round-time}
The total number of outputs of $\round(I)$ is $O(n^\gamma)$,
where there are $n$ items in $I$ and
\[ \gamma \defeq \nbwcHyp\frac{8(d+1)}{\eps\epsLarge^3}
    + 2\nwwcHyp\left(\frac{4}{\epsLarge^2} + \frac{d+1}{\eps\epsLarge}\right). \]
The time for each output is at most $O(n^2/\eps\epsLarge)$.
\end{lemma}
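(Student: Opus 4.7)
The plan is to trace through Algorithm~\ref{algo:round} and reduce both claims to bounds already established in Claims~\ref{claim:part-big-time} and~\ref{claim:part-wide-time}. First, I would observe that $\round(I, \eps)$ emits exactly one pair per iteration of its single \texttt{for} loop over $\iterFinePartsHyp(\widehat{I})$, so the number of outputs equals $|\iterFinePartsHyp(\widehat{I})|$. The preparatory steps ($\remMedHyp$, weight-rounding, and the coarse partitioning via \cref{obs:coarse-part}) run in polynomial time and contribute no multiplicative factor to the output count.

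To count $|\iterFinePartsHyp(\widehat{I})|$, I would unfold Algorithm~\ref{algo:ifp}: its outer loop ranges over the Cartesian product of $\partBigHyp(B_p)$'s and $\partWideHyp(W_p),\partWideHyp(H_p)$'s, so the output count is the product of the individual sizes. Using $|B_p|, |W_p|, |H_p| \le n$, \cref{claim:part-big-time} gives $|\partBigHyp(B_p)| = O(n^{8(d+1)/(\eps \epsLarge^3)})$ and \cref{claim:part-wide-time} gives $|\partWideHyp(W_p)|, |\partWideHyp(H_p)| \le \deltaLG \cdot n^{n_q + 1 + 1/\deltaLG}$, where $n_q = 4/\epsLarge^2 - 4/\epsLarge$. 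Substituting $1/\deltaLG = (d+1)/(\eps \epsLarge)$ and using $-4/\epsLarge + 1 \le 0$ (which holds because $\epsLarge \le 1/(4d+1) \le 1/5$), the per-factor exponent on $n$ collapses to at most $4/\epsLarge^2 + (d+1)/(\eps \epsLarge)$. Taking the product over the $\nbwcHyp + 2\nwwcHyp$ factors and summing the exponents yields precisely the stated $\gamma$ (with the $\deltaLG$ prefactor absorbed into the $O(\cdot)$ since $\deltaLG$ is a constant).

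For the time-per-output bound, I would note that $\partBig$ and $\partWide$ are each invoked only a constant number of times (depending on $\eps$, $\epsLarge$, $d$) before the enumeration begins, and their per-value running times are $O(n^2/(\eps \epsLarge))$ and $O(n \log n)$ respectively, again by Claims~\ref{claim:part-big-time} and~\ref{claim:part-wide-time}. Within the enumeration, the loop body merely concatenates precomputed partitions and applies \cref{trn:round-geom} once in $O(n)$ time. Since the total number of outputs is at least one, amortizing the preprocessing and the per-value costs across the outputs (which is many more than the constant number of $\partBig/\partWide$ invocations) leaves the dominant cost at $O(n^2/(\eps \epsLarge))$ per output.

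The main obstacle is bookkeeping rather than conceptual: ensuring that $n$ refers consistently to the full input size when pulling in the per-partition bounds (since $|B_p|,|W_p|,|H_p|$ might each be much smaller), and verifying the arithmetic simplification $-4/\epsLarge + 1 \le 0$ from the parameter constraints set earlier in the paper ($\eps \le 1/8$, $\epsLarge \le 1/(4d+1)$). Everything beyond this reduces to multiplying polynomial bounds and summing exponents.
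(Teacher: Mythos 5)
Your proposal is correct and takes essentially the same approach as the paper: the paper's own proof of \cref{lem:round-time} is simply the one-line citation of \cref{claim:part-big-time,claim:part-wide-time}, and you have correctly filled in the Cartesian-product counting, the substitution $1/\deltaLG = (d+1)/(\eps\epsLarge)$, the simplification $-4/\epsLarge + 1 \le 0$ (valid since $\epsLarge \le 1/(4d+1)$), and the amortization argument for the per-output time bound.
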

\begin{proof}
Follows from \cref{claim:part-big-time,claim:part-wide-time}.
\end{proof}

\begin{lemma}[Low discard]
\label{lem:round-discard}
Let $(\widetilde{I}, D')$ be an output of $\round(I, \eps)$. Then
\begin{align*}
\Span(D') &\le \eps \Span(I) + \frac{6\epsSmall\nwwcHyp}{\epsLarge^2}
    \left(\frac{d+1}{\eps\epsLarge} + \frac{4}{\epsLarge^2} - \frac{4}{\epsLarge}\right)
\\ &\le \eps \Span(I) + 6(d+5)\frac{\epsSmall\nwwcHyp}{\epsLarge^4}.
\end{align*}
\end{lemma}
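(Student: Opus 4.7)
The plan is to decompose $D' = D \cup \Imed$ according to the definition of $\round$ in \cref{algo:round}, bound each piece separately, and then sum.

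First, I would handle $\Imed$: it is produced by $\remMed(I,\eps,f,\delta_0)$, so \cref{thm:med-span} gives $\Span(\Imed) \le \eps\Span(I)$ directly. This contributes the $\eps\Span(I)$ term.

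Next, I would analyze $D$, the union of the discards returned by the calls inside $\iterFineParts$. Inspecting \cref{algo:ifp}, the discards come only from $\partWide$ calls (the $\partBig$ outputs are of the form $(\{\},\Pcalhat)$, by design). There are $\nwwcHyp$ coarse partitions of wide items and $\nwwcHyp$ coarse partitions of tall items, giving at most $2\nwwcHyp$ such calls in total. \Cref{lem:part-wide-discard} bounds the total \emph{height} (resp., by symmetry, width) of the discard from each call by $3\epsSmall(\tfrac{d+1}{\eps\epsLarge}+\tfrac{4}{\epsLarge^2}-\tfrac{4}{\epsLarge})$.

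The key step is to convert these height/width bounds into $\Span$ bounds. Here I use that the discarded items are non-dense wide (or non-dense tall) items. For a non-dense wide item $i$, $v_{\max}(i)\le a(i)/\epsLarge^2 \le h(i)/\epsLarge^2$ and $\vol(i)=w(i)h(i)\le h(i)\le h(i)/\epsLarge^2$, hence $\Span(i)\le h(i)/\epsLarge^2$. By symmetry, for a non-dense tall item $\Span(i)\le w(i)/\epsLarge^2$. Summing over a single $\partWide$ discard set $D_j$ of wide items gives
\[ \Span(D_j) \;\le\; \frac{h(D_j)}{\epsLarge^2} \;\le\; \frac{3\epsSmall}{\epsLarge^2}\!\left(\frac{d+1}{\eps\epsLarge}+\frac{4}{\epsLarge^2}-\frac{4}{\epsLarge}\right), \]
and the analogous bound holds for tall-item discards. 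Summing over all $2\nwwcHyp$ calls yields
\[ \Span(D) \;\le\; \frac{6\epsSmall\nwwcHyp}{\epsLarge^2}\!\left(\frac{d+1}{\eps\epsLarge}+\frac{4}{\epsLarge^2}-\frac{4}{\epsLarge}\right). \]
Adding $\Span(\Imed)$ establishes the first displayed inequality.

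For the second inequality, I would simplify using the choice of $\delta_0 \defeq \min(1/(4d+1), 2\eps/3)$, which forces $\epsLarge\le 2\eps/3\le \eps$, hence $1/(\eps\epsLarge)\le 1/\epsLarge^2$. Therefore
\[ \frac{d+1}{\eps\epsLarge}+\frac{4}{\epsLarge^2}-\frac{4}{\epsLarge} \;\le\; \frac{d+1}{\epsLarge^2}+\frac{4}{\epsLarge^2} \;=\; \frac{d+5}{\epsLarge^2}, \]
which, substituted in, yields $\Span(D)\le 6(d+5)\epsSmall\nwwcHyp/\epsLarge^4$ and completes the bound. There is no real obstacle; the main thing to be careful about is that $\partBig$ contributes nothing to the discard and that the non-dense/geometric category of the discarded items is exactly what is needed to pass from $h(D_j)$ (or $w(D_j)$) to $\Span(D_j)$ via the density inequality $v_{\max}(i)\le a(i)/\epsLarge^2$.
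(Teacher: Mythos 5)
Your proposal is correct and takes essentially the same route as the paper: decompose $D'$ into $\Imed$ (bounded via \cref{thm:med-span}) and the $\partWide$ discards (bounded via \cref{lem:part-wide-discard}, summed over the $2\nwwcHyp$ coarse partitions of wide and tall items, and converted from height/width to $\Span$ via the non-density inequality $v_{\max}(i)\le a(i)/\epsLarge^2$). The only difference is cosmetic: the paper aggregates the per-call bound into a single constant $\lambda$ over all $\nwwcHyp$ partitions before dividing by $\epsLarge^2$, while you bound per call and sum afterward; your derivation of the second inequality using $\epsLarge\le\delta_0\le 2\eps/3$ is also the right justification, which the paper omits.
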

\begin{proof}
Let $D' = D \cup \Imed$.
By \cref{thm:med-span}, $\Span(\Imed) \le \eps \Span(I)$.
Let $D_1$ be the wide items in $D$ and $D_2$ be the tall items in $D$.
Since all items in $D$ come from $\partWide$, $D$ only contains non-dense items
and $D = D_1 \cup D_2$.
Let
\[ \lambda \defeq n_{\wwc}(3\epsSmall)
\left(\frac{d+1}{\eps\epsLarge} + \frac{4}{\epsLarge^2} - \frac{4}{\epsLarge}\right). \]
By \cref{lem:part-wide-discard},
$h(D_1) \le \lambda$ and $w(D_2) \le \lambda$.
\[ \Span(D_1) = \sum_{i \in D_1} \max(a(i), v_{\max}(i))
\le \sum_{i \in D_1} \max\left(h(i), \frac{h(i)}{\epsLarge^2}\right)
\le \sum_{i \in D_1} \frac{h(i)}{\epsLarge^2} \le \frac{\lambda}{\epsLarge^2}. \]
Similarly, $\Span(D_2) \le \lambda/\epsLarge^2$.
\end{proof}

\begin{lemma}[Homogeneity]
\label{lem:round-homo}
Let $(\widetilde{I}, D)$ be an output of $\round(I, \eps)$.
Then the number of types of items in $\widetilde{I}$ is at most a constant:
\[ \frac{8(d+1)\nbwcHyp}{\eps\epsLarge^3}
+ 2\nwwcHyp\left(\frac{d+1}{\eps\epsLarge} + \frac{4}{\epsLarge^2}\right)
+ \nswcHyp + 2(\nlwcHyp + \nhwcHyp). \]
\end{lemma}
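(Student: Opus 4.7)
My plan is to show that after $\round$ terminates, the set $\widetilde{I}$ decomposes into exactly the non-empty cells of the fine partitioning $\Pcalhat$, and that each such cell constitutes one homogeneous \textsl{type} (in the sense of Condition C1.3). The count then reduces to an enumeration of cells in a fine partitioning, which I will bound by multiplying the number of coarse partitions (controlled by the weight-rounding) with the number of sub-cells per coarse partition (controlled by $|Q|$ and $|R|$).

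First I would recall that $\widetilde{I}$ is obtained by applying \cref{trn:round-geom} to $\widehat{I}-D$ using the fine partitioning $\Pcalhat$ produced by $\iterFineParts$. By inspecting \cref{trn:round-geom} cell by cell, every cell becomes homogeneous: in each $B^w_{p,q,r}$ and $B^h_{p,q,r}$ the items share the same width, height, and weight class (hence are truly identical); in each $W^w_{p,q}$, $W^h_{p,r}$, $H^w_{p,r}$, $H^h_{p,q}$ the items share the same non-sliceable geometric length and the same density vector (since they share a weight class from the coarse partitioning, the weights scale with the sliceable dimension); in each $S_p$, $D^{l,*}_p$, $D^{h,*}_p$ the items share a common density vector after \weightRoundingHyp{}. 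Thus every cell is one \emph{type}, and I just need to count cells.

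Next I would add the cell counts. The coarse partitionings contribute at most $\nbwcHyp$ classes of big items, $\nwwcHyp$ classes each of wide and of tall items, $\nswcHyp$ classes of small items, $\nlwcHyp$ classes of light dense items, and $\nhwcHyp$ classes of heavy dense items, by \cref{lem:num-weight-classes,lem:round-up-heavy-n,lem:round-up-light-n}. Within each coarse big partition $B_p$, the fine partitioning splits into subfamilies $B^w_{p,q,r}$ and $B^h_{p,q,r}$ indexed by $(q,r)\in Q\times R$, giving $2|Q||R|$ sub-cells. Within each coarse wide partition $W_p$ we get $|Q|+|R|$ sub-cells $\{W^w_{p,q}\}\cup\{W^h_{p,r}\}$, and similarly $|Q|+|R|$ sub-cells per coarse tall partition. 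Small and dense coarse partitions are not further subdivided (except for the $\{1,2\}$-split of dense classes into $D^{*,1}_p$ and $D^{*,2}_p$). Summing gives the total
\[ 2\nbwcHyp |Q||R| + 2\nwwcHyp (|Q|+|R|) + \nswcHyp + 2(\nlwcHyp + \nhwcHyp). \]

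Finally I would substitute $|Q|\le 4/\epsLarge^2$ and $|R| = 1/\deltaLG = (d+1)/(\eps\epsLarge)$ (using $\deltaLG\defeq\eps\epsLarge/(d+1)$ from \cref{trn:lingroup}), which yields $|Q||R|\le 4(d+1)/(\eps\epsLarge^3)$ and collapses the big-items contribution to $8(d+1)\nbwcHyp/(\eps\epsLarge^3)$, giving exactly the claimed bound. The only mildly subtle step is verifying homogeneity of the non-big wide/tall cells despite non-identical items within them; but this is immediate from the coarse-partition definition, where all members share a common $v_j/h$ (resp.\ $v_j/w$) ratio and where \cref{trn:round-geom} freezes the non-sliceable dimension to a common value.
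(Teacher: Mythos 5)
Your proposal is correct and matches the paper's own proof essentially step for step: both arguments reduce the type count to an enumeration of fine-partition cells (via Condition C1.3 homogeneity after \cref{trn:round-geom}), then bound the number of big cells by $2\nbwcHyp\,|Q||R|$, wide/tall cells by $2\nwwcHyp(|Q|+|R|)$, small cells by $\nswcHyp$, and dense cells by $2(\nlwcHyp+\nhwcHyp)$, substituting $|Q|\le 4/\epsLarge^2$ and $|R|=1/\deltaLG=(d+1)/\eps\epsLarge$. The only thing you add over the paper's terse write-up is the explicit per-cell homogeneity verification, which is correct and arguably clearer.
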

\begin{proof}
After applying \cref{trn:round-geom} to $\widehat{I} - D$,
all non-dense items in each fine partition have:
\begin{tightemize}
\item the same weight class.
\item the same width, if the fine partition contains only wide or big items.
\item the same height, if the fine partition contains only tall or big items.
\end{tightemize}
From the definition of fine partitioning,
the number of different partitions of non-dense items is at most
\[ \frac{8n_{\bwc}}{\epsLarge^2\deltaLG}
+ 2n_{\wwc}\left(\frac{1}{\deltaLG} + \frac{4}{\epsLarge^2}\right)
+ n_{\swc}. \]
In each division, there are $n_{\hwc}$ distinct heavy dense items and $n_{\lwc}$ weight classes
in light dense items by \cref{lem:round-up-heavy-n,lem:round-up-light-n}.
\end{proof}

\begin{table}[!ht]
\centering
\caption{Upper bound on the number of different types of items}
${\displaystyle \begin{array}[t]{*3{|>{\displaystyle}c}|}
\hline & \textrm{no. of types}
\\ \hline \textrm{Division-1 big items } (B^w_{*,*,*})
    & 4(d+1)\nbwcHyp/\eps\epsLarge^3
\\ \hline \textrm{Division-2 big items } (B^h_{*,*,*})
    & 4(d+1)\nbwcHyp/\eps\epsLarge^3
\\ \hline \textrm{Division-1 wide non-dense items } (W^w_{*,*})
    & 4\nwwcHyp/\epsLarge^2
\\ \hline \textrm{Division-2 wide non-dense items } (W^h_{*,*})
    & (d+1)\nwwcHyp/\eps\epsLarge
\\ \hline \textrm{Division-1 tall non-dense items } (H^w_{*,*})
    & (d+1)\nwwcHyp/\eps\epsLarge
\\ \hline \textrm{Division-2 tall non-dense items } (H^h_{*,*})
    & 4\nwwcHyp/\epsLarge^2
\\ \hline \textrm{Small non-dense items } (S_{*})
    & \nswcHyp
\\ \hline \textrm{Division-1 heavy dense items } (D^{h,1})
    & \nhwcHyp
\\ \hline \textrm{Division-2 heavy dense items } (D^{h,2})
    & \nhwcHyp
\\ \hline \textrm{Division-1 light dense items } (D^{l,1})
    & \nlwcHyp
\\ \hline \textrm{Division-2 light dense items } (D^{l,2})
    & \nlwcHyp
\\ \hline
\end{array} }$
\label{table:item-types}
\end{table}

\begin{lemma}
\label{lem:ifp}
Let $\Pcal$ be a balanced fine partitioning of a slicing of $I$.
Then there is some output $(D, \Pcalhat)$ of $\iterFineParts(I)$ (\cref{algo:ifp}) such that
$\Pcalhat$ is a fine partitioning of $I-D$ and
after \cref{trn:round-geom}, each partition in $\Pcalhat$
is a predecessor of the corresponding partition in $\Pcal$.
\end{lemma}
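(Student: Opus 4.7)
The plan is to prove \cref{lem:ifp} by combining the per-coarse-partition guarantees of Claims \ref{lem:part-big} and \ref{lem:part-wide} via the Cartesian product structure of $\iterFineParts$ (\cref{algo:ifp}).

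First, I would observe that the balanced fine partitioning $\Pcal$ naturally decomposes along the coarse partitioning of $I$. Since the coarse partitioning is unique (\cref{obs:coarse-part}) and fine partitions are refinements of coarse partitions, $\Pcal$ restricts to a balanced fine partitioning of a slicing of each coarse partition $B_p$, $W_p$, $H_p$; call these restrictions $\Pcal^{B}_p$, $\Pcal^{W}_p$, $\Pcal^{H}_p$. The balance properties in \cref{defn:bal-fine-part} are all statements local to a single coarse partition, so each $\Pcal^{B}_p$, $\Pcal^{W}_p$, $\Pcal^{H}_p$ inherits balance. The small-item and dense-item coarse partitions are not further subdivided by a fine partitioning, so they are carried over unchanged into any $\Pcalhat$.

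Next, I would apply \cref{lem:part-big} to each $\Pcal^{B}_p$ to obtain an output $(\{\}, \Pcalhat^{B}_p) \in \partBig(B_p)$ such that, after \cref{trn:round-geom}, every partition in $\Pcalhat^{B}_p$ is a predecessor of the corresponding partition in $\Pcal^{B}_p$. Analogously, I would apply \cref{lem:part-wide} to each $\Pcal^{W}_p$ (and its tall-item analogue to $\Pcal^{H}_p$) to obtain outputs $(D^{W}_p, \Pcalhat^{W}_p) \in \partWide(W_p)$ and $(D^{H}_p, \Pcalhat^{H}_p) \in \partWide(H_p)$ with the same predecessor property on $W_p - D^{W}_p$ and $H_p - D^{H}_p$, respectively.

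Finally, since \cref{algo:ifp} iterates over the full Cartesian product
\[ \texttt{iters} = \cartProd_{p} \partBig(B_p) \times \cartProd_{p} \partWide(W_p) \times \cartProd_{p} \partWide(H_p), \]
the specific tuple $\mathcal{L}$ assembled from the outputs above is one of the iterations considered. On this iteration, the algorithm produces exactly $(D, \Pcalhat)$ with $D = \bigcup_p D^{W}_p \cup \bigcup_p D^{H}_p$ and $\Pcalhat$ equal to the disjoint union of all $\Pcalhat^{B}_p, \Pcalhat^{W}_p, \Pcalhat^{H}_p$ together with the unchanged small and dense coarse partitions. By construction $\Pcalhat$ is a fine partitioning of $I - D$, and the predecessor property holds partition-by-partition by the three claims cited above. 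I do not expect a serious obstacle here: the real content lies in Claims \ref{lem:part-big} and \ref{lem:part-wide}, while this lemma is essentially a bookkeeping step verifying that $\iterFineParts$ explores the relevant combination of their outputs; the only care required is to check that balance restricts to each coarse partition and that gluing back the per-partition guarantees does not introduce conflicts, which follows from the fact that the coarse partitions are disjoint.
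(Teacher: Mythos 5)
Your proposal is correct and matches the paper's approach: the paper's proof is simply the one-line statement that the lemma "follows from Claims~\ref{lem:part-big} and~\ref{lem:part-wide}", and your write-up is a faithful expansion of exactly that bookkeeping argument (restrict $\Pcal$ to each coarse partition, apply the per-partition claims, and observe that the Cartesian product in \cref{algo:ifp} explores the resulting combination). One minor nit: big items cannot be sliced under \cref{defn:slicing}, so the restriction of $\Pcal$ to $B_p$ is a balanced fine partitioning of $B_p$ itself rather than of a proper slicing, which is precisely the form required by \cref{lem:part-big}.
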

\begin{proof}
Follows from \cref{lem:part-big,lem:part-wide}
\end{proof}

\begin{theorem}
\label{thm:round}
There is an output $(\widetilde{I}, D \cup \Imed)$
of $\round(I, \eps)$ such that $\widetilde{I}$ can be fractionally packed into at most
$\left(1 + \frac{2\eps}{1-\eps}\right)(a\opt(I) + b) + 2$
semi-structured $(5\eps/8)$-slacked bins.
Here values of $a$ and $b$ are as per \cref{table:slack-pack-ab}.
\end{theorem}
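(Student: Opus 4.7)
The plan is to chain three ingredients already established: \cref{thm:med-span} (to control $\Imed$), \cref{lem:rnd2.5} (the structural theorem on the \weightRoundingHyp{}ed instance), and \cref{lem:ifp} (certifying that $\iterFineParts$ enumerates a fine partitioning close to any prescribed balanced one). Unrolling \cref{algo:round}, $\round(I,\eps)$ first computes $\Imed$ via $\remMedHyp$, then \weightRoundingHyp{}s $I-\Imed$ to obtain $\widehat{I}$, and for each $(D,\Pcalhat)\in\iterFinePartsHyp(\widehat{I})$ emits $(\widetilde{I},D\cup\Imed)$, where $\widetilde{I}$ is the result of applying \cref{trn:round-geom} to $\widehat{I}-D$ under $\Pcalhat$. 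So the task reduces to exhibiting one iterate of the for-loop that admits the desired packing.

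First, I would apply \cref{lem:rnd2.5} to $I-\Imed$, which is $(\epsSmall,\epsLarge)$-non-medium by \cref{defn:remmed}. This produces a balanced fine partitioning $\Pcal$ of a slicing of $\widehat{I}$ together with a semi-structured $(5\eps/8)$-slacked fractional packing of $\widehat{I}$ (after \cref{trn:round-geom} under $\Pcal$) into $\bigl(1+\tfrac{2\eps}{1-\eps}\bigr)(a\,\opt(I-\Imed)+b)+2$ bins. Since removing items only decreases the optimum, $\opt(I-\Imed)\le\opt(I)$, and this bound already matches the target. I would then feed $\Pcal$ into \cref{lem:ifp}, obtaining an output $(D,\Pcalhat)$ of $\iterFinePartsHyp(\widehat{I})$ whose partitions, after \cref{trn:round-geom}, are predecessors of the corresponding partitions of $\Pcal$. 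By construction this $(D,\Pcalhat)$ is one of the iterates examined in \cref{algo:round}, and the associated $(\widetilde{I},D\cup\Imed)$ is therefore an actual output of $\round$.

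The last step is to carry the fractional packing from the $\Pcal$-picture over to $\widetilde{I}$ without enlarging the bin count. I would do this bin by bin using the predecessor relation: in the given template packing, each partition class of $\Pcal$ occupies a specific region (the union of footprints of its placed items); by the very definition of $\preceq$, the corresponding class of $\Pcalhat$ (after \cref{trn:round-geom}) can be sliced so as to fit inside those footprints, with no increase in total volume or in any vector weight. Substituting the $\Pcalhat$-pieces for the $\Pcal$-pieces in each footprint therefore preserves geometric feasibility and $(5\eps/8)$-slackness automatically.

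The main obstacle is verifying that ``semi-structured'' survives this substitution, i.e., that each bin of the new packing remains either division-$1$ or division-$2$ \semiStrucHyp{} per \cref{defn:semi-struct}. The division label, the multiple-of-$\epsLarge^2/4$ roundness of widths/heights (\cref{prop:hrnd}\ref{item:hrnd:geom}, \cref{prop:vrnd}\ref{item:vrnd:geom}), and the reservation of $S^{(R')}/S^{(T')}$ for dense items (\cref{prop:hrnd}\ref{item:hrnd:dense}, \cref{prop:vrnd}\ref{item:vrnd:dense}) are all dictated by the class identity in the fine partitioning, which $\Pcalhat$ and $\Pcal$ share. The delicate point is the wide/tall linear-grouping classes ($W^h_{p,r}, H^w_{p,r}$) and the big classes ($B^w_{p,q,r}, B^h_{p,q,r}$): within each such class, all items share the relevant rounded geometric dimension, so slicing $\Pcalhat$-pieces into the $\Pcal$-footprints only ever requires cuts along the axis permitted for that class by \cref{defn:slicing}, and the rounded width/height of the receptacle equals that of the substituted pieces. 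Unpacking ``$\preceq$'' then gives the fit, and \cref{defn:semi-struct} is maintained per bin, completing the argument.
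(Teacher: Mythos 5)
Your proposal is correct and follows essentially the same chain as the paper's proof: invoke \cref{lem:rnd2.5} to obtain a balanced fine partitioning $\Pcal$ together with a semi-structured $(5\eps/8)$-slacked fractional packing into the desired number of bins, then invoke \cref{lem:ifp} to produce an output $(D,\Pcalhat)$ of $\iterFinePartsHyp$ whose partitions are predecessors of those of $\Pcal$ after \cref{trn:round-geom}, and finally use the predecessor relation to substitute $\widetilde{I}$-pieces for $\Pcal$-pieces footprint by footprint. You are a bit more careful than the paper in two places: you explicitly note that \cref{lem:rnd2.5} must be applied to the $(\epsSmall,\epsLarge)$-non-medium instance $I-\Imed$ and that $\opt(I-\Imed)\le\opt(I)$ closes the gap back to $\opt(I)$, and you spell out why the semi-structured properties survive the substitution (the paper compresses this into one sentence). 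The only superfluous item is the appeal to \cref{thm:med-span}, which is not actually needed here since \cref{thm:round} makes no claim about the size of the discard set; that bound is used later in \cref{lem:round-discard}.
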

\begin{proof}
Let $m \defeq \left(1 + \frac{2\eps}{1-\eps}\right)(a\opt(I) + b) + 2$.
\Cref{lem:rnd2.5} guarantees the existence of a balanced fine partitioning $\Pcal$
of a slicing of $\widehat{I}$ such that after applying \cref{trn:round-geom},
there is a $(5\eps/8)$-slacked fractional packing of $\widehat{I}$
into $m$ bins that is semi-structured relative to $\Pcal$.

By \cref{lem:ifp}, we get that there is a fine partitioning $\Pcalhat$
of $\widehat{I} - D$ such that after applying \cref{trn:round-geom},
each partition of $\Pcalhat$ is a predecessor
of the corresponding partition of $\Pcal$.
Therefore, we can pack each item in $\widetilde{I}$ into the place of some
items in the packing of $\widehat{I} - D$.
Therefore, there is a $(5\eps/8)$-slacked fractional packing of $\widetilde{I}$
into at most $m$ bins that is semi-structured relative to $\Pcalhat$.
\end{proof}

\subsection{Existence of Compartmental Packing}
\label{sec:rbbp-2d-extra:compart}

\begin{definition}[Compartmental packing]
\label{defn:compartmental}
Consider a \semiStrucHyp{} packing of items $I$ into $m$ bins,
of which $m_1$ bins are division-1 bins and $m - m_1$ are division-2 bins.

A \emph{compartment} is defined to be a rectangular region in a bin such that every item
either lies completely inside the region or completely outside the region.
Furthermore, a compartment doesn't contain big items,
and a compartment doesn't contain both wide and tall items.

In a division-1 bin, a compartment is called a \emph{dense compartment} iff
it is the region $S^{(R')}$ and it contains a dense item
(recall that $S^{(R')} \defeq [1-\epsLarge/2, 1] \times [0, 1]$).
In a division-1 bin, a compartment is called a \emph{sparse compartment}
iff it satisfies all of these properties:
\begin{tightemize}
\item The compartment doesn't contain dense items.
\item The compartment contains at least 1 wide item or 1 tall item.
    If it contains wide items, it is called a wide compartment,
    and if it contains tall items, it is called a tall compartment.
\item The $x$-coordinate of the left edge of the compartment
    is a multiple of $\epsLarge^2/4$.
\item The compartment's width is a multiple of $\epsLarge^2/4$,
    and if the compartment is tall, its width is exactly $\epsLarge^2/4$.
\item \label{item:compartment:height}\empty
    The compartment's height is rounded, i.e., if the compartment is wide,
    its height is a multiple of a constant $\epsCont \defeq \eps\epsLarge^5/12$
    (note that $\epsCont^{-1} \in \mathbb{Z}$),
    and if the compartment is tall, its height is a sum of the heights of
    at most $1/\epsLarge-1$ of the items inside the compartment.
\end{tightemize}

A division-1 bin is said to be \emph{compartmental}
iff we can create non-overlapping dense and sparse compartments in the bin such that
all wide items, tall items and dense items are packed into compartments.

We can analogously define compartmental packing for division-2 bins
by swapping the coordinate axes.
A semi-structured bin packing of items is called compartmental
if each bin in the packing is compartmental.
\end{definition}

\begin{lemma}
\label{lem:empty-to-rects}
Let there be a rectangular bin $B \defeq [0, 1]^2$.
Let there be a set $I$ of rectangles packed inside the bin.
Then there is a polynomial-time algorithm which can decompose the empty space in the bin
($B - I$) into at most $3|I|+1$ rectangles by either making horizontal cuts only
or making vertical cuts only.
\end{lemma}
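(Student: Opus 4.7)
The plan is to construct the decomposition explicitly by extending the horizontal edges of the rectangles in $I$ into the empty space, and then count the resulting rectangular regions via Euler's formula. Concretely, for each $R \in I$ I would extend its top edge and its bottom edge horizontally, in both directions, each extension terminating as soon as it meets another rectangle of $I$ or the bin boundary. Together with the bin boundary and the four sides of every $R$, these horizontal segments form a planar subdivision of $B$. The interior faces are either one of the $|I|$ input rectangles or an ``empty'' face lying in $B - I$.

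I would first argue that every empty face is an axis-parallel rectangle. Since all edges are horizontal or vertical, a non-rectangular empty face would have to contain a reflex ($270\degree$) interior corner, and such a corner can only sit at a convex corner of some $R \in I$. But at every convex corner of $R$ the construction extends either the top or the bottom edge of $R$ into the empty region, which splits off the would-be reflex vertex and leaves only $90\degree$ corners on the empty side. Hence every empty face is indeed an axis-parallel rectangle.

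For the count I would apply $V - E + F = 2$ to the resulting planar graph. In the generic case, $V = 4 + 4n + 4n$: the four bin corners, the four corners of each of the $n$ rectangles, and one fresh $T$-junction produced by each of the $4n$ extensions. Let $b$ be the number of extensions whose far endpoint lands on the bin boundary. Then the bin boundary contributes $4 + b$ edges, the rectangle boundaries contribute $4n + (4n - b) = 8n - b$ edges (each of the $4n-b$ $T$-junctions on rectangle sides creates one extra subdivision edge), and the $4n$ extensions themselves contribute $4n$ edges, so $E = 12n + 4$. Substituting gives $F = 4n + 2$; subtracting the outer face and the $n$ input rectangles leaves exactly $3n + 1$ empty rectangular faces.

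The main technical point I expect will be the handling of degenerate configurations (two extensions landing at the same point, an extension ending at an existing corner, extensions of zero length, etc.), since these merge vertices or shorten edges and so require a recount; a short inspection shows that each such coincidence can only \emph{decrease} the number of empty faces, so the bound $3n + 1$ still holds. The vertical-cut version is obtained by reflecting the construction across the diagonal $y = x$. Polynomial running time follows from a standard horizontal sweep-line procedure over the $2n$ top/bottom-edge events, which produces all $4n$ extensions in $O(n \log n)$ time.
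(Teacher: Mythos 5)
You use the same decomposition as the paper --- extend the top and bottom edges of each rectangle in both directions until they meet another rectangle or the bin boundary --- but you count the resulting empty cells via Euler's formula, whereas the paper counts them directly: it observes that the bottom edge of every empty cell lies either on the bin floor (at most one cell), on the extended top edge of some $i \in I$ (at most one cell per $i$, namely the one sitting directly on $i$), or on the extended bottom edge of some $i \in I$ (at most two cells per $i$, flanking it on the left and right), giving $1 + |I| + 2|I| = 3|I|+1$ with essentially no bookkeeping. Your $V-E+F=2$ computation reaches the same count but carries two extra obligations that you should discharge explicitly: (i) Euler's formula in this form requires the cut graph to be connected, which does hold --- the rectangle with the leftmost left edge in any component must send a leftward extension to the bin wall, so by induction every component containing a rectangle also contains the bin boundary --- but you should say so; and (ii) your claim that degenerate incidences ``can only decrease'' the number of empty faces is not quite the right direction: in several natural degeneracies (an extension terminating at an existing corner, or two extensions meeting at a point) both $V$ and $E$ drop by the same amount, leaving $F$ unchanged, so the correct statement is that degeneracies never increase $F$. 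On the upside, you explicitly justify that every empty face is an axis-parallel rectangle via the reflex-corner argument, a point the paper asserts without proof. Both routes are correct; the paper's is leaner and more robust to degeneracy, while yours is more systematic and makes the geometric structure of the faces explicit.
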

\begin{proof}
Extend the top and bottom edges of each rectangle leftwards and rightwards
till they hit another rectangle or the bin boundary.
This partitions the empty space into rectangles $R$.

For each rectangle $i \in I$, the top edge of $i$ is the bottom edge of a rectangle in $R$,
the bottom edge of $i$ is the bottom edge of two rectangles in $R$.
Apart from possibly the rectangle in $R$ whose bottom edge is at the bottom of the bin,
the bottom edge of every rectangle in $R$ is either the bottom or top edge of a rectangle in $I$.
Therefore, $|R| \le 3|I| + 1$.

If instead of the top and bottom edges, we extended left and right edges,
we would get the same result, except that the cuts would be vertical.
\end{proof}

\begin{lemma}
\label{lem:ss-to-round-cont}
If there exists a semi-structured $\mu$-slacked packing of items $I$ into $m$ bins,
then there exists a compartmental $\mu$-slacked fractional packing of $I$ into
$\left( 1 + 2\eps/(1-\mu) \right) m + 2$ bins.
\end{lemma}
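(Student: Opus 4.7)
The plan is to process each of the $m$ semi-structured bins independently, converting each into a compartmental bin by carving out compartments around the existing items and rounding their dimensions, while collecting item-slices that no longer fit and repacking them separately. The central claim to establish is that the total collected displacement is small enough to be absorbed into $\tfrac{2\eps}{1-\mu}m+O(1)$ additional compartmental bins, using the $\mu$-slackness in each bin.

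First, consider any division-1 semi-structured bin. Leave big items in their original positions; by \cref{prop:hrnd}\ref{item:hrnd:geom} their $x$-coordinates and widths lie on the grid $(\epsLarge^2/4)\mathbb{Z}$. Applying \cref{lem:empty-to-rects} with horizontal cuts to the big items decomposes the non-big region into $O(1/\epsLarge^2)$ axis-aligned rectangular sub-regions whose vertical edges respect this grid. Within each sub-region, separate items by type: repack the dense items inside $S^{(R')}$ via \cref{lem:dense-pack} into a single dense compartment of dimensions $\deltaDense\times 1$; gather wide items into horizontal slabs (candidate wide compartments); gather tall items into vertical strips of width exactly $\epsLarge^2/4$ (candidate tall compartments); and redistribute small items fractionally into any remaining empty space. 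The symmetric construction handles division-2 bins.

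Next, round the compartment dimensions to satisfy \cref{defn:compartmental}. Wide compartment widths already lie on $(\epsLarge^2/4)\mathbb{Z}$; round their heights up to the nearest multiple of $\epsCont=\eps\epsLarge^5/12$, and to make room inside the bin slice off a horizontal strip of total height at most $\epsCont$ from the top of each wide compartment, setting the slices aside. For tall compartments, group items so each group has at most $1/\epsLarge-1$ items, slicing off any excess. The collected slices are then packed into new bins by a Next-Fit procedure in which each new bin becomes a single wide compartment whose height is rounded to a multiple of $\epsCont$, thereby remaining compartmental.

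The per-bin collected slice area is at most $\epsCont\cdot O(1/\epsLarge^2)=O(\eps)$, and using the bounded density of non-dense items (density at most $1/\epsLarge^2$) the per-bin collected weight per vector coordinate is also $O(\eps)$. Summing over $m$ original bins, and packing into new bins with $(1-\mu)$ of weight slack per coordinate, the standard factor $2$ from Next-Fit yields exactly $\tfrac{2\eps}{1-\mu}m+O(1)$ new bins. The main obstacle is this tight quantitative accounting in the rounding step: to obtain the precise $O(\eps)$ per-bin bound on both displaced area and displaced weight simultaneously, one must combine the granularity $\epsCont=\eps\epsLarge^5/12$ with the $O(1/\epsLarge^2)$ compartment-count bound from \cref{lem:empty-to-rects} and the density bound on non-dense items. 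A secondary challenge is arranging the collected slices so that the new bins are themselves compartmental and $\mu$-slacked, which forces each new bin to consist of a single wide compartment whose height is a properly rounded multiple of $\epsCont$, and which exploits the slack $(1-\mu)$ to absorb the weight contributions without violating vector constraints.
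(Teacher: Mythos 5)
Your overall strategy matches the paper's: carve out compartments around the items in each bin, round the compartment heights, slice off the excess, and repack the slices into $O(\eps m/(1-\mu))$ new compartmental bins, using the bounded density of non-dense items to control the weight of the slices. The rounding and repacking steps are essentially correct in spirit (modulo the small slip of writing ``round up'' where you must round down, and the count $\nWCont = O(1/\epsLarge^3)$, not $O(1/\epsLarge^2)$, which is exactly why $\epsCont$ carries the factor $\epsLarge^5$).

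The real gap is in the construction of the compartments. You first decompose the region outside the big items via \cref{lem:empty-to-rects}, and then within each rectangular sub-region you ``gather wide items into horizontal slabs'' and ``gather tall items into vertical strips of width $\epsLarge^2/4$.'' This step is not justified and cannot always be carried out: a sub-region can contain wide and tall items interleaved in a way that admits no partition into disjoint axis-parallel rectangles with each rectangle containing only one type. Moreover, ``gathering'' items into strips means moving them, which can create overlaps; the lemma must be proved by slicing items in place, not translating them. The paper handles exactly this difficulty by using Pr\"adel's construction (Section 3.2.3 of \cite{pradel-thesis}): it first builds the tall compartments (slicing tall items vertically as needed, never moving them), and only \emph{then} applies \cref{lem:empty-to-rects} to the space outside tall compartments \emph{and} big items to obtain the wide compartments. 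The ordering is essential, and your proposal inverts it. You also need the count $n_t + \nBig \le \nTCont$ and $3(n_t+\nBig)+1 \le \nWCont$ to come out of that construction, which is what feeds the slice bound. Without the tall-first construction (or some substitute that achieves disjoint same-type compartments without moving items), the argument has a hole precisely at the step that is the crux of the lemma.
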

\begin{proof}
Consider a division-1 bin. By \cref{prop:hrnd}\ref{item:hrnd:dense},
we get that all dense items (if any) lie in dense compartments.
Next, using the method of Section 3.2.3 in \cite{pradel-thesis}
(Rounding the Other Side $>$ Containers for the wide and long rectangles),
we can slice items without moving them and create non-overlapping compartments in the bin
such that all tall and wide items are packed into compartments.
Their method works by first constructing tall compartments
and then using the algorithm of \cref{lem:empty-to-rects} to partition
the space outside tall compartments and big items into wide compartments.
The resulting packing is compartmental, except that compartments' heights are not rounded.
We will now show how to round the heights of compartments.

Let there be $n_t$ tall compartments in the bin and $\nBig$ big items in the bin. Define
\begin{align*}
\nTCont &\defeq \frac{4}{\epsLarge^2}\left(\frac{1}{\epsLarge}-1\right)
    \le \frac{4}{\epsLarge^3}
&\nWCont &\defeq \frac{12}{\epsLarge^2}\left(\frac{1}{\epsLarge} - 1\right) + 1
    \le \frac{12}{\epsLarge^3}
\end{align*}
In the bin, there are $4/\epsLarge^2$ slots of width $\epsLarge^2/4$ and height 1.
Consider one such slot. Let there be $k$ big items that intersect that slot ($k$ can be 0).
The height of each big item and each sparse tall compartment is more than $\epsLarge$.
Therefore, the number of tall sparse compartments in that slot is at most $1/\epsLarge - 1 - k$.
Each big item spans at least $4/\epsLarge+1$ slots,
and reduces by 1 the number of tall compartments in the slots it spans.
Hence, the number of tall sparse compartments is at most
\[ n_t \le \frac{4}{\epsLarge^2}\left(\frac{1}{\epsLarge}-1\right)
    - \nBig\left(\frac{4}{\epsLarge} + 1\right)
\le \frac{4}{\epsLarge^2}\left(\frac{1}{\epsLarge}-1\right) - \nBig
= \nTCont - \nBig. \]
By \cref{lem:empty-to-rects}, the number of wide compartments is at most
$3(n_t + \nBig) + 1 \le \nWCont$.

Since small items can be sliced in both dimensions, we can treat them like a liquid.
For each tall sparse compartment $C$ in the bin,
let the tall items in $C$ sink down in this liquid.
Then shift down the top edge of $C$ to the top edge of the
topmost tall item in $C$ (so some small items will no longer be inside $C$).
Then the height of $C$ will be the sum of the heights of at most $1/\epsLarge-1$
tall items inside $C$ (since tall items have height $> \epsLarge$ and $C$ has height at most 1).

For each wide compartment $C$ in the bin, unpack a horizontal slice of height
$h(C) \bmod \epsCont$ from $C$ (this may require slicing items)
and move down the top edge of $C$ by $h(C) \bmod \epsCont$.
This rounds down $h(C)$ to a multiple of $\epsCont$.

Apply the above transformation to all division-1 bins
and an analogous transformation to all division-2 bins.
This gives us a $\mu$-slacked compartmental packing into $m$ bins.
However, we unpacked some items from wide containers in division-1 bins
and tall containers in division-2 bins. We need to repack these items.

Let there be $m_1$ division-1 bins. We removed a slice of height less than $\epsCont$
from each wide compartment in division-1 bins.
Let $S$ be the set of all such slices from division-1 bins.
There are at most $\nWCont$ wide compartments, so $h(S) \le \epsCont\nWCont m_1$.
For each slice $i \in S$, define
\[ \Span'(i) \defeq \max\left(h(i), \min\left(\frac{v_{\max}(i)}{1-\mu}, 1\right)\right). \]
For each slice $i \in S$,
\[ \Span'(i) \le \max\left(h(i), \frac{v_{\max}(i)}{1-\mu}\right)
\le \max\left(h(i), \frac{h(i)}{\epsLarge^2(1-\mu)}\right)
\le \frac{h(i)}{\epsLarge^2(1-\mu)}. \]
Therefore,
\[ \sum_{i \in S} \Span'(i) \le \frac{h(S)}{\epsLarge^2(1-\mu)}
\le \frac{\epsCont\nWCont}{\epsLarge^2(1-\mu)} m_1. \]
Interpret each slice $i \in S$ as a 1D item of size $\Span'(i)$ and pack the slices
one-over-the-other touching the left edge of bins using Next-Fit.
By \cref{lem:bot-slack-pack}, we can pack them all into
$1 + 2\sum_{i \in S}\Span'(i)$ bins that are $\mu$-slacked.
Since $\epsCont = \eps\epsLarge^5/12 \le \eps\epsLarge^2/\nWCont$,
the number of bins used to pack $S$ is at most
\[ \frac{2\epsCont\nWCont}{(1-\mu)\epsLarge^2} m_1 + 1
\le \frac{2\eps}{1-\mu} m_1 + 1. \]
These bins are division-1 compartmental;
they have just one wide compartment of width 1 and height 1.

Similarly, we can pack unpacked items from division-2 bins
into at most $\frac{2\eps}{1-\mu} (m-m_1) + 1$ division-2 compartmental $\mu$-slacked bins.
\end{proof}

\begin{theorem}
\label{thm:struct-theorem}
There is an output $(\widetilde{I}, D)$ of $\roundHyp(I, \eps)$ such that $\widetilde{I}$
has a \compartmentalHyp{} $(5\eps/8)$-slacked fractional packing into at most
$\left( 1 + 2\eps/(1-\eps) \right)^2 (a\opt(I) + b) + 4/(1-\eps)$
bins. Here $a$ and $b$ are as per \cref{table:slack-pack-ab}.
\end{theorem}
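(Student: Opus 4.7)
The plan is to prove the structural theorem by directly composing the two main ingredients already in hand: \cref{thm:round}, which produces a rounded instance $\widetilde{I}$ together with a semi-structured $(5\eps/8)$-slacked fractional packing, and \cref{lem:ss-to-round-cont}, which converts any semi-structured $\mu$-slacked packing into a compartmental one at a multiplicative cost of $1 + 2\eps/(1-\mu)$ plus two extra bins. No new ideas are needed beyond verifying that the hypotheses chain correctly and then multiplying the overheads.

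First I would invoke \cref{thm:round} on the input instance $I$. This yields an output $(\widetilde{I}, D) \in \round(I, \eps)$ together with a \semiStrucHyp{} $(5\eps/8)$-slacked fractional packing of $\widetilde{I}$ into
\[ m \defeq \left(1 + \frac{2\eps}{1-\eps}\right)(a\opt(I) + b) + 2 \]
bins, where $a, b$ are as in \cref{table:slack-pack-ab}. I would fix this $(\widetilde{I}, D)$ as the output of $\round$ referenced in the theorem.

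Next, I would apply \cref{lem:ss-to-round-cont} to this packing with $\mu = 5\eps/8$. Since $\mu \le \eps$, we have $1/(1-\mu) \le 1/(1-\eps)$, so the lemma produces a \compartmentalHyp{} $(5\eps/8)$-slacked fractional packing of $\widetilde{I}$ into at most
\[ \left(1 + \frac{2\eps}{1-\mu}\right) m + 2 \;\le\; \left(1 + \frac{2\eps}{1-\eps}\right) m + 2 \]
bins. Substituting the value of $m$ and simplifying the resulting constant,
\[ \left(1 + \frac{2\eps}{1-\eps}\right) m + 2
\;=\; \left(1 + \frac{2\eps}{1-\eps}\right)^2 (a\opt(I)+b) + 2\left(1 + \frac{2\eps}{1-\eps}\right) + 2
\;=\; \left(1 + \frac{2\eps}{1-\eps}\right)^2 (a\opt(I)+b) + \frac{4}{1-\eps}, \]
which is precisely the bound claimed in the theorem.

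There is essentially no obstacle: the only thing to double-check is that \cref{lem:ss-to-round-cont} applies to a fractional input (the output of \cref{thm:round} is only guaranteed to be a fractional packing), but an inspection of the lemma's proof shows that its operations — reshaping compartments via \cref{lem:empty-to-rects}, sinking tall items, treating small items as liquid, and shaving off sub-$\epsCont$ strips — are insensitive to whether the initial packing is integral or fractional, since they only slice items further. Thus the composition is valid and the claimed bound follows immediately.
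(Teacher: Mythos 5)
Your proof is correct and matches the paper's own argument essentially verbatim: invoke \cref{thm:round} to get the semi-structured $(5\eps/8)$-slacked fractional packing into $m$ bins, then apply \cref{lem:ss-to-round-cont} with $\mu = 5\eps/8$ and use $1/(1-\mu) \le 1/(1-\eps)$ to arrive at the stated bound. Your added remark that \cref{lem:ss-to-round-cont} extends to a fractional input packing (which the paper uses implicitly, since its hypothesis nominally asks for a packing while \cref{thm:round} delivers a fractional one) is a reasonable sanity check, not a deviation in approach.
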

\begin{proof}
By \cref{thm:round}, there is an output $(\widetilde{I}, D)$ of $\round(I, \eps)$
such that $\widetilde{I}$ can be packed into
$m \defeq \left( 1 + 2\eps/(1-\eps) \right)(a\opt(I) + b) + 2$
semi-structured $(5\eps/8)$-slacked bins.

By \cref{lem:ss-to-round-cont}, the number of compartmental $(5\eps/8)$-slacked
bins needed to fractionally pack $\widetilde{I}$ is at most
$\left( 1 + 2\eps/(1-\eps) \right) m + 2
\le \left( 1 + 2\eps/(1-\eps) \right)^2 (a\opt(I) + b) + 4/(1-\eps)$.
\end{proof}

\begin{table}[!ht]
\centering
\caption{Upper bound on the number of distinct widths and heights for
compartments of different types}
${\displaystyle \begin{array}{*3{|>{\displaystyle}c}|}
\hline & \textrm{no. of widths} & \textrm{no. of heights}
\\ \hline \textrm{Division-1 wide compartments}
    & \frac{4}{\epsLarge^2}
    & \frac{12}{\eps\epsLarge^5}
\\ \hline \textrm{Division-1 tall compartments}
    & 1
    & \left(\frac{(d+1)\nwwcHyp}{\eps\epsLarge} + 1\right)^{1/\epsLarge-1}
\\ \hline \textrm{Division-2 wide compartments}
    & \left(\frac{(d+1)\nwwcHyp}{\eps\epsLarge} + 1\right)^{1/\epsLarge-1}
    & 1
\\ \hline \textrm{Division-2 tall compartments}
    & \frac{12}{\eps\epsLarge^5}
    & \frac{4}{\epsLarge^2}
\\ \hline
\end{array} }$
\end{table}

Since division-1 tall items have $(d+1)n_{\wwc}/\eps\epsLarge$ possible heights,
the number of possible heights of division-1 tall sparse compartments is
$\left((d+1)n_{\wwc}/\eps\epsLarge + 1\right)^{1/\epsLarge-1}$.
This is a huge number of possible heights, and it is possible to reduce it by partitioning
tall compartments into weight classes and using linear grouping.
We will not perform this improvement here.

\subsection{Packing Algorithm}

Let $I$ be a subset of the rounded items. Formally,
let $(\widetilde{I}', D) \in \round(I', \eps)$ and $I \subseteq \widetilde{I}'$.

We will first present a polynomial-time algorithm
$\fpack(I, m)$ that takes as input $I$ and an integer $m$ and either outputs
a fractional packing of $I$ into $m$ compartmental $\mu$-slacked bins
(where $\mu \le \eps$) or claims that fractionally packing $I$ into $m$
compartmental $\mu$-slacked bins is impossible.

We can use $\fpack(I, m)$ to find the optimal compartmental $\mu$-slacked
fractional packing of $I$ by using binary search on $m$.
With slight abuse of notation, let $\fpack(I)$ denote this algorithm.

Then, we will present an algorithm that finds a $\mu$-slacked
(non-fractional) packing of $I$ by using $\fpack(I)$ as a subroutine.
Note that we're interested in getting a non-fractional $\mu$-slacked packing of $I$,
but that packing need not be compartmental.

\subsubsection{Guessing Bin Configurations}

A $\mu$-slacked bin $J$ can have one of 4 possible slack types:
\begin{enumerate}
\item Normal: $\forall k \in [d], v_k(J) \le 1-\mu$.
\item Big single: $|J| = 1$ and $J$ contains a big item
    and $\forall k \in [d], v_k(J) \in (1-\mu, 1]$.
\item Dense single: $|J| = 1$ and $J$ contains a dense item
    and $\forall k \in [d], v_k(J) \in (1-\mu, 1]$.
\item Dense double: $|J| = 2$ and $J$ contains two dense items
    and $\forall k \in [d], v_k(J) \in (1-\mu, 1]$
    and $\forall i \in J, v_{\max}(i) \le 1/2$.
\end{enumerate}
Note that these slack types are disjoint, i.e., a bin cannot have more than one slack types.

A configuration of a bin is defined to be all of this information:
(i) The division type,
(ii) The slack type,
(iii) Whether the bin has a dense compartment,
(iv) A packing of big items, heavy items and compartments into the bin.

We will now enumerate all possible configurations that a bin can have.

For a division-1 bin,
there are $4(d+1)n_{\bwc}/\eps\epsLarge^3$ different types of big items,
$n_{\hwc}$ different types of heavy items,
$48/\eps\epsLarge^7$ different types of wide compartments and
$((d+1)n_{\wwc}/\eps\epsLarge + 1)^{1/\epsLarge-1}$ different types of tall compartments.
A bin can pack less than $1/\epsLarge^2$ big items, less than $1/\epsLarge$ heavy items
at most $\nWCont$ wide compartments and at most $\nTCont$ tall compartments.
Therefore, by iterating over
\[ \nNConfs' \defeq
\left(\frac{4(d+1)n_{\bwc}}{\eps\epsLarge^3} + 1\right)^{1/\epsLarge^2-1}
\left(\frac{48}{\eps\epsLarge^7} + 1\right)^{\nWCont}
\left(\left(\frac{(d+1)n_{\wwc}}{\eps\epsLarge}+1\right)^{1/\epsLarge-1}
    \hspace{-1em}+ 1\right)^{\nTCont} \]
values (a large constant), we can guess the set of big items, heavy items and compartments
in a division-1 bin of normal slack type that does not have a dense compartment.
For a bin that has a dense compartment, the number configurations to consider is
$\nNConfs'(n_{\hwc} + 1)^{1/\epsLarge-1}$.

For a bin of big-single slack type, there are at most $4(d+1)n_{\bwc}/\eps\epsLarge^3$
configurations. For a bin of dense-single slack type, there are at most
$n_{\hwc}$ configurations. For a bin of dense-double slack type, there are at most
$n_{\hwc}^2$ configurations.
Double the number of configurations to also account for division-2 items.
Therefore, the total number of configurations is at most
\[ \nNConfs \defeq 2\left(\nNConfs'\left(1 + (n_{\hwc} + 1)^{1/\epsLarge-1}\right)
    + \frac{4(d+1)n_{\bwc}}{\eps\epsLarge^3} + n_{\hwc} + n_{\hwc}^2\right). \]

There can be at most $\nWCont + \nTCont$ items and compartments in a bin
(see the proof of \cref{lem:ss-to-round-cont}).
Since the $x$-coordinate of these items and compartments is a multiple of $\epsLarge^2/4$,
we can use brute-force to check if they can be packed into a bin.
For each item, guess its $x$-coordinate and its `layer' number.
This will take time
$O\left(\left(4(\nWCont + \nTCont)/\epsLarge^2\right)^{\nWCont + \nTCont}\right)$.

For $m$ bins, we can have at most $\binom{m+\nNConfs-1}{\nNConfs-1} \in O(m^{\nNConfs-1})$
possible combinations of configurations. Now for each combination,
we will check if the remaining items can fit into the bins.

\subsubsection{Fractionally Packing Wide, Tall, Small and Light items}

We will use a linear program to fractionally pack wide, tall, small and light items.
Note that bins of non-normal slack type can only have big and heavy items,
which have already been packed, so we won't consider them any further.

\begin{definition}[Length and Breadth]
For an item $i$, let the length $\ell(i)$ be the longer geometric dimension
and the breadth $b(i)$ be the shorter geometric dimension
(so for a wide item $i$, $\ell(i) \defeq w(i)$ and $b(i) \defeq h(i)$,
and for a tall item $i$, $\ell(i) \defeq h(i)$ and $b(i) \defeq w(i)$).
Similarly define $\ell$ and $b$ for wide and tall compartments.
\end{definition}

In any packing of items in a wide compartment, move a line horizontally upwards,
as if scanning the compartment. The line, at each point, will intersect some wide items.
The set of such wide items is called a 1D configuration.
Similarly define 1D configuration for items in tall compartments.
Any fractional packing of items inside a compartment can be described
by mentioning the breadths of all 1D configurations in the compartment.
The number of types of items is a constant and there can be at most $1/\epsLarge-1$
items in a 1D configuration. Therefore, the number of 1D configurations is a constant.

\begin{tightemize}
\item Let $M_1$ and $M_2$ be the set of division-1 bins and division-2 bins respectively
    of normal slack type. Let $M \defeq M_1 \cup M_2$.
    Let $M_D \subseteq M$ be the set of bins that have a dense compartment.
\item Let the set of wide and tall non-dense item types be $L$. For $t \in L$,
    let $b(t)$ be the sum of breadths of items of type $t$
    and each item has length $\ell(t)$.
\item Let $J_i$ be the set of compartments in bin $i$.
\item For compartment $j$, let $\mathcal{C}_j$ be the set of feasible 1D configurations
    and let $b(j)$ be the breadth of the compartment.
\item Let $\ell_C$ be the sum of lengths of items in 1D configuration $C$.
\item Let $n_{t,C}$ be the number of items of type $t$ in 1D configuration $C$.
\item Let $\alpha_{k,C}$ be the weight-to-breadth ratio of 1D configuration $C$
    in the $k\Th$ vector dimension.
\item Let $\beta_{k,p}$ be the weight-to-area ratio of weight class $p$ for
    small non-dense items.
\item Let $\gamma_{k,p} \defeq v_k(i)/v_{\max}(i)$ for a light dense item $i$ in weight class $p$.
\item Let $B_i$ be the set of big items in bin $i$.
\item Let $H_i$ be the set of heavy items in bin $i$.
\item Let $D_i$ be 1 for bin $i$ if it contains a dense compartment and 0 otherwise.
\end{tightemize}

We will fractionally pack items using a linear program that only has constraints
and has no objective function.
We call it the fractional packing feasibility program $\FP$.
It has variables $x$, $y$ and $z$, where
\begin{tightemize}
\item $x_{j,C}$ is the breadth of 1D configuration $C$ in compartment $j$.
\item $y_{i,p}$ is the area of small non-dense items of weight class $p$ in bin $i$.
\item $z_{i,p}$ is the $v_{\max}$ of light dense items of weight class $p$ in bin $i$.
When $i \not\in M_D$, $z_{i,p}$ is the constant 0 instead of being a variable.
\end{tightemize}
Feasibility program $\FP$:
\begin{longtable}{LL}
\label{fpack-fp}
\sum_{C \in \mathcal{C}_j} x_{j,C} = b(j)
& \forall i \in M, \forall j \in J_i
\\ & \textrm{(compartment breadth constraint)}
\\
\\ \multicolumn{2}{L}{
        \sum_{j \in J_i} \sum_{C \in \mathcal{C}_j} \ell_C x_{j,C}
        + \sum_{p=1}^{n_{\swc}} y_{i,p} \le 1 - \frac{\epsLarge}{2}D_i - a(B_i)}
\\ & \forall i \in M
\\ & \textrm{(bin area constraint)}
\\
\\ \multicolumn{2}{L}{
        \sum_{j \in J_i} \sum_{C \in \mathcal{C}_j} \alpha_{k,C} x_{j,C}
        + \sum_{p=1}^{n_{\wwc}} \beta_{k,p} y_{i,p}
        + \sum_{p=1}^{n_{\lwc}} \gamma_{k,p} z_{i,p} \le 1 - \mu - v_k(B_i) - v_k(H_i)}
\\ & \forall i \in M, \forall k \in [d]
\\ & \textrm{(bin weight constraint)}
\\
\\ \sum_{i \in M} \sum_{j \in J_i} \sum_{\substack{C \in \mathcal{C}_j \\ C \ni t}}
        n_{t,C} x_{j,C} = b(t)
& \forall t \in L
\\ & \textrm{(conservation of wide and tall items)}
\\
\\ \sum_{i \in M} y_{i,p} = a(S_p)
& \forall p \in [n_{\swc}]
\\ & \textrm{(conservation of small items)}
\\
\\ \sum_{i \in M_1} z_{i,p} = v_{\max}(D_p^{l,w})
& \forall p \in [n_{\lwc}]
\\ & \textrm{(conservation of division-1 light items)}
\\
\\ \sum_{i \in M_2} z_{i,p} = v_{\max}(D_p^{l,h})
& \forall p \in [n_{\lwc}]
\\ & \textrm{(conservation of division-2 light items)}
\\
\\ x_{j,C} \ge 0 & \forall i \in M, \forall j \in J_i, \forall C \in \mathcal{C}_j
\\ y_{i,p} \ge 0 & \forall i \in M, \forall p \in [n_{\swc}]
\\ z_{i,p} \ge 0 & \forall i \in M_D, \forall p \in [n_{\lwc}]
\\ & \textrm{(non-negativity constraints)}
\end{longtable}

The number of constraints in $\FP$ (other than the non-negativity constraints) is at most
\begin{align*}
n_c &\defeq m(\nWCont + \nTCont + d + 1)
+ 2n_{\wwc}\left(\frac{4}{\epsLarge^2} + \frac{d+1}{\eps\epsLarge}\right)
+ n_{\swc} + 2n_{\lwc}
\\ &\le \left(\frac{16}{\epsLarge^3} + d\right) m
+ 2n_{\wwc}\left(\frac{4}{\epsLarge^2} + \frac{d+1}{\eps\epsLarge} + 1\right)
\tag{$2n_{\lwc} \le n_{\wwc}$ and $n_{\swc} \le n_{\wwc}$}
\\ &\le \left(\frac{16}{\epsLarge^3} + d\right) m + 2(d+6)\frac{n_{\wwc}}{\epsLarge^2}.
\end{align*}

The number of variables and constraints in $\FP$ are linear in $m$.
Therefore, $\FP$ can be solved in time polynomial in $m$.
Furthermore, if $\FP$ is feasible, we can obtain an extreme-point solution to $\FP$.

Therefore, $\fpack(I, m)$ guesses all combinations of configurations of $m$ bins
and for each such combination of configurations solves the feasibility program to check
if the remaining items can be packed into the bins according to the bin configurations.
Furthermore, $\fpack(I, m)$ runs in time polynomial in $m$.
$\fpack(I)$ makes at most $O(\log n)$ calls to $\fpack(I, m)$ with $m \le n$.
Therefore, $\fpack(I)$ runs time polynomial in $n$.

\subsubsection{Getting Containers from a Fractional Packing Solution}
\label{sec:algo:make-cont}

Suppose $\fpack(I)$ outputs a fractional packing that uses $m$ bins.
In each compartment $j$, we create a slot of breadth $x_{j,C}$
for each 1D configuration $C$.
Since we're given an extreme-point solution to $\FP$, the number of slots is
at most the number of constraints $n_c$ by rank lemma.
In each slot, we create $n_{t,C}$ containers of type $t \in L$
having length $\ell(t)$ and breadth $x_{j,C}$.

Now we (non-fractionally) pack a large subset of wide and tall items into containers
and we pack a large subset of small non-dense and light dense items outside containers.
In each wide container, items will be stacked one-over-the-other.
In each tall container, items will be stacked side-by-side.
We pack the remaining unpacked items into a small number of new bins.
This will give us a non-fractional packing that uses close to $m$ bins.

\subsubsection{Packing Light Dense Items}
\label{sec:algo:pack-light}

For light dense items, for each bin $i$ and each weight class $p \in [n_{\lwc}]$,
keep adding items of the same division as the bin and from weight class $p$ to the bin
till the total $v_{\max}$ of the items exceeds $z_{i,p}$.
Then discard the last item that was added.

As per the conservation constraints for light dense items, all items will
either be packed or discarded. The $v_{\max}$ of items from weight class $p$
that are packed into bin $i$ is at most $z_{i,p}$.

The number of discarded items is at most the number of $z$-variables in $\FP$,
which is at most $m n_{\lwc}$. Let $D$ be the set of discarded items. Then
$\Span(D) = v_{\max}(D) \le (\epsSmall n_{\lwc}) m$.
We choose $\epsSmall \le \eps/n_{\lwc}$.
Since $\epsSmall \le \epsLarge^2(1-\eps)$, each item's weight is at most $1-\eps$.
Therefore, we can use Next-Fit to pack $D$ into
$2\Span(D)/(1-\mu) + 2 \le 2\eps m/(1-\mu) + 2$ number of
$\mu$-slacked bins (by scaling up each item's weight by $1/(1-\mu)$ before packing
and scaling it back down after packing),
where tall and small dense items are packed separately from wide dense items.

The time taken to pack these items is $O(|D^{l,*}_{*}|)$.

\subsubsection{Packing Wide and Tall Non-Dense Items}
\label{sec:algo:pack-wide}

For each item type $t$, iteratively pack items of type $t$ into a container of type $t$
till the total breadth of items in the container exceeds $x_{j,C}$. Then discard
the last item and move to a new container and repeat. As per the conservation constraints
for wide and tall items, all items will either be packed or discarded.

Treat the items discarded from each slot $C$ as a single (composite) item of breadth $\epsSmall$,
length $\ell_C$ and weight $\epsSmall/\epsLarge^2$ in each dimension.
We will pack these composite items into bins, where wide and tall items are packed separately.

Let $D$ be the set of all such discarded items. Then $|D| \le n_c$.
For a composite item $i$, let
\[ \Span'(i) \defeq \max\left(b(i), \min\left(\frac{v_{\max}(i)}{1-\mu}, 1\right)\right). \]
Treat each composite item as a 1D item of size $\Span'(i)$.
Then by \cref{lem:bot-slack-pack}, we can use Next-Fit to pack these items
into $2\Span'(D) + 2$ $\mu$-slacked bins, where wide and tall items are packed separately.
\[ \Span'(i) \le \max\left(b(i), \frac{v_{\max}(i)}{1-\mu}\right)
\le \max\left(\epsSmall, \frac{\epsSmall}{\epsLarge^2(1-\mu)}\right)
\le \frac{\epsSmall}{\epsLarge^2(1-\mu)}. \]
Therefore, the number of bins needed is
\[ 2\Span'(D) + 2 \le 2\frac{\epsSmall}{\epsLarge^2(1-\mu)} n_c + 2
\le \frac{2\epsSmall(16 + d\epsLarge^3)}{\epsLarge^5(1-\mu)}m
    + \frac{4(d+6)}{1-\mu} \frac{\epsSmall n_{\wwc}}{\epsLarge^4} + 2. \]
Therefore, we choose
$\epsSmall \le \eps\epsLarge^5/(16 + d\epsLarge^3)$
so that the number of new bins needed is at most
\[ \frac{2\eps}{1-\mu}m + 2 + \frac{4(d+6)}{1-\mu} \frac{\epsSmall n_{\wwc}}{\epsLarge^4}. \]
The time taken to pack these items is $O(|L|)$.

\subsubsection{Packing Small Non-Dense Items}
\label{sec:algo:pack-small}

In each bin, there are at most $\nWCont + \nTCont$ compartments and big items
(see the proof of \cref{lem:ss-to-round-cont}).
By \cref{lem:empty-to-rects}, the free space outside compartments can be partitioned into
\[ 3(\nWCont + \nTCont) + 1
\le 3\left(\frac{16}{\epsLarge^2}\left(\frac{1}{\epsLarge} - 1\right) + 1\right) + 1
\le \frac{48}{\epsLarge^3} \]
rectangular regions.
Suppose there are $p_i$ slots in bin $i$ for which $x_{j,C} > 0$.
The sum of $p_i$ over all bins is equal to $n_c$, the number of constraints in $\FP$.
Each slot may give us a free rectangular region in the compartment.
Therefore, the free space in bin $i$ can be partitioned into $p_i + 48/\epsLarge^3$
rectangular regions. Let $R_i$ denote the set of these free rectangular regions.

Let $M$ be the set of bins of normal slack type. Let $m \defeq |M|$. Then
\[ \sum_{i \in M} |R_i| \le n_c + \frac{48}{\epsLarge^3} m
\le \left(\frac{64}{\epsLarge^3} + d\right) m. \]
This observation forms the basis of our algorithm $\packSmall$ (\cref{algo:pack-small})
for packing small non-dense items.

\begin{algorithm}[!ht]
\caption{$\packSmall(I, M, y)$:
Here $I$ is a set of items and $M$ is the fractional packing output by $\fpack(I)$.
$(x, y, z)$ is a feasible solution to $\FP$ output by $\fpack(I)$.}
\label{algo:pack-small}
\begin{algorithmic}[1]
\State Let $S_p$ be the $p\Th$ coarse partition of small non-dense items in $I$.
\State $D_1 = D_2 = \{\}$  \Comment{sets of items to discard}
\For{each bin $i \in M$ of normal slack type}
    \State Let $R_i$ be the set of free rectangular regions in bin $i$.
    \LineComment{Select a set of items to pack}
    \State $\widehat{S} = \{\}$
    \For{$p \in [n_{\swc}]$}
        \State $S_{i,p} = \{\}$
        \While{$a(S_{i,p}) < y_{i,p}$ and $|S_p| > 0$}
            \State Move an item from $S_p$ to $S_{i,p}$.
        \EndWhile
        \If{$a(S_{i,p}) > y_{i,p}$} \label{alg-line:pack-small:select}
            \State Move the last item in $S_{i,p}$ to $D_1$.
        \EndIf
        \State $\widehat{S} = \widehat{S} \cup S_{i,p}$
    \EndFor
    \LineComment{Pack those items into $R_i$}
    \While{$|R_i| > 0$ and $|\widehat{S}| > 0$}
        \State Remove a rectangle $r$ from $R_i$.
        \State Let $T$ be the smallest prefix of $\widehat{S}$ such that
            $a(T) \ge a(r) - 2\epsSmall$ or $T = \widehat{S}$
        \State \label{alg-line:pack-small:nfdh}Pack $T$ into $r$ using NFDH.
            \Comment{We will prove that this is possible}
        \State $\widehat{S} \texttt{ -= } T$
    \EndWhile
    \State $D_2 = D_2 \cup \widehat{S}$
\EndFor
\State \Return $D_1 \cup D_2$  \Comment{discarded items}
\end{algorithmic}
\end{algorithm}

\begin{lemma}
Algorithm $\packSmall(I, M)$ doesn't fail at line \ref{alg-line:pack-small:nfdh}.
\end{lemma}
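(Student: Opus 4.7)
The plan is to invoke \cref{lem:nfdh-small} (NFDH for small items). Every item in $\widehat{S}$ is small non-dense, so every item in $T \subseteq \widehat{S}$ has width at most $\epsSmall$ and height at most $\epsSmall$. Taking $\delta_W = \delta_H = \epsSmall$, $W = w(r)$, $H = h(r)$ in that lemma, the claim reduces to verifying the area bound $a(T) \le (w(r) - \epsSmall)(h(r) - \epsSmall)$.

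First I would bound $a(T)$ from above using the stopping rule of the inner loop. Either $T$ is the smallest prefix of $\widehat{S}$ with $a(T) \ge a(r) - 2\epsSmall$, in which case removing the last item (of area at most $\epsSmall^2$) brings the area strictly below $a(r) - 2\epsSmall$; or $T = \widehat{S}$ was triggered because the loop ran out of items, in which case the same reasoning applies to the last partial sum. Either way, $a(T) \le a(r) - 2\epsSmall + \epsSmall^2$. The $2\epsSmall$ slack baked into the stopping condition is chosen precisely to make room for the $\epsSmall$-wide NFDH border.

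Next, using $w(r), h(r) \le 1$, I would expand
\[ (w(r) - \epsSmall)(h(r) - \epsSmall) = a(r) - \epsSmall(w(r) + h(r)) + \epsSmall^2 \ge a(r) - 2\epsSmall + \epsSmall^2 \ge a(T), \]
and conclude by \cref{lem:nfdh-small} that NFDH successfully packs $T$ into $r$.

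The one subtlety is that the right-hand side of the displayed inequality can be non-positive if $w(r) < \epsSmall$ or $h(r) < \epsSmall$, so the expansion breaks down in that edge regime. I would dispose of it by a case split on $a(r)$. If $a(r) \le 2\epsSmall$, then the empty prefix already satisfies $a(T) \ge a(r) - 2\epsSmall$, so $T = \emptyset$ and there is nothing to pack. Otherwise $a(r) > 2\epsSmall$, and since $h(r) \le 1$ we get $w(r) \ge a(r)/h(r) \ge a(r) > 2\epsSmall > \epsSmall$, and symmetrically $h(r) > \epsSmall$; so both dimensions of $r$ automatically exceed $\epsSmall$ whenever $T$ is nonempty, and the area computation above goes through.
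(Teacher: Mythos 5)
Your proof is correct and matches the paper's argument: both bound $a(T) \le a(r) - 2\epsSmall + \epsSmall^2$ via the stopping rule and the per-item area bound $\epsSmall^2$, then expand $(w(r)-\epsSmall)(h(r)-\epsSmall) = a(r) - \epsSmall(w(r)+h(r)) + \epsSmall^2 \ge a(r) - 2\epsSmall + \epsSmall^2$ using $w(r),h(r) \le 1$, and invoke \cref{lem:nfdh-small}. Your extra case split on $a(r) \le 2\epsSmall$, which guarantees $T=\emptyset$ in the degenerate regime where $r$ could be narrower or shorter than $\epsSmall$, handles a corner case the paper leaves implicit; it is correct and tightens the argument.
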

\begin{proof}
Since all items have area at most $\epsSmall^2$,
\[ a(T) < a(r) - 2\epsSmall + \epsSmall^2
\le a(r) - (w(r) + h(r))\epsSmall + \epsSmall^2
= (w(r) - \epsSmall)(h(r) - \epsSmall). \]
Therefore, by \cref{lem:nfdh-small}, we get that $T$ can be packed into $r$.
\end{proof}

\begin{lemma}
In $\packSmall(I, M)$, every small non-dense item is either packed or discarded.
\end{lemma}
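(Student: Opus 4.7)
The plan is to argue that the only way an item in $S_p$ avoids being packed or placed in $D_1 \cup D_2$ is to never be moved out of $S_p$ at all, and then to use the small-item conservation constraint of $\FP$ to rule out this possibility. Tracing the algorithm, an item of weight class $p$ begins in $S_p$; it can only leave $S_p$ by being moved into some $S_{i,p}$; from $S_{i,p}$ it either ends up in $D_1$ (the ``last item'' in the overflow check), or it remains in $S_{i,p}$ and is folded into $\widehat{S}$, after which it is either packed by NFDH at line \ref{alg-line:pack-small:nfdh} or lands in $D_2$. Hence it suffices to show that at the end of the outer \textbf{for} loop we have $S_p = \{\}$ for every $p \in [n_{\swc}]$.

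Fix $p$ and suppose for contradiction that some item remains in $S_p$ at the end. I would first observe that items are only ever removed (never added) from $S_p$, so $|S_p|$ is nonincreasing throughout the execution; in particular $|S_p| > 0$ holds during the processing of every bin $i \in M$. This means that for every bin $i$ the inner \textbf{while} loop must have terminated via its area condition $a(S_{i,p}) \ge y_{i,p}$, rather than via $|S_p| = 0$. Therefore $a(S_{i,p}) \ge y_{i,p}$ holds for every $i \in M$ at the moment the loop exits (the subsequent move of one item to $D_1$ is irrelevant to this count, since the $S_{i,p}$ we are summing are those immediately after the loop).

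Now I would add up these inequalities and invoke the conservation-of-small-items constraint from $\FP$: summing over $i \in M$ gives
\[
\sum_{i \in M} a(S_{i,p}) \;\ge\; \sum_{i \in M} y_{i,p} \;=\; a(S_p).
\]
On the other hand, the sets $S_{i,p}$ across different bins are pairwise disjoint subsets of the original $S_p$, so $\sum_{i \in M} a(S_{i,p}) \le a(S_p)$, with equality only if no item remains in $S_p$. Combining the two inequalities forces every item of $S_p$ to have been extracted, contradicting the assumption that $S_p$ is nonempty at the end. Hence $S_p$ is empty when the algorithm terminates, and by the first paragraph every small non-dense item is either packed (via NFDH) or discarded (into $D_1$ or $D_2$).

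The only subtlety I anticipate is bookkeeping the order of the ``last item to $D_1$'' operation relative to the loop-exit inequality; the cleanest way is to freeze $S_{i,p}$ at the instant the inner loop terminates and argue on that snapshot, so that the bound $a(S_{i,p}) \ge y_{i,p}$ is unambiguously available before any adjustment. No other step requires more than a line of reasoning.
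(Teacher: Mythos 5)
Your proof is correct and follows essentially the same argument as the paper's: assume some item of class $p$ remains, deduce that the inner while loop exits via the area condition for every bin $i$ so that $a(S_{i,p})\ge y_{i,p}$, sum over $i$ and invoke the conservation constraint $\sum_i y_{i,p}=a(S_p)$ to derive a contradiction. The paper states this more tersely, but the snapshot bookkeeping you add (freezing $S_{i,p}$ at loop exit, before the $D_1$ move) is exactly the right way to make the step airtight.
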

\begin{proof}[Proof by contradiction]
Assume $\exists p \in [n_{\swc}]$ such that there are items in $S_p$
that are neither packed nor discarded.
Therefore, for each bin $i$, at \cref{alg-line:pack-small:select},
$a(S_{i,p}) \ge y_{i,p}$. Therefore, the total area of all items from $S_p$
that are either packed or discarded is at least
\[ \sum_{i \in M} y_{i,p} = a(S_p)  \tag{by conservation constraint in $\FP$} \]
which means that all items have been packed or discarded, which is a contradiction.
\end{proof}

\begin{lemma}
Let $D_1 \cup D_2 = \packSmall(I, M)$. Then
$a(D_1 \cup D_2) \le \epsSmall\left(n_{\swc} + \frac{128}{\epsLarge^3} + 2d\right) m$.
\end{lemma}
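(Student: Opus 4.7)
The plan is to bound $a(D_1)$ and $a(D_2)$ separately, since the two sets of discarded items arise from different parts of the algorithm.

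First, I would analyze $D_1$. An item joins $D_1$ only on the ``overshoot'' step, so at most one item from each weight class is discarded per bin; hence $|D_1|\le m\cdot\nswcHyp$. Since $I$ is $(\epsSmall,\epsLarge)$-non-medium, every small non-dense item has width and height at most $\epsSmall$, so $a(i)\le\epsSmall^2$. Combining, $a(D_1)\le m\cdot\nswcHyp\cdot\epsSmall^2\le\epsSmall\,\nswcHyp\,m$, which already matches the $\nswcHyp$ term in the claim.

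Next, I would bound $a(D_2)$. The key quantitative handle is that for any bin $i$ contributing items to $D_2$, the inner \texttt{while} loop over $R_i$ must terminate with $|R_i|=0$ rather than $|\widehat S|=0$. This means that for every $r\in R_i$, the prefix packed in $r$ satisfies $a(T)\ge a(r)-2\epsSmall$ (the other terminating condition $T=\widehat S$ does not fire, since items remain). Summing, the total area packed into $R_i$ is at least $a(R_i)-2\epsSmall|R_i|$. The items discarded from bin $i$ to $D_2$ therefore have area at most $a(\widehat S)-\bigl(a(R_i)-2\epsSmall|R_i|\bigr)$. The central step, and the main thing to verify carefully, is the inequality $a(\widehat S)\le a(R_i)$. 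This follows because the free rectangular regions $R_i$ partition the entire space of bin $i$ that is \emph{not} occupied by a big item, the reserved dense region, or items packed inside compartments; using the bin area constraint of \FPHyp{} together with the construction of \cref{sec:algo:make-cont}, we get
\[
a(R_i)=1-a(B_i)-\tfrac{\epsLarge}{2}D_i-\sum_{j\in J_i}\sum_{C\in\mathcal C_j}\ell_C x_{j,C}\;\ge\;\sum_{p}y_{i,p}\;\ge\;a(\widehat S),
\]
where the last inequality holds because the selection rule enforces $a(S_{i,p})\le y_{i,p}$ after the possible transfer to $D_1$. Hence at most $2\epsSmall|R_i|$ area goes to $D_2$ from bin $i$.

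Summing over all bins of normal slack type and invoking the earlier bound $\sum_{i\in M}|R_i|\le(64/\epsLarge^3+d)m$ derived from \cref{lem:empty-to-rects} and the rank-lemma bound on the number of positive $x_{j,C}$'s, we obtain $a(D_2)\le 2\epsSmall(64/\epsLarge^3+d)m$. Adding the two bounds yields
\[
a(D_1\cup D_2)\;\le\;\epsSmall\,\nswcHyp\,m+2\epsSmall\!\left(\frac{64}{\epsLarge^3}+d\right)m\;=\;\epsSmall\!\left(\nswcHyp+\frac{128}{\epsLarge^3}+2d\right)m,
\]
as claimed. The only subtle point is the inequality $a(\widehat S)\le a(R_i)$, which relies on correctly identifying that $R_i$ encompasses both the free space outside compartments and the unused interior space of compartments; everything else is a routine termination/counting argument.
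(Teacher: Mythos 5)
Your proof is correct and follows essentially the same route as the paper: bound $a(D_1)$ by the ``one discard per weight class per bin'' count, bound $a(D_2)$ by the chain $a(\widehat S)\le\sum_p y_{i,p}\le a(R_i)$ combined with the $a(r)-2\epsSmall$ prefix rule and the bound on $\sum_i|R_i|$, then add. The only difference is cosmetic --- you spell out the intermediate identity $a(R_i)=1-a(B_i)-\tfrac{\epsLarge}{2}D_i-\sum_{j,C}\ell_C x_{j,C}$ and the $\epsSmall^2$ bound on individual small-item areas, both of which the paper leaves implicit.
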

\begin{proof}
During $\packSmall(I, M)$, for each bin, $a(D_1)$ increases by at most $\epsSmall n_{\swc}$.
Therefore, $a(D_1) \le \epsSmall n_{\swc} m$.

We know that $a(\widehat{S}) \le \sum_{p=1}^{n_{\swc}} y_{i,p} \le a(R_i)$.
The first inequality follows from the way we chose $\widehat{S}$.
The second inequality follows from the area constraint in $\FP$.

\textbf{Case 1}: We used up all items in $\widehat{S}$ during bin $i$:\\
We didn't discard any items during bin $i$.

\textbf{Case 2}: We used up all rectangles in $R_i$ during bin $i$:\\
Then the used area is at least $a(R_i) - 2\epsSmall|R_i|$.
Therefore, the items discarded during bin $i$ have area at most
$a(\widehat{S}) - a(R_i) + 2\epsSmall|R_i| \le 2\epsSmall|R_i|$.
Therefore,
\[ a(D_2) \le 2\epsSmall\sum_{i \in M}|R_i|
\le 2\epsSmall\left(\frac{64}{\epsLarge^3} + d\right) m. \qedhere \]
\end{proof}

\begin{lemma}
Let $D \defeq \packSmall(I, M)$. Then we can pack $D$ into
$2\eps m/(1-\mu) + 1$ number of
$\mu$-slacked bins, where $\mu \le \eps$.
\end{lemma}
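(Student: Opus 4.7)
The plan is to exploit two properties of items in $D$: they are small (so $w(i), h(i) \le \epsSmall$) and non-dense (so $v_{\max}(i) \le a(i)/\epsLarge^2$). I will bound their total area using the preceding lemma and then pack them via a Next-Fit on a carefully chosen $1$D proxy, combined with NFDH for geometric placement inside each bin.

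First, I recall from the preceding lemma that $a(D) \le \epsSmall \, C \, m$ where $C \defeq n_{\swc} + 128/\epsLarge^3 + 2d$. The function $f$ used by $\round$ (\cref{algo:round}, referenced as ``described later'') will be required to make $\epsSmall \le \eps\epsLarge^2/C$, alongside its other constraints (the $\eps/\nlwcHyp$ bound from \cref{sec:algo:pack-light} and the $\eps\epsLarge^5/(16+d\epsLarge^3)$ bound from \cref{sec:algo:pack-wide}); I would just add the new upper bound to this list.

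Next, I would define the proxy $\Span'(i) \defeq \max\!\bigl(a(i)/(1-\epsSmall)^2,\; v_{\max}(i)/(1-\mu)\bigr)$ and apply Next-Fit to $D$ using $\Span'$ as the $1$D size. This yields groups $G_1, \ldots, G_N$ with $\sum_{i \in G_k} \Span'(i) \le 1$, which translates to $a(G_k) \le (1-\epsSmall)^2$ and, for every $j \in [d]$, $v_j(G_k) \le v_{\max}(G_k) \le 1-\mu$. The area bound lets \cref{lem:nfdh-small} pack $G_k$ geometrically into a unit bin (since items have width and height at most $\epsSmall$), while the weight bound makes the resulting bin $\mu$-slacked.

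For the count, standard Next-Fit gives $N \le 2\Span'(D) + 1$. Because $\epsLarge^2(1-\mu) \le (1-\epsSmall)^2$ for our parameter ranges, non-density yields $\Span'(i) \le a(i)/\bigl(\epsLarge^2(1-\mu)\bigr)$, and therefore
\[
\Span'(D) \;\le\; \frac{a(D)}{\epsLarge^2(1-\mu)} \;\le\; \frac{\epsSmall\, C\, m}{\epsLarge^2(1-\mu)} \;\le\; \frac{\eps\, m}{1-\mu},
\]
using the chosen bound on $\epsSmall$. This delivers at most $2\eps m/(1-\mu) + 1$ bins, as required. The only delicate step is the bookkeeping: verifying that the new upper bound on $\epsSmall$ can coexist with those already imposed by $\packSmall$'s geometric slack (the $\epsSmall^2 \le (w+h)\epsSmall$ needed for NFDH to fit the chosen prefix $T$), by $\partWide$'s discard bound, and by the light-dense discard bound of \cref{sec:algo:pack-light}. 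Since each is of the form $\epsSmall \le (\text{constant depending only on } d, \eps, \epsLarge)$, a single choice of $f$ with $f(\epsLarge)$ small enough subsumes them all, which is precisely why $\round$ defers the definition of $f$.
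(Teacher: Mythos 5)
Your proposal is correct and takes essentially the same approach as the paper: bound $a(D)$ by the preceding lemma, impose $\epsSmall \le \eps\epsLarge^2 / (n_{\swc} + 128/\epsLarge^3 + 2d)$, run Next-Fit on a $1$D proxy $\Span'$, and place each resulting group geometrically with NFDH while checking $\mu$-slackness from non-density. The only cosmetic difference is that you define $\Span'(i)$ as a maximum of an area term and a weight term, whereas the paper directly sets $\Span'(i) = a(i)/\bigl(\epsLarge^2(1-\mu)\bigr)$; you then reduce your $\Span'$ to the paper's using non-density and $\epsLarge^2(1-\mu) \le (1-\epsSmall)^2$, so the two arguments coincide.
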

\begin{proof}
Since $\epsSmall^2 \le (1-\eps)\epsLarge^2$, we get that
$\forall i \in D, v_{\max}(i) \le 1-\eps \le 1-\mu$.
Let $\Span'(i) = a(i)/\epsLarge^2(1-\mu)$. Then by interpreting each $i \in D$ as a 1D item
of size $\Span'(i)$, we can pack them into $2\Span'(D) + 1$ bins using Next-Fit.
In each bin $J$,
$v_{\max}(J) \le a(J)/\epsLarge^2 = (1-\mu)\Span'(J) \le 1-\mu$.
Also, $a(J) \le \epsLarge^2(1-\mu)\Span'(J) \le (1-\epsSmall)^2$.
Therefore, the bin is $\mu$-slacked and by \cref{lem:nfdh-small},
we can pack items $J$ in the bin using NFDH.

We choose
$\epsSmall \le \eps\epsLarge^2/\left(n_{\swc} + 128/\epsLarge^3 + 2d\right)$.
Therefore, the number of bins needed is at most
\[ 2\Span'(D) + 1 \le \frac{2a(D)}{\epsLarge^2(1-\mu)} + 1
\le \frac{2\eps}{1-\mu} m + 1. \qedhere \]
\end{proof}

The time taken to pack small items is $O(n_S \log n_S)$, where $n_S$ is the
number of small items, because we need to sort items by height for NFDH.

\subsubsection{The Algorithm and its Approximation Factor}

We give an algorithm $\ipack_{\mu}$ (\cref{algo:ipack}) for packing a subset $I$ of rounded items.

\begin{algorithm}[!ht]
\caption{$\ipack_{\mu}(I)$: Computes a non-fractional $\mu$-slacked packing of $I$.
Here $\mu \le \eps$ and $I \subseteq \widetilde{I}'$ and $(D, \widetilde{I}') \in \round(I')$
for some set $I'$ of items.}
\label{algo:ipack}
\begin{algorithmic}[1]
\State Let $(J, x, y, z) \defeq \fpack_{\mu}(I)$. Here $J$ is a fractional packing of $I$ into $m$ bins
and $(x, y, z)$ is a feasible solution to the feasibility program $\FP$.
\State Create containers inside compartments in $J$ using $x$ as per \cref{sec:algo:make-cont}.
\State Pack light dense items into dense compartments using $z$ as per \cref{sec:algo:pack-light}.
\State Pack wide and tall non-dense items into containers as per \cref{sec:algo:pack-wide}.
\State $\packSmallHyp(I, J, y)$ \Comment{Pack small non-dense items outside containers.}
\end{algorithmic}
\end{algorithm}

\begin{theorem}
\label{thm:ipack}
Let $(\Itild', D) \in \round(I', \eps)$ and $I \subseteq \Itild'$.
Let there be $m$ bins in the optimal $\mu$-slacked compartmental fractional packing of $I$,
where $\mu \le \eps$. Then $\ipack_{\mu}(I)$ runs in polynomial time and outputs
a $\mu$-slacked packing of $I$ into
\[ \left( 1 + \frac{6\eps}{1-\mu}\right)m + 5
    + \frac{4(d+6)}{1-\mu} \frac{\epsSmall\nwwcHyp}{\epsLarge^4} \]
bins, where each bin satisfies either \cref{prop:hrnd}\ref{item:hrnd:dense}
or \cref{prop:vrnd}\ref{item:vrnd:dense}.
\end{theorem}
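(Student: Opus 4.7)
The plan is to assemble the theorem by combining the overheads already analyzed for the three packing subroutines called inside $\ipack_{\mu}$. First, $\fpack_{\mu}(I)$ returns an extreme-point fractional packing of $I$ into $m$ compartmental $\mu$-slacked bins, together with a feasible solution $(x, y, z)$ to the LP $\FP$. Polynomial time for this step follows because the number of bin configurations $\nNConfs$ is a constant, so only $O(m^{\nNConfs - 1})$ configuration profiles are enumerated per $\fpack_{\mu}(I, m)$ call; the LP has $O(m)$ variables and constraints and admits a polynomial-time exact extreme-point solve; and binary search over $m \le n$ adds only a logarithmic factor. Container creation (\cref{sec:algo:make-cont}) then produces at most $n_c$ containers by the rank lemma, which is the quantity that will control all subsequent discards.

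Next I would plug in the three already-established overheads. \cref{sec:algo:pack-light} discards at most one light dense item per (bin, weight-class) pair, a set of total $v_{\max}$-mass at most $\epsSmall \nlwcHyp m$; with the prescribed $\epsSmall$ and a weight-rescaling trick, Next-Fit repacks it into at most $\tfrac{2\eps m}{1-\mu} + 2$ additional $\mu$-slacked bins. \cref{sec:algo:pack-wide} discards at most $n_c$ composite wide/tall items of breadth $\epsSmall$; substituting the bound $n_c \le (16/\epsLarge^3 + d)m + 2(d+6)\nwwcHyp/\epsLarge^2$ and the prescribed $\epsSmall \le \eps\epsLarge^5/(16 + d\epsLarge^3)$ gives $\tfrac{2\eps m}{1-\mu} + 2 + \tfrac{4(d+6)}{1-\mu}\tfrac{\epsSmall \nwwcHyp}{\epsLarge^4}$ further bins. \cref{sec:algo:pack-small} discards items of total area at most $\epsSmall(\nswcHyp + 128/\epsLarge^3 + 2d)m$, which under $\epsSmall \le \eps\epsLarge^2/(\nswcHyp + 128/\epsLarge^3 + 2d)$ pack via NFDH into $\tfrac{2\eps m}{1-\mu} + 1$ extra $\mu$-slacked bins. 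Summing $m$ with the three overheads and collecting $2 + 2 + 1 = 5$ yields exactly the claimed bound.

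The step I expect to require most care is verifying that every output bin satisfies \cref{prop:hrnd}\ref{item:hrnd:dense} or \cref{prop:vrnd}\ref{item:vrnd:dense}. For the $m$ main bins this is inherited from the compartmental structure of $\fpack_{\mu}$'s output: dense items reside in the dense compartment $S^{(R')}$ (division-1) or $S^{(T')}$ (division-2), container construction and the wide/tall repacking act only inside sparse compartments, and $\packSmall$ fills free rectangular regions that are disjoint from any compartment, so no non-dense item can invade the dense compartment. For the overflow bins, the three discard sets are type-homogeneous: pack-light overflow bins contain only dense items and can be placed inside a strip of width (or height) $\deltaDense \le \epsLarge/2$ via \cref{lem:dense-pack}, matching $S^{(R')}$ for tall/small dense bins (hence satisfying \cref{prop:hrnd}\ref{item:hrnd:dense}) and $S^{(T')}$ for wide dense bins (satisfying \cref{prop:vrnd}\ref{item:vrnd:dense}); the overflow bins of pack-wide and pack-small contain no dense items, so the clause holds vacuously. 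Overall polynomial time for $\ipack_{\mu}$ then follows by composing the polynomial runtime of $\fpack_{\mu}$ with the near-linear runtimes of the three subsequent packing steps.
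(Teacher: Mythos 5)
Your proposal is correct and follows essentially the same route as the paper: invoke the optimal compartmental fractional packing from $\fpack_\mu$, then add the three per-subroutine overhead bounds from \cref{sec:algo:pack-light,sec:algo:pack-wide,sec:algo:pack-small} (each already carried out in those subsections), which sum to exactly the claimed $\left(1 + 6\eps/(1-\mu)\right)m + 5 + \tfrac{4(d+6)}{1-\mu}\tfrac{\epsSmall\nwwcHyp}{\epsLarge^4}$, and finally verify the property conditions by noting the $m$ main bins are compartmental while each overflow bin is type-homogeneous (only dense tall/small, only dense wide, or only non-dense). The paper's stated proof is merely a terse two-paragraph summary of exactly this assembly, so your expanded version is equivalent; the only cosmetic inaccuracy is that the rank lemma bounds the number of \emph{slots} (not containers) by $n_c$, but since pack-wide discards per slot this does not affect the argument.
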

\begin{proof}
$\fpack(I)$ finds the optimal $\mu$-slacked compartmental fractional packing
of $I$ in polynomial time. Given the output of $\fpack(I)$, $\ipack$ can, in $O(n\log n)$ time,
compute a packing of $I$ into
$\left( 1 + 6\eps/(1-\mu)\right)m + 5
+ 4(d+6)\epsSmall n_{\wwc}/\epsLarge^4(1-\mu)$ number of $\mu$-slacked bins.

Each bin satisfies either \cref{prop:hrnd}\ref{item:hrnd:dense}
or \cref{prop:vrnd}\ref{item:vrnd:dense}.
This is because the $m$ bins output by the fractional packing are compartmental,
and the extra bins either contain only dense tall and small items or
only dense wide items or only non-dense items.
\end{proof}

We choose
$\epsSmall \defeq \ceil{\max\left(n_{\lwc}/\eps, (16 + d^3)/\eps\epsLarge^5,
    (128 + \epsLarge^3(n_{\swc} + 2d))/\eps\epsLarge^5\right)}^{-1}$.
Hence,
$\epsSmall^{-1} \in O(\epsLarge^{-5} + \eps^{-d}\epsLarge^{-(2d+2)})$.
So, the parameter $f$ in $\remMedHyp(I, \eps, f, \delta_0)$ is
\begin{equation} \label{eqn:remmed-f} f(x) = \ceil{\max\left(
    \frac{\nlwcHyp}{\eps}, \frac{16 + d^3}{\eps x^5},
    \frac{128 + x^3(\left(8/x^2\eps\right)^d + 2d)}{\eps x^5}\right)}^{-1}.
    \tag*{}
\end{equation}

\begin{theorem}
\label{thm:ipack-simple-approx}
Let $I$ be a \geomvec{2}{$d$} bin packing instance.
For some $(\Itild, D) \in \round(I, \eps)$, if we pack $D$ using $\simplePack$
and pack $\Itild$ using $\ipack$, we get a packing of $I$ into at most
\[ \left((1 + 17\eps)a + 6\eps(d+1)\right)\opt(I) + O(1)
+ \frac{4(10d+51)}{1-\eps} \frac{\epsSmall\nwwcHyp}{\epsLarge^4} \]
bins. Here $a$ is the constant defined in \cref{table:slack-pack-ab}
in \cref{sec:rbbp-2d-extra:slack}.
\end{theorem}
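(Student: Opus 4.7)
The plan is to combine three already-proved ingredients: the structural theorem (\cref{thm:struct-theorem}), the packing guarantee for $\ipack$ (\cref{thm:ipack}), and the discard bound for $\round$ (\cref{lem:round-discard}). First, I would invoke \cref{thm:struct-theorem} to obtain the existence of some pair $(\widetilde{I}, D) \in \round(I, \eps)$ such that $\widetilde{I}$ admits a compartmental $(5\eps/8)$-slacked fractional packing into at most
\[ m \;\le\; \left(1 + \frac{2\eps}{1-\eps}\right)^{\!2} \bigl(a \opt(I) + b\bigr) + \frac{4}{1-\eps} \]
bins. Since the algorithm iterates over every $(\widetilde{I}, D) \in \round(I, \eps)$ and keeps the best solution (this is guaranteed to be polynomial-time by \cref{lem:round-time}), it suffices to analyze this particular output.

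Next, I would apply \cref{thm:ipack} with $\mu = 5\eps/8$ to pack $\widetilde{I}$ into at most
\[ \left(1 + \frac{6\eps}{1 - 5\eps/8}\right)m + 5 + \frac{4(d+6)}{1 - 5\eps/8}\cdot\frac{\epsSmall \nwwcHyp}{\epsLarge^4} \]
bins, and apply $\simplePack$ to pack $D$. For the latter, \cref{lem:round-discard} gives $\Span(D) \le \eps \Span(I) + 6(d+5)\epsSmall \nwwcHyp/\epsLarge^4$, and combining with \cref{thm:span-pack} and \cref{lem:span-lb-opt} yields at most $6\eps(d+1)\opt(I) + 36(d+5)\epsSmall\nwwcHyp/\epsLarge^4 + O(1)$ bins for $D$.

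The remaining step is purely algebraic: add the two contributions and collect coefficients. The coefficient of $\opt(I)$ comes from multiplying $a$ by $(1+2\eps/(1-\eps))^2(1+6\eps/(1-5\eps/8))$ and adding the $6\eps(d+1)$ from $\simplePack$. For $\eps \le 1/8$ a routine estimate shows the first factor is at most $1 + 17\eps$, giving the advertised leading term $((1+17\eps)a + 6\eps(d+1))\opt(I)$. The additive $O(1)$ absorbs the $b$, the $4/(1-\eps)$, the constant $5$, and the $O(1)$ from $\simplePack$. The $\epsSmall \nwwcHyp/\epsLarge^4$ term is bounded by combining $4(d+6)/(1-5\eps/8) + 36(d+5)$, which for $\eps$ sufficiently small is at most $4(10d+51)/(1-\eps)$.

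The only nontrivial part is the coefficient bookkeeping, in particular verifying the telescoped bound $(1+2\eps/(1-\eps))^2(1+6\eps/(1-5\eps/8)) \le 1+17\eps$ under the assumption $\eps \le 1/8$ already imposed on $\cbPack$. Everything else follows mechanically from the lemmas proved earlier in the section, so no new combinatorial or geometric idea is required.
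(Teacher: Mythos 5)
Your decomposition and arithmetic match the paper's proof closely: you correctly invoke the structural theorem (\cref{thm:struct-theorem}) to get the $m$-bin compartmental $(5\eps/8)$-slacked fractional packing of $\Itild$, \cref{thm:ipack} with $\mu = 5\eps/8$ to pack $\Itild$, and the chain \cref{lem:round-discard} $\to$ \cref{thm:span-pack} $\to$ \cref{lem:span-lb-opt} to bound $|\simplePack(D)|$. The two algebraic checks --- $(1+2\eps/(1-\eps))^2(1+6\eps/(1-\eps)) \le 1+17\eps$ for $\eps \le 1/8$ and $4(d+6)/(1-\eps) + 36(d+5) \le 4(10d+51)/(1-\eps)$ --- are both correct (the second one is in fact an equality after dropping the $-36(d+5)\eps$ term).

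However, there is a genuine gap: you never explain why a packing of $\Itild$ together with a packing of $D$ constitutes a packing of $I$. The items in $\Itild$ are \emph{rounded} versions of the items in $I - D$, and $\round$ does not only round up --- \cref{trn:round-to-0} rounds dense items' geometric dimensions and some vector components \emph{down} to $0$. So you cannot simply place the original item where its rounded counterpart was placed; the original item may not fit, and the weight constraints may be violated. The paper's proof explicitly resolves this with an unrounding step: it notes that the only down-rounding is \cref{trn:round-to-0}, and invokes \cref{lem:round-to-0}, which shows the rounding can be undone precisely because $\ipack$ produces bins that are $(5\eps/8)$-slacked and each satisfies \cref{prop:hrnd}\ref{item:hrnd:dense} or \cref{prop:vrnd}\ref{item:vrnd:dense}. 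This is not a technicality you can wave away --- it is the entire reason for insisting on $\mu = 5\eps/8$-slackness (rather than packing into arbitrary bins), and without it the bin-count sum bounds a packing of $\Itild \cup D$, not of $I$.
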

\begin{proof}
By \cref{thm:struct-theorem}, $\exists (\Itild, D) \in \round(I, \eps)$ such that
$\Itild$ has a compartmental $(5\eps/8)$-slacked fractional packing into at most
$m \defeq \left( 1 + 2\eps/(1-\eps) \right)^2 a\opt(I) + O(1)$ bins.

By \cref{thm:span-pack,lem:round-discard,lem:span-lb-opt}, we get
\begin{align*}
|\simplePack(D)| &\le 6\Span(D) + 3
\le 6\left(\eps\Span(I) + 6(d+5)\frac{\epsSmall\nwwcHyp}{\epsLarge^4}\right) + 3
\\ &\le 6\eps(d+1)\opt(I) + \left(3 + 36(d+5)\frac{\epsSmall\nwwcHyp}{\epsLarge^4}\right).
\end{align*}

Let $J \defeq \ipack_{\mu}(\Itild)$, where $\mu \defeq 5\eps/8$.
Then the bins in $J$ are $(5\eps/8)$-slacked. The number of bins in $J$ is at most
\begin{align*}
& \left( 1 + \frac{6\eps}{1-\mu}\right)m + 5
    + \frac{4(d+6)}{1-\mu} \frac{\epsSmall n_{\wwc}}{\epsLarge^4}
\\ &\le \left(1 + \frac{2\eps}{1-\eps}\right)^2\left(1 + \frac{6\eps}{1-\eps}\right)a\opt(I)
    + O(1) + \frac{4(d+6)}{1-\eps} \frac{\epsSmall n_{\wwc}}{\epsLarge^4}
\\ &\le (1 + 17\eps)a\opt(I)
    + O(1) + \frac{4(d+6)}{1-\eps} \frac{\epsSmall n_{\wwc}}{\epsLarge^4}.
\tag{$\eps \le 1/8$}
\end{align*}

To get a packing of $I-D$ from a packing of $\Itild$, we need to revert
the transformations done by $\round$. The only transformation in $\round$ where we
rounded down instead of rounding up is \cref{trn:round-to-0}.
As per \cref{lem:round-to-0}, it is possible to undo \cref{trn:round-to-0}
because the bins are $(5\eps/8)$-slacked and each bin satisfies either
\cref{prop:hrnd}\ref{item:hrnd:dense} or \cref{prop:vrnd}\ref{item:vrnd:dense}.
\end{proof}

By \cref{thm:ipack-simple-approx}, we get that for every $\eps' > 0$, there is a
$(a + \eps')$-asymptotic-approx algorithm for \geomvec{2}{$d$} bin packing.
We can get a better approximation factor by using the Round-and-Approx Framework,
and we call the resulting algorithm $\cbPack$.

\subsection{Using the Round-and-Approx Framework}
\label{sec:rbbp-2d-extra:rna}

To implement $\cbPack$, we need to show how to implement
$\round$, $\complexPack$ and $\unround$, and we must prove the structural theorem.

\begin{enumerate}
\item \textbf{$\cLPsolve(I)$}:
Using the algorithm of \cite{gvks} for \geomvec{2}{$d$} KS
and the LP algorithm of \cite{eku-pst},
we get a $2(1+\eps)$-approximate solution to $\cLP(I)$. Therefore, $\mu = 2(1+\eps)$.

\item \textbf{$\round$}: We can use \cref{algo:round} as $\round$.
By \cref{lem:round-time}, $\round$ runs in polynomial time.
By \cref{lem:round-discard}, $\round$ has low discard.
By \cref{lem:round-homo}, $\round$ partitions items into a constant number of classes.

\item \textbf{Structural theorem}: \Cref{thm:struct-theorem}.
We call a packing structured iff it is \compartmentalHyp{} and $(5\eps/8)$-slacked.
Here $\rho = a(1+2\eps/(1-\eps))^2$.

\item \textbf{$\complexPack(\widetilde{S})$}:
Use $\ipackHyp_{\mu}$ as $\complexPack$, where $\mu = 5\eps/8$.
By \cref{thm:ipack}, $\alpha = 1 + 6\eps/(1-\eps)$.

\item \textbf{$\unround(\Jtild)$}:
Since the output of $\ipack$ is $(5\eps/8)$-slacked and each bin satisfies
either \cref{prop:hrnd}\ref{item:hrnd:dense} or \cref{prop:vrnd}\ref{item:vrnd:dense},
we can use \cref{lem:round-to-0} to undo \cref{trn:round-to-0}.
The other transformations round up, so they are trivial to undo.
Therefore, $\gamma = 1$.
\end{enumerate}

The only remaining requirement of the Round-and-Approx framework is proving
the bounded expansion lemma.

\begin{lemma}[Bounded expansion]
Let $(\Itild, D) \in \round(I, \eps)$.
Let $\Ktild \subseteq \Itild$ be a fine partition of $\Itild$.
Let $C \subseteq I$ be a set of items that fit into a bin
and let $\Ctild$ be the corresponding rounded items.
Then $\Span(\Ktild \cap \Ctild)$ is upper-bounded by
$1/\epsLarge + 1/4$.
\end{lemma}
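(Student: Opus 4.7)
The plan is to case-analyze on the type of the fine partition class $\Ktild$ (big $B^w_{p,q,r}$ or $B^h_{p,q,r}$; wide $W^w_{p,q}$ or $W^h_{p,r}$; tall $H^w_{p,r}$ or $H^h_{p,q}$; small $S_p$; heavy dense $D^{h,*}_p$; light dense $D^{l,*}_p$), as enumerated in \cref{defn:fine-part}. In every case I use the two universal facts $\vol(C)\le 1$ and $v_j(C)\le 1$ for each $j$, coming from $C$ fitting into a bin, together with the homogeneity forced by \cref{trn:round-geom} and the \weightRoundingHyp{}: items inside a single class are either completely identical (big, heavy dense) or share a common rounded width/height along with a common weight-to-size ratio ($\alpha_{p,j}$, $\beta_{p,j}$, or $\gamma_{p,j}$). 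This collapses the $\Span$ sum over $\Ktild\cap\Ctild$ into a single product of one ``total size'' quantity (count $K$, total height $H$, or total area $A$) times a $\max$ of one geometric with one weight coordinate.

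For $\Ktild=B^w_{p,q,r}$, all rounded items are identical with width $w_B\defeq q\epsLarge^2/4$, common height $h_B\le 1$, and the class-$p$ weight vector; because they are identical, $\Span(\Ktild\cap\Ctild)=\max\bigl(\vol(\Ktild\cap\Ctild),\,v_{\max}(\Ktild\cap\Ctild)\bigr)$. Writing $K\defeq|\Ktild\cap\Ctild|$, the $Q_q$-membership $w(i)>(q-1)\epsLarge^2/4$ and the big-item property $h(i)>\epsLarge$ combined with $\vol(C)\le 1$ give $K\le 4/[(q-1)\epsLarge^3]$. Using $q-1\ge 4/\epsLarge$ (valid for $q\in Q$), this yields $\vol(\Ktild\cap\Ctild)=K w_B h_B\le q/[(q-1)\epsLarge]\le(1+\epsLarge/4)/\epsLarge=1/\epsLarge+1/4$. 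On the weight side, identicality makes $v_{\max}(\Ktild\cap\Ctild)=\max_j v_j(\Ktild\cap\Ctild)\le 1+\eps/8$ (the weight-rounding transformations add at most $K\epsLarge^2\eps/8\le\eps/8$ per coordinate), which is dominated. Thus $\Span(\Ktild\cap\Ctild)\le 1/\epsLarge+1/4$; $B^h_{p,q,r}$ is symmetric.

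For $\Ktild=W^w_{p,q}$ (and analogously for $W^h_{p,r}$, $H^w_{p,r}$, $H^h_{p,q}$), items share a rounded width $w_B\le 1$ and the ratios $\alpha_{p,j}$, so $\Span(i)=h(i)\max(w_B,\alpha_{p,\max})$ and the class sum collapses to $H\max(w_B,\alpha_{p,\max})$ with $H\defeq h(\Ktild\cap\Ctild)$. The wide-item bound $w(i)>\epsLarge$ gives $\epsLarge H\le\vol(\Ktild\cap C)\le 1$, hence $H w_B\le 1/\epsLarge$; at $j^*=\argmax_j\alpha_{p,j}$, $H\alpha_{p,\max}=v_{j^*}(\Ktild\cap\Ctild)\le 1+\eps/8$, so both branches lie within $1/\epsLarge+1/4$. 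For $S_p$, the analogous calculation with $A\defeq a(\Ktild\cap\Ctild)\le 1$ and $A\beta_{p,\max}\le 1+\eps/8$ yields $\Span\le 1+\eps/8$. For $D^{h,*}_p$, items are identical with $\vol=0$; since each is heavy in its argmax dimension $j^*$ ($v_{j^*}\ge\epsLarge$), $K\le 1/\epsLarge$ and so $K v_{j^*}=v_{j^*}(\Ktild\cap\Ctild)\le 1+\eps/8$. For $D^{l,*}_p$, \cref{trn:round-up-light} preserves $\gamma_{p,j^*}=1$ at $j^*=\argmax_j v_j$, so $\Span(\Ktild\cap\Ctild)=v_{j^*}(\Ktild\cap\Ctild)\le 1+\eps/8$. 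All of these are at most $1/\epsLarge+1/4$.

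The main obstacle is the big-item case: the naive bound $K\le 1/\epsLarge^2$ combined with $\Span(i)\le 1$ only gives $1/\epsLarge^2$, which is much worse than the claimed $1/\epsLarge+1/4$. The decisive step is to pair the lower bound $w(i)>(q-1)\epsLarge^2/4$ from the $q$-indexed discretization with the rounded width $q\epsLarge^2/4$, producing the telescoping factor $q/(q-1)\le 1+\epsLarge/4$ (using $q-1\ge 4/\epsLarge$). Without exploiting that the fine-partitioning width scale is exactly $\epsLarge^2/4$, so that consecutive $q$-classes are multiplicatively close, one cannot recover the stated constant.
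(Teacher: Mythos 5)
Your proof is correct and takes a genuinely different route from the paper's. The paper reduces the claim to a per-item bound: for each $i \in I - D$, $\Span(\itild)/\Span(i) \le 1/\epsLarge + 1/4$, established via exactly the width-telescoping observation you identify ($w(\itild) \le w(i) + \epsLarge^2/4$ and $w(i) > \epsLarge$) together with case analysis on item type. You instead bound $\Span(\Ktild \cap \Ctild)$ directly by summing over the class, using homogeneity to collapse the sum onto a single ``total size'' times a common $\max$, and then invoking $\vol(C) \le 1$ and $v_j(C) \le 1$. These buy different things: the paper's per-item reduction is more compact to state, but as written the passage from the per-item ratio to the class-restricted bound is left implicit — summing the ratio bound against $\Span(C) \le d+1$ only gives $(d+1)(1/\epsLarge + 1/4)$, and recovering the stated $1/\epsLarge + 1/4$ requires an extra appeal to the homogeneity of $\Ktild$ that fixes a single dominating coordinate $j^*$ with $\Span(\Ktild\cap\Ctild) = v_{j^*}(\Ktild\cap\Ctild)$, exactly as you do explicitly. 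Your direct-summation argument therefore yields the stated constant without detouring through the per-item ratio, at the cost of having to carry the count/total-height/total-area bookkeeping through each case. Both proofs hinge on the same two essential facts: the $\epsLarge^2/4$ discretization of widths makes consecutive $q$-classes multiplicatively close, and the $\eps$-non-medium/weight-rounding structure caps the additive weight blow-up by $\eps/8$. (One minor slip: in the $W^w_{p,q}$ case you write $\Span(i) = h(i)\max(w_B, \alpha_{p,\max})$ where you mean $\Span(\itild)$, the rounded item; the subsequent computation is consistent with the intended reading.)
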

\begin{proof}
To prove this, it is sufficient to prove that $\forall i \in I-D$, if item $i$
gets rounded to item $\itild$, then $\Span(\itild)/\Span(i)$ is upper-bounded by
$1/\epsLarge + 1/4$.

\begin{enumerate}
\item \textbf{Big items}: We will consider division-1 big items.
The analysis for division-2 big items is analogous.
\begin{align*}
& w(\itild) \le w(i) + \epsLarge^2/4 \wedge h(\itild) \le 1
\tag{by \cref{trn:round-geom}}
\\ &\implies \frac{w(\itild)}{w(i)} \le 1 + \frac{\epsLarge}{4}
\wedge \frac{h(\itild)}{h(i)} \le \frac{1}{\epsLarge}
\\ &\implies \frac{a(\itild)}{\Span(i)} \le \frac{a(\itild)}{a(i)}
    \le \frac{w(\itild)}{w(i)}\frac{h(\itild)}{h(i)}
    \le \frac{1}{\epsLarge} + \frac{1}{4}
\end{align*}
\begin{align*}
\frac{v_{\max}(\itild)}{\Span(i)}
&\le \frac{v_{\max}(i) + \epsLarge^2\eps/8}{\max(a(i), v_{\max}(i))}
\tag{by \cref{trn:wround-non-dense}}
\\ &\le \min\left(\frac{v_{\max}(i)}{\epsLarge^2} + \frac{\eps}{8},
    1 + \frac{\epsLarge^2\eps}{8v_{\max}(i)}\right)
\\ &\le 1 + \min\left(\frac{v_{\max}(i)}{\epsLarge^2},
    \frac{\eps}{8}\frac{\epsLarge^2}{v_{\max}(i)}\right)
\\ &\le 1 + \sqrt{\frac{\eps}{8}} \le \frac{3}{2}
\le \frac{1}{\epsLarge} + \frac{1}{4}
\tag{$\epsLarge \le 2/3$}
\end{align*}

\item \textbf{Non-dense wide items}:
\[ \frac{a(\itild)}{\Span(i)} \le \frac{a(\itild)}{a(i)}
= \frac{w(\itild)}{w(i)} \le \frac{1}{\epsLarge} \]
\begin{align*}
\frac{v_{\max}(\itild)}{\Span(i)}
&\le \frac{v_{\max}(i) + h(i)(\epsLarge\eps/8)}{\max(h(i)\epsLarge, v_{\max}(i))}
\tag{by \cref{trn:wround-non-dense}}
\\ &\le \min\left(\frac{v_{\max}(i)}{\epsLarge h(i)} + \frac{\eps}{8},
    1 + \frac{\eps}{8} \frac{\epsLarge h(i)}{v_{\max}(i)} \right)
\\ &\le 1 + \min\left(\frac{v_{\max}(i)}{\epsLarge h(i)},
    \frac{\eps}{8} \frac{\epsLarge h(i)}{v_{\max}(i)} \right)
\\ &\le 1 + \sqrt{\frac{\eps}{8}} \le \frac{3}{2} \le \frac{1}{\epsLarge}
\tag{$\epsLarge \le 2/3$}
\end{align*}

\item \textbf{Non-dense tall items}: Similar to the non-dense wide items case.

\item \textbf{Non-dense small items}: $a(\itild) = a(i)$.
\begin{align*}
\frac{v_{\max}(\itild)}{\Span(i)}
&\le \frac{v_{\max}(i) + a(i)(\eps/8)}{\max(a(i), v_{\max}(i))}
\tag{by \cref{trn:wround-non-dense}}
\\ &\le \min\left(\frac{v_{\max}(i)}{a(i)} + \frac{\eps}{8},
    1 + \frac{\eps}{8} \frac{a(i)}{v_{\max}(i)} \right)
\\ &\le 1 + \min\left(\frac{v_{\max}(\itild)}{a(i)},
    \frac{\eps}{8} \frac{a(i)}{v_{\max}(i)} \right)
\\ &\le 1 + \sqrt{\frac{\eps}{8}} \le \frac{3}{2} \le \frac{1}{\epsLarge}
\tag{$\epsLarge \le 2/3$}
\end{align*}

\item \textbf{Heavy dense items}: (See \cref{trn:round-up-heavy})
\[ \frac{\Span(\itild)}{\Span(i)} \le \frac{v_{\max}(\itild)}{v_{\max}(i)}
\le 1 + \frac{\epsLarge\eps}{8v_{\max}(i)} \le 1 + \frac{\eps}{8} \le \frac{1}{\epsLarge} \]

\item \textbf{Light dense items}: (See \cref{trn:round-up-light})
\[ \frac{\Span(\itild)}{\Span(i)} \le \frac{v_{\max}(\itild)}{v_{\max}(i)}
\le 1 + \frac{\eps}{8} \le \frac{1}{\epsLarge}  \qedhere \]
\end{enumerate}
\end{proof}

The asymptotic approximation factor given by the Round-and-Approx framework is
\[ \mu\left(1 + \frac{\alpha\rho\gamma}{\mu}\right) + \Theta(1)\eps
= 2\left(1 + \ln\left(\frac{a}{2}\right)\right) + \Theta(1)\eps \]
Using \cref{table:slack-pack-ab}, for $d=1$, we get $2.919065 + \eps'$ when item rotations
are forbidden and $2.810930 + \eps'$ when item rotations are allowed.

\end{document}